\let\epsilon=\varepsilon
\def\*#1{\mathbf{#1}}
\def\+#1{\mathcal{#1}}
\def\-#1{\mathrm{#1}}
\def\=#1{\mathbb{#1}}
\def\Pr{\mathop{\rm Pr}\nolimits}
\newcommand{\Ex}{\mathop{\mathbb{{}E}}\nolimits}
\newcommand{\dist}{\text{\rm dist}}
\newcommand{\abs}[1]{\left\vert#1\right\vert}  
\newcommand{\aabs}[1]{|#1|}  
\newcommand{\ceil}[1]{\left\lceil #1 \right\rceil}
\newcommand{\tuple}[1]{\left(#1\right)} \newcommand{\eps}{\varepsilon}
\def\whp{w.h.p.~}
\def\Whp{W.h.p.~}
\def\true{\mathsf{T}}
\def\false{\mathsf{F}}
\def\calA{\mathcal{A}}
\def\calC{\mathcal{C}}
\def\calD{\mathcal{D}}
\def\calL{\mathcal{L}}
\def\calT{\mathcal{T}}
\def\Aone{{\calA_1}}
\def\Atwo{{\calA_2}}
\def\Ai{{\calA_i}}
\def\VI{V_I}
\def\Vset{V_{\text{set}}}
\def\Cl{\calC}
\def\Va{\+V}
\def\CLI{\Cl_I}
\def\CLO{\Cl_O}
\def\Cother{\Cl_{\text{other}}}
\def\Crem{\calC_\text{rem}}
\def\var{\mathsf{var}}
\def\marked{\mathsf{marked}}
\def\rhonext{\rho_{\tau_1,\tau_2}}
\def\rlower{r_{\mathsf{lower}}}
\def\rupper{r_{\mathsf{upper}}}
\def\plower{p_{\mathsf{lower}}}
\def\pupper{p_{\mathsf{upper}}}
\def\sigtrue{u \rightarrow \true}
\def\sigfalse{u\rightarrow \false}
\def\sigX{u \rightarrow X}
\def\sigY{u \rightarrow Y}
\def\sigXbar{u \rightarrow \neg X}
\def\failedclauses{\+F}
\def\reason{R}
\def\disagree{\mathsf{disagree}}
\def\badsf{\mathsf{bad}}
\def\rhois{\rho=(\Aone,\Atwo,\VI,\Vset,\Crem,\failedclauses,\reason)}
\def\bad{\text{bad}}
\def\good{\text{good}}
\def\HD{\mathsf{HD}}
\def\HighD{\Delta}
\def\badcomp{\mathsf{BC}}
\newcommand{\cl}{{\+L}}
\def\complement#1{\overline{#1}}
\def\partialVal{\Lambda}
\def\pseudoProb{Q}
\def\Poi{\mathrm{Poi}}
\def\choosebad{\textsf{P}}
\def\eps{\varepsilon}
\def\depthceiling{\lceil\log(n/\eps)\rceil}
\def\Lovasz{Lov\'asz}
\def\zzeta{W}
\def\ConZero{\textbf{Constraint Set 0}}
\def\ConOne{\textbf{Constraint Set 1}}
\def\ConTwo{\textbf{Constraint Set 2}}
\def\ConThree{\textbf{Constraint Set 3}}
\renewcommand{\exp}[1]{\mathrm{exp}\tuple{#1}}
\renewcommand{\deg}{\mathsf{deg}}
\newtheorem{theorem}{Theorem} 
\newtheorem{lemma}[theorem]{Lemma}
\newtheorem{corollary}[theorem]{Corollary}
\newtheorem{observation}[theorem]{Observation}  
\newtheorem{definition}[theorem]{Definition}  
\newtheorem*{remark*}{Remark}
\def\prob#1#2#3{\goodbreak\begin{list}{}{\labelwidth\z@ \itemindent-\leftmargin
  \itemsep\z@  \topsep6\p@\@plus6\p@
  \let\makelabel\descriptionlabel}
  \item[\textbf{Name}]#1
  \item[\textbf{Instance}]#2
  \item[\textbf{Output}]#3
  \end{list}}
\title{Counting solutions to random CNF formulas}
\author{Andreas Galanis}
\author{Leslie Ann Goldberg}
\author{Heng Guo}
\author{Kuan Yang}
\thanks{A preliminary short version of the manuscript (without proofs) appeared in the proceedings of ICALP~2020. 
The research leading to these results has received funding from the European Research Council under
the European Union's Seventh Framework Programme (FP7/2007-2013) ERC grant agreement no.\ 334828 and under the European Union's Horizon 2020 research and innovation programme (grant agreement No.~947778).
The paper reflects only the authors' views and not the views of the ERC or the European Commission. The
European Union is not liable for any use that may be made of the information contained therein.}
\address[Andreas Galanis, Leslie Ann Goldberg]{Department of Computer Science, University of Oxford, OX1 3QD, United Kingdom.}
\email{\{andreas.galanis,leslie.goldberg\}@cs.ox.ac.uk}
\address[Heng Guo]{School of Informatics, University of Edinburgh, Informatics Forum, Edinburgh, EH8 9AB, United Kingdom.}
\email{hguo@inf.ed.ac.uk}
\address[Kuan Yang]{John Hopcroft Center for Computer Science, Shanghai Jiao Tong University, 800 Dongchuan Road, Shanghai, 200240, China.}
\email{kuan.yang@sjtu.edu.cn}
\date{24th May 2021}
\begin{document}

\begin{abstract}
We give the first efficient algorithm to approximately count the number of solutions in the random $k$-SAT model when the density of the formula scales exponentially with $k$. The best previous counting algorithm for the permissive version of the model was due to Montanari and Shah and was based on the correlation decay method, which works up to densities $(1+o_k(1))\frac{2\log k}{k}$, the Gibbs uniqueness threshold for the model. Instead, our algorithm harnesses a recent technique by Moitra to work for random formulas
with much higher densities. The main challenge in our setting is to account for the presence of high-degree variables whose marginal distributions are hard to control and which cause significant correlations within the formula.  
\end{abstract}

\maketitle

\section{Introduction}

Let $\Phi=\Phi(k,n,m)$ be a $k$-CNF formula on $n$ Boolean variables with $m$ clauses chosen uniformly at random where each clause has size $k\ge 3$.
The random formula $\Phi$ shows an interesting threshold behaviour,
where the asymptotic probability that $\Phi$ is satisfiable drops dramatically from 1 to 0 when the density $\alpha:= m/n$ crosses a certain threshold $\alpha_{\star}$. There has been tremendous progress on establishing this phase transition and pinpointing the threshold $\alpha_{\star}$ \cite{KKKS98,FR99, AM02,AP03,CP16,DSS15} guided by elaborate but non-rigorous methods in physics \cite{MPZ02,MMZ05}. The exact value of the threshold $\alpha_{\star}$ is established in \cite{DSS15} for sufficiently large $k$; it is known that $\alpha_{\star}=2^k\ln 2-\tfrac{1}{2}(1+\ln 2)+o_k(1)$ as $k\rightarrow\infty$.

In contrast, the ``average case'' computational complexity of random $k$-CNF formulas remains elusive.
It is a notoriously hard problem to design algorithms that succeed in finding a satisfying assignment when the density of the formula $\Phi$ is close to (but smaller than) the satisfiability threshold $\alpha_{\star}$. The best polynomial-time algorithm to find a satisfying assignment of $\Phi$ is due to Coja-Oghlan \cite{Coj10}, which succeeds if $\alpha<(1-o_k(1))\cdot2^k\ln k/k$. It is known that beyond this density bound $ {2^k} \ln k/k$ the solution space of the formula undergoes a phase transition and becomes severely more complicated \cite{AC08}, so local algorithms are bound to fail to find a satisfying assignment in polynomial time (see for example \cite{Het16,Coj17,CHH17}). 

It is also a natural question to determine the number of satisfying assignments to $\Phi$,
denoted by $Z(\Phi)$, when the density is below the satisfying threshold.
It has been shown that $\tfrac{1}{n}\log Z(\Phi)$ is concentrated around its expectation \cite{AM14,CR13} for $\alpha < (1-o_k(1))\cdot2^k\ln k/k$.
However, for the random $k$-SAT model, there is no known formula for the expectation $\Ex \tfrac{1}{n}\log Z(\Phi)$
(though see \cite{2SAT} and \cite{SSZ16,CW18} for progress  along these lines for the case $k=2$ and for more symmetric models of random formulas, respectively).
Regarding the algorithmic question, Montanari and Shah~\cite{MS07} have given an efficient algorithm to approximate a closely related (permissive) version\footnote{\label{foot:one}More precisely, for the relevant densities $\alpha$,  \cite{MS07} gives, for all reals $\beta>0$, a PTAS for the partition function $\tfrac{1}{n}\log Z_\beta(\Phi)$ of a weighted model on all Boolean assignments $\sigma$, where the weight of an assignment $\sigma$ is $e^{-\beta H(\sigma)}$ and $H(\sigma)$ is the number of unsatisfied clauses under $\sigma$. The number of satisfying assignments $Z(\Phi)$ corresponds to $Z_\beta(\Phi)$ when   $\beta\rightarrow\infty$. Technically, the algorithm of \cite{MS07} does not yield an algorithm for $\beta=\infty$, even though it applies to arbitrarily large $\beta$ (in fact, even for $\beta\leq n^{\delta}$ for some small constant $\delta>0$).} of $\tfrac{1}{n}\log Z(\Phi)$ if $\alpha \le \frac{2\log k}{ k}(1+o_k(1))$,
based on the correlation decay method and the uniqueness threshold of the Gibbs distribution. Note that the threshold for the density~$\alpha$ is exponentially lower than the satisfiability threshold, and no efficient algorithm was known to give a more precise approximation to $Z(\Phi)$ (rather than the log-partition function).

In this paper, we address the algorithmic counting problem by giving the first \emph{fully polynomial-time approximation scheme} (FPTAS) for the number of satisfying assignments to random $k$-CNF formulas,
if the density $\alpha$ is less than $2^{rk}$, for sufficiently large $k$ and some constant $r>0$.
Our bound is exponential in $k$ and goes well beyond the uniqueness threshold of $\frac{2\log k}{k}(1+o_k(1))$
which is required by the correlation decay method.

Our  result is related to other algorithmic counting results on 
random graphs  
such as counting colourings, independent sets, and other structures in random graphs, see for example \cite{MS13,Eft,YZ16,EHSV18,LLLM19,Blanca}.
However, previous methods, such as Markov Chain Monte Carlo and Barvinok's method, appear to be difficult to apply to random formulas.
Instead, our algorithm is the first adaptation of Moitra's method \cite{Moi19} to the random instance setting.
We give a high level overview of the techniques in Section~\ref{sec:tech}.

\subsection{The model and the main result}

For $k\geq 3$,
let $\Phi=\Phi(k,n,m)$ denote a $k$-SAT formula chosen uniformly at random from the set of all $k$-SAT formulas with $n$ variables and $m$ clauses. 
Specifically, $\Phi$ has $n$ variables $v_1, v_2, \ldots, v_n$ and $m$ clauses $c_1, c_2, \ldots, c_m$. 
Each clause $c_i$ has $k$ literals $\ell_{i, 1}, \ell_{i, 2}, \ldots, \ell_{i, k}$ and each literal $\ell_{i, j}$ is chosen uniformly at random from $2n$ literals $\{v_1, v_2, \ldots, v_n, \lnot v_1, \lnot v_2, \ldots, \lnot v_n\}$. 
Note that each clause has exactly $k$ literals (repetitions allowed), 
so there are $(2n)^{km}$ possible formulas; we use $\Pr(\cdot)$ to denote the uniform distribution on the set of all such formulas. Throughout, we will assume that  $m=\left\lfloor n\alpha\right\rfloor$, where $\alpha>0$ is the density of the formula. 
We say that an event $\mathcal{E}$ holds \emph{w.h.p.}~if $\Pr(\mathcal{E})=1-o(1)$ as $n\rightarrow\infty$.

For a $k$-SAT formula $\Phi$, we let $\Omega=\Omega(\Phi)$ denote the set of satisfying assignments of $\Phi$.
If $\Omega$ is non-empty, we let
  $\pi_\Phi$ denote the uniform distribution on $\Omega$.
  
\newcommand{\statethmmain}{There is a polynomial-time algorithm $\mathcal{A}$ and there are two constants $r>0$ and $k_0 \geq 3$ 
  such that, for all $k\geq k_0$ and all $\alpha<2^{rk}$, the following holds w.h.p.~over the choice of the random $k$-SAT formula $\Phi=\Phi(k,n,\lfloor \alpha n \rfloor)$.
  The algorithm $\mathcal{A}$,   given as input the formula $\Phi$ and a rational $\eps>0$, outputs in time $poly(n,1/\eps)$ a number $Z$ that satisfies 
  $e^{-\epsilon}\aabs{\Omega(\Phi)} \leq Z \leq e^{\epsilon}\aabs{\Omega(\Phi)} $.}
\begin{theorem}\label{thm:main}
\statethmmain
 \end{theorem}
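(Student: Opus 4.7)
The plan is to adapt Moitra's marking-and-LLL framework for approximate counting in CNF formulas to the random $k$-SAT setting. The central difficulty, as the abstract indicates, is that Moitra's analysis requires a uniform bound on the maximum variable-degree of order roughly $2^k/k$, whereas a random formula at density $\alpha = 2^{rk}$ has average variable-degree $\Theta(k\alpha)$ and maximum degree $\Theta(\log n/\log\log n)$. Extracting the structural tameness of random instances while absorbing the impact of high-degree variables is the core of the argument.

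First I would carry out a structural analysis that holds w.h.p.\ over $\Phi$. Fix a threshold $\Delta$ of order $k\alpha$ times a large constant, and call variables with more than $\Delta$ clause-occurrences \emph{high-degree}; by Poisson tail bounds the set $V_H$ of high-degree variables satisfies $|V_H|=o(n)$, and w.h.p.\ the subhypergraph formed by the clauses touching $V_H$ decomposes into connected components of size $O(\log n)$. This isolates the pathological structure to small, well-separated neighbourhoods.

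Second I would introduce a marking scheme: independently mark each low-degree variable with some probability $p$, keeping high-degree variables always unmarked. Given $\sigma$ sampled from the projection of $\pi_\Phi$ onto the marked variables, the goal is to show that w.h.p.\ over both $\Phi$ and $\sigma$ the residual formula on unmarked variables, with $\sigma$ substituted, has all connected components of size $O(\log n)$. With $\alpha=2^{rk}$ for sufficiently small $r$ and $p$ tuned so that the effective survival probability of each clause under $\sigma$ is at most $2^{-(1-o_k(1))k}$, a Moser-Tardos-style witness-tree argument against the bad event that a clause survives and lies in a large component should close, provided the contribution of high-degree variables inside a witness tree is charged against the small-cluster structure from the first step.

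Finally, once the residual formula has $O(\log n)$-sized components, marginals of marked variables can be computed by brute force within each component in $\mathrm{poly}(n,1/\eps)$ time, and standard self-reducibility --- fixing marked variables one at a time according to their conditional marginals and counting the extensions exactly once the unmarked residual trivialises --- yields the claimed $e^{\pm\eps}$-approximation to $\aabs{\Omega(\Phi)}$. I expect the main obstacle to be the LLL step in the presence of high-degree variables: because these are never marked, they can appear in many surviving clauses simultaneously and threaten to glue together an unboundedly large component, so the key technical lemma will have to couple the marking randomness with the formula randomness and exploit the smallness of the high-degree clusters from the structural step to keep the witness-tree calculation under control.
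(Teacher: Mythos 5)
Your proposal reproduces the outer scaffolding of the paper (isolate high-degree variables, confine the pathology to $O(\log n)$-size bad clusters, mark a subset of low-degree variables, finish by self-reducibility plus brute force on small residual components), but it is missing the central algorithmic ingredient, and the step where you claim to compute marginals is where the argument breaks. In your final step you fix the marked variables one at a time ``according to their conditional marginals'' and assert these marginals can be computed by brute force within $O(\log n)$-size components. The shattering statement you prove, however, is about the residual formula \emph{after a full assignment $\sigma$ of all marked variables} has been substituted. At an intermediate stage of the self-reducibility chain, most marked variables are still unassigned, the formula has not shattered (it is one giant connected structure containing all the unfixed marked and unmarked variables), and there is no small component on which to brute-force the conditional marginal of the next marked variable. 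So the quantity your algorithm needs at every step is exactly the quantity you have no method to compute; the shattering-after-conditioning lemma cannot be bootstrapped into an estimator for prefix-conditional marginals.

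This is precisely the gap that Moitra's machinery fills and that occupies most of the paper: for each prefix $\partialVal$ of a carefully chosen partial assignment $\partialVal^*$, one builds a truncated coupling tree between the two conditional distributions ($v^*\to\true$ versus $v^*\to\false$) and a polynomial-size linear program whose feasibility brackets the ratio $|\Omega^\partialVal_1|/|\Omega^\partialVal_2|$, with completeness and soundness (Lemmas~\ref{lem:existence} and~\ref{lem:main}) established via local-uniformity bounds from the \Lovasz\ local lemma. The random-formula work then goes into showing the truncation error is negligible despite bad variables: whenever the coupling touches a bad variable the whole bad component is written off as failed, and Lemmas~\ref{lem:numtrees}, \ref{lem:large3tree}, \ref{lem:typetwo} bound how often and how badly this happens, using the expansion and bad-component-size estimates (e.g.\ Lemmas~\ref{lem:sizeofbad} and~\ref{lem:connected-bad-size}). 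Two further points your sketch glosses over: the marking is not an independent random marking of low-degree variables but is obtained via the local lemma (Lemma~\ref{lem:marking}) so that \emph{every} good clause has many marked and many unmarked good variables (needed both for the local-uniformity of marginals and for the brute-force endgame), and the final exact count is performed only on bad clauses because $\partialVal^*$ is chosen (Lemma~\ref{lem:partial-assignment}) to satisfy all good clauses. Without the LP/coupling-tree step (or a substitute such as a Markov chain on the marked variables, which would require its own mixing analysis), your plan does not yield an algorithm.
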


Throughout this paper, we will assume that $k\geq k_0$ where $k_0$ is a sufficiently large constant. 
We will also assume that the density $\alpha$ of the formula $\Phi$ satisfies $\alpha<2^{k/300}/k^3$, so $r$ can be taken to be $ {1}/{301}$ in 
Theorem~\ref{thm:main}. 
The constant $300$ here is not optimised, but we do not expect to be able to use the current techniques to improve it substantially.  Our main point   is that for a density which is exponential in $k$, an FPTAS exists for random $k$-CNF formulas.
Finally, we assume that $k^2\alpha\geq 1$, otherwise it is well-known (see, e.g., Theorem~$3.6$ in \cite{SS85}) that w.h.p.~every connected component of $\Phi$, 
viewed as a hypergraph where variables correspond to vertices and clauses correspond to hyperedges, is of size $O(\log n)$. 
In this case we can count the number of satisfying assignments by brute force.

\subsection{New developments} After this paper was written, Coja-Oghlan, M\"{u}ller, and Ravelomanana \cite{BPRandom} have given a conditional formula  for the permissive version (analogously to \cite{MS07}, cf.~Footnote~\ref{foot:one}) of the log-partition function  $\mathbb{E}\tfrac{1}{n}\log Z(\Phi)$. Their condition captures a non-reconstruction property of the Gibbs distribution that is believed to hold up to densities $(1-o_k(1))2^k\ln k/k$ (though this is not known). As mentioned earlier, our algorithmic result applies to the non-permissive version and gives unconditional approximation guarantees for the partition function $Z(\Phi)$ with arbitrarily small relative error. For future work, it would be interesting to see whether the probabilistic approach of \cite{BPRandom} can be combined with the algorithmic perspective of this paper.

\subsection{Algorithm overview}
\label{sec:tech}

We give a high-level overview of our algorithm here before  giving the details.
Approximately counting the satisfying assignments of a $k$-CNF formula 
has been a challenging problem  using traditional algorithmic techniques,
since the  solution space (the set of satisfying assignments)  is complicated and it is 
not connected, using the transitions of  commonly-studied Markov chains. 
Recently some new approaches were introduced \cite{Moi19,GJL19}. 
Most notably, the breakthrough work of Moitra \cite{Moi19} gives the first (and so far the only) efficient deterministic algorithm 
that can approximately count the  satisfying assignments of $k$-CNF formulas  in which each variable appears in at most $d$ clauses, if, roughly speaking, $d\lesssim 2^{k/60}$.
Inspired by this, Feng et al.~\cite{NewHeng} have also given a MCMC algorithm which applies when $d\lesssim 2^{k/20}$.
Subsequent to this paper, Jain et al.~\cite{Jain} improved this bound even
further, showing how to approximately count when $d \lesssim 2^{k/5.741}$.

As our goal is to count satisfying assignments of sparse random $k$-CNF formulas,
where these degree bounds do not hold, but average degrees are small,
it is natural to also choose Moitra's method in the random instance setting.
However, the first difficulty is that Moitra's method  
relies on the fact that  the marginal probability of each variable
(the probability that the variable is true in a uniformly-chosen satisfying assignment) is
nearly~$1/2$.
This is necessary because Moitra's method involves solving a certain linear program (LP) and the size of this LP
is polynomially-bounded only if a certain process couples quickly.
The proof that the process couples quickly relies on the fact that the marginals are nearly~$1/2$ (and certainly on the fact that they are bounded away from~$0$ and~$1$).
In contrast,   for a random $k$-CNF formula,
although the \emph{average} degree of   variables is low,
with high probability 
there are  variables with degrees as high as $\Omega\left( {\log n}/{\log\log n} \right)$.
In the presence of these high-degree variables,
it is no longer true that the marginal probabilities of the variables are nearly~$1/2$.
In fact, they can be arbitrarily  near~$0$ or~$1$.

Our solution to this issue is to separate out high-degree variables, as well as those that are heavily influenced by high-degree variables.
To do this, we define a process to recursively label ``bad'' variables.
At the start, all high-degree variables are bad.
Then, all clauses containing more than $k/10$ bad variables are labelled bad, as  are all variables that they contain.
We run this process until no more bad clauses  are found. We call the remaining variables and clauses of the formula ``good''. A key property is that 
all good variables have  an upper bound on their degree and all good clauses  contain at least $9k/10$ good variables; this allows us to show that the marginal probabilities of good variables are close to 1/2.

The next step is to attempt to apply Moitra's method. 
The goal of Moitra's method is to compute more precise estimates for the marginal probabilities of variables; given  
accurate estimates on the marginal probabilities it is then relatively easy to approximate the number of satisfying assignments using refined self-reducibility techniques.

Of course, we need to modify the method to deal with the bad variables, which still appear in the formula.
We  first explain Moitra's method   and then proceed with our modifications.
The first step is to mark variables, so that every clause contains a good fraction of marked variables
and a good fraction of  unmarked variables.
Then, for a particular marked variable $v$, 
we set up  an LP. As noted earlier, 
the variables of the LP correspond to the states of a certain coupling process
which couples two distributions on satisfying assignments using the marked variables --- the first distribution over satisfying assignments in which $v$ is true,
and the second distribution over satisfying assignments in which $v$ is false.
Solving the LP  recovers the transition probabilities of the coupling process and yields enough information to  approximate the marginal probability of $v$.

In order to guarantee that the size of the LP is bounded by a polynomial in the size of the original CNF formula,
we have to restrict the coupling process. The process can be viewed as a tree 
 and it suffices to truncate this tree at a suitable level.

Thus, a crucial part of the proof (both in Moitra's case and in ours) is to show that the error caused by the truncation is sufficiently small.
The reason that the error caused by the truncation is small is that, with high probability, branches of the coupling tree
``die out'' before reaching a large level. The reason for this is that the marginals of marked variables stay near~$1/2$, even when conditioning on partial assignments.

 In our case where $\Phi$ is a random formula, the marginals are not all near~$1/2$, even without
 any conditioning.  
But the good variables do have marginals near~$1/2$.
 So we only mark/unmark good variables  and we ``give up'' on bad variables.
 Given that we don't have any control over the bad variables, we have to modify the coupling process. 
Thus, whenever we meet a bad variable in the coupling process, we have to assume the worst case and treat this variable and all bad variables connected to it 
as if they all have failed the coupling, meaning that the disagreement spreads quickly over bad components.

The most important part of our analysis is to upper bound the size of connected bad components and how often we encounter them during the coupling process.
Given these upper bounds, we are able to show that the coupling still dies out sufficiently quickly, so the
error  caused by the truncation is not too large.

Solving the LP then allows us to estimate the marginals of the good variables. Given that the bad components have small size, this
turns out to be enough information to allow us to estimate the number of satisfying assignments of the original formula (containing both 
good and bad clauses).

We conclude this summary by discussing the prospects for improving our work.
Although we have given an efficient algorithm which works for densities that are exponentially large in $k$,
the densities that we can handle are still small compared to the satisfiability threshold or to the threshold under which efficient search algorithms exist.
Perhaps a modest start towards obtaining comparable thresholds 
for approximate counting algorithms would be to 
  consider  models whose state spaces are connected.
For example, for monotone $k$-CNF formulas where each variable appears in at most $d$ clauses, 
Hermon et al.\ \cite{HSZ19} showed that efficient randomised algorithms exist if $d\le c2^{k/2}$ for some constant $c>0$, 
which is optimal up to the constant $c$ due to complementing hardness results \cite{BGGGS19}. 
They also showed that the same algorithm works for \emph{random regular} monotone $k$-CNF formulas, if the degree $d\le c2^k/k$ for some $c>0$.
It remains open whether an average case bound of the same order can be achieved for random monotone $k$-CNF formulas.

\section{Notation}

To help keep track of the notation defined in this paper, the reader is referred to the table in Appendix (Section~\ref{sec:app}).

\section{High-degree and bad variables}
\label{sec:high-degree}

We will apply the method of Moitra \cite{Moi19}, which 
was introduced to approximately count the  satisfying
assignments of $k$-CNF formulas in which each literal appears a bounded
number of times. The main difference between the formulas studied
by Moitra and the random formulas that we study  is that, in our formulas,  some variables 
will occur many more times than the average. 

\begin{definition}\label{def:HD}
  Let $\Phi$ be a $k$-SAT formula. We say that a variable $v$ of~$\Phi$ is \emph{high-degree} if $\Phi$ contains at least $\HighD:=2^{k/300}$ 
  occurrences of literals involving the variable~$v$.
\end{definition}

 In our algorithm, we will not be able to control  these high degree variables 
 or other variables that are  affected by them.
 These variables contribute to the ``bad'' part of the formula~$\Phi$. Formally, denote the set of clauses of $\Phi$ by $\Cl$ and the set of variables by $\Va$.
For each $c\in \Cl$, let
 $\var(c)$ denote the set of variables in~$c$.
 For each subset~$C$ of~$\Cl$, let  $\var(C):=\cup_{c\in C}\var(c)$.
The \emph{bad variables} and \emph{bad clauses} of $\Phi$ are identified  by running the following process:
\goodbreak
\begin{enumerate}
  \item $\Va_0$ (the initial bad variables) $\gets$ the set of high-degree variables;
  \item $\Cl_0$ $\gets$ the set of clauses with at least $k/10$~variables in~$\Va_0$;
  \item $i\gets 0$;
  \item Do the following until $\Va_i=\Va_{i-1}$:
  \begin{itemize}
    \item $i\gets i+1$;
    \item $\Va_i\gets \Va_{i-1}\cup\var(\Cl_{i-1})$;
    \item $\Cl_i \gets$ $\{c\in\Cl\mid \var(c)\cap \Va_i\ge k/10\}$;
  \end{itemize}
  \item $\Cl_\bad \gets \Cl_i$ and $\Va_\bad \gets \Va_i$;
  \item $\Cl_\good \gets \Cl \setminus \Cl_i$ and $\Va_\good \gets \Va\setminus \Va_i$.
\end{enumerate}
The method that we have used to define good variables and clauses is inspired by~\cite{CP16}.

\begin{observation}\label{obs:goodbad} 
For any $c\in\Cl_\good$, $\abs{\var(c)\cap\Va_\bad}<k/10$. For any $c\in\Cl_\bad$, $\abs{\var(c)\cap\Va_\good}=0$.
\end{observation}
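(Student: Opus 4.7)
The plan is to prove both assertions by directly unpacking the termination condition of the iterative process defining $\Va_\bad$ and $\Cl_\bad$. Let $i^*$ be the index at which the loop terminates, so that by definition $\Va_{i^*}=\Va_{i^*-1}$, $\Cl_\bad=\Cl_{i^*}$, and $\Va_\bad=\Va_{i^*}$.

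For the first claim, I would note that the set $\Cl_{i^*}$ is defined to be exactly those clauses $c$ with $|\var(c)\cap \Va_{i^*}|\geq k/10$. Hence for any $c\in\Cl_\good=\Cl\setminus \Cl_{i^*}$, we have $|\var(c)\cap\Va_\bad|=|\var(c)\cap\Va_{i^*}|<k/10$, which is the desired bound.

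For the second claim, the key observation is that the stopping condition $\Va_{i^*}=\Va_{i^*-1}$ forces $\var(\Cl_{i^*-1})\subseteq \Va_{i^*-1}$ (since the recurrence says $\Va_{i^*}=\Va_{i^*-1}\cup\var(\Cl_{i^*-1})$). I would then argue that $\Cl_{i^*}=\Cl_{i^*-1}$: indeed, since $\Va_{i^*}=\Va_{i^*-1}$, the defining condition $|\var(c)\cap \Va_i|\geq k/10$ coincides at steps $i=i^*$ and $i=i^*-1$. Consequently $\var(\Cl_\bad)=\var(\Cl_{i^*})=\var(\Cl_{i^*-1})\subseteq\Va_{i^*-1}=\Va_\bad$, so for any $c\in\Cl_\bad$ we have $\var(c)\subseteq \Va_\bad$ and therefore $|\var(c)\cap\Va_\good|=|\var(c)\cap(\Va\setminus\Va_\bad)|=0$.

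There is no real obstacle here; the statement is essentially a sanity check that the iterative closure has reached a fixed point in both components simultaneously. The only subtlety worth making explicit in the write-up is the two-line argument that $\Cl_{i^*}=\Cl_{i^*-1}$, since the termination condition is phrased in terms of the variable sets rather than the clause sets.
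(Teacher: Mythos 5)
Your proof is correct, and it is essentially the argument the paper has in mind: the observation is stated without proof as an immediate consequence of the defining process having reached its fixed point ($\Va_{i^*}=\Va_{i^*-1}$, hence $\Cl_{i^*}=\Cl_{i^*-1}$ and $\var(\Cl_\bad)\subseteq\Va_\bad$), which is exactly the unpacking you give.
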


The formula $\Phi$ naturally corresponds to a bipartite ``factor graph'' where one side is variables and the other clauses.
We will use the following two graphs $G_\Phi$ and $H_\Phi$ which are the dependency graph induced by the factor graph on clauses and variables, respectively.

\begin{definition}\label{def:Gphi}
  Let $G_\Phi$ be the graph with vertex set $\Cl$ in which two clauses $c$ and $c'$ are adjacent if and only if $\var(c) \cap \var(c')\neq\emptyset$.
  We say that a set $C\subseteq \calC$ of clauses is connected if the induced subgraph $G_\Phi[C]$ is connected.
Let $G_{\Phi,\good}$ be the graph with vertex set $\Cl_\good$ in which two clauses $c$ and $c'$ are adjacent
  if and only if $\var(c) \cap \var(c')\cap \Va_\good\neq\emptyset$, i.e., $c$ and $c'$ share a good variable.
\end{definition}

Since each $v\in\Va_\good$ has at most $\HighD$ literal occurrences,
the maximum degree in $G_{\Phi,\good}$ is at most $k(\HighD-1)$.

\begin{definition}\label{def:Hphi}
  Let $H_\Phi$ be the graph with vertex set $\Va$ in which two variables $v$ and $v'$ are adjacent 
  if and only if there exists a clause $c\in \Cl$ such that $v,v'\in \var(c)$. 
  We say that a set $V\subseteq\Va$ of variables is connected if the induced subgraph $H_\Phi[V]$ is connected.
Let $H_{\Phi,\bad}$ be the graph with vertex set $\Va_\bad$ in which two variables $v$ and $v'$ are adjacent
  if and only if there exists a bad clause $c\in \Cl_{\bad}$ such that $v,v'\in \var(c)$. 
  We say that a set $V\subseteq\Va$ of variables is a \emph{bad component} if $V$ is a connected component in $H_{\Phi,\bad}$.
\end{definition}

\section{The \Lovasz\ local lemma}
\label{sec:local-lemma}

The \Lovasz\ local lemma \cite{EL75} is an important tool for our algorithm.
In particular, we will need the asymmetric version (proved by \Lovasz\ and published in \cite{Spe77}).
For convenience, we will specialize the local lemma to the variable setting.

\begin{definition}\label{def:lll}
Let $\Omega^*$ be the set of all $2^{|\Va|}$ assignments $\Va\to \{\true,\false\}$.
Given any subset $A \subseteq \Omega^*$,
let $\mu_A$ be the uniform distribution on $A$. 
We say that a subset $A\subseteq \Omega^*$ depends on a variable $v\in \Va$
if there exist $\sigma\in A$ and $\sigma' \in \Omega^*\setminus A$ such that $\sigma$ and $\sigma'$ differ only at~$v$.
We use $\var(A)$ to denote the set of variables on which $A$ depends.
Let $\Gamma(A) = \{c \in \Cl \mid  \var(c) \cap \var(A) \neq \emptyset\}$  
and $\Gamma_\good(A) = \{c\in\calC\mid \var(A)\cap \var(c)\cap \Va_\good \neq\emptyset \}$. Similarly, for any $c\in \Cl$, let $\Gamma(c) = \{c' \in \Cl \mid  \var(c) \cap \var(c') \neq \emptyset\}$  
and $\Gamma_\good(c) = \{c'\in\calC\mid \var(c)\cap \var(c')\cap \Va_\good \neq\emptyset \}$.
\end{definition}

When reading the following theorem, it is helpful to think of $A_c$ as being the set of assignments that fail to
satisfy the clause~$c$ (though we will also use the theorem in other ways).

\begin{theorem}[The local lemma]\label{thm:lll}
    For each $c\in \Cl$, let $A_c$ be a subset of $\Omega^*$ such that   $\var(A_c)\subseteq\var(c)$.
    If there exists a function $x\colon \Cl \to (0,1)$ such that, for all $c\in \Cl$,
    \begin{align}
    \Pr_{\mu_{\Omega^*}}(A_c) \leq x(c)\prod_{b\in \Gamma(A_c)}\Bigl(1-x(b)\Bigr),
    \label{eqn:lll-condition}
    \end{align}
    then
    \begin{align*}
    \Pr_{\mu_{\Omega^*}}\Bigl( \bigwedge_{c\in\Cl} \complement{A_c} \Bigr)>0.
    \end{align*}
    Furthermore, for any $A\subseteq \Omega^*$,
    $\Pr_{\mu_{\Omega^*}}\Bigl( A\mid \bigwedge_{c\in\Cl} \complement{A_c} \Bigr)
    \leq \Pr_{\mu_{\Omega^*}}(A)\prod_{b\in \Gamma(A)}\Bigl(1-x(b)\Bigr)^{-1}$.
\end{theorem}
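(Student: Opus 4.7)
The plan is to prove, by induction on $|S|$, the standard asymmetric LLL claim that for every subset $S \subseteq \Cl$ and every $c \in \Cl \setminus S$,
\[
\Pr_{\mu_{\Omega^*}}\Bigl(A_c \Bigm| \bigwedge_{b \in S} \complement{A_b}\Bigr) \leq x(c).
\]
Both conclusions then follow. The positivity statement $\Pr_{\mu_{\Omega^*}}(\bigwedge_{c\in\Cl} \complement{A_c}) > 0$ comes from the chain rule applied to any ordering of $\Cl$, which yields the lower bound $\prod_{c\in\Cl}(1-x(c)) > 0$. The ``furthermore'' bound on $\Pr_{\mu_{\Omega^*}}(A \mid \bigwedge_c \complement{A_c})$ follows by treating $A$ as if it were an additional event whose scope is $\var(A)$ and repeating one step of the inductive argument.

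For the induction itself, the base case $S = \emptyset$ is immediate from hypothesis~(\ref{eqn:lll-condition}). For the inductive step, I would split $S = S_1 \cup S_2$ with $S_1 = S \cap \Gamma(A_c)$ and $S_2 = S \setminus S_1$, and then write
\[
\Pr\Bigl(A_c \Bigm| \bigwedge_{b \in S} \complement{A_b}\Bigr) = \frac{\Pr\bigl(A_c \wedge \bigwedge_{b \in S_1} \complement{A_b} \bigm| \bigwedge_{b \in S_2} \complement{A_b}\bigr)}{\Pr\bigl(\bigwedge_{b \in S_1} \complement{A_b} \bigm| \bigwedge_{b \in S_2} \complement{A_b}\bigr)}.
\]
The numerator is at most $\Pr(A_c \mid \bigwedge_{b \in S_2} \complement{A_b})$, which equals $\Pr(A_c)$ because $\mu_{\Omega^*}$ is a product measure over variables and $\var(A_c)$ is disjoint from each $\var(A_b)$ with $b \in S_2$; hypothesis~(\ref{eqn:lll-condition}) then bounds this by $x(c) \prod_{b \in \Gamma(A_c)}(1 - x(b))$. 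The denominator I would expand telescopically as a product over an ordering of $S_1$, lower-bounding each successive factor $\Pr(\complement{A_b} \mid \cdots)$ by $1 - x(b)$ via the inductive hypothesis on a strictly smaller conditioning set. This gives a lower bound of $\prod_{b \in \Gamma(A_c)}(1 - x(b))$ on the denominator, and dividing produces $x(c)$. The ``furthermore'' inequality proceeds identically, splitting $\Cl = \Gamma(A) \sqcup (\Cl \setminus \Gamma(A))$ and putting $A$ in place of $A_c$: the numerator reduces to $\Pr(A)$ by product-measure independence (no hypothesis analogous to~(\ref{eqn:lll-condition}) on $A$ is required), and the denominator is at least $\prod_{b \in \Gamma(A)}(1-x(b))$ by the main claim.

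The main obstacle is really only bookkeeping, but two points need care. First, in the telescoping expansion of the denominator one must fix an ordering of $S_1$ and check that each inductive invocation uses a conditioning set of size strictly less than $|S|$, so that the induction is well-founded. Second, the ``independence'' of $A_c$ from the events indexed by $S_2$ must be justified purely from the product structure of $\mu_{\Omega^*}$ together with the disjointness $\var(A_c) \cap \var(A_b) = \emptyset$ for all $b \in S_2$; this disjointness comes directly from $\var(A_c) \subseteq \var(c)$ and the definition of $\Gamma(A_c)$, which forces every $A_b$ with $b \notin \Gamma(A_c)$ to depend only on variables outside $\var(c)$.
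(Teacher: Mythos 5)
Your argument is correct, but there is nothing in the paper to compare it against: Theorem~\ref{thm:lll} is quoted as a known result, with the first part attributed to \Lovasz\ (published in Spencer's paper) and the ``furthermore'' bound to Haeupler, Saha and Srinivasan, so the paper gives no proof at all. What you have written is exactly the standard inductive proof, specialised to the variable setting: the induction on $|S|$ establishing $\Pr_{\mu_{\Omega^*}}(A_c \mid \bigwedge_{b\in S}\complement{A_b})\le x(c)$, the split $S=S_1\cup S_2$ with $S_1=S\cap\Gamma(A_c)$, the telescoping lower bound on the denominator via strictly smaller conditioning sets, and the observation that the ``furthermore'' inequality is obtained by running one step of the same argument with $A$ in place of $A_c$ and no hypothesis on $A$ beyond $\var(A)$ being disjoint from $\var(A_b)$ for $b\notin\Gamma(A)$ --- which is precisely the content of the HSS extension. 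Two small points of hygiene: (i) you should note that the positivity of every conditioning event (needed for the conditional probabilities to be well defined) is itself established inductively via the chain-rule bound $\Pr(\bigwedge_{b\in S}\complement{A_b})\ge\prod_{b\in S}(1-x(b))>0$, rather than only invoking it at the end; and (ii) your closing sentence slightly misstates the independence justification: $b\notin\Gamma(A_c)$ forces $\var(A_b)$ to be disjoint from $\var(A_c)$, not from all of $\var(c)$, but disjointness from $\var(A_c)$ is all the product structure of $\mu_{\Omega^*}$ needs, so the argument stands.
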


The second part of Theorem~\ref{thm:lll}
is due to
Haeupler et al.~\cite{HSS11}. It provides an upper bound on the probability of
an event under the uniform distribution over satisfying assignments.
It is also possible to   find an assignment such that $\bigwedge_{c\in\Cl} \complement{A_c}$ holds in $O(\abs{\Va})$ time
by the algorithm of Moser and Tardos \cite{MT10}.

In the course of our algorithm it is useful to ``mark'' some of the variables in $\Va_\good$.
For this, we use the approach of Moitra~\cite{Moi19}.
Formally, a ``marking'' is an assignment   from $\Va_\good$ to $\{\text{marked},\text{unmarked}\}$.
All bad variables are unmarked.
Using Theorem~\ref{thm:lll}, we have the following lemma.

\begin{lemma}  \label{lem:marking}
W.h.p.{} over the choice of~$\Phi$, there exists a marking on $\Va_\good$ such that:
  \begin{enumerate}
  \item \label{item:marked} every good clause has at least $3k/10$ marked variables and at least $k/4$ unmarked good variables;
  \item \label{item:partial} there is a partial assignment of bad (and thus unmarked) variables that satisfies all bad clauses.
  \end{enumerate}
  Furthermore, such a marking can be found in deterministic polynomial time.
\end{lemma}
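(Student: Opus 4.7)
The plan is to obtain both the marking and the partial assignment via the Lov\'asz Local Lemma (Theorem~\ref{thm:lll}) applied to a pair of essentially independent random experiments, and then to extract a deterministic polynomial-time algorithm via the Moser--Tardos framework \cite{MT10} together with its derandomization (cf.~\cite{CGH13}). Concretely, I would consider: (i)~mark each variable in $\Va_\good$ independently with probability $1/2$, and (ii)~assign each variable in $\Va_\bad$ an independent uniformly random value in $\{\true,\false\}$. These experiments act on disjoint sets of variables; moreover, by Observation~\ref{obs:goodbad}, the marking events associated to good clauses and the satisfaction events associated to bad clauses involve disjoint variables, so the LLL analyses of the two parts of the lemma decouple.

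For part~\ref{item:marked}, for each $c\in\Cl_\good$ I would introduce two bad events, namely ``$c$ contains fewer than $3k/10$ marked variables'' and ``$c$ contains fewer than $k/4$ unmarked good variables''. By Observation~\ref{obs:goodbad} the clause $c$ has at least $9k/10$ good variables; since $9k/20 > 3k/10$ and $9k/20 > k/4$, a standard Chernoff bound yields probability $e^{-\Omega(k)}$ for each of these events. Each such event depends only on the marking of the good variables appearing in $c$, and a good variable occurs in fewer than $\HighD = 2^{k/300}$ clauses by definition. Hence the dependency degree is at most $2k\HighD$, and the symmetric LLL condition $epd\leq 1$ is met with room to spare for $k$ sufficiently large.

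For part~\ref{item:partial}, for each $c\in\Cl_\bad$ the bad event ``the random bad-variable assignment fails to satisfy $c$'' has probability at most $2^{-k}$. The difficulty is that bad variables can be high-degree and occur in many clauses, so the dependency degree in the bad-clause dependency graph is not a priori comparable with $2^k$. I would handle this by appealing to a structural property of the random formula $\Phi$, namely that each bad component has small size (a fact that should be justified elsewhere in the paper, using the random nature of $\Phi$), and then either applying the asymmetric version of Theorem~\ref{thm:lll} component-by-component with weights $x(c)$ tailored to each bad component, or simply performing an exhaustive search within each (small) bad component for a satisfying partial assignment. The Moser--Tardos algorithm \cite{MT10}, together with the derandomization of \cite{CGH13}, then upgrades both existence arguments to a deterministic polynomial-time procedure. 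The main obstacle is precisely this last step: the marking of good variables is, by design, a symmetric-LLL computation on a bounded-degree sub-formula, while constructing the partial satisfying assignment for the bad sub-formula genuinely relies on random-formula structure bounding the sizes of bad components---which is the whole point of separating high-degree variables into the bad side.
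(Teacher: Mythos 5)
Your treatment of item~\eqref{item:marked} is essentially the paper's argument (one combined event $A_c$ there versus your two events per clause, with the same $x(c)=1/(k\HighD)$-type weights and the degree bound $k(\HighD-1)$ in $G_{\Phi,\good}$, derandomized via \cite{CGH13}), so that part is fine. The genuine gap is in item~\eqref{item:partial}. Your only existence argument for a partial assignment of $\Va_\bad$ satisfying all of $\Cl_\bad$ is an LLL argument, and this cannot work: the whole reason these clauses are bad is that they are saturated with high-degree variables, so no local-lemma-type condition holds on the bad sub-formula. Small component size (Lemma~\ref{lem:sizeofbad}) does not help, because it bounds the number of variables in a component, not the dependency degree: w.h.p.\ $\Phi$ contains variables of degree $\Theta(\log n/\log\log n)$, which exceeds $2^k/e$ for large $n$ since $k$ is fixed. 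Concretely, a single bad variable lying in $D>2^k/e$ bad clauses gives a ``star'' for which even the asymmetric condition \eqref{eqn:lll-condition} fails for every choice of weights (one needs $2^{-k}\le x(1-x)^{D-1}\le 1/(eD)$ for some $x$), even though such a star is trivially satisfiable; so no choice of ``tailored $x(c)$'' rescues the component-by-component LLL. Your fallback, exhaustive search inside each small component, is an algorithm for \emph{finding} an assignment but does not \emph{prove} one exists, so as written the existence claim in item~\eqref{item:partial} is unsupported.

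The paper closes this in a much simpler way, and without any LLL on the bad part: since $\alpha<2^{k/300}/k^3$ is far below the satisfiability threshold \cite{DSS15}, w.h.p.\ $\Phi$ has at least one satisfying assignment; by Observation~\ref{obs:goodbad} every $c\in\Cl_\bad$ has $\var(c)\cap\Va_\good=\emptyset$, so the restriction of that satisfying assignment to $\Va_\bad$ already satisfies all bad clauses, for \emph{any} marking of $\Va_\good$. Note also that the ``deterministic polynomial time'' clause of the lemma applies only to the marking, so no algorithmic construction of the bad-variable assignment is needed here; your concern about derandomizing that step is moot, but the existence gap above must be repaired, e.g.\ by the satisfiability-threshold argument.
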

\begin{proof}
  We apply Theorem~\ref{thm:lll} on $\Cl_\good$ and $\Va_\good$.
  Let $\Omega^*_\good$ be all possible markings: $\Va_\good\to \{\text{marked},\text{unmarked}\}$,
  so $\mu_{\Omega^*_\good}$ is the distribution
  in which each good variable is marked independently and uniformly at random.
  For $c\in\Cl_\good$, let
  $M$ be the number of marked good variables in $\var(c)$.
  Let $A_c$ be the subset of $\Omega^*_\good$ that $M<3k/10$ or $M> 3k/5$. 
  Since $c\in \Cl_\good$, $c$ contains at least $9k/10$ good variables
  (see the observation immediately following the process that defines bad variables and clauses).
  Thus $9k/20\le \Ex_{\mu_{\Omega^*_\good}} M\le k/2$, and the number of unmarked good variables in $\var(c)$ is at least $9k/10 - M$.
  By a Chernoff bound,
  \begin{align*}
    \Pr_{\mu_{\Omega^*_\good}}(A_c) & = \Pr_{\mu_{\Omega^*_\good}}(M>3k/5)+\Pr_{\mu_{\Omega^*_\good}}(M<3k/10) \\
    & \le e^{-k/110}+e^{-k/40}\le 2e^{-k/150}.
  \end{align*}
  Let $x(c)=\frac{1}{k\HighD}$ if $c\in\Cl_\good$.
  Since we only consider assignments of $\Va_\good$, so $\var(A_c) \subseteq \var(c) \cap \Va_\good$ and thus $\Gamma(A_c) \subseteq \Gamma_\good(c)$.
  Note that $\Gamma_\good(c)$ is the set of neighbours of $c$ in $G_{\Phi,\good}$, and the maximum degree in $G_{\Phi,\good}$ is at most $k(\HighD-1)$,
  so we can verify \eqref{eqn:lll-condition} as follows,
  for any $c\in\Cl_\good$ and sufficiently large $k$,
  \begin{align*}
    x(c)\prod_{b\in \Gamma(A_c)}\Bigl(1-x(b)\Bigr) & \ge x(c)\prod_{b\in \Gamma_\good(c)}\Bigl(1-x(b)\Bigr) \ge \frac{1}{k\HighD}\left(1-\frac{1}{k\HighD}\right)^{k\HighD} \ge \frac{1}{e^2 k\HighD}\\
  & \geq e^{-k/300-2}/k > 2e^{-k/150} \ge \Pr_{\mu_{\Omega^*_\good}}(A_c).
  \end{align*}
  Thus, there is a marking such that item \eqref{item:marked} holds.
  
W.h.p.{} over the choice of~$\Phi$, item \eqref{item:partial} always holds for any marking of $\Va_\good$.
  This is because our density is well below the critical threshold \cite{DSS15}.
  Thus there is at least one satisfying assignment to $\Phi$.
  As $\var(c)\cap\Va_\good=\emptyset$ for any $c\in\Cl_\bad$, 
  the restriction of the satisfying assignment to $\Va_\bad$ satisfies the condition.

  The marking can be found using the deterministic algorithm \cite{CGH13} by verifying  
  \begin{align*}
    \Pr_{\mu_{\Omega^*_\good}}(A_c) & \le \left(x(c)\prod_{b\in \Gamma_\good(c)}\Bigl(1-x(b)\Bigr)\right)^{1.01}  \qedhere
  \end{align*} 
  \end{proof}

We will assume from now on that the random formula~$\Phi$ satisfies Lemma~\ref{lem:marking}
and we 
will stick to an arbitrary marking given by Lemma~\ref{lem:marking}.
We use $\marked(c)$ to denote the marked variables in clause~$c$ 
and $\Va_\marked:=\cup_{c\in\Cl_\good}\marked(c)$ be the set of all marked variables.

Let $\Omega$ be the set of satisfying assignments of~$\Phi$.
We will be particularly interested in the uniform distribution $\mu_{\Omega}(\cdot)$.
For any partial assignment $\partialVal$ of some of the \emph{good} variables of~$\Phi$,
let $\Phi^{\partialVal}$ be the formula produced by simplifying $\Phi$   under $\partialVal$.
In other words, we remove all  clauses that are satisfied under $\partialVal$ and 
we remove all false literals  from all clauses.
(Some clauses may become empty in $\Phi^{\partialVal}$, 
in which case $\Phi^{\partialVal}$ cannot be satisfied.)
We use $\Cl^{\partialVal}$ to denote the set of clauses of $\Phi^{\partialVal}$,
and similarly $\Va^{\partialVal}$. We also define $\Va_\good^{\partialVal} = \Va_\good \cap \Va^{\partialVal}$ and $\Cl_\good^\partialVal = \Cl_\good \cap \Cl^{\partialVal}$ to denote the sets of remaining good variables and clauses of $\Phi$ simplified under $\partialVal$. (Remark: Note that $\partialVal$ does not contain bad variables so if we define $\Va_\bad^\partialVal$ and $\Cl_\bad^\partialVal$ similarly then we will have $\Va_\bad^\partialVal = \Va_\bad$ and $\Cl_\bad^\partialVal=\Cl_\bad$.)
Let $\Omega^{\partialVal}$ denote the set of satisfying assignments of $\Phi^{\partialVal}$,
namely, those satisfying assignments of $\Phi$ consistent with~$\partialVal$.

Let $s:= 2^{k/4}/(ek\HighD)$. Under an arbitrary conditioning of marked variables,
we have good control of events defined by good variables,
and the marginal distribution of good variables.

\begin{lemma}  \label{lem:local-uniform}
  Let $V\subseteq\Va_\marked$.
  For any partial assignment $\partialVal:V\rightarrow\{\true,\false\}$ such that $\Omega^\partialVal\neq\emptyset$ and
  any subset $A\subseteq \Omega^\partialVal$ such that $\var(A)\subseteq\Va_\good\setminus V$,
  \begin{align*}
    \Pr_{\mu_{\Omega^{\partialVal}}}(A)\le \Pr_{\mu_\calA}(A)\Bigl(1-\frac{1}{sk\HighD}\Bigr)^{-\abs{\Gamma(A)}},
  \end{align*}
  where $\calA$ is the set of all assignments of $\var(A)$.
  In particular, for $v\in\Va_\marked\setminus V$,
  \begin{align*}
    \left(1-\frac{1}{3s}\right)\frac{1}{2}\le\Pr_{\mu_{\Omega^{\partialVal}}}(v\rightarrow \true)\le \left(1+\frac{1}{3s}\right)\frac{1}{2}.
  \end{align*}
\end{lemma}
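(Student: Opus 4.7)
The plan is to decouple the analysis of good variables from that of bad variables by conditioning on the assignment to $\Va_\bad$. After this conditioning only (simplified) good clauses on variables in $\Va_\good\setminus V$ remain, and since every good variable has at most $\HighD$ literal occurrences, the clause dependency graph of the resulting formula has degree at most $k(\HighD-1)$. I will then invoke the conditional version of Theorem~\ref{thm:lll} with the constant weight $x(c)\equiv 1/(sk\HighD)$.

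Concretely, write
\begin{align*}
\Pr_{\mu_{\Omega^{\partialVal}}}(A) = \sum_{\sigma_\bad}\Pr_{\mu_{\Omega^{\partialVal}}}(\sigma_\bad)\cdot\Pr_{\mu_{\Omega^{\partialVal,\sigma_\bad}}}(A),
\end{align*}
where $\Omega^{\partialVal,\sigma_\bad}$ denotes the set of satisfying assignments of $\Phi$ extending both $\partialVal$ and $\sigma_\bad:\Va_\bad\to\{\true,\false\}$; it then suffices to establish the claimed upper bound for each $\sigma_\bad$ with $\Omega^{\partialVal,\sigma_\bad}\neq\emptyset$. For such $\sigma_\bad$, the simplified formula $\Phi^{\partialVal,\sigma_\bad}$ lives on $\Va_\good\setminus V$: by Observation~\ref{obs:goodbad} bad clauses involve no good variables, so each such clause is either already satisfied by $\sigma_\bad$ (and removed) or would contradict $\Omega^{\partialVal,\sigma_\bad}\neq\emptyset$. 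Each surviving good clause $c$ still contains the at least $k/4$ distinct unmarked good variables that $c$ had originally (these are untouched by $\partialVal$ and $\sigma_\bad$), so $\Pr(A_c)\le 2^{-k/4}$ under the uniform measure on assignments of $\Va_\good\setminus V$, and its neighbours in the simplified formula share only good variables with $c$, so there are at most $k(\HighD-1)$ of them. With $x(c)=1/(sk\HighD)=e\cdot 2^{-k/4}$, the LLL condition \eqref{eqn:lll-condition} reduces to the inequality
\begin{align*}
\frac{e}{2^{k/4}}\Bigl(1-\frac{e}{2^{k/4}}\Bigr)^{k\HighD}\ge 2^{-k/4},
\end{align*}
which holds for sufficiently large $k$ because $k\HighD\cdot e/2^{k/4}=o_k(1)$.

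The conditional bound in Theorem~\ref{thm:lll} then yields
\begin{align*}
\Pr_{\mu_{\Omega^{\partialVal,\sigma_\bad}}}(A)\le \Pr_{\mu_\calA}(A)\,\Bigl(1-\frac{1}{sk\HighD}\Bigr)^{-|\Gamma^{\partialVal,\sigma_\bad}(A)|},
\end{align*}
where I have used that $A$ depends only on $\var(A)\subseteq\Va_\good\setminus V$ so its probability under the uniform distribution on assignments to $\Va_\good\setminus V$ equals $\Pr_{\mu_\calA}(A)$. Since $\var(A)\subseteq\Va_\good$ and bad clauses share no good variable (Observation~\ref{obs:goodbad}), every clause in $\Gamma^{\partialVal,\sigma_\bad}(A)$ lies in $\Gamma(A)$, so $|\Gamma^{\partialVal,\sigma_\bad}(A)|\le|\Gamma(A)|$; averaging over $\sigma_\bad$ gives the first claim. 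For the second claim, take $A=\{v\to\true\}$ and $A=\{v\to\false\}$ in turn for $v\in\Va_\marked\setminus V$: then $\Pr_{\mu_\calA}(A)=1/2$ and $|\Gamma(A)|=|\Gamma(v)|\le\HighD$, and a short calculation with $y=1/(sk\HighD)$ gives $(1-y)^{-\HighD}\le e^{2/(sk)}\le 1+1/(3s)$ for $k$ large enough, yielding both the upper bound on $\Pr_{\mu_{\Omega^{\partialVal}}}(v\to\true)$ and, via the complementary event, the matching lower bound.

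The main obstacle is that bad clauses may involve variables of very high degree, so the local lemma cannot be invoked on $\Phi^{\partialVal}$ directly. Conditioning on $\sigma_\bad$ side-steps this: bad clauses disappear entirely, bad variables become fixed parameters that may enter the surviving clauses but cannot create extra clause-to-clause dependencies, and the local lemma then applies uniformly in $\sigma_\bad$ with the good-degree bound $\HighD$.
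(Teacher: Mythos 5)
Your proposal is correct and follows essentially the same route as the paper: fix (or condition on) an assignment of the bad variables satisfying the bad clauses, reduce to the simplified good formula where each surviving good clause retains at least $k/4$ unmarked good variables and has at most $k(\HighD-1)$ neighbours in $G_{\Phi,\good}$, apply both parts of Theorem~\ref{thm:lll} with $x(c)=1/(sk\HighD)$, and average over the bad assignment; the marginal bounds then follow exactly as in the paper by applying the conditional LLL bound to $v\to\true$ and $v\to\false$. Your explicit sum over $\sigma_\bad$ is just a cleaner phrasing of the paper's appeal to the law of total probability.
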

\begin{proof}
  Let $\partialVal_\bad$ be an arbitrary assignment on $\Va_\bad$ such that all bad clauses are satisfied.
  Such a $\partialVal_\bad$ exists because of Item \eqref{item:partial} of Lemma~\ref{lem:marking}.

  We apply Theorem~\ref{thm:lll} to $\Phi^{\partialVal\cup\partialVal_\bad}$.
  The remaining variables are $V'=\Va_\good\setminus V$ and clauses are $\Cl':=\Cl_\good^{\partialVal\cup\partialVal_\bad}$.
  Let $\Omega'$ be the set of all $2^{|V'|}$ assignments $V'\to \{\true,\false\}$.
  Let $A_c$ be the subset of $\Omega'$ that $c$ is not satisfied, for $c\in\Cl'$.
  Then,
  \begin{align}\label{eqn:distribution-equiv}
    \Pr_{\mu_{\Omega^{\partialVal}}}\bigl(\cdot\mid\partialVal_\bad\bigr) = \Pr_{\mu_{\Omega'}}\bigl(\cdot\mid\bigwedge_{c\in\Cl'}\complement{A_c}\bigr).
  \end{align}

  We verify the condition \eqref{eqn:lll-condition} of Theorem~\ref{thm:lll} and show
  \begin{align}\label{eqn:lll-good-event}
    \Pr_{\mu_{\Omega'}}\bigl(A\mid\bigwedge_{c\in\Cl'}\complement{A_c}\bigr)\le \Pr_{\mu_\calA}(A)\Bigl(1-\frac{1}{sk\HighD}\Bigr)^{-\abs{\Gamma(A)}},
  \end{align}
  and
  \begin{align}\label{eqn:condition-local-uniform}
    \left(1-\frac{1}{3s}\right)\frac{1}{2}\le\Pr_{\mu_{\Omega'}}\bigl(v\rightarrow \true\mid\bigwedge_{c\in\Cl'}\complement{A_c}\bigr)\le \left(1+\frac{1}{3s}\right)\frac{1}{2}.
  \end{align}
  The first part of the lemma follows directly
  and the second part follows from \eqref{eqn:distribution-equiv},  
  \eqref{eqn:condition-local-uniform}, and the law of total probabilities over all choices of $\partialVal_\bad$.

  By Lemma~\ref{lem:marking}, there are at least $k/4$ unmarked variables in each $c\in\Cl'$.
  Thus,
  \begin{align*}
    \Pr_{\mu_{\Omega'}}(A_c)\le 2^{-k/4}.
  \end{align*}
  Let $x(c)=\frac{1}{sk\HighD}$. Since $\Omega'$ is a set of assignments of good variables, we have $\var(A_c) \subseteq \var(c)\cap \Va_\good$, and thus $\Gamma(A_c) \subseteq \Gamma_\good(c)$.
  Again, note that $\Gamma_\good(c)$ is the set of neighbours of $c$ in $G_{\Phi,\good}$ and the maximum degree of $G_{\Phi,\good}$ is at most $k(\HighD-1)$, so we obtain
  \begin{align*}
  x(c)\prod_{b\in \Gamma(A_c)}\Bigl(1-x(b)\Bigr)  &\geq x(c)\prod_{b\in \Gamma_\good(c)}\Bigl(1-x(b)\Bigr) \ge \frac{1}{sk\HighD}\left( 1-\frac{1}{sk\HighD} \right)^{k(\HighD-1)} \cr
 & \ge\frac{e^{-1/s}}{sk\HighD}\ge \frac{1}{esk\HighD}=2^{-k/4}.
  \end{align*}
  The two inequalities above verify condition \eqref{eqn:lll-condition} of Theorem~\ref{thm:lll}.
  Thus, \eqref{eqn:lll-good-event} follows directly since $\var(A)\subseteq V'$.
  Moreover, setting $A=v\rightarrow\true$, for sufficiently large $k$,
  \begin{align*}
    \Pr_{\mu_{\Omega'}}\bigl(v\rightarrow \true\mid\bigwedge_{c\in\Cl'}\complement{A_c}\bigr) 
    &\le \frac{1}{2}\left( 1-\frac{1}{sk\HighD} \right)^{-\HighD} \le \frac{1}{2} \cdot e^{\HighD/(sk\HighD-1)} \le \frac{1}{2}\left( 1+\frac{1}{3s} \right).
  \end{align*}
  We get the same upper bound for the event $v\rightarrow\false$ by the same argument.
  The bound \eqref{eqn:condition-local-uniform} follows by combining these two bounds.
\end{proof}

Moreover, we have the following lemma for a partial assignment that we will use to  apply self-reducibility.

\begin{lemma}\label{lem:partial-assignment}
  Let $\Phi=\Phi(k, n, m)$ and let $v_1,v_2,\ldots, v_n$ be the   variables  of $\Phi$.  In each clause,  order the literals in the order 
  induced
  by the indices of their variables.
  Then there is a partial assignment $\partialVal^*$  of truth values to some subset of $\Va_\marked$ 
   with the property that every clause $c\in\Cl_\good$ is satisfied by its first $k/20$   literals corresponding to marked variables.
  Moreover, $\partialVal^*$ can be found in deterministic polynomial time.
\end{lemma}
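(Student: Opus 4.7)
The plan is to apply the Local Lemma once more, following the template used in the proof of Lemma~\ref{lem:marking}. I will consider a uniformly random assignment of truth values to the marked variables and, for each good clause, define a bad event capturing ``none of the first $k/20$ marked literals is satisfied''; the goal is to show that with positive probability no bad event occurs, and moreover that such an assignment can be found deterministically.

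Concretely, let $\Omega^*_\marked$ be the set of all assignments $\Va_\marked\to\{\true,\false\}$, and let $\mu_{\Omega^*_\marked}$ be the uniform distribution on it. For every $c\in\Cl_\good$, since each good clause contains at least $3k/10 \ge k/20$ marked variables (by Item~\eqref{item:marked} of Lemma~\ref{lem:marking}), we can speak of the first $k/20$ literals of $c$ whose underlying variable is marked, using the ordering induced by the indices of the variables. Let $B_c$ be the event in $\Omega^*_\marked$ that all of these first $k/20$ marked literals evaluate to $\false$. Since these literals involve $k/20$ distinct good variables (the variables of $c$ are distinct), the events are determined by $k/20$ independent fair coin flips, so
\begin{align*}
\Pr_{\mu_{\Omega^*_\marked}}(B_c) = 2^{-k/20}.
\end{align*}

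Next I would analyse the dependencies. The event $B_c$ depends only on variables in $\var(c)\cap\Va_\good$, so $\Gamma(B_c)\subseteq\Gamma_\good(c)$, and the maximum degree of $G_{\Phi,\good}$ is at most $k(\HighD-1)$ (as noted right after Definition~\ref{def:Gphi}). Setting $x(c)=\tfrac{1}{k\HighD}$ for all $c\in\Cl_\good$, the same calculation as in the proof of Lemma~\ref{lem:marking} gives, for sufficiently large $k$,
\begin{align*}
x(c)\prod_{b\in\Gamma(B_c)}\bigl(1-x(b)\bigr)\ge \frac{1}{k\HighD}\left(1-\frac{1}{k\HighD}\right)^{k\HighD}\ge \frac{1}{ek\HighD}\ge e^{-k/300-1}/k \ge 2^{-k/20},
\end{align*}
which verifies condition~\eqref{eqn:lll-condition} of Theorem~\ref{thm:lll}. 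Hence there exists an assignment $\partialVal^*$ of $\Va_\marked$ avoiding every $B_c$, which is exactly the required property that each good clause is satisfied by one of its first $k/20$ marked literals.

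For the algorithmic part, I would invoke the deterministic constructive Local Lemma of Chandrasekaran, Goyal, and Haeupler~\cite{CGH13}, exactly as in the final step of the proof of Lemma~\ref{lem:marking}: the slightly stronger inequality $\Pr_{\mu_{\Omega^*_\marked}}(B_c)\le\bigl(x(c)\prod_{b\in\Gamma_\good(c)}(1-x(b))\bigr)^{1.01}$ holds with the same choice of $x$ for large $k$ (since $2^{-k/20}$ is still exponentially smaller than $(1/(ek\HighD))^{1.01}$), yielding a deterministic polynomial-time algorithm to construct $\partialVal^*$. There is no genuine obstacle here: the whole argument is a straightforward Local Lemma application mirroring Lemma~\ref{lem:marking}, and the only thing worth double-checking is that the slack between the bad-event probability $2^{-k/20}$ and the LLL lower bound remains comfortable enough (which it is, because the degree bound $k\HighD$ and the constant in the exponent $1/300$ are the same as before, while the bad-event bound is now $2^{-k/20}$ rather than $2e^{-k/150}$).
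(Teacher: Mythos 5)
Your proposal is correct and is essentially the paper's own argument: the paper likewise applies the local lemma to the truncated clauses formed by the first $k/20$ marked literals of each good clause, with failure probability $2^{-k/20}$, the same choice $x(c)=\tfrac{1}{k\HighD}$ and degree bound $k(\HighD-1)$, and then invokes the deterministic algorithm of \cite{CGH13} for the constructive part. Your write-up just spells out the probability and slack computations that the paper calls ``straightforward to verify.''
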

\begin{proof}
  We replace every $c\in\Cl_\good$ by its first $k/20$ literals corresponding to marked variables.
  Call the new clause $c'$ and the new set $\Cl_\good'$.
  This induces a new formula $\Phi'$ whose clause set is $\Cl_\good'$ with no unmarked variables. 
  We apply Theorem~\ref{thm:lll} to $\Phi'$.
  Let $A_c$ be the event that $c$ is not satisfied where $c\in\Cl_\good'$ and set $x(c)=\frac{1}{k\HighD}$ for all $c\in\Cl_\good'$.
  Since $\Phi'$ is a smaller formula, the size of $\Gamma_\good(c)$ is still at most $k(\HighD-1)$.
  It is straightforward to verify that the condition \eqref{eqn:lll-condition} holds and Theorem~\ref{thm:lll} applies.
  Thus, the desired $\partialVal^*$ exists.

  To find $\partialVal^*$, once again we apply the deterministic algorithm for the local lemma \cite{CGH13}.
\end{proof}

Recall $\Phi^{\partialVal}$, which is $\Phi$ simplified under $\partialVal$.
Under $\partialVal^*$, neither bad clauses nor bad variables will be removed.
More importantly, we have the following corollary.

\begin{corollary}\label{cor:partial-prefix}
  For any prefix $\partialVal$ of $\partialVal^*$ in Theorem~\ref{lem:partial-assignment},
  any remaining good clause $c$ in $\Phi^{\partialVal}$ satisfies $\marked(c)\ge k/4$.
\end{corollary}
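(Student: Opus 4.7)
The plan is to exploit the explicit literal ordering (by variable index) within each clause together with the defining property of $\partialVal^*$ from Lemma~\ref{lem:partial-assignment}. I would interpret a \emph{prefix} $\partialVal$ of $\partialVal^*$ as the restriction of $\partialVal^*$ to an initial segment $\{v_1,\ldots,v_i\}$ of the variable ordering; this matches both the literal ordering used in Lemma~\ref{lem:partial-assignment} and the form in which $\partialVal$ arises in the self-reducibility recursion that will be applied later.

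First I would fix a good clause $c\in\Cl_\good$ that survives in $\Phi^{\partialVal}$ and let $u_1,\ldots,u_{k/20}$ denote its first $k/20$ marked variables, i.e., the marked variables of $c$ with the $k/20$ smallest indices. By Lemma~\ref{lem:partial-assignment} all of $u_1,\ldots,u_{k/20}$ lie in the domain of $\partialVal^*$ and their corresponding literals in $c$ already satisfy $c$ under $\partialVal^*$; in particular, at least one of these $k/20$ literals is set to true by $\partialVal^*$.

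Next I would argue by contradiction that the largest-indexed of these, $u_{k/20}$, cannot lie in $\{v_1,\ldots,v_i\}$: if it did, then every $u_j$ with $j<k/20$ would as well (since its index is smaller), so $\partialVal$ would agree with $\partialVal^*$ on all of $u_1,\ldots,u_{k/20}$ and would therefore satisfy $c$, contradicting the assumption that $c$ survives in $\Phi^{\partialVal}$. Consequently every marked variable of $c$ whose index is at least that of $u_{k/20}$ is unassigned by $\partialVal$ and therefore remains in the simplified clause.

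Finally a direct count completes the proof: since $u_{k/20}$ is the $(k/20)$-th smallest-indexed marked variable of $c$ and $|\marked(c)|\ge 3k/10$ by Lemma~\ref{lem:marking}, there are at least $|\marked(c)|-(k/20-1)\ge 3k/10-k/20+1>k/4$ marked variables of $c$ whose index is at least that of $u_{k/20}$, and all of these remain in $\Phi^{\partialVal}$. The only delicate point is fixing the meaning of ``prefix'' consistently with the literal ordering from Lemma~\ref{lem:partial-assignment}; once this is pinned down, the rest is a short monotonicity observation.
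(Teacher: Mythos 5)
Your proof is correct and takes essentially the same route as the paper: either all of the first $k/20$ marked variables of $c$ are already assigned by the prefix (in which case $c$ is satisfied by Lemma~\ref{lem:partial-assignment} and removed from $\Phi^{\partialVal}$), or at most $k/20-1$ of them are, leaving at least $3k/10-k/20\ge k/4$ marked variables in the surviving clause by Lemma~\ref{lem:marking}. Your write-up simply makes explicit the index-order monotonicity of prefixes that the paper's two-case argument uses implicitly.
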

\begin{proof}
  Under $\partialVal$, for any $c\in\Cl_\good$,
  either $c$ has at least $k/20$ marked variables assigned, or $c$ has at most $k/20$ marked variables assigned.
  In the first case, by Lemma~\ref{lem:partial-assignment}, $c$ is satisfied and thus is removed in $\Phi^{\partialVal}$.
  In the second case, even if $c$ is not satisfied, by Lemma~\ref{lem:marking}, $c$ has at least $(3/10-1/20)k = k/4$ marked variables unassigned.
\end{proof}

\section{The coupling tree}\label{sec:couplingtree}

W.h.p.{} the random formul~$\Phi$ satisfies Lemma~\ref{lem:marking} and
Lemma~\ref{lem:partial-assignment} and from now on we 
focus on~$\Phi$ for which this is true.
We will use the marking from Lemma~\ref{lem:marking}.
Conditioned on a prefix $\partialVal$ of the partial assignment $\partialVal^*$ from Lemma~\ref{lem:partial-assignment},
the main subroutine of our algorithm is to calculate the marginal probability of the next variable in $\partialVal^*$, say $v^*$.
Let $\Omega_1^\partialVal = \Omega^{\partialVal\cup\{v^*\rightarrow\true\}}$ 
and $\Omega_2^\partialVal = \Omega^{\partialVal\cup\{v^*\rightarrow\false\}}$. 
Then our goal is to estimate $|\Omega_1^\partialVal|/|\Omega_2^\partialVal|$.

We will eventually set up a linear program to approximate $|\Omega_1^\partialVal|/|\Omega_2^\partialVal|$.
Before introducing the linear program, we will define the so-called coupling tree,
which represents a variable-by-variable greedy coupling process between $\mu_{\Omega_1^\partialVal}$ and $\mu_{\Omega_2^\partialVal}$. Denote by $\=T^\partialVal$ the coupling tree.
Each node~$\rho$ of $\=T^\partialVal$ is a tuple $\rhois$ where
\begin{itemize} 
  \item $\VI$ and $\Vset$ are subsets of $\Va$;
  \item $\Crem$ and $\failedclauses$ are subsets of $\Cl$;
  \item $\Aone$ and $\Atwo$ are functions from~$\Vset$ to~$\{\true,\false\}$;
  \item $\reason$ is a function from $\Cl$  to the subsets of $\{\badsf,\disagree,\mathsf{1},\mathsf{2}\}$.
\end{itemize}
We use the notation $\Aone(\rho)$, $\Atwo(\rho)$, etc.\ to denote components of $\rho$.
Note that the information in $\rho$ is redundant, as some components can be deduced from others. 

Intuitively, for a node $\rho$, $\Vset$ is the set of variables that we have tried to couple, and $\VI$ is the
set of variables that cause discrepancies, or for which we have given up. 
The set $\Crem$ denotes remaining clauses, and $\failedclauses$ denotes ``failed'' clauses.
The two partial assignments $\Aone$ and $\Atwo$ assign truth values to the variables in~$\Vset$.
Finally, the function $\reason(c)$ gives the reason why $c$ is in $\failedclauses$.
The possibilities are: (1) $c$ is a bad clause;
(2) $\Aone$ and $\Atwo$ disagree on some variable in this clause;
(3) $c$ is not satisfied by the partial assignment $\Aone$  or $\Atwo$ (or both).
These reasons may not be mutually exclusive.
If $c\not\in \failedclauses$, then $\reason(c)=\emptyset$. 

We will only consider a partial assignment $\partialVal$ that is a prefix of $\partialVal^*$ from Lemma~\ref{lem:partial-assignment}.
Thus, by Corollary~\ref{cor:partial-prefix}, in any $c\in\Cl_\good^\partialVal$, there are at least $k/4$ marked variables remaining.

We will (inductively) guarantee that every node $\rhois$ of the coupling tree satisfies the following properties:
\begin{enumerate}[(P1)] 
\item $\{v^*\} \subseteq \Vset\cap \VI$. \label{P:xinVset}

\item Every clause $c\in \Crem$ satisfies 
 one of the following: $\var(c) \subseteq \VI$, $\var(c) \subseteq \Va^{\partialVal}\setminus \VI$, or $\marked(c) \setminus \Vset$
 is non-empty. \label{P:uselessclausesgone}
   \item For any $c\in\Crem\cap\Cl_\bad$, 
    $\var(c) \subseteq \Va^{\partialVal}\setminus \VI$. \label{P:badclause}
  \item For any $c\in \Cl^{\partialVal}\setminus \Crem$, at least one of the following is true:
  \begin{itemize}
    \item $c$ is satisfied by both of the partial assignments $\Aone$ and $\Atwo$;
    \item  $\var(c)\subseteq \VI\cup\Vset$. 
   \end{itemize} \label{P:rhoinv}
  \item $H_\Phi[\VI]$ is connected.\label{P:connected}
  \item For any $u\in \VI$,  
    $\exists c\in\failedclauses$ 
  s.t.\ $u\in\var(c)$. \label{P:VIvar}
  \item For any $u \in \Vset\setminus \VI$, $\Aone(u)=\Atwo(u)$.\label{P:agree}
  \item $\reason(c)$ is non-empty if and only if $c\in \failedclauses$.\label{P:failempty}
  \item \quad
  \begin{enumerate}[(P9.1)]
\item For any  $c \in \failedclauses$,  $\badsf \in \reason(c)$ only if $c$ is a bad clause; \label{P:typebad}
\item For any  good $c \in \failedclauses$,  \label{P:typedisagree}
$\disagree \in \reason(c)$ iff there is a variable  
$v \in \var(c) \cap \VI\cap \Vset$  
such that $\Aone(v) \neq \Atwo(v)$; 
\item For any  $c \in \failedclauses$ and  $i\in\{\mathsf{1},\mathsf{2}\}$, $i\in\reason(c)$ 
only if   the following three conditions hold: 
 $\var(c) \subseteq \VI\cup \Vset$,   
$\marked(c) \subseteq \Vset$, and $\Ai$ does not satisfy~$c$. 
\label{P:typetwoclauses}
 \end{enumerate}

\item  For any    
$c\in\Cl_\good^\partialVal$, $\marked(c)\cap \VI \subseteq \Vset$.
\label{P:yucktwo}

\end{enumerate}

The root of the coupling tree is the node~$\rho^*$
with 
$\Vset(\rho^*) = \VI(\rho^*) = \{v^*\}$, 
$\Crem(\rho^*) = \Cl^{\partialVal}$, and
$\failedclauses(\rho^*)$ assigned to the set of clauses containing~$v^*$.
The assignment $\Aone(\rho^*)$ sets $v^*$ to~$\true$ and
the assignment $\Atwo(\rho^*)$ sets $v^*$ to~$\false$.  
The function $\reason$ maps every clause in $\failedclauses(\rho^*)$
to $\{\disagree\}$ and every other clause to $\emptyset$.

It is straightforward to see that $\rho^*$ satisfies the coupling-tree properties.
Property~\ref{P:badclause} follows from the fact that $v^*$ is good, so it is not in a bad clause (Observation~\ref{obs:goodbad}).
To see Property~\ref{P:uselessclausesgone} note that each 
bad clause~$c$ has 
$\var(c) \subseteq \Va^{\partialVal}\setminus \VI$
and each good
clause has at least $k/4>1$ marked variables
so  $\marked(c) \setminus \Vset(\rho)$ 
is non-empty.  
The rest are straightforward.

In order to ensure that the size of the coupling tree is bounded from above by a polynomial in the size of the formula~$\Phi$,
we will set a truncation depth $L:= C_0 {(3k^2\HighD)} \depthceiling$ for some sufficiently large absolute constant $C_0$,
where $n=\abs{\Va}$.

\begin{definition}[leaf, truncating node]\label{def:leaf}
A node $\rho$ of the coupling tree is \emph{a leaf} if
$\abs{\VI(\rho)}\leq L$ and  every $c\in \Crem(\rho)$ has the property that
{$\var(c) \subseteq \VI(\rho)$}
or 
{$\var(c) \subseteq \Va^{\partialVal}\setminus \VI(\rho)$.}
If $\abs{\VI(\rho)} > L$, then $\rho$ is a \emph{truncating node}.
We denote the set of leaves by~$\calL$, the set of truncating nodes by~$\calT$,
and their union by $\calL^*:= \calL \cup \calT$.
\end{definition}

Suppose that $\rho$ is not a leaf or a truncating node. Then we define its children as follows. 
Since $\rho$ is not a leaf, 
there is a clause 
{$c\in \Crem(\rho)$} 
such that $\var(c)\cap\VI(\rho)\neq\emptyset$ and $\var(c)\cap (\Va^{\partialVal}\setminus \VI(\rho))\neq\emptyset$.
By \ref{P:badclause}, $c$ must be a good clause.
Let $c$ be the first such clause.
By \ref{P:uselessclausesgone},
let $u$ be the first variable in  $\marked(c) \setminus \Vset(\rho)$. 
{By \ref{P:yucktwo}, $u\notin \VI(\rho)$.} 
Since it is in $\marked(c)$, $u$  is a good variable.
We refer to $c$ as the \emph{``first clause of $\rho$''}
and $u$ as the \emph{``first variable of $\rho$''}.

We now define the four children of $\rho$ in the coupling tree.
For each of the four pairs~$(\tau_1,\tau_2)$ where $\tau_1$ and $\tau_2$
are assignments from $\{u\}$ to $\{\true,\false\}$, 
we create a child $\rhonext$ of~$\rho$ using Algorithm~\ref{alg:children}.
The following lemma shows that the coupling tree properties are maintained.

\begin{algorithm}[htbp]
  \caption{Create the child $\rhonext$}
  \label{alg:children}
    \begin{algorithmic}[1]
    \State $\Vset(\rhonext)  \gets \Vset(\rho) \cup \{u\}$; \label{stepVset}
    \State
      $\Aone(\rhonext) \gets $ combine $\Aone(\rho)$ with $\tau_1$;
      \State
      $\Atwo(\rhonext) \gets$ combine $\Atwo(\rho)$ with $\tau_2$;
    \State 
    $(\VI,\Crem,\failedclauses,\reason) \gets (\VI(\rho),\Crem(\rho),\failedclauses(\rho),\reason(\rho))$;
    \If {$\tau_1(u)\neq \tau_2(u)$}\label{step:equality-check}
      \State $\VI \gets \VI \cup \{u\}$;\label{step:VI-add-u}
      \For {$c': u\in\var(c')$}
        \State $\failedclauses \gets \failedclauses \cup \{c'\}$; \label{step:failed-disagree}
        \State $\reason(c') \gets \reason(c') \cup \{\disagree\}$; \label{step:today}
      \EndFor
    \EndIf
     \For{\label{step:satisfied}$c'\in \Crem$ \textbf{s.t.}\ $c'$ is satisfied by both $\Aone(\rhonext)$ and $\Atwo(\rhonext)$}
      \State $\Crem \gets \Crem \setminus \{c'\}$;\label{step:remove-sat}
    \EndFor 
\While{ $\exists c'\in \Crem$ 
with 
{$\var(c') \cap \VI \neq \emptyset$, $\var(c') \cap (\Va^{\partialVal}\setminus \VI) \neq \emptyset$, and $\marked(c') \setminus \Vset(\rhonext) = \emptyset$}
} 
\label{step:fail}
   \State  
      \texttt{(* For some $i\in\{1,2\}$, $c'$ is not satisfied by $\Ai(\rhonext)$. Otherwise,  $c'$  }
   \State     \hspace{1ex}    \texttt{ would have  been removed from $\Crem$ in the for loop of Line \ref{step:satisfied}. \hfill*)}
      \State $\failedclauses \gets \failedclauses \cup \{c'\}$; \label{step:failed-dissatisfy}
      \State $\reason(c') \gets \reason(c') \cup \{i\mid \text{$c'$ is not satisfied by $\Ai(\rhonext)$}\}$;

      \State $\VI \gets \VI \cup (\var(c')\setminus\Vset(\rhonext))$;\label{step:VI-add-fail}
               \State $\Crem \gets \Crem \setminus \{c'\}$;\label{step:remove-fail}
      \EndWhile

        \While{$\exists c'\in\Crem\cap\Cl_\bad$ with $\var(c')\cap\VI\neq\emptyset$}\label{step:bad}
      \State $\failedclauses \gets \failedclauses \cup \{c'\}$; \label{step:failed-bad}
\State  $\VI \gets \VI \cup (\var(c')\setminus\Vset(\rhonext))$;
\label{step:VI-add-bad}
      \State $\reason(c') \gets \reason(c') \cup \{\badsf\}$;
      \State $\Crem \gets \Crem \setminus \{c'\}$;\label{step:remove-bad}
    \EndWhile

\If{ $\exists c'\in \Crem$ 
with 
{$\var(c') \cap \VI \neq \emptyset$, $\var(c') \cap (\Va^{\partialVal}\setminus \VI) \neq \emptyset$, and $\marked(c') \setminus \Vset(\rhonext) = \emptyset$}
}   {Goto Line 15}    
\EndIf
        \State 
       $  (\VI(\rhonext),\Crem(\rhonext),\failedclauses(\rhonext),\reason(\rhonext))        \gets (\VI,\Crem,\failedclauses,\reason) $;     
       
  \end{algorithmic}
\end{algorithm}

\begin{lemma}
  If $\rho$ satisfies the coupling tree properties, then so does $\rhonext$. 
\end{lemma}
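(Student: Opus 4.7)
The plan is to verify the nine invariants (P1)--(P9) in turn for $\rhonext$, working through Algorithm~\ref{alg:children}. Properties (P1), (P5), (P6), (P7), (P8), and most of (P9) follow by straightforward inspection of the updates; the real content of the lemma is (P2), (P3), (P4), which are enforced by the two \texttt{while}-loops.

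First I would dispose of the easy invariants. (P1) holds because $\Vset$ and $\VI$ only grow. (P7) holds because line~\ref{step:VI-add-u} moves $u$ into $\VI$ exactly when $\tau_1(u)\neq\tau_2(u)$, while lines~\ref{step:VI-add-fail} and~\ref{step:VI-add-bad} only add variables not in $\Vset$. (P5) holds because every variable added to $\VI$ lies in a clause (the first clause of $\rho$ for $u$, or $c'$ in the \texttt{while}-loops) that already shares a variable with the old $\VI$, preserving connectivity in $H_\Phi$. (P6) is immediate: each new $\VI$-variable belongs to a clause that is simultaneously placed in $\failedclauses$ at lines~\ref{step:failed-disagree}, \ref{step:failed-dissatisfy}, \ref{step:failed-bad}. (P8) follows because $\reason(c)$ is extended only at the same moment $c$ is inserted into $\failedclauses$, and $\failedclauses$ never shrinks. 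For (P9) the main observations are: $\badsf$ is added only in line~\ref{step:bad} where $c'$ is bad by the loop condition, and conversely the first loop never touches a bad clause --- by (P3) for $\rho$ together with Observation~\ref{obs:goodbad}, using that $u$ is marked and hence good --- giving (P9.1); $\disagree$ is added in line~\ref{step:failed-disagree} precisely when $u$ becomes a $\Vset\cap\VI$-member with $\Aone(u)\neq\Atwo(u)$, giving (P9.2); and the label $i$ is added in line~\ref{step:failed-dissatisfy} precisely when $\Ai$ fails to satisfy $c'$, with the loop condition $\marked(c')\subseteq\Vset$ and line~\ref{step:VI-add-fail} together supplying the containments required by (P9.3).

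For (P2), the \texttt{while}-loop at line~\ref{step:fail} is designed so that its termination is exactly the statement of (P2) for the surviving clauses: the loop's entry condition is precisely the negation of all three disjuncts of (P2). Similarly, (P3) is enforced by the \texttt{while}-loop at line~\ref{step:bad}. Property (P4) is then verified by inspecting the three routes by which a clause can leave $\Crem$: line~\ref{step:remove-sat} removes $c$ because it is satisfied by both $\Aone$ and $\Atwo$, while lines~\ref{step:remove-fail} and~\ref{step:remove-bad} each remove $c'$ only after its entire variable set has been folded into $\VI\cup\Vset$ by lines~\ref{step:VI-add-fail} and~\ref{step:VI-add-bad}.

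The most delicate point, which I expect to be the main obstacle, is the interaction between the two loops: when the second loop at line~\ref{step:bad} adds bad variables to $\VI$, a good clause that was previously in (P2)-case (ii) (with $\var(c)\cap\VI=\emptyset$ and $\marked(c)\subseteq\Vset$) could acquire a new $\VI$-member and thereby become (P2)-violating. To handle this I would either (a) view the two loops as alternating until a joint fixed point --- observing that $\VI$ grows monotonically and is bounded by $|\Va|$, so termination is trivial --- or (b) argue directly that every bad variable introduced into $\VI$ by the second loop already lies only in bad clauses (which that same loop has removed from $\Crem$) and in good clauses that had already been removed by the first loop via line~\ref{step:remove-fail}. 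Once simultaneous satisfaction of (P2) and (P3) is secured, (P4) and the remaining invariants fall out of the case analysis above.
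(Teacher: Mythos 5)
Your overall strategy is the same as the paper's: verify each of \ref{P:xinVset}--\ref{P:typetwoclauses} directly against Algorithm~\ref{alg:children}, which is fine for the routine items. The trouble lies exactly in the two places where you go beyond a line-by-line check. First, your justification of \ref{P:typebad} rests on the claim that the while loop of Line~\ref{step:fail} ``never touches a bad clause'', deduced from \ref{P:badclause} for $\rho$ and the goodness of $u$. Those facts only control the state when that loop \emph{starts}: during the loop, Line~\ref{step:VI-add-fail} adds all of $\var(c')\setminus\Vset$ to $\VI$, and a good clause may contain (fewer than $k/10$, but possibly some) bad variables by Observation~\ref{obs:goodbad}, so bad variables do enter $\VI$ inside the first loop. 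Once that happens, any bad clause $c'$ of $\Crem$ containing such a variable and still having a variable outside $\VI$ satisfies the guard of Line~\ref{step:fail}, because for a bad clause $\marked(c')=\emptyset$ and the condition $\marked(c')\setminus\Vset=\emptyset$ is vacuous. Hence the first loop can pick up bad clauses, and your argument for \ref{P:typebad} does not go through as stated.

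Second, for the loop interaction that you rightly single out as the delicate point, neither of your proposed fixes works as written. Option (a) --- alternating the two loops to a joint fixed point --- establishes an invariant for a modified algorithm, not for Algorithm~\ref{alg:children}, which runs the two loops once each, sequentially. Option (b) asserts that every bad variable introduced at Line~\ref{step:VI-add-bad} lies only in clauses already removed from $\Crem$; this is precisely the kind of statement that needs proof, and it is not implied by anything you cite (a bad variable can perfectly well sit in a good clause that is still in $\Crem$). The observation that actually settles your worry is the one from the previous paragraph used in the other direction: since the guard of Line~\ref{step:fail} is vacuously applicable to bad clauses, at termination of the first loop every bad clause of $\Crem$ that meets $\VI$ already has $\var(c')\subseteq\VI$; consequently Line~\ref{step:VI-add-bad} adds nothing new to $\VI$, and the second loop cannot create fresh violations of \ref{P:uselessclausesgone} or disturb \ref{P:badclause}. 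But this same observation shows that the genuinely delicate invariant is \ref{P:typebad} (a bad clause swept up by the first loop is never given the label $\badsf$), i.e.\ the property you treated as routine; so the gap in your proposal sits in the step you used to dismiss that case, not in the step you flagged.
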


\begin{proof}
Property~\ref{P:xinVset} holds trivially for all nodes of the coupling tree.
Property~\ref{P:uselessclausesgone} holds because clauses  
 not satisfying the conditions 
are removed in the while loop of Line~\ref{step:fail}
{which is repeated after the loop in Line~\ref{step:bad} until there are no further changes.
}
Property~\ref{P:badclause} holds because clauses not satisfying the conditions are removed in the while loop of Line~\ref{step:bad}.
  For Property~\ref{P:rhoinv}, if  a clause is removed from $\Crem$, then it is removed in Line~\ref{step:remove-sat}, \ref{step:remove-fail}, or \ref{step:remove-bad}.
  The first case satisfies the first condition of \ref{P:rhoinv}, and the other two satisfy the second condition of \ref{P:rhoinv}.
  For Property~\ref{P:connected}, since $H_\Phi[\VI(\rho)]$ is connected, we just need to verify that when $\VI$ expands,
  the new vertices are connected with the old $\VI$.
  The set $\VI$ expands in Line~\ref{step:VI-add-u}   and Line~\ref{step:VI-add-fail} and \ref{step:VI-add-bad}.
  All cases can be verified straightforwardly.
  Property~\ref{P:VIvar} holds because all variables are added to $\VI$ in Line~\ref{step:VI-add-u}   and Line~\ref{step:VI-add-fail} and \ref{step:VI-add-bad}.
  In all three cases the corresponding clauses are added to $\failedclauses$.
  For Property~\ref{P:agree},  
  $\Vset$ is only expanded in Line~\ref{stepVset} and the property is guaranteed by Line~\ref{step:VI-add-u}.
    Property~\ref{P:failempty}  follows from  the way that $\reason$ and $\failedclauses$ are updated by the algorithm.

Property~\ref{P:typebad} follows from the fact that the loop in Line~\ref{step:bad} only applies to bad clauses. To see Property~\ref{P:typedisagree},
assume first that $c\in \failedclauses$ has $\disagree\in \reason(c)$.
This must happen in Line~\ref{step:today} and the loop guarantees 
that there is a variable  
$u \in \var(c) \cap \VI\cap \Vset$  
such that $\Aone(u) \neq \Atwo(u)$.  
For the other direction 
consider a good  clause~$c'$  that 
is added to~$\failedclauses$ in Step~\ref{step:failed-dissatisfy}.
All of the marked variables in~$c'$ are in $\Vset(\rhonext)$
so if there were a variable 
$v \in \var(c') \cap \Vset(\rhonext)$  
such that $\Aone(\rhonext)$ and $\Atwo(\rhonext)$ disagree on~$v$,
then  at the time that $v$ was set (either in $\rhonext$ or in a parent)
$\disagree$ would have been added to $\reason(c')$.
Property~\ref{P:typetwoclauses} follows by considering the loop in Line~\ref{step:fail}.

 {For Property~\ref{P:yucktwo} note that marked vertices can only be added to $\VI$  in Lines~\ref{step:VI-add-u}   and Line~\ref{step:VI-add-fail}. The vertex added in Line~\ref{step:VI-add-u}
is also in $\Vset$. The marked vertices that are added in Line~\ref{step:VI-add-fail}
are also all in $\Vset$.}
  \end{proof}

\subsection{Key property of the coupling tree for a random formula}

The following   property will be useful.
 Its proof is deferred to Section~\ref{sec:coupling-tree-property}.

\newcommand{\statelemdepthoftree}{\Whp over the choice of $\Phi$, for every prefix $\partialVal$ of $\partialVal^*$, every node~$\rho$ in $\=T^\partialVal$ has the property that
  $|\Vset(\rho)| \leq 3k^3\alpha L+1$.}
\begin{lemma}\label{lem:depthoftree}
\statelemdepthoftree
\end{lemma}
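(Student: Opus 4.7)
The plan is to reduce the size bound on $\Vset(\rho)$ to an enumeration estimate on random formulas. Inspection of Algorithm~\ref{alg:children} shows that $\Vset$ increases by exactly one element (line~\ref{stepVset}) per edge of $\=T^\partialVal$, so $|\Vset(\rho)|=1+d(\rho)$, where $d(\rho)$ is the depth of $\rho$; it suffices to show $d(\rho)\le 3k^3\alpha L$. Each step along any root-to-$\rho$ path chooses a first clause $c$ (necessarily good, by Property~\ref{P:badclause}) and a first marked variable $u\in\marked(c)\setminus\Vset$; once added, $u$ remains in $\Vset$ forever, so the same good clause $c$ can serve as a first clause at most $|\marked(c)|\le k$ times along the path. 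Since $\VI$ is monotone non-decreasing along the path, each such $c$ satisfies $\var(c)\cap\VI(\rho)\ne\emptyset$, yielding
\[
  d(\rho)\le k\cdot\bigl|\{c\in\Cl_\good:\var(c)\cap\VI(\rho)\ne\emptyset\}\bigr|.
\]
For a non-truncating $\rho$, $|\VI(\rho)|\le L$ by definition; for a truncating $\rho$, apply the bound to its (non-truncating) parent and use $|\Vset(\rho)|=|\Vset(\text{parent})|+1$, absorbing the $+1$ into the constants. Property~\ref{P:connected} further guarantees that $\VI(\rho)$ is connected in $H_\Phi$.

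What remains is the following statement about the random formula, which depends only on $\Phi$, so a single invocation gives the conclusion for every prefix $\partialVal$ and every node $\rho\in\=T^\partialVal$ simultaneously: \emph{w.h.p.~over $\Phi=\Phi(k,n,\lfloor\alpha n\rfloor)$, every set $S\subseteq\Va$ that is connected in $H_\Phi$ and satisfies $1\le|S|\le L$ is touched by at most $3k^2\alpha|S|$ clauses of $\Phi$.} Combined with the reduction above this gives $d(\rho)\le k\cdot 3k^2\alpha L=3k^3\alpha L$ and hence $|\Vset(\rho)|\le 3k^3\alpha L+1$.

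I would prove the claim by a union bound. The expected number of clauses touching a fixed set $S$ of size $s$ is at most $km\cdot s/n=k\alpha s$, so a Chernoff bound gives a deviation probability $\exp(-\Omega(k^2\alpha s))$ from $3k^2\alpha s$. To count connected sets, I would enumerate ordered sequences of overlapping clauses $c_1,\ldots,c_t$ with $t\le s$ that witness connectivity: given $c_1$ (at most $m$ choices), each successor $c_i$ ($i\ge 2$) shares a variable with $\bigcup_{j<i}\var(c_j)$ (of size at most $kt$), an event of probability $O(k^2 t/n)$ per added clause in the random model. Combining the enumeration with the Chernoff bound and summing over $s\le L$ gives total failure probability $o(1)$ under the hypothesis $k^2\alpha\ge 1$ stated in the paper.

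The main obstacle is running this union bound cleanly, since the events ``$S$ is connected in $H_\Phi$'' and ``many clauses touch $S$'' both depend on the same random clauses. I would decouple them by a two-stage revelation of $\Phi$---first expose the clauses witnessing connectivity of a candidate $S$, then reveal the rest and apply Chernoff to the remaining clauses incident to $S$, conditioned on the first batch. The density assumption $k^2\alpha\ge 1$ (which holds under our $\alpha<2^{k/300}/k^3$) is exactly what is needed for the Chernoff deviation to overpower the enumerative cost of listing all connected $S$ of size up to $L$.
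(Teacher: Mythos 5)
Your deterministic reduction is sound and is essentially the same as the paper's: the paper proves $\Vset(\rho)\subseteq\Gamma^+_{H_\Phi}(\VI(\rho))$ (each $u\in\Vset(\rho)\setminus\VI(\rho)$ lies in the first clause of the ancestor where it was added, which meets $\VI$), uses Property~\ref{P:connected}, and then invokes a random-formula expansion bound; your ``each good clause serves as first clause at most $k$ times'' argument is an equivalent bookkeeping of the same fact. The gap is in the probabilistic half. The auxiliary claim you formulate --- that w.h.p.\ \emph{every} connected set $S$ with $1\le|S|\le L$ is touched by at most $3k^2\alpha|S|$ clauses --- is false. W.h.p.\ the random formula contains variables of degree $\Omega(\log n/\log\log n)$ (this is exactly the phenomenon the whole paper is built to handle), so a singleton $S$ consisting of such a variable is touched by $\omega(1)$ clauses, far exceeding $3k^2\alpha$ for fixed $k$ and $\alpha$. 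Correspondingly, your proposed proof cannot work in the regime of small $s$: for $s=O(1)$ the Chernoff deviation probability $\exp(-\Omega(k^2\alpha s\log k))$ is a constant independent of $n$, while there are polynomially many candidate sets, so the union bound does not close; no two-stage revelation fixes this, because the statement itself fails.

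This is precisely why the paper's Lemma~\ref{lem:Gamma-H} asserts the weaker bound $\abs{\Gamma^+_{H_\Phi}(V)}\le 3k^3\alpha\max\{\abs{V},k\log n\}$ rather than a bound proportional to $\abs{V}$: the underlying tail estimate (Lemma~\ref{lem:gvartree}) is only applied to connected clause sets of size at least $\log n$, and connected variable sets smaller than $k\log n$ are handled by enlarging them to a connected superset of size $\lceil k\log n\rceil$ and applying the large-set bound to that superset. Your final application only needs the uniform bound $3k^2\alpha L$ for all $\abs{S}\le L$ (which holds because $L\ge k\log n$), so your argument can be repaired by replacing your claim with a $\max\{\abs{S},k\log n\}$ version and adding the padding step for small sets; but as written, the key random-formula lemma and the union-bound strategy for proving it are incorrect. (A minor further point: chasing your constants for a truncating node gives $3k^3\alpha L+2$ rather than $3k^3\alpha L+1$; the paper avoids this because its non-truncating bound is $3k^3\alpha L$ with no additive $1$, the $+1$ being reserved for the truncating case.)
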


\begin{remark*}
  The number of nodes in the coupling tree is a polynomial in~$n$ since the depth of the tree 
  does not exceed~$\max_{\rho\in \=T^\partialVal}\abs{\Vset(\rho)}\le 
  3k^3\alpha L+1    = O(\log \frac{n}{\epsilon})$ and each node has at most $4$ children.
\end{remark*}

\section{The linear program}\label{sec:LP}

Before introducing the linear program 
which we will use to estimate
 $|\Omega_1^\partialVal|/|\Omega_2^\partialVal|$,
 we define one more piece of notation. In particular, for each node~$\rho$ of the coupling tree
 we define a quantity~$r(\rho)$ as follows.

Let $\rho$ be a node of the coupling tree.
Let $\CLI(\rho)$   be 
the set of clauses~$c\in\Cl^{\partialVal}$ such that $\var(c) \subseteq \VI(\rho)\cup\Vset(\rho)$.
For $i\in\{1,2\}$, let $N_i(\rho)$ be the number of assignments~$\tau$ to $\VI(\rho) \setminus \Vset(\rho) $ 
such that every clause in  
$\CLI(\rho)$ is satisfied by $\tau \cup \Ai(\rho)$.  

We will use the following lemma.

\begin{lemma}\label{lem:nonzero}
  If $\rho$ is a node in the coupling tree, 
  then $N_i(\rho)\neq 0$ for any $i\in\{1,2\}$.
\end{lemma}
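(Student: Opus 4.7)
The plan is to produce a global satisfying assignment $\sigma$ of $\Phi$ that is consistent with both the fixed prefix $\partialVal$ of $\partialVal^*$ and with $\Ai(\rho)$, then take $\tau:=\sigma|_{\VI(\rho)\setminus\Vset(\rho)}$ and observe that $\tau$ is counted by $N_i(\rho)$. The verification of this last step is immediate: for any $c\in\CLI(\rho)$ we have $\var(c)\subseteq\VI(\rho)\cup\Vset(\rho)$, so $\Ai(\rho)\cup\tau$ coincides with $\sigma$ on $\var(c)$ and therefore satisfies $c$.

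Before arguing the existence of $\sigma$, I would verify that $\partialVal\cup\Ai(\rho)$ is a well-defined partial assignment of marked variables, which requires $\Vset(\rho)\subseteq\Va_\marked$ and $\Vset(\rho)\cap\var(\partialVal)=\emptyset$. Both follow by a straightforward induction on the coupling tree: the root contributes only $v^*$, which is the next variable of $\partialVal^*$ and so is marked and unassigned by $\partialVal$; every subsequent $u$ added by Algorithm~\ref{alg:children} is selected from $\marked(c)\setminus\Vset$ for some $c\in\Crem\subseteq\Cl^\partialVal$, so $u$ is again marked and not in $\var(\partialVal)$.

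The existence of $\sigma$ then follows from the stronger claim that $\Omega^{\partialVal'}\neq\emptyset$ for every partial assignment $\partialVal'$ of marked variables extending $\partialVal$, which I would prove by induction on $|\var(\partialVal')\setminus\var(\partialVal)|$. The base case $\partialVal'=\partialVal$ reduces to $\Omega^{\partialVal^*}\neq\emptyset$ via the inclusion $\Omega^{\partialVal^*}\subseteq\Omega^\partialVal$: by Lemma~\ref{lem:partial-assignment} the assignment $\partialVal^*$ satisfies every good clause, and by Lemma~\ref{lem:marking}(\ref{item:partial}) there is an assignment $\partialVal_\bad$ on $\Va_\bad$ satisfying every bad clause, so any completion of $\partialVal^*\cup\partialVal_\bad$ on the unmarked good variables is an element of $\Omega^{\partialVal^*}$. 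For the inductive step, write $\partialVal'=\partialVal''\cup\{v\to b\}$ with $v\in\Va_\marked\setminus\var(\partialVal'')$; the inductive hypothesis supplies $\Omega^{\partialVal''}\neq\emptyset$, and Lemma~\ref{lem:local-uniform} then gives $\Pr_{\mu_{\Omega^{\partialVal''}}}(v\to b)\geq(1-1/(3s))/2>0$, forcing $\Omega^{\partialVal'}\neq\emptyset$.

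The only subtle point is that Lemma~\ref{lem:local-uniform} assumes $\Omega^{\partialVal''}\neq\emptyset$ as a hypothesis, but this is precisely the inductive hypothesis, so no circularity arises. Applying the claim with $\partialVal'=\partialVal\cup\Ai(\rho)$ yields the required $\sigma$ and hence the lemma.
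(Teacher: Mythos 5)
Your proposal is correct, but it proves the lemma by a genuinely different route than the paper. The paper's proof is purely local: it never exhibits a global satisfying assignment, but instead directly constructs an assignment $\tau$ of $\VI(\rho)\setminus\Vset(\rho)$ satisfying every clause of $\CLI(\rho)$ on its own, by combining the bad-variable assignment $\tau_\bad$ from Lemma~\ref{lem:marking}\eqref{item:partial} (which handles the bad clauses of $\CLI(\rho)$, since these only contain bad variables) with a fresh application of the local lemma to the good clauses of $\CLI(\rho)$, using that each such clause has at least $k/4$ unmarked good variables inside $\VI(\rho)\setminus\Vset(\rho)$; this gives a single $\tau$ that works simultaneously for $i=1,2$ and needs nothing about $\partialVal$ beyond the fact that it assigns only marked variables. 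You instead prove the stronger global statement $\Omega^{\Ai(\rho)\cup\partialVal}\neq\emptyset$ and restrict a witness $\sigma$ to $\VI(\rho)\setminus\Vset(\rho)$; note that this stronger statement is exactly the paper's Lemma~\ref{lem:anothernonzero}, which the paper proves later (and by yet another argument, a direct LLL application), so your route would subsume that lemma. Your proof of the global statement is also different in flavour: an induction that extends marked-variable assignments one variable at a time via the marginal bounds of Lemma~\ref{lem:local-uniform}, with base case $\Omega^{\partialVal}\supseteq\Omega^{\partialVal^*}\neq\emptyset$ coming from Lemma~\ref{lem:partial-assignment} and $\partialVal_\bad$. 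What each buys: the paper's argument is self-contained at this point, independent of $\partialVal$ being a prefix of $\partialVal^*$ and of Lemmas~\ref{lem:local-uniform} and~\ref{lem:partial-assignment}, whereas yours leans on those (which is legitimate, since the coupling tree is only defined for prefixes of $\partialVal^*$ and Lemma~\ref{lem:local-uniform} does not depend on Lemma~\ref{lem:nonzero}, so there is no circularity) and in exchange yields the stronger nonemptiness fact needed later for the completeness analysis. The only slips are cosmetic: in the base case the arbitrary completion must be taken over all unassigned variables (unmarked good variables and any marked variables outside $\var(\partialVal^*)$), not just the unmarked good ones, and your check that the first variables $u$ avoid $\var(\partialVal)$ implicitly uses that clauses of $\Cl^{\partialVal}$ have had their $\partialVal$-assigned variables removed; both are immediate and do not affect correctness.
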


\begin{proof}
Since our coupling tree is based on the marking from Lemma~\ref{lem:marking},
 there is a partial assignment of bad variables that satisfies all $c\in \Cl_\bad$. Let $\tau_\bad$ be such an assignment and let~$\tau_\bad(\rho)$ be the restriction of $\tau_\bad$ to the set 
 {$(\VI(\rho)\setminus \Vset(\rho))\cap \Va_\bad^\partialVal$.}
  Note that 
 {$\Vset(\rho) \subseteq \Va_\good^\partialVal$,} 
 so 
 (by Observation~\ref{obs:goodbad}) any bad clause $c\in \CLI(\rho)$ has the property that $\var(c) \subseteq (\VI(\rho)\setminus\Vset(\rho))\cap \Va_\bad$, which implies that $\tau_\bad(\rho)$ satisfies all bad clauses in $\CLI(\rho)$.
    
Next we claim that there is an partial assignment $\tau_\good(\rho):(\VI(\rho)\setminus \Vset(\rho))\cap \Va_\good^\partialVal\to \{\true, \false\}$ such that $\tau_\good(\rho)$ satisfies all good clauses in $\CLI(\rho)$. Let $c$ be a good clause in $\CLI(\rho)$. Again by 
Lemma~\ref{lem:marking}, $c$ has at least $k/4$ unmarked good variables. Note that $\var(c)\subseteq \VI(\rho)\cup \Vset(\rho)$ and $\Vset(\rho)$ consists only of marked variables, so $\abs{\var(c) \cap (\VI(\rho)\setminus \Vset(\rho))\cap \Va_\good^\partialVal}\geq k/4$. Denote by $A_c$ the set of assignments in $\Omega^*$ for which the restriction on $(\VI(\rho)\setminus \Vset(\rho))\cap\Va_\good^\partialVal$ does not satisfy $c$. Thus, $\Pr_{\mu_{\Omega^*}}(A_c) \leq 2^{-k/4}$. Also, by the definition of $A_c$, we obtain that $\var(A_c)\subseteq \var(c)$ and $\var(A_c) \subseteq \Va_\good^\partialVal$. So $\Gamma(A_c) \subseteq \Gamma_\good(c)$. Let $x(c) = \frac{1}{k\Delta}$. Since $\Gamma_\good(c)$ is the set of neighbours of $c$ in $G_{\Phi,\good}$, and the maximum degree in $G_{\Phi,\good}$ is at most $k(\HighD-1)$, we conclude our claim by applying Theorem~\ref{thm:lll} and verifying 
    \[
    x(c)\prod_{b\in \Gamma(A_c)}\Bigl(1-x(b)\Bigr) \ge \frac{1}{k\HighD}\left(1-\frac{1}{k\HighD}\right)^{k\HighD} \ge \frac{1}{e^2 k\HighD} > 2^{-k/4} \ge \Pr_{\mu_{\Omega^*}}(A_c).
    \]
    
    Now let $\tau = \tau_\good(\rho) \cup\tau_\bad(\rho)$. Then every clause in $\CLI(\rho)$ is satisfied by $\tau$, which yields that $N_i(\rho) > 0$.
\end{proof}

Define $r(\rho) := N_1(\rho)/N_2(\rho)$.
Lemma~\ref{lem:nonzero} implies that $r(\rho)$ is always well-defined.

\begin{observation}\label{obs:rho}    
  The quantity $r(\rho)$ can be computed in $m\cdot 2^{O(| \VI(\rho) \setminus \Vset(\rho)|)}$ time by considering all assignments $\tau$ to $ \VI(\rho) \setminus \Vset(\rho)$. 
 If $\rho$ is a leaf then $|\VI(\rho)| \leq L$, so   the time taken is polynomial in $n/\epsilon$.
\end{observation}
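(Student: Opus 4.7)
The plan is a direct brute-force enumeration. For each of the $2^{|\VI(\rho)\setminus\Vset(\rho)|}$ truth assignments $\tau:\VI(\rho)\setminus\Vset(\rho)\to\{\true,\false\}$, and for each $i\in\{1,2\}$, I would test whether $\tau\cup \Ai(\rho)$ satisfies every clause in $\CLI(\rho)$, and increment a counter for $N_i(\rho)$ whenever it does. The set $\CLI(\rho)$ itself is computable in a one-off $O(mk)$ scan over $\Cl^{\partialVal}$ (collecting clauses whose variables all lie in $\VI(\rho)\cup\Vset(\rho)$), and satisfies $|\CLI(\rho)|\leq m$. Since each clause contains exactly $k$ literals, the per-assignment sweep costs $O(mk)$ time, and the two counters can be maintained in parallel within the same pass. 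Summing over all assignments gives the claimed $m\cdot 2^{O(|\VI(\rho)\setminus \Vset(\rho)|)}$ bound. Finally, $r(\rho)=N_1(\rho)/N_2(\rho)$ is obtained by a single division, which is well-defined by Lemma~\ref{lem:nonzero}.

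For the leaf case, I would simply substitute the truncation bound $|\VI(\rho)|\leq L = C_0(3k^2\HighD)\lceil\log(n/\eps)\rceil$. Since $k\geq k_0$ is treated as a fixed constant throughout the paper and $\HighD=2^{k/300}$ is therefore also constant, $L=O(\log(n/\eps))$, so $2^{|\VI(\rho)\setminus\Vset(\rho)|}\leq 2^{L}=(n/\eps)^{O(1)}$. Multiplying by $m\leq \alpha n$ gives a runtime that is polynomial in $n$ and $1/\eps$, as required. There is no real obstacle here: this observation is just an accounting of a straightforward brute-force count, and the only point of care is ruling out $N_2(\rho)=0$ before dividing, which has already been handled by the preceding lemma.
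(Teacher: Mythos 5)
Your proposal is correct and matches the paper's intent exactly: the observation is proved by precisely the brute-force enumeration over all $2^{|\VI(\rho)\setminus\Vset(\rho)|}$ assignments you describe, with the leaf case following because $|\VI(\rho)|\le L=O(\log(n/\eps))$ for fixed $k$, so $2^{L}$ is polynomial in $n/\eps$. The added care about computing $\CLI(\rho)$, the per-assignment cost, and invoking Lemma~\ref{lem:nonzero} to justify the division is consistent with the paper and introduces no new issues.
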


The importance of $r(\rho)$ comes from the following lemma.

\begin{lemma} \label{lem:rho}
  If $\rho$ is a leaf,  
  then $r(\rho) = |\Omega^{\Aone(\rho)\cup \Lambda}|/|\Omega^{\Atwo(\rho)\cup \Lambda}|$.
\end{lemma}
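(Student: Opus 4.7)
The plan is to decompose the clauses of $\Phi^{\partialVal}$ into three disjoint classes and show that the satisfying count factors multiplicatively. Set $V_I':=\VI(\rho)\setminus\Vset(\rho)$ and $V_O':=\Va^{\partialVal}\setminus(\VI(\rho)\cup\Vset(\rho))$, so that the domain of any extension of $\Ai(\rho)\cup\partialVal$ to a full assignment of $\Va$ is exactly $V_I'\sqcup V_O'$. Partition $\Cl^{\partialVal}$ into (a) $\CLI(\rho)$ (clauses with $\var(c)\subseteq\VI(\rho)\cup\Vset(\rho)$), (b) $\CLO(\rho):=\{c\in\Cl^{\partialVal}\mid\var(c)\subseteq V_O'\}$, and (c) the remainder $\mathcal{S}$. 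I will show every clause lies in one of these classes and that $\mathcal{S}$ consists entirely of clauses that are satisfied by both $\Aone(\rho)$ and $\Atwo(\rho)$.

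To justify the partition, I split on whether $c\in\Crem(\rho)$. If $c\in\Crem(\rho)$, then the leaf definition forces $\var(c)\subseteq\VI(\rho)\cup\Vset(\rho)$ or $\var(c)\subseteq V_O'$, placing $c$ in $\CLI(\rho)\cup\CLO(\rho)$. If $c\in\Cl^{\partialVal}\setminus\Crem(\rho)$, property~\ref{P:rhoinv} gives either $\var(c)\subseteq\VI(\rho)\cup\Vset(\rho)$ (so $c\in\CLI(\rho)$) or $c$ is satisfied by both $\Aone(\rho)$ and $\Atwo(\rho)$; the remaining case $\var(c)\subseteq V_O'$ for $c\notin\Crem(\rho)$ is impossible, because then $c$ would have no variable in $\Vset(\rho)$ and so could not have been satisfied already by either $\Ai(\rho)$. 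Hence $\mathcal{S}$ contains only clauses satisfied by both partial assignments.

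Now count $|\Omega^{\Ai(\rho)\cup\partialVal}|$ by summing over extensions $(\tau_I,\tau_O)$ of $\Ai(\rho)\cup\partialVal$ to $V_I'\cup V_O'$. Clauses in $\mathcal{S}$ impose no constraint. Clauses in $\CLI(\rho)$ involve only variables in $\Vset(\rho)\cup V_I'$, so their satisfaction depends only on $\tau_I$ (together with $\Ai(\rho)$), and by definition the number of valid $\tau_I$ is exactly $N_i(\rho)$. Clauses in $\CLO(\rho)$ involve only variables in $V_O'$, so their satisfaction depends only on $\tau_O$; let $K$ denote the number of valid $\tau_O$. Since $\CLO(\rho)$ contains no variable of $\Vset(\rho)$, the count $K$ does not depend on $i$. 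Independence of $\tau_I$ and $\tau_O$ therefore yields
\begin{equation*}
|\Omega^{\Ai(\rho)\cup\partialVal}|=N_i(\rho)\cdot K\qquad\text{for }i\in\{1,2\},
\end{equation*}
and dividing the $i=1$ and $i=2$ identities gives $|\Omega^{\Aone(\rho)\cup\partialVal}|/|\Omega^{\Atwo(\rho)\cup\partialVal}|=N_1(\rho)/N_2(\rho)=r(\rho)$, as required.

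The main technical obstacle is confirming the clean case analysis above, particularly ruling out clauses in $\Cl^{\partialVal}\setminus\Crem(\rho)$ whose variable set sits entirely in $V_O'$: this requires combining \ref{P:rhoinv} with the structural fact that $\Vset(\rho)\cap V_O'=\emptyset$. A minor side issue is that the stated ratio only makes sense when $|\Omega^{\Atwo(\rho)\cup\partialVal}|>0$, equivalently $K>0$; this can be established by the same local-lemma argument used in Lemma~\ref{lem:nonzero}, applied now to the formula induced on $V_O'$ by the clauses of $\CLO(\rho)$, which still have at least $k/4$ unmarked good variables and bounded dependency degree by Lemma~\ref{lem:marking}.
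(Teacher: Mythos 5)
Your proof is correct and takes essentially the same route as the paper: partition $\Cl^{\partialVal}$ into $\CLI(\rho)$, the clauses supported on $\Va^{\partialVal}\setminus(\VI(\rho)\cup\Vset(\rho))$, and a remainder shown via Property~\ref{P:rhoinv} to be satisfied by both $\Aone(\rho)$ and $\Atwo(\rho)$, then factor $|\Omega^{\Ai(\rho)\cup\partialVal}|=N_i(\rho)\cdot K$ with the factor $K$ (the paper's $M(\rho)$) independent of $i$. Your extra care about positivity of the denominator goes beyond what the paper does in this proof (it is handled separately by Lemmas~\ref{lem:nonzero} and~\ref{lem:anothernonzero}); just note that bad clauses in $\CLO(\rho)$ have no good variables, so that step should use the bad-variable assignment from Lemma~\ref{lem:marking} together with the local lemma for the good clauses, exactly as in Lemma~\ref{lem:nonzero}.
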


\begin{proof} 
Suppose $\rhois$.
Since $\rho$ is a leaf, by Definition~\ref{def:leaf},
for any $c\in\Crem$, either $\var(c)\subseteq \VI\cup \Vset$ (i.e., $c\in \CLI(\rho)$), or 
 {$\var(c)\subseteq \Va^\partialVal\setminus \VI$ (or both).}
Denote by $\CLO(\rho)$ the set of clauses in 
  {$\Crem\setminus \CLI(\rho)$} 
such that
  {$\var(c)  \subseteq \Va^\partialVal\setminus \VI$.}
Denote by $\Cother(\rho)$ the set of clauses in 
$\Cl^{\partialVal} \setminus (\Crem \cup \CLI(\rho))$.
Thus, the clauses in $\Cl^{\partialVal}$ split into  the disjoint sets $\CLI(\rho)$, $\CLO(\rho)$ and $\Cother(\rho)$.

By \ref{P:rhoinv}, every clause in $\Cother(\rho)$
is satisfied by $\Aone(\rho)$ and $\Atwo(\rho)$.

Let $M(\rho)$ be the number of assignments $\sigma \colon \Va^\partialVal\setminus(\VI\cup\Vset) \to \{\true,\false\}$
  such that all clauses in $\CLO$ are 
 {satisfied by $\sigma\cup \Aone(\rho)$
or equivalently, by
\ref{P:agree}, 
satisfied by $\sigma\cup \Atwo(\rho)$.}
  Recall that
$N_i(\rho)$ is the number of assignments~$\tau$ to $\VI(\rho) \setminus \Vset(\rho) $ 
such that every clause in  
$\CLI(\rho)$ is satisfied by $\tau \cup \Ai(\rho)$.  
Then for $i\in\{1,2\}$, $|\Omega^{\Ai(\rho)\cup \Lambda}|=M(\rho)N_i(\rho)$, which implies the lemma, 
since $r(\rho) = N_1(\rho)/N_2(\rho)$.\end{proof}

We will use a binary search to approximate the quantity $\abs{\Omega^\partialVal_1}/\abs{\Omega^\partialVal_2}$.
Our linear program  relies on two constants $\rlower$ and $\rupper$.
 We will move these closer and closer together by binary search. For each
  node~$\rho$ of the coupling tree, 
we introduce two variables $P_{1,\rho}$ and $P_{2,\rho}$.
The idea is  that a solution of the LP should have the property that
$$\frac{\abs{\Omega^\partialVal_1}}{\abs{\Omega^\partialVal_2}} = \frac{P_{1,\rho} }{P_{2,\rho}  }\cdot \frac{\abs{\Omega^{\Aone(\rho)\cup \Lambda}}}{\abs{\Omega^{\Atwo(\rho)\cup \Lambda}}}.$$

We now introduce the constraint sets of the LP.

\subsection*{Constraint Set 0}

For every node $\rho$ of the coupling tree 
and every $i\in \{1,2\}$ we add the constraint
$ 0 \leq P_{i,\rho} \leq 1$.

\subsection*{Constraint Set 1}

If  $\rho\in\calL$ then we add the following constraints.
\begin{align*}
  \rlower \,P_{2,\rho}  & \leq P_{1,\rho}\,   r(\rho)\\
  P_{1,\rho} \, r(\rho) &\leq \rupper \,P_{2,\rho}
\end{align*}

\begin{remark*}
  The purpose of these constraints is to guarantee
  $$\rlower\leq  \frac{P_{1,\rho} }{P_{2,\rho}  }\, r(\rho)\leq \rupper. $$ 
\end{remark*}

\subsection*{Constraint Set 2}

For the root $\rho^*$ of the coupling tree, we add the constraints
$$P_{1,\rho^*} = P_{2,\rho^*} = 1.$$

For every node $\rho$ of the coupling tree that is not in $\calL^*$, 
let $u$ be the first variable of~$\rho$.  
Add constraints as follows.
For each $X\in \{\true,\false\}$ add the following constraints.
\begin{align*}
  P_{1,\rho} &= P_{1,\rho_{\sigX,\sigtrue}} + P_{1,\rho_{\sigX,\sigfalse}}\\ 
  P_{2,\rho} &= P_{2,\rho_{\sigtrue,\sigX}} + P_{2,\rho_{\sigfalse,\sigX}}
\end{align*}

These   constraints imply the following lemma.

\begin{lemma}\label{lem:constraints2}
  Suppose that the LP variables satisfy all of the constraints in Constraint Set 2.
  Then for any $i\in \{1,2\}$ and
  any $\sigma\in \Omega_i^\partialVal$,
  $$\sum_{\rho\in \calL^*:\sigma  \in \Omega^{\Ai(\rho)\cup \Lambda}} P_{i,\rho} = 1$$
\end{lemma}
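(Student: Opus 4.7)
The plan is to prove the claim by a bottom-up induction on the coupling tree $\=T^\partialVal$, using Constraint Set 2 to pass from a node to its parent. More precisely, I will strengthen the statement to the following inductive claim: for every node $\rho$ of $\=T^\partialVal$ and every $i\in\{1,2\}$, writing $\calL^*(\rho)$ for the set of leaves and truncating nodes of the subtree rooted at $\rho$,
\begin{equation*}
\sum_{\rho'\in\calL^*(\rho)\,:\,\sigma\in\Omega^{\Ai(\rho')\cup\Lambda}} P_{i,\rho'} \;=\; \begin{cases} P_{i,\rho} & \text{if } \sigma\in\Omega^{\Ai(\rho)\cup\Lambda},\\ 0 & \text{otherwise.}\end{cases}
\end{equation*}
Applying this at the root $\rho^*$ and using Constraint Set 2 (which forces $P_{i,\rho^*}=1$), together with the observation that $\Ai(\rho^*)$ consists only of setting $v^*$ according to $\sigma$ (so any $\sigma\in\Omega_i^\partialVal$ lies in $\Omega^{\Ai(\rho^*)\cup\Lambda}$), recovers exactly the lemma.

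For the base case, suppose $\rho\in\calL^*$. Then $\calL^*(\rho)=\{\rho\}$ and the statement is immediate. For the inductive step, fix a non-leaf, non-truncating node $\rho$ with first variable $u$ and consider the four children $\rhonext$ parametrised by $(\tau_1,\tau_2)\in\{\true,\false\}^{\{u\}}\times\{\true,\false\}^{\{u\}}$. By construction $\Ai(\rhonext)=\Ai(\rho)\cup\tau_i$, so if $\sigma\notin\Omega^{\Ai(\rho)\cup\Lambda}$ then $\sigma\notin\Omega^{\Ai(\rhonext)\cup\Lambda}$ for every child, and by the induction hypothesis applied to each child the total sum is $0$, as required. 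If instead $\sigma\in\Omega^{\Ai(\rho)\cup\Lambda}$, then $\sigma$ is consistent with $\Ai(\rhonext)$ if and only if $\tau_i(u)=\sigma(u)$, so exactly two of the four children contribute non-trivially.

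For $i=1$, these two children are $\rho_{u\to\sigma(u),\,u\to\true}$ and $\rho_{u\to\sigma(u),\,u\to\false}$; by the induction hypothesis the sum over the corresponding subtrees equals the sum of their $P_{1,\cdot}$-values, which by Constraint Set 2 with $X=\sigma(u)$ equals $P_{1,\rho}$. The case $i=2$ is symmetric, using the other family of constraints in Constraint Set 2. This closes the induction. The main (and only) subtlety is lining up the Constraint Set 2 equations with the correct pair of children determined by $\sigma(u)$; once this is observed, each step is routine.
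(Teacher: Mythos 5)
Your proof is correct and follows essentially the same approach as the paper: both arguments rest on the observation that exactly two children of a non-leaf node are consistent with $\sigma$ and that Constraint Set 2 makes their $P_{i,\cdot}$ values sum to $P_{i,\rho}$. The only difference is presentation — you run a bottom-up induction on subtrees with a strengthened claim, while the paper maintains a top-down frontier set $\Psi_i$ with the invariant $\sum_{\rho\in\Psi_i}P_{i,\rho}=1$; these are the same telescoping argument.
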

\begin{proof}
  For $i\in\{1,2\}$, we will maintain a set $\Psi_i$ of nodes in the coupling tree with the 
  invariant that $\sum_{\rho \in \Psi_i} P_{i,\rho} = 1$
  and every node $\rho \in \Psi_i$ has $\Ai(\rho)$ agree with~$\sigma$.
    For the base case, we let $\rho^*$ be the root of the coupling tree
  and we take $\Psi_i = \{\rho^*\}$. If every node in $\Psi_i$ is in $\calL^*$ then we are finished.
  Otherwise, let $\rho$ be some node in $\Psi_i$ that is not in $\calL^*$.
  Let $u$ be the first variable of~$\rho$.
  Let $\rho'$ and $\rho''$ be the two children of~$\rho$ 
  such that $\Ai(\rho')$ and $\Ai(\rho'')$ both map $u$ to $\sigma(u)$.
  Replace $\rho$ in $\Psi_i$ with $\rho'$ and $\rho''$.
  The constraints guarantee that $P_{i,\rho} = P_{i,\rho'} + P_{i,\rho''}$, so the invariant is maintained.
\end{proof}

\subsection*{Constraint Set 3} 

For every node $\rho$ of the coupling tree that is not in~$\calL^*$, 
every $X\in \{\true,\false\}$, and every $i\in \{1,2\}$,
let $u$ be the first variable of~$\rho$
and add the constraint $P_{i,\rho_{\sigX,\sigXbar}} \leq \frac{1}{s}\, P_{i,\rho}$. 

The intuition behind this set of constraints is Lemma~\ref{lem:local-uniform}. 

\section{Analysis of the linear program}

In Section~\ref{sec:completeness}, we  show that whenever $\rlower\leq {\abs{\Omega^\partialVal_1}}/{\abs{\Omega^\partialVal_2}} \leq \rupper$,
a solution to the LP exists.
 We call this ``completeness'' of the LP.  
In the remaining subsections of this section, we show  ``soundness'' --- namely, that whenever a solution exists, 
$\rlower$ and $\rupper$ are valid bounds for the quantity ${\abs{\Omega^\partialVal_1}}/{\abs{\Omega^\partialVal_2}}$, up to small errors.

\subsection{Completeness}\label{sec:completeness}

Recall that we  use the marking from Lemma~\ref{lem:marking}
and that $\partialVal$ is a prefix of the partial assignment $\partialVal^*$ from Lemma~\ref{lem:partial-assignment}.

We use the following lemma. 
\begin{lemma}
\label{lem:anothernonzero}
If $\rho$ is a node in the coupling tree then, for any $i\in \{1,2\}$, $\Omega^{\Ai(\rho)\cup \Lambda}$
is non-empty.
\end{lemma}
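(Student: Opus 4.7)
The plan is to extend the local-lemma argument of Lemma~\ref{lem:nonzero} from the sub-formula on $\CLI(\rho)$ to the full formula $\Phi$, using the same machinery that underlies Lemmas~\ref{lem:marking} and~\ref{lem:local-uniform}. The goal is to exhibit a satisfying assignment of $\Phi$ that extends the partial assignment $\Ai(\rho)\cup \Lambda$.

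First I would fix, by item~(\ref{item:partial}) of Lemma~\ref{lem:marking}, a partial assignment $\tau_\bad\colon \Va_\bad\to\{\true,\false\}$ that satisfies every bad clause, and set $\sigma := \Ai(\rho)\cup \Lambda \cup \tau_\bad$. Note that the domains of $\Ai(\rho)$ and $\Lambda$ are contained in $\Va_\marked$: every variable added to $\Vset$ by Algorithm~\ref{alg:children} is drawn from some $\marked(c)$, $v^*$ itself is marked, and $\Lambda$ is a prefix of the marked assignment $\partialVal^*$ from Lemma~\ref{lem:partial-assignment}. Consequently $\sigma$ is a well-defined partial assignment, and $\Phi^\sigma$ contains no bad clauses (all are satisfied by $\tau_\bad$) and no clauses satisfied already by $\Ai(\rho)\cup\Lambda$.

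Next I would verify that the remaining good clauses in $\Phi^\sigma$ are long enough to support the local lemma. By item~(\ref{item:marked}) of Lemma~\ref{lem:marking}, every $c\in\Cl_\good$ has at least $k/4$ unmarked good variables, and $\sigma$ assigns values only to marked good variables and to bad variables. Therefore each surviving clause in $\Phi^\sigma$ retains at least $k/4$ free literals, whose variables all lie in $\Va_\good\setminus \Va_\marked$. Defining $A_c$ as the event (under a uniform random assignment of the still-unassigned variables) that $c$ is not satisfied, this gives $\Pr_{\mu_{\Omega^*}}(A_c)\le 2^{-k/4}$. Since $\var(A_c)\subseteq \Va_\good$, the relevant neighbourhood is $\Gamma(A_c)\subseteq \Gamma_\good(c)$, and the maximum degree of $G_{\Phi,\good}$ is at most $k(\Delta-1)$, exactly as in previous applications.

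Taking $x(c)=\frac{1}{k\Delta}$, the local lemma condition~\eqref{eqn:lll-condition} is verified by the familiar estimate
\[
x(c)\prod_{b\in \Gamma(A_c)}\bigl(1-x(b)\bigr) \;\ge\; \frac{1}{k\Delta}\Bigl(1-\frac{1}{k\Delta}\Bigr)^{k\Delta} \;\ge\; \frac{1}{ek\Delta} \;\ge\; 2^{-k/4}
\]
for sufficiently large $k$, and Theorem~\ref{thm:lll} then yields $\Pr_{\mu_{\Omega^*}}\bigl(\bigwedge_c \complement{A_c}\bigr)>0$. Any witness of this event extends $\sigma$ to a satisfying assignment of $\Phi$ that is consistent with $\Ai(\rho)\cup\Lambda$, establishing $\Omega^{\Ai(\rho)\cup \Lambda}\neq \emptyset$. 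The only potentially delicate step is showing that no clause of $\Phi^\sigma$ is dangerously short: a priori $\Ai(\rho)$ might assign many of the marked variables of a given good clause, but because at least $k/4$ of each good clause's variables are unmarked and good (and hence never touched by $\sigma$), the reservoir of free literals of the required size is guaranteed by the marking from Lemma~\ref{lem:marking}, making this the key structural input.
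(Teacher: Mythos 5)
Your proposal is correct and follows essentially the same route as the paper: fix $\tau_\bad$ from item~(\ref{item:partial}) of Lemma~\ref{lem:marking}, observe that $\Ai(\rho)\cup\Lambda$ assigns only marked (good) variables, and then apply Theorem~\ref{thm:lll} (as in Lemmas~\ref{lem:local-uniform} and~\ref{lem:nonzero}) using the $\ge k/4$ unmarked good variables per good clause to extend to a full satisfying assignment. The paper's proof is just a terser version of exactly this argument.
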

\begin{proof}
Let $\rho$ be a node in the coupling tree and fix $i\in \{1,2\}$.
Since our coupling tree is based on the marking from Lemma~\ref{lem:marking},
 there is a partial assignment of bad variables that satisfies all $c\in \Cl_\bad$. Let $\tau_\bad$ be such an assignment.
 Note that $\Ai(\rho) \cup \Lambda$ is a partial assignment of marked (good) variables 
 so it does not assign any variables in common with $\tau_\bad$.
 Let $V'$ be the set of unmarked good variables.
 It remains to show that there is a partial assignment of variables in~$V'$ 
  that satisfies all clauses in $\Cl_\good$.
  To do this we apply Theorem~\ref{thm:lll} to $\Va_\good$ and $\Cl_\good$ in the same way as the proof of Lemma~\ref{lem:local-uniform}. 
\end{proof}

Let $\rho$ be a node of the coupling tree with first variable~$u$.
For $X\in \{\true,\false\}$, we use the notation 
\begin{align}\label{eqn:psi}
  \psi_{\rho,X,1} := \frac{|\Omega^{\Aone(\rho_{\sigX,\sigX} )\cup \Lambda}|}{|\Omega^{\Aone(\rho)\cup \Lambda}|} 
  = \frac{|\Omega^{\Aone(\rho_{\sigX,\sigXbar}) \cup \Lambda}|}{|\Omega^{\Aone(\rho)\cup \Lambda}|}.
\end{align}
This is well-defined since $\Aone(\rho_{\sigX,\sigX}) = \Aone(\rho_{\sigX,\sigXbar})$.
In other words, $\psi_{\rho,X,1}$ is the probability that~$u$ is assigned value~$X$ under $\mu_{\Omega^{\Aone(\rho)\cup \Lambda}}$.
Thus, $\psi_{\rho,X,1} + \psi_{\rho,\neg X,1}=1$. 
We 
similarly define
$$
  \psi_{\rho,X,2} := \frac{|\Omega^{\Atwo(\rho_{\sigX,\sigX} )\cup \Lambda}|}{|\Omega^{\Atwo(\rho)\cup \Lambda}|} 
  = \frac{|\Omega^{\Atwo(\rho_{\sigXbar,\sigX}) \cup \Lambda}|}{|\Omega^{\Atwo(\rho)\cup \Lambda}|},
$$
 noting that $\psi_{\rho,X,2} + \psi_{\rho,\neg X,2}=1$. 

We will next give an inductive definition of a function  $\pseudoProb$ from nodes of the coupling tree to real numbers in $[0,1]$.
The way to think about this is as follows --- we will implicitly define a probability distribution over paths from the root of the coupling
tree to $\calL^*$. For each node~$\rho$, $Q(\rho)$ will be the probability that $\rho$ is included in a path drawn from this distribution.

Any such path starts at the root, so we define
  $\pseudoProb (\rho^*)=1$. 
Once we have defined $\pseudoProb (\rho)$ for a node $\rho$ that is not in $\calL^*$
we can define $\pseudoProb (\cdot)$ on the children of~$\rho$ as follows.
Let $u$ be the first variable of $\rho$ and consider the four children $\rho_{\sigtrue,\sigtrue},\rho_{\sigtrue,\sigfalse},\rho_{\sigfalse,\sigtrue},\rho_{\sigfalse,\sigfalse}$.
Define the values of~$Q$ as follows.
\begin{align}
  \begin{split}\label{eqn:Q}
    \pseudoProb (\rho_{\sigtrue,\sigtrue}) &:= \pseudoProb (\rho)  \min\{ \psi_{\rho,\true,1},\psi_{\rho,\true,2}\}.\\
    \pseudoProb (\rho_{\sigtrue,\sigfalse}) &:= \pseudoProb (\rho)  (\psi_{\rho,\true,1} - \min\{ \psi_{\rho,\true,1},\psi_{\rho,\true,2}\}).\\
    \pseudoProb (\rho_{\sigfalse,\sigfalse}) &:= \pseudoProb (\rho) \min\{1-\psi_{\rho,\true,1},1-\psi_{\rho,\true,2} \}.\\
    \pseudoProb (\rho_{\sigfalse,\sigtrue}) &:= \pseudoProb (\rho) ((1-\psi_{\rho,\true,1}) -  \min\{1-\psi_{\rho,\true,1},1-\psi_{\rho,\true,2} \} ).  
  \end{split}
\end{align}

\begin{observation}\label{obs:couple}
For any  $X\in \{\true,\false\}$,
  \begin{align*}
    \pseudoProb (\rho_{\sigX,\sigtrue}) + \pseudoProb (\rho_{\sigX,\sigfalse}) &= \pseudoProb (\rho) \psi_{\rho,X,1}, and\\
    \pseudoProb (\rho_{\sigtrue,\sigX}) + \pseudoProb (\rho_{\sigfalse,\sigX}) &= \pseudoProb (\rho) \psi_{\rho,X,2}.
  \end{align*}
  Also,
  $$
  \pseudoProb (\rho_{\sigtrue,\sigtrue}) + \pseudoProb (\rho_{\sigtrue,\sigfalse}) + \pseudoProb (\rho_{\sigfalse,\sigtrue}) + \pseudoProb (\rho_{\sigfalse,\sigfalse}) = \pseudoProb (\rho).
$$ 
   This implies  that $\sum_{\rho \in \calL^*} \pseudoProb (\rho) = 1$.
\end{observation}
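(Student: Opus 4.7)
The observation is essentially about the standard ``optimal monotone coupling'' of two Bernoulli random variables, packaged in the notation of the coupling tree. The plan is to abbreviate $a:=\psi_{\rho,\true,1}$ and $b:=\psi_{\rho,\true,2}$, note that $\psi_{\rho,\false,i}=1-\psi_{\rho,\true,i}$ by the remark that $\psi_{\rho,X,i}+\psi_{\rho,\neg X,i}=1$ (immediate from \eqref{eqn:psi} and the fact that any satisfying extension of $\Ai(\rho)\cup\Lambda$ sets $u$ to exactly one of $\true,\false$), and then verify all three displayed identities by direct algebraic manipulation from \eqref{eqn:Q}.

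For the first identity, take $X=\true$: $\pseudoProb(\rho_{\sigtrue,\sigtrue})+\pseudoProb(\rho_{\sigtrue,\sigfalse})$ telescopes to $\pseudoProb(\rho)a$ by construction, and for $X=\false$, $\pseudoProb(\rho_{\sigfalse,\sigfalse})+\pseudoProb(\rho_{\sigfalse,\sigtrue})$ telescopes to $\pseudoProb(\rho)(1-a)=\pseudoProb(\rho)\psi_{\rho,\false,1}$. The second identity requires a short case split on the sign of $a-b$. Writing $\pseudoProb(\rho_{\sigtrue,\sigtrue})+\pseudoProb(\rho_{\sigfalse,\sigtrue})=\pseudoProb(\rho)\bigl[\min\{a,b\}+(1-a)-\min\{1-a,1-b\}\bigr]$, we check that if $a\le b$ this equals $\pseudoProb(\rho)[a+(1-a)-(1-b)]=\pseudoProb(\rho)b$, while if $a>b$ it equals $\pseudoProb(\rho)[b+(1-a)-(1-a)]=\pseudoProb(\rho)b$; the case $X=\false$ is entirely analogous, giving $\pseudoProb(\rho)(1-b)=\pseudoProb(\rho)\psi_{\rho,\false,2}$.

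The third identity follows by summing the two $X=\true$ and $X=\false$ versions of the first identity:
\begin{align*}
\sum_{\tau_1,\tau_2}\pseudoProb(\rho_{\tau_1,\tau_2})
=\pseudoProb(\rho)\psi_{\rho,\true,1}+\pseudoProb(\rho)\psi_{\rho,\false,1}=\pseudoProb(\rho).
\end{align*}
Finally, to conclude $\sum_{\rho\in\calL^*}\pseudoProb(\rho)=1$, I would induct on the structure of the tree. Since the tree is finite (bounded by the coupling-tree depth via Lemma~\ref{lem:depthoftree}), define $\Psi_0=\{\rho^*\}$ and repeatedly replace any non-$\calL^*$ node in the current set by its four children; at each replacement step the third identity preserves the invariant $\sum_{\rho\in\Psi_t}\pseudoProb(\rho)=\pseudoProb(\rho^*)=1$. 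The process terminates once every node belongs to $\calL^*$, at which point the sum ranges exactly over $\calL^*$.

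There is no real obstacle here; the only subtlety is being careful with the two minima in the definition of $Q$ on the mixed children $\rho_{\sigtrue,\sigfalse}$ and $\rho_{\sigfalse,\sigtrue}$, which is handled cleanly by the case split $a\lessgtr b$. The statement is really just recording that the updates in \eqref{eqn:Q} implement an optimal (maximal) coupling of Bernoulli$(a)$ and Bernoulli$(b)$, extended multiplicatively along paths of the coupling tree.
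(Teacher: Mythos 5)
Your proposal is correct and is essentially the argument the paper intends: the observation is stated without proof because it follows by exactly the direct case analysis on $\min\{a,b\}$ versus $\min\{1-a,1-b\}$ from \eqref{eqn:Q} that you carry out, plus the telescoping sum over the finite tree. One tiny remark: you do not need the w.h.p.\ Lemma~\ref{lem:depthoftree} for finiteness of $\=T^\partialVal$, since $\Vset$ grows strictly along every root-to-child step and is contained in $\Va$, so every path terminates in $\calL^*$ deterministically.
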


We use $\pseudoProb$ to define a feasible solution to the LP.
Indeed, this definition  explains what the variables in the LP were meant to represent.

\begin{definition}[The LP variables] \label{def:LPvar}
  For each node $\rho$ of the coupling tree and each $i\in \{1,2\}$, 
  define  
  $P_{i,\rho}:= \pseudoProb (\rho)  {|\Omega^\partialVal_{i}|}/{|\Omega^{\Ai(\rho)\cup \Lambda}|}$.
\end{definition}

Theorem~\ref{lem:anothernonzero} ensures the values in Definition~\ref{def:LPvar} are well defined.
 Motivated by Definition~\ref{def:LPvar}, we give the following upper bound for~$Q(\rho)$.

\begin{lemma}\label{lem:getcons0}
  For all nodes $\rho$ in the coupling tree
  and all $i\in \{1,2\}$, $\pseudoProb (\rho) \leq  {|\Omega^{\Ai(\rho)\cup \Lambda}|}/{|\Omega^\partialVal_i|}$.
\end{lemma}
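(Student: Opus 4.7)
I would prove Lemma~\ref{lem:getcons0} by straightforward induction on the depth of $\rho$ in the coupling tree, leveraging Observation~\ref{obs:couple}.

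For the base case $\rho=\rho^*$, recall that $\Aone(\rho^*)$ sets $v^*\to\true$ and $\Atwo(\rho^*)$ sets $v^*\to\false$, so $\Omega^{\Ai(\rho^*)\cup\Lambda}=\Omega_i^\partialVal$ by the very definition of $\Omega_i^\partialVal$. Thus $Q(\rho^*)=1=|\Omega^{\Ai(\rho^*)\cup\Lambda}|/|\Omega_i^\partialVal|$, and in particular the inequality holds with equality. By Lemma~\ref{lem:anothernonzero} none of these denominators vanish.

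For the inductive step, fix a node $\rho\notin\calL^*$ with first variable $u$, and assume the inequality holds for $\rho$. Let $\rho_{\tau_1,\tau_2}$ be any child and fix $i\in\{1,2\}$. The key observation is that the four quantities appearing on the right-hand sides of \eqref{eqn:Q} are all nonnegative (since $0\leq\psi_{\rho,X,i}\leq1$ and the minima are bounded above by each argument). Combining nonnegativity with Observation~\ref{obs:couple} gives the one-sided marginal bound
\[
Q(\rho_{\tau_1,\tau_2})\;\leq\;Q(\rho)\,\psi_{\rho,\tau_i(u),i},
\]
for each $i\in\{1,2\}$, which is exactly the statement that the greedy coupling in \eqref{eqn:Q} never puts more mass on a joint outcome than either of its marginals. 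Indeed, for $i=1$ this is a direct rewriting of the first line of Observation~\ref{obs:couple}, and for $i=2$ it is a direct rewriting of the second line.

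Now unfold the definition of $\psi_{\rho,\tau_i(u),i}$ from \eqref{eqn:psi}: since $\Ai(\rho_{\tau_1,\tau_2})$ extends $\Ai(\rho)$ by mapping $u$ to $\tau_i(u)$, we have
\[
\psi_{\rho,\tau_i(u),i}\;=\;\frac{|\Omega^{\Ai(\rho_{\tau_1,\tau_2})\cup\Lambda}|}{|\Omega^{\Ai(\rho)\cup\Lambda}|}.
\]
Substituting this above and applying the inductive hypothesis $Q(\rho)\leq|\Omega^{\Ai(\rho)\cup\Lambda}|/|\Omega_i^\partialVal|$ yields
\[
Q(\rho_{\tau_1,\tau_2})\;\leq\;\frac{|\Omega^{\Ai(\rho)\cup\Lambda}|}{|\Omega_i^\partialVal|}\cdot\frac{|\Omega^{\Ai(\rho_{\tau_1,\tau_2})\cup\Lambda}|}{|\Omega^{\Ai(\rho)\cup\Lambda}|}\;=\;\frac{|\Omega^{\Ai(\rho_{\tau_1,\tau_2})\cup\Lambda}|}{|\Omega_i^\partialVal|},
\]
which is what we wanted. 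There is no real obstacle here: the entire argument hinges on recognizing that \eqref{eqn:Q} defines a coupling whose marginals are exactly $\psi_{\rho,\cdot,i}$, a fact already extracted in Observation~\ref{obs:couple}, and that telescoping the marginal bound along any root-to-$\rho$ path produces the desired inequality.
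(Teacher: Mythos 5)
Your proposal is correct and follows essentially the same route as the paper: induction from the root, using Observation~\ref{obs:couple} (together with nonnegativity of the $Q$-values) to bound $Q$ of a child by $Q(\rho)\,\psi_{\rho,\cdot,i}$, then rewriting $\psi$ via \eqref{eqn:psi} and invoking the inductive hypothesis. The only difference is that you make the nonnegativity step explicit where the paper leaves it implicit, which is a harmless refinement.
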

\begin{proof}
  The proof is by induction --- having established the claim for a node~$\rho$, we then establish it for the children of~$\rho$
  using Observation~\ref{obs:couple}.
  For the base case, 
  $\pseudoProb (\rho^*) = 1 =  |\Omega^{\Ai(\rho^*)\cup \Lambda}|/|\Omega^\partialVal_i|$.
  For the inductive step, consider a node~$\rho$ (for which the claim is established) and let $u$ be the first variable of $\rho$.
  Consider any child $\rho_{\sigX,\sigY}$ of~$\rho$. We show the lemma for $i=1$ and the other case is similar.
  By Observation~\ref{obs:couple}, Equation~\eqref{eqn:psi}, and the induction hypothesis,
  \begin{align*}
    \pseudoProb (\rho_{\sigX,\sigY}) &\le \pseudoProb (\rho)\cdot \psi_{\rho,X,1}= \pseudoProb (\rho)\cdot\frac{\abs{\Omega^{\Aone(\rho_{\sigX,\sigY})\cup \Lambda}}}{\abs{\Omega^{\Aone(\rho)\cup \Lambda}}}\\
    &\le \frac{\abs{\Omega^{\Aone(\rho)\cup \Lambda}}}{\abs{\Omega^\partialVal_i}}\cdot \frac{\abs{\Omega^{\Aone(\rho_{\sigX,\sigY})\cup \Lambda}}}{\abs{\Omega^{\Aone(\rho)\cup \Lambda}}}
    =\frac{\abs{\Omega^{\Aone(\rho_{\sigX,\sigY})\cup \Lambda}}}{\abs{\Omega^\partialVal_i}}.\qedhere
  \end{align*}
\end{proof}

The next lemma relates the values of the LP variables, as defined by Definition~\ref{def:LPvar}.

\begin{lemma}\label{lem:docouple}
  Let $\rho\not\in\+L^*$ be a node in the coupling tree. 
  Let $u$ be the first variable of~$\rho$.
  For any $X\in \{\true,\false\}$ and $i\in \{1,2\}$,
 $   P_{i,\rho_{\sigX,\sigXbar}} \leq    P_{i,\rho}/s$.

\end{lemma}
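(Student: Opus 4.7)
The plan is to unpack Definition~\ref{def:LPvar}, reduce the inequality to a statement about~$Q$, and then use Lemma~\ref{lem:local-uniform} together with~\eqref{eqn:Q} to finish.

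First I would write, via Definition~\ref{def:LPvar},
\[
P_{i,\rho_{\sigX,\sigXbar}} \;=\; Q(\rho_{\sigX,\sigXbar}) \cdot \frac{|\Omega_i^\partialVal|}{|\Omega^{\Ai(\rho_{\sigX,\sigXbar})\cup \Lambda}|},
\qquad
P_{i,\rho}\;=\; Q(\rho)\cdot \frac{|\Omega_i^\partialVal|}{|\Omega^{\Ai(\rho)\cup \Lambda}|}.
\]
Since $\Ai(\rho_{\sigX,\sigXbar})$ agrees with $\Ai(\rho)$ except that it additionally assigns $u$ the value $Y_i := X$ (if $i=1$) or $Y_i := \neg X$ (if $i=2$), the ratio $|\Omega^{\Ai(\rho_{\sigX,\sigXbar})\cup \Lambda}| / |\Omega^{\Ai(\rho)\cup \Lambda}|$ equals $\psi_{\rho, Y_i, i}$ by \eqref{eqn:psi}. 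Hence the claim $P_{i,\rho_{\sigX,\sigXbar}}\le P_{i,\rho}/s$ is equivalent to
\[
Q(\rho_{\sigX,\sigXbar}) \;\le\; \tfrac{1}{s}\,Q(\rho)\,\psi_{\rho, Y_i, i}.
\]

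Next I would expand $Q(\rho_{\sigX,\sigXbar})$ using~\eqref{eqn:Q}. Taking $i=1$, $X=\true$ as the representative case, the relevant child is $\rho_{\sigtrue,\sigfalse}$, and \eqref{eqn:Q} gives
\[
Q(\rho_{\sigtrue,\sigfalse}) \;=\; Q(\rho)\bigl(\psi_{\rho,\true,1} - \min\{\psi_{\rho,\true,1},\psi_{\rho,\true,2}\}\bigr) \;\le\; Q(\rho)\,\bigl|\psi_{\rho,\true,1}-\psi_{\rho,\true,2}\bigr|.
\]
The three remaining cases ($i=1, X=\false$; $i=2, X=\true$; $i=2, X=\false$) are symmetric: in each, $Q(\rho_{\sigX,\sigXbar})\le Q(\rho)|\psi_{\rho,X,1}-\psi_{\rho,X,2}|$ (using $1-\psi_{\rho,X,j} = \psi_{\rho,\neg X,j}$ when $X=\false$ or when re-indexing for $i=2$). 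So it suffices to prove
\[
\bigl|\psi_{\rho,X,1}-\psi_{\rho,X,2}\bigr| \;\le\; \psi_{\rho,Y_i,i}/s.
\]

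The key step is to apply Lemma~\ref{lem:local-uniform} to each of $\psi_{\rho,X,1}$ and $\psi_{\rho,X,2}$. I would first verify the preconditions: by a straightforward induction on the coupling tree using line~\ref{stepVset} of Algorithm~\ref{alg:children} (the first variable added is always chosen from $\marked(c)\setminus \Vset(\rho)$), we have $\Vset(\rho)\subseteq \Va_\marked$; moreover $\partialVal$ is a prefix of $\partialVal^*$, so $\mathrm{dom}(\partialVal)\subseteq \Va_\marked$; and $u\in\marked(c)\setminus(\Vset(\rho)\cup\mathrm{dom}(\partialVal))$ because $c\in\Crem\subseteq \Cl^\partialVal$. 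Non-emptiness of $\Omega^{\Ai(\rho)\cup\Lambda}$ is supplied by Lemma~\ref{lem:anothernonzero}. Lemma~\ref{lem:local-uniform} then yields
\[
\frac{1}{2}\left(1-\frac{1}{3s}\right) \;\le\; \psi_{\rho,X,j}\;\le\; \frac{1}{2}\left(1+\frac{1}{3s}\right)\qquad \text{for } X\in\{\true,\false\},\ j\in\{1,2\}.
\]

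Finally, combining these bounds I get $|\psi_{\rho,X,1}-\psi_{\rho,X,2}| \le 1/(3s)$ and $\psi_{\rho,Y_i,i}/s \ge (1-1/(3s))/(2s)\ge 1/(3s)$ (the last inequality holds because $s\ge 1$ for sufficiently large $k$, which makes $(1-1/(3s))/2\ge 1/3$). Chaining the inequalities then completes the proof. The only real subtlety is the verification that $\Vset(\rho)$ and $\mathrm{dom}(\partialVal)$ consist entirely of marked variables so that Lemma~\ref{lem:local-uniform} is applicable; beyond that the argument is a direct manipulation of \eqref{eqn:Q} and Definition~\ref{def:LPvar}.
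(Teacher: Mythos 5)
Your proposal is correct and follows essentially the same route as the paper's proof: unpack Definition~\ref{def:LPvar} to reduce the claim to $Q(\rho_{\sigX,\sigXbar})/Q(\rho)\le \psi_{\rho,Y_i,i}/s$, bound the left side by $|\psi_{\rho,X,1}-\psi_{\rho,X,2}|$ via \eqref{eqn:Q}, and conclude using Lemma~\ref{lem:anothernonzero} together with the marginal bounds of Lemma~\ref{lem:local-uniform}. Your explicit check that $\Vset(\rho)$ and the domain of $\partialVal$ consist of marked variables is a minor extra precaution the paper leaves implicit, but the argument is the same.
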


\begin{proof}
  We first assume $X=\true$ and $i=1$.
  Definition~\ref{def:LPvar} implies that  
  \begin{align} \label{eqn:P-ratio}
    \frac{P_{i,\rho_{\sigtrue,\sigfalse}}}{P_{i,\rho}} = \frac{Q(\rho_{\sigtrue,\sigfalse})}{Q(\rho)}\,\frac{\abs{\Omega^{\Aone(\rho)\cup \Lambda}}}{ \abs{\Omega^{\Aone( \rho_{\sigtrue,\sigfalse})\cup \Lambda}}}.
  \end{align}
  From the definition \eqref{eqn:Q} 
  (for the case where $\psi_{\rho,T,2} \leq \psi_{\rho,T,1}$)
  along with Observation~\ref{obs:couple} (for the other case), we have that
  \begin{align*}
    \frac{\pseudoProb (\rho_{\sigtrue,\sigfalse})}{\pseudoProb (\rho)} \leq |\psi_{\rho,\true,1} - \psi_{\rho,\true,2}|.
  \end{align*}
  Recall that $\psi_{\rho,\true,i}$ is the probability that $u$ is assigned the value~$\true$ under $\mu_{\Omega^{\Ai(\rho)\cup \Lambda}}$.
  By Lemma~\ref{lem:anothernonzero}, $\Omega^{\Ai(\rho)\cup \Lambda}$
  is non empty, so we can apply Lemma~\ref{lem:local-uniform} to this partial assignment.  From the second part of the lemma, we have
  \begin{align*}
    \frac{1}{2}\left( 1-\frac{1}{3s} \right) \leq \psi_{\rho,\true,i} \leq \frac{1}{2}\left( 1+\frac{1}{3s} \right).
  \end{align*}
  Since $k$ is sufficiently large and so is $s$,
  \begin{align*}
    |\psi_{\rho,\true,1} - \psi_{\rho,\true,2}| \leq \frac{1}{s}\cdot \psi_{\rho,\true,i}.
  \end{align*}
  The claim in the lemma follows 
  for $X=\true$ and $i=1$
  since $\psi_{\rho,\true,1} = \frac{ \abs{\Omega^{\Aone( \rho_{\sigtrue,\sigfalse})\cup \Lambda}}}{\abs{\Omega^{\Aone(\rho)\cup \Lambda}}}$.

  For $X=\false$ and $i=1$,  note that
  \begin{align*}
    \frac{\pseudoProb (\rho_{\sigfalse,\sigtrue})}{\pseudoProb (\rho)} \leq | (1 - \psi_{\rho,\true,1}) - (1-\psi_{\rho,\true,2})| =   |\psi_{\rho,\true,1} - \psi_{\rho,\true,2}|,
  \end{align*}
  and use $\psi_{\rho,\false,1} = \frac{ \abs{\Omega^{\Aone( \rho_{\sigfalse,\sigtrue})\cup \Lambda}}}{\abs{\Omega^{\Aone(\rho)\cup \Lambda}}}$ in the end. 
  
  For $i=2$, the proof is similar.
\end{proof}

Now we are ready to show the completeness.

\begin{lemma}\label{lem:existence}
  Suppose $\rlower \leq  {|\Omega^\partialVal_1|}/{|\Omega^\partialVal_2|} \leq \rupper$.
  The variables $\*P=\{P_{i,\rho}\}$ defined in Definition~\ref{def:LPvar} satisfy all constraints of the LP.
\end{lemma}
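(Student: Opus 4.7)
The plan is to verify each of the four constraint sets in turn for the variables $\*P$ given by Definition~\ref{def:LPvar}, namely $P_{i,\rho} = \pseudoProb(\rho)\,|\Omega^\partialVal_i|/|\Omega^{\Ai(\rho)\cup\Lambda}|$. The denominator is nonzero by Lemma~\ref{lem:anothernonzero}, so these values are well defined, and most of the heavy lifting has already been done in the preceding lemmas; the proof will largely be a matter of assembly.

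For \ConZero, non-negativity is immediate from the construction of $\pseudoProb$ (each value is a nonnegative number inherited from a product of probabilities in \eqref{eqn:Q}). The upper bound $P_{i,\rho}\leq 1$ follows directly from Lemma~\ref{lem:getcons0}, which gives $\pseudoProb(\rho)\leq |\Omega^{\Ai(\rho)\cup\Lambda}|/|\Omega^\partialVal_i|$. For \ConThree, the bound $P_{i,\rho_{\sigX,\sigXbar}}\leq P_{i,\rho}/s$ is exactly Lemma~\ref{lem:docouple}, which is where Lemma~\ref{lem:local-uniform} (marked variables are nearly uniform) enters.

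For \ConOne, we use that at a leaf $\rho$ Lemma~\ref{lem:rho} gives $r(\rho)=|\Omega^{\Aone(\rho)\cup\Lambda}|/|\Omega^{\Atwo(\rho)\cup\Lambda}|$. Plugging the definition of $P_{i,\rho}$ into $P_{1,\rho}\,r(\rho)$ yields $\pseudoProb(\rho)\,|\Omega^\partialVal_1|/|\Omega^{\Atwo(\rho)\cup\Lambda}|$, which equals $(|\Omega^\partialVal_1|/|\Omega^\partialVal_2|)\,P_{2,\rho}$. If $\pseudoProb(\rho)=0$ all four inequalities are $0\leq 0$; otherwise $P_{2,\rho}>0$ and the two inequalities reduce to $\rlower\leq |\Omega^\partialVal_1|/|\Omega^\partialVal_2|\leq \rupper$, which is the hypothesis.

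For \ConTwo, the root case $P_{i,\rho^*}=1$ is immediate: $\pseudoProb(\rho^*)=1$ by definition and $\Ai(\rho^*)$ sets only $v^*$, so the denominator equals $|\Omega^\partialVal_i|$. For the recursive identity $P_{1,\rho}=P_{1,\rho_{\sigX,\sigtrue}}+P_{1,\rho_{\sigX,\sigfalse}}$, note that both children on the right share the same $\Aone$ (both extend $\Aone(\rho)$ by $u\to X$), so they share the denominator $|\Omega^{\Aone(\rho)\cup\{u\to X\}\cup\Lambda}|$. Observation~\ref{obs:couple} gives $\pseudoProb(\rho_{\sigX,\sigtrue})+\pseudoProb(\rho_{\sigX,\sigfalse})=\pseudoProb(\rho)\,\psi_{\rho,X,1}$, and by \eqref{eqn:psi} the factor $\psi_{\rho,X,1}$ exactly converts the denominator from $|\Omega^{\Aone(\rho)\cup\{u\to X\}\cup\Lambda}|$ back to $|\Omega^{\Aone(\rho)\cup\Lambda}|$, recovering $P_{1,\rho}$. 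The mirror identity for $i=2$ is identical with roles of $(\tau_1,\tau_2)$ and $\psi_{\rho,X,2}$ swapped. Since no step in this verification is delicate beyond the already-established Lemmas~\ref{lem:anothernonzero}--\ref{lem:docouple}, there is no serious obstacle; the main conceptual content is the bookkeeping showing that the definition of $\pseudoProb$ was tailored precisely so that Constraint Sets~2 and~3 hold simultaneously.
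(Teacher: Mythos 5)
Your proposal is correct and follows essentially the same route as the paper's proof: \textbf{Constraint Set 0} via Lemma~\ref{lem:getcons0}, \textbf{Constraint Set 1} via Lemma~\ref{lem:rho} and the identity $P_{1,\rho}\,r(\rho)=\bigl(|\Omega^\partialVal_1|/|\Omega^\partialVal_2|\bigr)P_{2,\rho}$, \textbf{Constraint Set 2} via Observation~\ref{obs:couple}, \eqref{eqn:psi} and the fact that the two relevant children share the same $\Aone$ (resp.\ $\Atwo$), and \textbf{Constraint Set 3} via Lemma~\ref{lem:docouple}. Your explicit handling of the degenerate case $\pseudoProb(\rho)=0$ in Constraint Set 1 is a minor extra care the paper skips, but it does not change the argument.
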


\begin{proof}
  \ConZero:\quad 
  The fact that the LP variables satisfy these constraints follows from the definition of the variables in Definition~\ref{def:LPvar}
  (which guarantees that they are all non-negative) and from Lemma~\ref{lem:getcons0}.
  
  \ConOne:\quad Definition~\ref{def:LPvar} implies that for any node~$\rho$ in the coupling tree,
  \begin{align}\label{eq:LPgoal}
    \frac{P_{1,\rho}}{P_{2,\rho}} \, \frac{|\Omega^{\Aone(\rho)\cup \Lambda}|}{|\Omega^{\Atwo(\rho)\cup \Lambda}|} 
    = \frac{|\Omega^\partialVal_1|}{|\Omega^\partialVal_2|}.
  \end{align}
  
  If $\rho$ is a leaf, then  by Lemma~\ref{lem:rho}, $r(\rho) = {|\Omega^{\Aone(\rho)\cup \Lambda}|}/{|\Omega^{\Atwo(\rho)\cup \Lambda}|}$.
  Equation~\eqref{eq:LPgoal}
  implies that
  $$\frac{P_{1,\rho}}{P_{2,\rho}} \, r(\rho) = \frac{|\Omega^\partialVal_1|}{|\Omega^\partialVal_2|},$$
  so as long as 
  $\rlower \leq  {|\Omega^\partialVal_1|}/{|\Omega^\partialVal_2|} \leq \rupper$,
  the LP variables satisfy the constraints in \ConOne, as required.

  \ConTwo:\quad 
  For the root $\rho^*$ of the coupling tree, 
  it is easy to see from Definition~\ref{def:LPvar} that $P_{i,\rho^*} = 1$.

  Let $\rho\not\in\+L^*$ be a node in the coupling tree and let $u$ be the first variable of~$\rho$.
  For $X\in \{\true,\false\}$ and $i=1$ we wish to establish 
  $P_{1,\rho} = P_{1,\rho_{\sigX,\sigtrue}} + P_{1,\rho_{\sigX,\sigfalse}}$.
  Plugging in Definition~\ref{def:LPvar} and dividing by $|\Omega^\partialVal_1|$, 
  the constraint is equivalent  (for any $Y\in \{\true,\false\}$) to
  \begin{align*}
    \frac{\pseudoProb(\rho)  }{|\Omega^{\Aone(\rho)\cup \Lambda}|}
    = \frac{\pseudoProb(\rho_{\sigX,\sigtrue})}{|\Omega^{\Aone( \rho_{\sigX,\sigtrue})\cup \Lambda}|} + 
    \frac{\pseudoProb(\rho_{\sigX,\sigfalse})}{|\Omega^{\Aone( \rho_{\sigX,\sigfalse})\cup \Lambda}|} 
    = \frac{\pseudoProb(\rho_{\sigX,\sigtrue})}{|\Omega^{\Aone( \rho_{\sigX,\sigY})\cup \Lambda}|} + 
    \frac{\pseudoProb(\rho_{\sigX,\sigfalse})}{|\Omega^{\Aone( \rho_{\sigX,\sigY})\cup \Lambda}|},
  \end{align*}
  where we used again the fact that $\Aone(\rho_{\sigX,\sigX}) = \Aone(\rho_{\sigX,\sigXbar})$.
  The equation above follows from Observation~\ref{obs:couple} using~\eqref{eqn:psi}. 
  The other three constraints are similar.

  \ConThree:\quad This case directly follows from Lemma~\ref{lem:docouple}.
\end{proof}

\subsection{\texorpdfstring{$\ell$}{l}-wrong assignments}

There are two kinds of errors which cause solutions of the LP to differ from the ratio ${\abs{\Omega^\partialVal_1}}/{\abs{\Omega^\partialVal_2}}$.
The first kind of error involves a notion that we call 
  ``$\ell$-wrong assignments''.
To define them, we need some graph-theoretic notation.

\begin{definition}\label{def:graph-notations}
  Given a graph $G$ with vertices~$u$ and~$v$ in $V(G)$,
  let $\dist_G(u, v)$ be the  distance between $u$ and $v$ in $G$ --- that is, the number of edges in a shortest path from~$u$ to~$v$. 
  Given a subset $T\subseteq V(G)$ and a vertex $v\in V(G)$,
  let $\dist_G(u,T)$ be $\min_{v\in T} \dist_G(u,v)$.
  For any positive integer~$k$, 
  let $G^k$ be the graph with vertex set~$V(G)$ in which vertices $u$ and $v$ are connected 
  if and only if there is a length-$k$ path from~$u$ to~$v$ in~$G$.
  Let $G^{\leq k}$ be the graph with vertex set~$V(G)$ 
  in which vertices $u$ and $v$ are connected if and only if 
  there is a path from~$u$ to~$v$ in~$G$ of length at most~$k$.
\end{definition}

The main combinatorial structure
that we use is a set $\calD(G_\Phi)$, which 
is based on Alon's ``2,3-tree''~\cite{Alon}. Similar structures were subsequently used in~\cite{Moi19,GLLZ19}.
The main difference between 
our definition and previous ones is that we 
take into account whether clauses are connected via good variables.

\begin{definition}\label{def:disjoint}
  Given the graph~$G_{\Phi}$,
  let $\calD(G_{\Phi})$ be the set of subsets $T\subseteq V(G_{\Phi})$ 
  such that the following hold:
  \begin{enumerate}
    \item For any $c_1,c_2\in T$, $\var(c_1) \cap \var(c_2)\cap \Va_\good = \emptyset$;\label{item:DG-disjoint}
    \item The graph $G_{\Phi}^{\leq 4}[T]$, which is the subgraph of $G_{\Phi}^{\leq 4}$ induced by~$T$, is connected.
  \end{enumerate}
\end{definition}

The following two lemmas regarding $\calD(G_{\Phi})$ will be useful.
We defer their proofs to Section~\ref{sec:DG}.

\newcommand{\statelemnumtrees}{Let $\ell$ be an integer which is at least $\log n$.
  \Whp over the choice of $\Phi$, 
  every clause  $c\in\Cl^\partialVal_\good$ has the property that
  the number of size-$\ell$ subsets $T \in \calD(G_{\Phi})$   containing $c$ is at most $(18k^2\alpha)^{4\ell}$.}
\begin{lemma}\label{lem:numtrees}
\statelemnumtrees
\end{lemma}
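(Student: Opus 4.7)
The plan is to apply the first-moment method. For each clause $c \in \Cl$, let $X_\ell(c)$ denote the number of size-$\ell$ subsets $T \in \calD(G_\Phi)$ that contain $c$. I will upper bound $\Ex[X_\ell(c)]$ by enumerating combinatorial structures, and then apply Markov's inequality together with a union bound over all $c \in \Cl$.

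For the enumeration, I use that every $T \in \calD(G_\Phi)$ admits a spanning tree in $G_\Phi^{\leq 4}[T]$ with $\ell - 1$ edges, each being a walk of length at most~$4$ in $G_\Phi$. Each such spanning tree rooted at $c$ is encoded by a DFS: the DFS shape (a sequence of $\ell-1$ descents and $\ell-1$ ascents) contributes a factor of at most $4^\ell$ by a Catalan-type bound, and each descent contributes a walk specification (a length $j \in \{1, 2, 3, 4\}$ together with the sequence of at most $j$ intermediate and terminal clauses in $\Cl$). The key probabilistic estimate is that, for any fixed sequence of clauses $c_0, c_1, \ldots, c_j$,
\begin{equation*}
\Pr[c_0 \sim c_1 \sim \cdots \sim c_j] \leq C_1 (k^2/n)^j
\end{equation*}
for some absolute constant $C_1$. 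This follows by summing over witness variables $v_i \in \var(c_{i-1}) \cap \var(c_i)$ for each shared-variable event, and using the independence of literals across distinct clauses together with the marginal bound $\Pr[v \in \var(c)] \leq k/n$. Summing over the $m^j$ choices of intermediate and terminal clauses gives a contribution of $C_1 (k^2 \alpha)^j$ per descent.

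Combining the per-descent bound over all $\ell-1$ descents and using $k^2 \alpha \geq 1$ so that the geometric sum $\sum_{j=1}^{4} (k^2 \alpha)^j$ is dominated by the top term, I obtain $\Ex[X_\ell(c)] \leq (C k^2 \alpha)^{4\ell}$ for some absolute constant $C < 18$. Markov's inequality then yields $\Pr[X_\ell(c) > (18 k^2 \alpha)^{4\ell}] \leq (C/18)^{4\ell}$, which is inverse polynomial in $n$ since $\ell \geq \log n$. A union bound over the $m = O(n)$ clauses of $\Cl$ finishes the proof.

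The main obstacle is controlling the joint probability of all descents existing in $\Phi$ when the walks in the DFS encoding share clauses, which obstructs a naive product-of-probabilities argument. The resolution is to enumerate the entire structure together with a witness variable for every shared-variable event, giving $\Pr[\text{structure exists}] \leq \sum_\pi \prod_c (k/n)^{|W_c(\pi)|}$, where $W_c(\pi)$ denotes the distinct witnesses assigned to $\sim$-events incident to clause~$c$. Summing over $\pi$ and leveraging independence of literals across distinct clauses then yields the desired $O((k^2/n)^E)$-type bound for structures with $E$ total shared-variable events, justifying the per-descent estimate above.
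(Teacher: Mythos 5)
Your overall skeleton (first moment over tree-like encodings, Markov's inequality using $\ell \geq \log n$, and a union bound over the $O(n)$ clauses) matches the paper's, but the way you handle connectivity in $G_\Phi^{\leq 4}$ has a genuine gap. The crux is your claim that a clause-labelled structure with $E$ shared-variable events occurs with probability $O((k^2/n)^E)$, which is what you use to factorize the expectation into independent per-descent contributions $C_1(k^2\alpha)^j$. This claim is false as soon as the union of the walks contains a cycle (or a repeated edge), and such specifications do occur in your enumeration, since walk intermediates may coincide with one another and with clauses of $T$. For instance, for three fixed clauses $c_1,c_2,c_3$ required to be pairwise adjacent ($E=3$), a single variable can witness all three adjacencies, so the probability is at least of order $n\cdot(k/n)^3=k^3/n^2$, which exceeds $(k^2/n)^3$ by a factor of roughly $n/k^3$. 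Your own witness sum $\sum_\pi\prod_c (k/n)^{|W_c(\pi)|}$ contains exactly this term (all three events assigned the same witness), so while that inequality is correct, it does not yield $O((k^2/n)^E)$, and the per-descent product bound on $\Ex[X_\ell(c)]$ is therefore unjustified. The obvious repair --- bounding each specification by the probability that a spanning tree on its $V$ distinct clauses appears, i.e.\ $(k^2/n)^{V-1}$ --- is also insufficient: summing over clause assignments then gives roughly $(4\ell+k^2\alpha)^{4\ell}$, and since only $k^2\alpha\geq 1$ is guaranteed while $\ell=\Theta(\log(n/\eps))$, this loses a super-exponential factor $\ell^{\Theta(\ell)}$ against the target $(18k^2\alpha)^{4\ell}$.

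The paper sidesteps all of this by never treating walk intermediates probabilistically: Lemma~\ref{lem:<=4connected} and Corollary~\ref{cor:g3tog} show \emph{deterministically} that every size-$\ell$ set that is connected in $G_\Phi^{\leq 4}$ lies inside a set of size exactly $4\ell$ that is connected in $G_\Phi$, at the cost of a factor $2^{4\ell}$; then Lemma~\ref{lem:randomsubgraph} bounds the number of connected sets of clauses containing $c$ by enumerating labelled spanning trees on \emph{distinct} clauses (Cayley's formula together with Lemma~\ref{lem:gphitree}), where no coincidences can arise and the probability per tree factorizes cleanly. If you wish to keep the DFS-of-walks encoding, you would need a careful coincidence analysis (merging repeated clauses and controlling cycles), which in effect reconstructs the paper's reduction; as written, the key probabilistic estimate your argument rests on is incorrect.
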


Recall that the coupling tree~$\=T^\partialVal$ 
 is defined with respect to
 a prefix $\partialVal$ of the partial assignment $\partialVal^*$ from Lemma~\ref{lem:partial-assignment}.
Let $c^*$ be the first clause of the root node $\rho^*$.

\newcommand{\statelemlargetree}{\Whp over the choice of $\Phi$, every node $\rho$ in   $\=T^\partialVal$   
 with $|\VI(\rho)|\geq L$ 
  has the property  that    there  is a set $T \subseteq \failedclauses(\rho)$ containing $c^*$ such that  
    $T\in \calD(G_{\Phi})$,  $\abs{T}  = C_0\lceil\log(n/\eps)\rceil$ and $\abs{T\cap \Cl_\bad} \le  {\abs{T}}/{3}$.}
\begin{lemma}\label{lem:large3tree} 
\statelemlargetree
\end{lemma}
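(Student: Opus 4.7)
The plan is to construct $T$ greedily by performing a BFS-like exploration of $\failedclauses(\rho)$ starting from $c^*$, exploiting the large slack factor $3k^2\HighD$ between the threshold $L$ for $|\VI(\rho)|$ and the target size $C_0\lceil\log(n/\eps)\rceil$ of $T$.

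First I would show that $|\failedclauses(\rho)|$ is large: by property~\ref{P:VIvar}, every variable $u\in\VI(\rho)$ is contained in some clause of $\failedclauses(\rho)$, and since each clause contains at most $k$ variables, $|\failedclauses(\rho)|\geq |\VI(\rho)|/k \geq L/k = 3C_0 k\HighD \lceil\log(n/\eps)\rceil$. Moreover, combining properties~\ref{P:connected} and~\ref{P:VIvar}, $\failedclauses(\rho)$ is connected in $G_\Phi$, since any edge of an $H_\Phi$-path through $\VI(\rho)$ lies in some clause of $\failedclauses(\rho)$. Note also that the root clause $c^*$ always lies in $\failedclauses(\rho)$ and is necessarily a good clause (because, by property~\ref{P:badclause}, a bad clause cannot straddle $\VI$ and $\Va^\partialVal\setminus\VI$ and thus cannot be the ``first clause'' of any node).

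Next I would construct $T$ greedily: initialize $T\gets\{c^*\}$, and at each step seek a clause $c\in\failedclauses(\rho)\setminus T$ such that (i) $c$ shares no good variable with any clause in $T$ and (ii) $\dist_{G_\Phi}(c,T)\leq 4$, always preferring good clauses to bad ones. Feasibility of (i) is easy: each good clause in $T$ has at most $k$ good variables, each appearing in at most $\HighD$ clauses, so the number of ``blocked'' failed clauses is at most $|T|\cdot k\HighD \leq C_0 k\HighD\lceil\log(n/\eps)\rceil = L/(3k)$, at most one-third of $|\failedclauses(\rho)|$. For (ii), because $\failedclauses(\rho)$ is $G_\Phi$-connected, a BFS from $T$ reaches every failed clause; a case analysis in the spirit of Alon's 2,3-tree construction shows that if no unblocked failed clause sits within distance $4$ of $T$, one can instead add an intermediate failed clause along a shortest path towards an unblocked one, preserving the required connectivity in $G_\Phi^{\leq 4}$.

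For the bad-fraction requirement $|T\cap\Cl_\bad|\leq |T|/3$, I would invoke a (w.h.p.) bound on bad-component sizes in the random formula, established by a standard branching-process argument on $H_{\Phi,\bad}$ akin to that used to prove Lemma~\ref{lem:depthoftree}. Since bad clauses contain only bad variables (Observation~\ref{obs:goodbad}), bad clauses never block good ones via condition (i), and any traversal of a bad component costs only a few bad additions; together with the greedy preference for good clauses this keeps the bad fraction below $1/3$. The main obstacle is coordinating the three demands simultaneously — the dispersion condition (no shared good variable), the $G_\Phi^{\leq 4}$-connectivity, and the good-clause preference — which is where the slack factor $3k^2\HighD$ between $L$ and the target tree size, combined with the bad-component size bound, is essential.
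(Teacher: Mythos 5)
There is a genuine gap, and it is in the part of the statement that is hardest to get: the bound $\abs{T\cap \Cl_\bad}\le \abs{T}/3$. Your plan is to control the bad clauses \emph{during} the greedy construction, using a bad-component size bound plus a preference for good clauses, on the grounds that ``any traversal of a bad component costs only a few bad additions.'' This does not work quantitatively: bad components can have size $\Theta(k\log n)$ (Lemma~\ref{lem:sizeofbad} is essentially tight in order), hence can contain $\Theta(\log n)$ bad clauses, which is the same order as the target size $\abs{T}=C_0\lceil\log(n/\eps)\rceil$ itself. So even a single bad component encountered while maintaining $G_\Phi^{\le 4}$-connectivity could, a priori, contribute a constant fraction (or more) of $T$, and a greedy preference for good clauses cannot rule this out, since the connectivity requirement may force you through bad clauses. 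A component-size bound alone is simply not the right statement; what is needed is a \emph{relative} bound. The paper's proof in fact puts \emph{all} of $\failedclauses(\rho)\cap\Cl_\bad$ into $T$ (together with $c^*$ and a maximum independent set of $G_{\Phi,\good}$ on the remaining good failed clauses, via Lemma~\ref{lem:gvaredgesize}), and only afterwards bounds the bad fraction: by Lemma~\ref{lem:<=4connected} the $G_\Phi^{\le 4}$-connected set $T$ sits inside a $G_\Phi$-connected set $T'$ with $\abs{T'}\le 4\abs{T}$, the expansion property (Lemma~\ref{lem:expansion}) gives $\abs{\var(T')}\ge 0.9k\abs{T'}$, and then Lemma~\ref{lem:connected-bad-size} (every connected clause set spanning enough variables has at most a $1/12$ fraction of bad clauses) yields $\abs{T\cap\Cl_\bad}\le\abs{T'}/12\le\abs{T}/3$. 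Nothing in your proposal supplies an analogue of this step, and the ``standard branching-process argument'' you invoke for bad components is also not how such bounds are obtained here (the closure process defining $\Va_\bad$ involves unbounded-degree variables; the paper needs the Coja-Oghlan--Frieze expansion machinery and a delicate union bound).

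Two smaller issues. First, your claim that $\failedclauses(\rho)$ is connected in $G_\Phi$ is unjustified: an $H_\Phi$-edge between two variables of $\VI(\rho)$ is witnessed by \emph{some} clause containing both, but that clause may have been satisfied and removed from $\Crem$ without ever entering $\failedclauses$; properties~\ref{P:connected} and~\ref{P:VIvar} only give connectivity of $G_\Phi^{\le 2}[\failedclauses(\rho)]$ (this is Item~\eqref{item:f-rho-connect} of Lemma~\ref{lem:connected}, proved by induction along Algorithm~\ref{alg:children}), and this weaker statement is what forces the distance-$4$ threshold in $\calD(G_\Phi)$. Second, the greedy step itself (``if no unblocked failed clause sits within distance $4$ of $T$, add an intermediate failed clause along a shortest path'') is only asserted; the intermediate clause may be blocked, and making such a case analysis airtight is precisely the work the paper sidesteps by taking a maximum independent set of $G_{\Phi,\good}[U]$ with $U=\failedclauses(\rho)\setminus(\Cl_\bad\cup\Gamma^+_\good(c^*))$ and proving $G_\Phi^{\le 4}$-connectivity by a partition-and-contradiction argument against the $G_\Phi^{\le 2}$-connectivity of $\failedclauses(\rho)$. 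Your size accounting ($\abs{\failedclauses(\rho)}\ge L/k$ and the blocking estimate $\abs{T}k\HighD\le L/(3k)$) is in the right spirit and matches the paper's counting, but as written the construction neither certifies membership in $\calD(G_\Phi)$ nor delivers the bad-fraction bound.
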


 We now define $\ell$-wrong assignments.

\begin{definition}\label{def:l-wrong}
    An assignment $\sigma \in \Omega_i^\partialVal$ is $\ell$-wrong if there is a size-$\ell$ set $T\in \calD(G_{\Phi})$ such that
  \begin{itemize}
  \item $c^*\in T$,
    \item $|T\cap\Cl^\partialVal_\good|\geq 2\abs{T}/3$, and
    \item there is a size $\lceil   \ell/2\rceil $ subset $S$ of~$T \cap \Cl^\partialVal_\good$ 
    such that  the restriction of~$\sigma$ to marked variables in clauses in~$S$ does not satisfy any clause in~$S$.
    (Formally, taking $U$ to be $\cup_{c\in S} \marked(c)$, the condition is that $\sigma[U]$ does not satisfy any clauses in~$S$.)
      \end{itemize}
  Otherwise $\sigma$ is $\ell$-correct. 
  \end{definition}

The following is similar to \cite[Lemma 4.8]{GLLZ19}.

\begin{lemma}\label{lem:15}
Fix $i\in \{1,2\}$.
  Let $\ell=L/(3k^2\Delta)$. Then the fraction of assignments in $\Omega_i^\partialVal$ that are $\ell$-wrong is at most $(k\HighD)^{-9\ell}$.
\end{lemma}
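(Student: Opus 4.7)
The plan is a standard union bound over the combinatorial witnesses for being $\ell$-wrong, combined with the local lemma (via Lemma~\ref{lem:local-uniform}) to control the probability that any fixed witness is realised. Fix $i\in\{1,2\}$ and write $\partialVal':=\partialVal\cup\{v^*\rightarrow X_i\}$ so that $\Omega_i^\partialVal=\Omega^{\partialVal'}$; set $V$ to be the set of variables fixed by $\partialVal'$, so $V\subseteq \Va_\marked$ and Lemma~\ref{lem:local-uniform} applies.

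First, I would enumerate the possible witnesses $(T,S)$. By Lemma~\ref{lem:numtrees}, the number of admissible sets $T\in\calD(G_\Phi)$ of size $\ell$ containing $c^*$ is at most $(18k^2\alpha)^{4\ell}$. For each such $T$, the number of choices of $S\subseteq T\cap\Cl^\partialVal_\good$ with $|S|=\lceil\ell/2\rceil$ is at most $\binom{\ell}{\lceil\ell/2\rceil}\le 2^\ell$.

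Next, I would bound, for a fixed $(T,S)$, the probability under $\mu_{\Omega_i^\partialVal}$ of the event $A(T,S)$ that $\sigma[U]$ falsifies every clause in $S$, where $U:=\bigcup_{c\in S}\marked(c)$. Note that $A(T,S)$ depends only on the free marked variables $U\setminus V\subseteq\Va_\good\setminus V$, so Lemma~\ref{lem:local-uniform} gives
\[
\Pr_{\mu_{\Omega_i^\partialVal}}(A(T,S))\le \Pr_{\mu_\calA}(A(T,S))\cdot\bigl(1-\tfrac{1}{sk\HighD}\bigr)^{-|\Gamma(A(T,S))|},
\]
where $\calA$ is the uniform distribution on assignments of $\var(A(T,S))$. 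Under $\mu_\calA$, the free marked variables of each $c\in S$ are independent uniform bits; by Corollary~\ref{cor:partial-prefix} each $c\in S$ has at least $k/4$ free marked variables, and by property~(1) of Definition~\ref{def:disjoint} the good (and hence marked) variables across distinct clauses of $S$ are disjoint. Hence the events for different $c\in S$ are independent, giving $\Pr_{\mu_\calA}(A(T,S))\le 2^{-k|S|/4}$. The correction factor is benign: $|\Gamma(A(T,S))|\le k|S|\HighD$, and since $s=2^{k/4}/(ek\HighD)$ is enormous, $(1-1/(sk\HighD))^{-k|S|\HighD}\le e^{2|S|/s}\le 2$ for large $k$.

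Finally I would combine the bounds. The total fraction of $\ell$-wrong assignments is at most
\[
2\cdot(18k^2\alpha)^{4\ell}\cdot 2^\ell\cdot 2^{-k\lceil\ell/2\rceil/4}.
\]
Using $\alpha<2^{k/300}/k^3$ and $k$ large enough, $18k^2\alpha\le 2^{k/150}$, so the above is at most $2^{-\ell k/8+4k\ell/150+\ell+O(1)}$, which is comfortably below $2^{-3k\ell/100-9\ell\log_2 k}=(k\HighD)^{-9\ell}$ for $k\ge k_0$. The arithmetic here is the only technical point; everything structural is delivered by Lemmas~\ref{lem:numtrees}, \ref{lem:local-uniform}, and Corollary~\ref{cor:partial-prefix}. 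I expect no real obstacle beyond matching the constants in the exponent, which is why the constant $300$ in the density bound is chosen with some slack.
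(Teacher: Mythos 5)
Your proposal is correct and follows essentially the same route as the paper: enumerate witnesses in $\calD(G_\Phi)$ via Lemma~\ref{lem:numtrees}, use Definition~\ref{def:disjoint}(1) plus Corollary~\ref{cor:partial-prefix} to get independence and a $2^{-\Theta(k)}$ failure probability per clause under the product measure, and transfer to $\mu_{\Omega_i^\partialVal}$ via Lemma~\ref{lem:local-uniform}; the only cosmetic difference is that the paper handles ``at least $\lceil\ell/2\rceil$ falsified clauses in $T$'' by a Chernoff bound on the count rather than union-bounding over the subsets $S$ (your extra $2^\ell$ factor is immaterial). Two harmless slips: each $c\in S$ has at least $k/4-1$ (not $k/4$) free marked variables since $v^*$ is also fixed, and the local-lemma correction factor is $e^{\Theta(\ell/s)}$, which grows with $n$ and is not bounded by $2$ — but it contributes only a factor $e^{O(1/s)}$ per unit of $\ell$, so your final arithmetic still closes with room to spare.
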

\begin{proof}
 Assume $i=1$, as the other case is similar.
  We want to show that
  \begin{align*}
    \Pr_{\sigma\sim\mu_{\Omega^\partialVal_1}}(\text{$\sigma$ is $\ell$-wrong}) \le (k\HighD)^{-9\ell}.
  \end{align*}
  If every assignment in $\mu_{\Omega^\partialVal_1}$ is $\ell$-correct, then we are finished.
  Otherwise,
  consider a size-$\ell$ set $T = \{c_1, c_2, \ldots, c_\ell\}$ 
   in $\calD(G_\Phi)$ such that $c_\ell = c^*$ and   $\ell_T \geq 2\ell/3$ where $\ell_T:=|{T\cap \Cl_\good^\partialVal}|$. 

  Let $V:=\Va^\partialVal\setminus\{v^*\}$ be the set of unassigned variables in~$\rho^*$,
  and let $\Omega'$ be the set of all assignments $\sigma:V\rightarrow\{\true,\false\}$.  
  Let $U_T:=V\cap\left(\cup_{i=1}^\ell \marked(c_i)\right)$ be the set of all   marked variables in~$V$ that are  in clauses in $T$. 
  Given an assignment $\sigma \in \Omega'$, let $Z_T(\sigma)$ be the number of clauses in $T\cap \Cl_\good^\partialVal$ that are not satisfied by $\sigma[U_T]$.

  Suppose that $\sigma$ is drawn from $\mu_{\Omega'}$.
  By Corollary~\ref{cor:partial-prefix}, every $c\in\Cl_\good^\partialVal$
  has at least $k/4-1$ marked variables in~$V$. 
  Thus, for every   $c\in T\cap \Cl_\good^\partialVal$, the probability that $c$  is not satisfied by $\sigma[U_T]$ is at most $p^*:=2^{1-k/4}$.
  For $c\in T\cap \Cl_\good^\partialVal$, let $A_c$ be the event that $c$ is not satisfied by $\sigma[U_T]$.   Because $T \in \calD(G_{\Phi})$,   clauses in $T$  do not share good variables (so they do not share marked variables).
  Thus,  the events $A_c$ are mutually independent. Since $\ell_T=|T\cap \Cl_\good^\partialVal|$, we have that the event $Z_T(\sigma)\geq \ceil{\ell/2}$ is dominated above by the probability that $\widehat{Z}\geq \ceil{\ell/2}$ where $\widehat{Z}$ is a sum of $\ell_T$ independent Bernoulli r.v.s with parameter $p^*$. Setting $\gamma := 
   \ell/(2p^* \ell_T)-1 \geq 1/(2p^*) -1$ and applying a Chernoff bound to $\widehat{Z}$, we obtain that   \begin{align*}
    \Pr_{\mu_{\Omega'}}(Z_T(\sigma)\geq  \ceil{\ell/2})& = \Pr_{\mu_{\Omega'}}(Z_T(\sigma)\geq \ell/2)=\Pr_{\mu_{\Omega'}}(Z_T(\sigma)\geq  (1+\gamma)p^*\ell_T)\\
		&\leq \Pr(\widehat{Z}\geq (1+\gamma)p^*\ell_T)\leq \biggl(\frac{e^\gamma}{(1+\gamma)^{1+\gamma}}\biggr)^{p^*\ell_T} \leq \Bigl( {2ep^*} \Bigr)^{\ell_T/2}\leq \Bigl({2ep^*}\Bigr)^{\ell/3},
  \end{align*} 
  where the second-to-last inequality follows by substituting in the lower bound $1/(2p^*)-1$ for $\gamma$.

  Let $\mathcal{E}_T$ be the event that $Z_T(\sigma)\geq\ceil{\ell/2}$, so $\var(\mathcal{E}_T) = U_T$.
  We apply Lemma~\ref{lem:local-uniform} 
  with the partial assignment $\partialVal\cup\{v^*\rightarrow\true\}$
  and with the event~$A$ from Lemma~\ref{lem:local-uniform}
  being~$\mathcal{E}_T$.
  A clause $c$ is in $\Gamma(\mathcal{E}_T)$ if and only if $\var(c)$ intersects $U_T$. 
  Since all variables in $U_T$ are marked,
  each $c_i\in T\cap \Cl_\good^\partialVal$ has at most $3k/4$ variables in $U_T$.
  Moreover, for each $v\in U_T$, there are at most $\HighD$ clauses containing $v$.
  Thus, $\abs{\Gamma(\mathcal{E}_T)}\le 3k\ell\HighD/4$.

  Plugging everything into Lemma~\ref{lem:local-uniform},  we obtain
  \begin{align*}
    \Pr_{\mu_{\Omega^\partialVal_1}}(\mathcal{E}_T) \leq \Pr_{\mu_{\Omega'}}(\mathcal{E}_T) \Bigl(1-\frac{1}{sk\HighD}\Bigr)^{- 3k\ell\HighD/4}\leq\Bigl(
    {2ep^*}\Bigr)^{\ell/3}  e^{\ell/s},
  \end{align*}
  where the value~$s$ (defined just before Lemma~\ref{lem:local-uniform}) is $s=2^{k/4}/(ek\HighD)$.
  
  Note from the definition of~$L$ that $\ell = C_0 \lceil \log(n/\epsilon) \rceil$ for a sufficiently large constant~$C_0$, so $\ell\geq \log n$.
 Applying  Lemma~\ref{lem:numtrees} to the good clause~$c^*$, \whp 
  over the choice of~$\Phi$, the number of size-$\ell$ sets $T\in \calD(G_\Phi)$  containing $c^*$ is at most $(18k^2\alpha)^{4\ell}$. 
  Hence, by a union bound,   \begin{align*}
    \Pr_{ \mu_{\Omega^\partialVal_1}}(\sigma \mbox{ is $\ell$-wrong}\,\!) 
    &\leq \sum_{T\in \calD(G_\Phi): \abs{T}=\ell, c^*\in T,
    \ell_T \geq 2 \ell/3
    } \Pr_{\mu_{\Omega^\partialVal_1}}(\mathcal{E}_T)\\
    &\leq (18k^2\alpha)^{4\ell}\biggl({2ep^*}\biggr)^{\ell/3} e^{\ell/s} = \left( 
    18^4 k^8 \alpha^4 2^{1/3} e^{1/3} 2^{1/3} 2^{-k/12} e^{1/s} \right)^\ell\\
    &\le (k\HighD)^{-9\ell}, 
  \end{align*}
where in the last inequality we used that $\alpha\leq 2^{k/100}$ and $\Delta=2^{k/300}$. 
\end{proof}

\subsection{Errors from truncating nodes}

We have already noted that the existence of $\ell$-wrong assignments
can be viewed as ``errors''   which cause solutions of the LP to differ from the ratio ${\abs{\Omega^\partialVal_1}}/{\abs{\Omega^\partialVal_2}}$.
However, even if an assignment  $\sigma\in \Omega_i^\partialVal$ is $\ell$-correct,
it may still induce some error in the solution of the LP due to the truncation of the coupling tree.
We account for this kind of error in this section.

Fix an assignment $\sigma\in \Omega_1^\partialVal$. 
Given a solution ${\*P} = \{P_{i,\rho}\}$ to the LP,  
we consider the following stochastic process, which is a process (depending on~$\sigma$) for  sampling   from 
 $\+L^*$.
 The process  
 is defined for $i=1$ rather than for $i=2$. That is it starts with $\sigma\in \Omega_1^\partialVal$ and it uses the LP values $P_{1,\rho}$.
 (We will later require a similar process for $i=2$).

 \begin{definition}[Sampling conditioned on $\sigma\in \Omega_1^\partialVal$]\label{def:thecoupling}
Fix $\sigma\in \Omega_1^\partialVal$.
Here is a method for choosing a node $\rho \in \calL^*$.
Start by setting $\rho$ to be the root $\rho^*$ of the coupling tree.
From every node $\rho$ that is not in $\calL^*$, proceed to a child, as follows. 
Let $u$ be the first variable of~$\rho$.
 For each $X\in\{\true,\false\}$, go to $\rho_{u\rightarrow\sigma(u),u\rightarrow X}$ with probability $ {P_{1,\rho_{u\rightarrow\sigma(u),u\rightarrow X}}}/{P_{1,\rho}}$.
\end{definition}

This process is well-defined because ${\*P}  $ satisfies \ConTwo.
 If $\rho$ is a leaf such that $\Aone(\rho)$ agrees with $\sigma$,
 then the probability of reaching $\rho$  is $P_{1,\rho}$.
The following definition  concerns the truncating nodes $\rho\in \mathcal{T}$ 
where the process can also stop.   

\begin{lemma}\label{lem:typetwo} 
  Fix $\rlower \leq \rupper$.
  Let $\ell= L/(3k^2\HighD)  $.
  \Whp over the choice of $\Phi$, the following holds.
   Let $\sigma$
  be any $\ell$-correct assignment in $  \Omega^\partialVal_1$.
  If the LP has a solution using $\rlower$ and $\rupper$ then
  $ \sum_{\rho\in \calT:\sigma \in \Omega^{\partialVal\cup\Aone(\rho)} } P_{1,\rho} \leq \left( k\HighD \right)^{-8\ell}$.
\end{lemma}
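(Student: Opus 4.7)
The plan is to argue that any truncating node $\rho$ reachable by the sampling walk of Definition~\ref{def:thecoupling} (i.e., with $\Aone(\rho)$ consistent with $\sigma$) has the property that the path from the root to $\rho$ contains at least $\ell/6$ \emph{disagreement steps}---steps to a child of the form $\rho_{u\rightarrow\sigma(u),u\rightarrow\neg\sigma(u)}$. Each such step contributes a factor of at most $1/s$ by \ConThree, and a Chernoff-style bound on the distribution of the disagreement count $D(\rho)$ then yields the claimed inequality.

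First, apply Lemma~\ref{lem:large3tree}: w.h.p.\ over $\Phi$, there is a set $T\subseteq \failedclauses(\rho)$ with $c^*\in T$, $|T|=\ell$, $T\in \calD(G_\Phi)$, and $|T\cap \Cl_\bad|\le \ell/3$, so $|T\cap \Cl_\good^\partialVal|\ge 2\ell/3$. Because $\sigma$ is $\ell$-correct, fewer than $\lceil \ell/2 \rceil$ clauses in $T\cap \Cl_\good^\partialVal$ can fail to be satisfied by $\sigma$ restricted to their marked variables, so there is $S^* \subseteq T\cap \Cl_\good^\partialVal$ with $|S^*|\ge \ell/6$, and for each $c\in S^*$ some marked variable $v_c^* \in \marked(c)$ whose literal is satisfied by $\sigma(v_c^*)$.

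Next, I claim $\disagree \in \reason(c)$ for every $c\in S^*$. Suppose not. Since $c\in \failedclauses(\rho)$ is good, \ref{P:failempty} and \ref{P:typebad} force $\mathsf{1}$ or $\mathsf{2}$ to lie in $\reason(c)$, and then \ref{P:typetwoclauses} gives $\marked(c)\subseteq \Vset(\rho)$. Thus $v_c^* \in \Vset(\rho)$, and since the walk only descends through children whose $\Aone$-side agrees with $\sigma$, we have $\Aone(v_c^*) = \sigma(v_c^*)$. Combining \ref{P:agree} (in case $v_c^*\notin \VI(\rho)$) with the contrapositive of \ref{P:typedisagree} (in case $v_c^*\in \VI(\rho)$) yields $\Atwo(v_c^*) = \sigma(v_c^*)$ as well. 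So both $\Aone$ and $\Atwo$ satisfy $c$ via $v_c^*$, contradicting the presence of $\mathsf{1}$ or $\mathsf{2}$ in $\reason(c)$. Hence $\disagree \in \reason(c)$, which by \ref{P:typedisagree} gives some $v_c\in \var(c)\cap \VI(\rho)\cap \Vset(\rho)$ with $\Aone(v_c)\neq \Atwo(v_c)$. Since $\Vset(\rho) \subseteq \Va_\marked \subseteq \Va_\good$ and distinct clauses in $T\in \calD(G_\Phi)$ share no good variables, the $v_c$ for $c\in S^*$ are pairwise distinct.

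Each such $v_c$ was added to $\Vset$ as the first variable of some ancestor of $\rho$ under $(\tau_1,\tau_2)$ with $\tau_1(v_c)\neq\tau_2(v_c)$, i.e., the walk took a disagreement child at that ancestor. Hence every truncating node reachable by the walk has $D(\rho)\ge \ell/6$. For any $z>1$, an inductive argument combining \ConTwo\ with \ConThree\ (showing that the potential $\sum_{\text{children}} z^{D(\cdot)} P_{1,\cdot}$ grows by a factor of at most $1+(z-1)/s$ per expansion) yields $\sum_{\text{terminating }\rho} z^{D(\rho)} P_{1,\rho} \le (1+(z-1)/s)^{\text{depth}}$, so by Markov,
\[
\sum_{\rho\in \calT:\,\sigma \in \Omega^{\partialVal\cup \Aone(\rho)}} P_{1,\rho} \;\le\; z^{-\ell/6}\cdot (1+(z-1)/s)^{\text{depth}}.
\]
Setting $z=s^{2/3}$ and using the depth bound from Lemma~\ref{lem:depthoftree} together with $s = 2^{k/4}/(ek\HighD)$, $\HighD=2^{k/300}$, $\alpha<2^{k/300}/k^3$, and $\ell = C_0\lceil\log(n/\eps)\rceil$, the right-hand side is at most $(k\HighD)^{-8\ell}$ for $C_0$ and $k_0$ sufficiently large. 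The main obstacle I anticipate is the bookkeeping of the third paragraph---carefully deducing $\disagree \in \reason(c)$ by playing the ``iff'' clauses of \ref{P:typebad}--\ref{P:typetwoclauses} against the invariants \ref{P:agree} and \ref{P:failempty}.
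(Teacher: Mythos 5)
Your argument is correct, and its first half is the same as the paper's: invoke Lemma~\ref{lem:large3tree} to extract, for each truncating node $\rho$ consistent with $\sigma$, a set $T\subseteq\failedclauses(\rho)$ with $c^*\in T$, $\abs{T}=\ell$ and $\abs{T\cap\Cl_\bad}\le\ell/3$; use $\ell$-correctness plus the properties \ref{P:failempty}--\ref{P:typetwoclauses} and \ref{P:agree} to force $\disagree\in\reason(c)$ for at least $\ell/6$ clauses of $T\cap\Cl^\partialVal_\good$; and use the fact that clauses of $T$ share no good variables to get that many distinct disagreement variables in $\Vset(\rho)$, each costing a factor at most $1/s$ by \ConThree. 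Where you genuinely depart from the paper is the final counting step. The paper fixes $T$, bounds $\Pr_{\rho\sim\pi_\sigma}(T\subseteq\failedclauses(\rho))\le s^{-\ell/6}$, and then takes a union bound over the family $\zzeta$ of candidate sets, paying $(18k^2\alpha)^{4\ell}$ via Lemma~\ref{lem:numtrees}; you instead note that \emph{every} reachable truncating node must have at least $\ell/6$ disagreement steps on its root path and apply an exponential-moment/Markov bound to the disagreement count of the sampling walk, paying $(1+(z-1)/s)^{\mathrm{depth}}$ via the depth bound of Lemma~\ref{lem:depthoftree}. Your route avoids Lemma~\ref{lem:numtrees} here (it is still needed elsewhere, e.g.\ in Lemma~\ref{lem:15}) and avoids the paper's slightly delicate ``independent events over a $\rho$-dependent set of variables'' step, at the price of a thinner numerical margin (with $z=s^{2/3}$ one needs roughly $2^{k/1350}\ge k^{8+o(1)}$, fine for $k\ge k_0$ sufficiently large, and the depth factor is $e^{o(\ell)}$ since $s^{-1/3}k^5\alpha\HighD\to 0$). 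Two small points to tidy: the potential should be tracked per level of the tree (so the exponent is the depth, as you state), not literally ``per expansion''; and one of the $\ell/6$ disagreement variables could be $v^*$ itself, which is set at the root rather than at a walk step, so the honest bound is $D(\rho)\ge\ell/6-1$ — a harmless loss of one factor of $z$ that the slack absorbs (the paper's own estimate has the same wrinkle).
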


\begin{proof} 
Let $\pi_\sigma$ be the distribution on nodes in $\calL^*$ from the sampling procedure in Definition~\ref{def:thecoupling}.
  Let $\Upsilon_\sigma = \{ \rho\in \calT \mid \sigma  \in \Omega^{\Aone(\rho)\cup\Lambda}\}$.
  The probability that the sampling procedure reaches any node $\rho\in \calL^*$ with $\sigma \in \Omega^{\Aone(\rho)\cup\Lambda}$ is $P_{1,\rho}$,
  so the probability that it reaches $\Upsilon_\sigma$ is $\sum_{\rho \in \Upsilon_\sigma} P_{1,\rho}$,
  which is what we want to bound.

From the definition of $L$,
note that $\ell = 
C_0 \lceil \log(n/\eps)\rceil$ for a sufficiently large constant~$C_0$.
Suppose that the formula $\Phi$ 
 satisfies the condition in Lemma~\ref{lem:large3tree} (which it does, w.h.p.).
 Any node $\rho \in \Upsilon_\sigma$ satisfies $|\VI(\rho)| > L$, so
  by Lemma~\ref{lem:large3tree}, 
   there  is a set $T \subseteq \failedclauses(\rho)$ containing $c^*$ such that  
    $T\in \calD(G_{\Phi})$,  $|T|  = \ell$ and $\abs{T\cap \Cl_\bad} \le  {\abs{T}}/{3}$.  
This implies that
  \begin{align}\label{eqn:TandC-good}
    \abs{T\cap C^\partialVal_\good} \ge \frac{2\abs{T}}{3}.
  \end{align}
  
  Let $\zzeta$ be the set of all size-$\ell$ sets $T\in \calD(G_\Phi)$ containing~$c^*$ such that \eqref{eqn:TandC-good} holds.
  Then the probability that the sampling procedure reaches $\Upsilon_\sigma$ is at most 
  \begin{align*}
    \Pr_{\rho \sim\pi_\sigma}(\exists T\in W \text{ such that } T\subseteq \failedclauses(\rho)).
  \end{align*}
  By a union bound we have
  \begin{align*}
    \sum_{\rho \in \Upsilon_\sigma} P_{1,\rho} \leq \sum_{T \in \zzeta} \Pr_{\rho \sim\pi_\sigma}(T\subseteq \failedclauses(\rho)),
  \end{align*}
  so to finish we will show
  \begin{align}\label{eq:typetwogoal}
    \sum_{T\in \zzeta} \Pr_{\rho \sim \pi_\sigma}(T\subseteq \failedclauses(\rho)) \leq \left( k\HighD \right)^{-8\ell}.
  \end{align}
  
  Consider a set $T\in \zzeta$ and any $\rhois$ such that $\rho$   is in the support of $\pi_\sigma$
    and $T\subseteq \failedclauses(\rho)$.     By the definition of $\pi_\sigma$, 
 $ \sigma  \in \Omega^{\Aone(\rho)\cup\Lambda}$.   
 By \ref{P:failempty},
every $c\in T$ has $\reason(c)$ non-empty.

  Let $T_\good = T\cap \Cl^\partialVal_\good$ and note that $\abs{T_\good}\ge 2\ell/3$.
    By \ref{P:typebad}, every $c \in T_\good$ has 
    $\reason(c)\subseteq\{\disagree,\mathsf{1},\mathsf{2}\}$.

  Let $S = \{ c \in T_\good \mid \mathsf{1} \in\reason(c)\}$.
  By \ref{P:typetwoclauses}, every $c \in S$ has $\marked(c) \subseteq \Vset$ and $\Aone$ does not  satisfy $c$. Since $\sigma$ is $\ell$-correct, $\abs{S}<
  \ceil{\ell/2}$ so (by integrality) $|S| < \ell/2$.
  Thus, $\abs{T_\good\setminus S}\ge (2/3-1/2)\ell = \ell/6$.

  We claim that for any $c\in T_\good\setminus S$, $\disagree\in\reason(c)$.
  We have already seen that $\reason(c)$ is non-empty and that it is contained in 
  $\{\disagree, \mathsf{2}\}$.     
  If $\mathsf{2}\in\reason(c)$, then by \ref{P:typetwoclauses},
  $c$ is satisfied by $\Lambda \wedge \Aone$ but not $\Lambda \wedge \Atwo$,
  which implies that $\disagree\in\reason(c)$ by \ref{P:typedisagree}.
  
  Since $(T_\good\setminus S)\subseteq T \in \calD(G_\Phi)$,
  Definition~\ref{def:disjoint} ensures that the clauses in $T_\good\setminus S$  do not share good variables.
  Thus by \ref{P:typedisagree}, there is a set $V \subseteq \Va^\partialVal$ with 
  $|V| \geq \ell/6$ such that for every $u\in V$, $\Aone(u)\neq \Atwo(u)$.
  
  By the definition of the sampling procedure (Definition~\ref{def:thecoupling}), 
  the probability that such a disagreement occurs when the first variable~$u$ of a node~$\rho'$ is
  set is at most ${P_{1,\rho'_{\sigX,\sigXbar}}}/{P_{1,\rho'}}$ 
  for some $X\in\{\true,\false\}$.
  By \ConThree,
  this ratio is at most $1/s$.
  Also, these events are independent for the variables $u\in V$.
  Thus, 
  \begin{align*}
    \Pr_{\rho \sim \pi_\sigma}(T\subseteq \failedclauses(\rho)) \le s^{-\ell/6}.
  \end{align*}
  
   By the choice of~$C_0$, $\ell$ is at least $\log n$.
  Lemma~\ref{lem:numtrees} implies that 
  w.h.p., over the choice of~$\Phi$, 
  $\abs{\zzeta}\le (18k^2\alpha)^{4\ell}$.
  By a union bound  (using the fact that $k$ is sufficiently large and $\alpha < 2^{k/300}$),
  \begin{align*}
    \sum_{T\in \zzeta } \Pr_{\rho \sim \pi_\sigma}(T\subseteq \failedclauses(\rho)) &\leq  (18k^2\alpha)^{4\ell}s^{-\ell/6} 
    = \left( 18^4 k^8 \alpha^4 s^{-1/6} \right)^{\ell} \leq\left( k\HighD \right)^{-8\ell}. \qedhere
  \end{align*}  
\end{proof}

The following lemma is the same as Lemma~\ref{lem:typetwo}, except that we take $i=2$ rather than $i=1$.
The proof is exactly the same as that of Lemma~\ref{lem:typetwo} except that the sampling procedure 
(analogous to the one from 
Definition~\ref{def:thecoupling}) is conditioned on $\sigma\in \Omega_2^\partialVal$
and the transition from~$\rho$ to 
$\rho_{u\rightarrow X,u\rightarrow\sigma(u)}$ is with probability $ {P_{2,\rho_{u\rightarrow X,u\rightarrow\sigma(u) }}}/{P_{2,\rho}}$.

\begin{lemma}\label{lem:typetwotwo} 
  Fix $\rlower \leq \rupper$.
  Let $\ell= L/(3k^2\HighD)  $.
  \Whp over the choice of $\Phi$, the following holds.
   Let $\sigma$
  be any $\ell$-correct assignment in $  \Omega^\partialVal_2$.
  If the LP has a solution using $\rlower$ and $\rupper$ then
  $ \sum_{\rho\in \calT:\sigma \in \Omega^{\partialVal\cup\Atwo(\rho)} } P_{2,\rho} \leq \left( k\HighD \right)^{-8\ell}$.
\end{lemma}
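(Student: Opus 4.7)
The plan is to mirror the proof of Lemma~\ref{lem:typetwo} with the roles of the two coupled distributions exchanged. First I would set up the analogous sampling procedure: starting at $\rho^*$ and, at each non-leaf/non-truncating node $\rho$ with first variable $u$, moving to child $\rho_{u\rightarrow X,u\rightarrow\sigma(u)}$ with probability $P_{2,\rho_{u\rightarrow X,u\rightarrow\sigma(u)}}/P_{2,\rho}$ for $X\in\{\true,\false\}$. \ConTwo\ (applied in its form giving $P_{2,\rho}=P_{2,\rho_{\sigtrue,\sigX}}+P_{2,\rho_{\sigfalse,\sigX}}$ for both $X$) guarantees that this defines a valid probability distribution $\pi_\sigma$ on $\calL^*$, and the probability of reaching any $\rho$ in the support with $\sigma \in \Omega^{\Atwo(\rho)\cup\Lambda}$ is exactly $P_{2,\rho}$.

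Next, let $\Upsilon_\sigma=\{\rho\in\calT \mid \sigma\in \Omega^{\Atwo(\rho)\cup\Lambda}\}$, so the quantity to bound is $\sum_{\rho\in\Upsilon_\sigma}P_{2,\rho}=\Pr_{\rho\sim\pi_\sigma}(\rho\in\Upsilon_\sigma)$. Exactly as before, conditional on the w.h.p.~event that $\Phi$ satisfies Lemma~\ref{lem:large3tree}, every $\rho\in\Upsilon_\sigma$ carries a set $T\subseteq \failedclauses(\rho)$ with $c^*\in T$, $T\in\calD(G_\Phi)$, $|T|=\ell$, and $|T\cap\Cl_\good^\partialVal|\geq 2\ell/3$. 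Letting $\zzeta$ denote the family of all size-$\ell$ sets $T\in\calD(G_\Phi)$ containing $c^*$ with this last inequality, a union bound reduces the task to showing
\begin{align*}
\sum_{T\in \zzeta}\Pr_{\rho\sim\pi_\sigma}(T\subseteq \failedclauses(\rho)) \leq (k\HighD)^{-8\ell}.
\end{align*}

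Now I would unpack $\reason(c)$ for each $c\in T_\good:=T\cap \Cl_\good^\partialVal$ using \ref{P:failempty}, \ref{P:typebad}, \ref{P:typedisagree}, and \ref{P:typetwoclauses}. The key symmetric swap is to set $S=\{c\in T_\good\mid \mathsf{2}\in\reason(c)\}$ rather than using $\mathsf{1}$: by \ref{P:typetwoclauses}, each $c\in S$ has $\marked(c)\subseteq \Vset$ and $\Atwo$ fails to satisfy $c$, so since $\sigma$ is $\ell$-correct and $\sigma\in\Omega_2^\partialVal$ agrees with $\Atwo(\rho)$, we get $|S|<\ell/2$. For every $c\in T_\good\setminus S$ (of which there are at least $\ell/6$), $\reason(c)\subseteq\{\disagree,\mathsf{1}\}$, and in either case \ref{P:typedisagree} together with \ref{P:typetwoclauses} forces $\disagree\in\reason(c)$. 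Because $T\in\calD(G_\Phi)$ means the clauses of $T_\good\setminus S$ share no good variable (Definition~\ref{def:disjoint}), we extract a set $V\subseteq \Va^\partialVal$ of at least $\ell/6$ variables on which $\Aone$ and $\Atwo$ disagree, with each disagreement being independent along the sampling path.

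Finally, each such disagreement at a first variable $u$ of some $\rho'$ corresponds to a step of $\pi_\sigma$ into a child $\rho'_{u\rightarrow X,u\rightarrow\neg X}$, which by \ConThree\ (the $i=2$ case, $P_{2,\rho'_{u\rightarrow X,u\rightarrow\neg X}}\leq P_{2,\rho'}/s$) occurs with conditional probability at most $1/s$. Multiplying the $\geq \ell/6$ independent contributions gives $\Pr_{\rho\sim\pi_\sigma}(T\subseteq\failedclauses(\rho))\leq s^{-\ell/6}$. Combining with the w.h.p.~bound $|\zzeta|\leq (18k^2\alpha)^{4\ell}$ from Lemma~\ref{lem:numtrees} (valid since $\ell\geq \log n$ by choice of $C_0$) and the same arithmetic as in Lemma~\ref{lem:typetwo} (using $\alpha<2^{k/300}$ and $s=2^{k/4}/(ek\HighD)$), we obtain $(18k^2\alpha)^{4\ell} s^{-\ell/6}\leq (k\HighD)^{-8\ell}$, completing the proof. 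I do not anticipate any obstacle beyond verifying that \ref{P:typetwoclauses} and \ConThree\ are fully symmetric in the index $i$, which they are by inspection; the only real care is in flipping $\mathsf{1}\leftrightarrow\mathsf{2}$ and in how $\sigma$ witnesses $\ell$-correctness via $\Atwo$ instead of $\Aone$.
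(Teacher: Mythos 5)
Your proposal is correct and is exactly the route the paper takes: the paper proves Lemma~\ref{lem:typetwotwo} by declaring it identical to the proof of Lemma~\ref{lem:typetwo} with the sampling procedure conditioned on $\sigma\in\Omega_2^\partialVal$ and transitions governed by $P_{2,\rho_{\sigX,u\rightarrow\sigma(u)}}/P_{2,\rho}$, which is precisely your mirrored argument (swapping $\mathsf{1}\leftrightarrow\mathsf{2}$ in the definition of $S$, using $\ell$-correctness via $\Atwo$, and invoking \ConThree\ with $i=2$).
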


\subsection{Soundness}

In this section we show the ``soundness'' of the LP, namely, that
whenever a solution to the LP exists, it yields a bound on   ${\abs{\Omega^\partialVal_1}}/{\abs{\Omega^\partialVal_2}}$.

\begin{lemma}\label{lem:main}
  Fix $\rlower \leq \rupper$.
  \Whp over the choice of $\Phi$, the following holds.
  If the LP has a solution $\*P$ using $\rlower$ and $\rupper$,
  then
  $  e^{-\eps/(3n)} \rlower \leq  { | \Omega^\partialVal_1|}/{|\Omega^\partialVal_2|} \leq  e^{\eps/(3n)} \rupper$.
\end{lemma}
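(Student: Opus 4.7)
The plan is to combine three ingredients: (i) the identity given by Lemma~\ref{lem:constraints2} which lets us express $|\Omega^\partialVal_i|$ as a weighted sum of LP values across $\calL^*$, (ii) the leaf constraints \ConOne\ together with Lemma~\ref{lem:rho} which make the leaf contributions obey the desired inequality directly, and (iii) Lemmas~\ref{lem:typetwo}, \ref{lem:typetwotwo} and \ref{lem:15}, which together show that the contribution of truncating nodes is negligible.

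First, by Lemma~\ref{lem:constraints2}, for each $i\in\{1,2\}$,
\begin{align*}
|\Omega^\partialVal_i|=\sum_{\sigma\in\Omega^\partialVal_i}\sum_{\rho\in\calL^*:\sigma\in\Omega^{\Ai(\rho)\cup\Lambda}}P_{i,\rho}=\sum_{\rho\in\calL^*}P_{i,\rho}\bigl|\Omega^{\Ai(\rho)\cup\Lambda}\bigr|.
\end{align*}
I will split the right-hand side into the leaves $\calL$ and truncating nodes $\calT$. For $\rho\in\calL$, Lemma~\ref{lem:rho} gives $r(\rho)=|\Omega^{\Aone(\rho)\cup\Lambda}|/|\Omega^{\Atwo(\rho)\cup\Lambda}|$, and \ConOne\ then yields the clause-wise inequalities
\begin{align*}
\rlower\,P_{2,\rho}\bigl|\Omega^{\Atwo(\rho)\cup\Lambda}\bigr|\leq P_{1,\rho}\bigl|\Omega^{\Aone(\rho)\cup\Lambda}\bigr|\leq \rupper\,P_{2,\rho}\bigl|\Omega^{\Atwo(\rho)\cup\Lambda}\bigr|.
\end{align*}
Summing over $\rho\in\calL$ gives two-sided control of the leaf contributions in terms of $\rlower,\rupper$.

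The main obstacle, and the heart of the proof, is bounding the truncating-node contribution $E_i:=\sum_{\rho\in\calT}P_{i,\rho}|\Omega^{\Ai(\rho)\cup\Lambda}|$ by $O(\eps/n)\cdot|\Omega^\partialVal_i|$. To this end, set $\ell:=L/(3k^2\HighD)=C_0\lceil\log(n/\eps)\rceil$, swap summation to rewrite
\begin{align*}
E_i=\sum_{\sigma\in\Omega^\partialVal_i}\;\sum_{\rho\in\calT:\sigma\in\Omega^{\Ai(\rho)\cup\Lambda}}P_{i,\rho},
\end{align*}
and partition $\Omega^\partialVal_i$ into $\ell$-correct and $\ell$-wrong assignments. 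For $\ell$-correct $\sigma$, Lemma~\ref{lem:typetwo} (for $i=1$) and Lemma~\ref{lem:typetwotwo} (for $i=2$) give the inner sum bound $(k\HighD)^{-8\ell}$. For $\ell$-wrong $\sigma$, I use the trivial bound $1$ from Lemma~\ref{lem:constraints2}, and then Lemma~\ref{lem:15} bounds the number of such $\sigma$ by $(k\HighD)^{-9\ell}|\Omega^\partialVal_i|$. Hence, w.h.p.~over $\Phi$,
\begin{align*}
E_i\leq |\Omega^\partialVal_i|\bigl((k\HighD)^{-8\ell}+(k\HighD)^{-9\ell}\bigr)\leq 2|\Omega^\partialVal_i|(k\HighD)^{-8\ell}.
\end{align*}
Choosing $C_0$ sufficiently large so that $2(k\HighD)^{-8\ell}\leq \eps/(6n)$ (which is possible since $\ell=C_0\lceil\log(n/\eps)\rceil$ and $k\HighD\geq 2$), one obtains $E_i\leq (\eps/(6n))|\Omega^\partialVal_i|$.

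Finally, combining the pieces: writing $|\Omega^\partialVal_i|=\sum_{\rho\in\calL}P_{i,\rho}|\Omega^{\Ai(\rho)\cup\Lambda}|+E_i$ and using the leaf inequalities,
\begin{align*}
\bigl(1-\tfrac{\eps}{6n}\bigr)|\Omega^\partialVal_1|\leq \sum_{\rho\in\calL}P_{1,\rho}|\Omega^{\Aone(\rho)\cup\Lambda}|\leq \rupper\sum_{\rho\in\calL}P_{2,\rho}|\Omega^{\Atwo(\rho)\cup\Lambda}|\leq \rupper|\Omega^\partialVal_2|,
\end{align*}
which gives $|\Omega^\partialVal_1|/|\Omega^\partialVal_2|\leq\rupper/(1-\eps/(6n))\leq e^{\eps/(3n)}\rupper$ using $1/(1-x)\leq e^{2x}$ for small $x$. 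The lower bound $e^{-\eps/(3n)}\rlower\leq|\Omega^\partialVal_1|/|\Omega^\partialVal_2|$ is obtained symmetrically, this time bounding $E_1$ above while bounding $\sum_{\rho\in\calL}P_{2,\rho}|\Omega^{\Atwo(\rho)\cup\Lambda}|$ from below using $E_2$. The most delicate point is that Lemmas~\ref{lem:typetwo} and \ref{lem:typetwotwo} are ``conditional'' on $\sigma$ being $\ell$-correct, so the use of Lemma~\ref{lem:15} to absorb the $\ell$-wrong assignments into a negligible error is essential.
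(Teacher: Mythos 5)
Your proposal is correct and follows essentially the same route as the paper: it uses Lemma~\ref{lem:constraints2} to express $|\Omega_i^\partialVal|$ via the LP values, handles the leaf contribution with Lemma~\ref{lem:rho} and \ConOne, and absorbs the truncating-node contribution by splitting into $\ell$-wrong assignments (Lemma~\ref{lem:15}) and $\ell$-correct ones (Lemmas~\ref{lem:typetwo} and~\ref{lem:typetwotwo}). The only difference is a slight repackaging of the error terms (your $E_i$ versus the paper's $Z_i'$, $Z_i''$), which does not change the argument.
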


\begin{proof}  
  By Lemma~\ref{lem:constraints2},
  the constraints in \ConTwo\ guarantee that, for any 
  $i\in \{1,2\}$ and
  $\sigma\in \Omega_i^\partialVal$,
  $$\sum_{\rho\in \calL^*:\sigma  \in \Omega^{\Ai(\rho)\cup \Lambda}} P_{i,\rho} = 1.$$
  Thus, 
  $$
  |\Omega_i^\partialVal| = \sum_{\sigma\in \Omega^\partialVal_i}1 = \sum_{\sigma\in \Omega^\partialVal_i} 
  \sum_{\rho\in \calL^*:\sigma  \in \Omega^{\Ai(\rho)\cup \Lambda}} P_{i,\rho}.
  $$
  Let $\ell = L / (3k^2\HighD)$. We start by  defining $Z_i$, $Z'_i$ and $Z''_i$ as follows
  for $i\in \{1,2\}$.
  \begin{align*}
    Z_i &= \sum_{\sigma\in \Omega_i^\partialVal}\quad 
    \sum_{\rho\in \calL: \sigma  \in \Omega^{\Ai(\rho)\cup \Lambda}} P_{i,\rho},\\
    Z'_i &= \sum_{\sigma\in \Omega_i^\partialVal,   \text{ $\sigma$    is $\ell$-wrong}}\quad 
    \sum_{\rho\in \cl^*: \sigma  \in \Omega^{\Ai(\rho)\cup \Lambda}} P_{i,\rho},\\
    Z''_i &= \sum_{\sigma\in \Omega_i^\partialVal, \text{ $\sigma$ is $\ell$-correct}}\quad 
    \sum_{\rho\in \calT: \sigma  \in\Omega^{\Ai(\rho)\cup \Lambda}} P_{i,\rho}.
  \end{align*}
  Thus $Z_i \leq \abs{\Omega_i^\partialVal}\leq Z_i + Z'_i+Z''_i$. 
  The proof consists of three parts --- as we will see soon, the statement of the lemma follows directly from Equations~\eqref{eq:goodZ},
  \eqref{eq:primesmall}, and \eqref{eq:dblprimesmall}.
  \begin{description}
    \item[Part 1] Showing 
    \begin{equation}\label{eq:goodZ}
      \rlower\leq \frac{Z_1}{Z_2} \leq \rupper. 
    \end{equation}
    \item[Part 2] Showing, for $i\in \{1,2\}$,
    \begin{equation} \label{eq:primesmall}
      \frac{Z'_i}{\abs{\Omega_i^\partialVal}} \leq \frac{1-e^{-\eps/(3n)}}{2}.
    \end{equation}
    \item[Part 3] Showing, for $i\in \{1,2\}$,
    \begin{equation} \label{eq:dblprimesmall}
      \frac{Z''_i}{\abs{\Omega_i^\partialVal}} \leq \frac{1-e^{-\eps/(3n)}}{2}.
    \end{equation}
  \end{description}  
  We now present the three parts of the proof.
  
  {\bf Part 1.\quad}  This part is straightforward. Exchanging the order of summation in the definition of~$Z_i$, we have
  \begin{equation}\label{eq:goodleaves}
    Z_i = \sum_{\rho\in \calL}\quad 
    \sum_{\sigma\in \Omega_i^\partialVal: \sigma  \in \Omega^{\Ai(\rho)\cup\Lambda}} P_{i,\rho}
    = \sum_{\rho\in \calL } P_{i,\rho} \cdot |\Omega^{\Ai(\rho)\cup \Lambda}|.
  \end{equation}
  Since $\rho\in \calL$, Lemma~\ref{lem:rho} guarantees that $r(\rho) = |\Omega^{\Aone(\rho)\cup \Lambda}|/|\Omega^{\Atwo(\rho)\cup \Lambda}|$ 
  and \ConOne\ guarantees that  
    \[\rlower\leq \frac{P_{1,\rho}\cdot \abs{\Omega^{\Aone(\rho)\cup \Lambda}}}{P_{2,\rho}\cdot \abs{\Omega^{\Atwo(\rho)\cup \Lambda}}}=\frac{P_{1,\rho}\cdot r(\rho)}{P_{2,\rho}\cdot  }\leq \rupper. \]
  Plugging in~\eqref{eq:goodleaves}, we get~\eqref{eq:goodZ}, as required.
  
  {\bf Part 2.\quad}
  Note that for any $\sigma$, $\sum_{\rho\in \cl^*: \sigma  \in \Omega^{\Ai(\rho)\cup\Lambda}} P_{i,\rho}\le 1$.
  By Lemma~\ref{lem:15}, we have that
  \[\frac{Z_i'}{\abs{\Omega_i^\partialVal}} \le 
  \frac{\abs{\{\sigma\in \Omega_i^\partialVal: \sigma \text{ is $\ell$-wrong}\}}}{\abs{\Omega_i^\partialVal}}\leq (k\HighD)^{-9\ell}\,. \]
  Since $\ell = L/(3k^2\HighD) = C_0\depthceiling$,
  we verify that
\[(k\HighD)^{-9\ell} \leq (n/\eps)^{-9C_0\log (k\HighD)}  \leq  (1-e^{-\eps/(3n)})/2\,,\]
  which implies \eqref{eq:primesmall}. This finishes Part 2.
  
  {\bf Part 3.\quad}
  By Lemmas~\ref{lem:typetwo} and~\ref{lem:typetwotwo}, 
  \Whp over the choice of~$\Phi$,
  for every $\ell$-correct $\sigma \in \Omega_i^\partialVal$, 
  $\sum_{\rho\in \calT: \sigma \in \Omega^{\Ai(\rho)\cup\Lambda}} P_{i,\rho} \leq (k\HighD)^{-8\ell}$. 
  Hence we have
  \[\frac{Z_i''}{\abs{\Omega_i^\partialVal}} \leq 
  \frac{Z_i''}{\abs{\{\sigma\in \Omega_i^\partialVal:\sigma \text{ is $\ell$-correct}\}}} \leq (k\HighD)^{-8\ell}\,. \]  
  Again, $\ell = L/(3k^2\HighD) = C_0\depthceiling$ implies~\eqref{eq:dblprimesmall}.
  This finishes Part 3.  
  
  Having finished the three parts, we now complete the proof.
  Combining~\eqref{eq:primesmall} and \eqref{eq:dblprimesmall} with the fact 
  that $Z_i\leq\abs{\Omega_i^\partialVal}\leq Z_i + Z'_i + Z''_i$, 
  we get
  \[e^{-\eps/(3n)}\leq\frac{Z_i}{\abs{\Omega^\partialVal_i}}\leq 1.\]
  Plugging in~\eqref{eq:goodZ}  we obtain
  \begin{align*}
    e^{-\eps/(3n)} \rlower &\leq \frac{ | \Omega^\partialVal_1|}{|\Omega^\partialVal_2|} \leq  e^{\eps/(3n)} \rupper. \qedhere
  \end{align*}
\end{proof}

\section{Properties of the random formula}

We still need to prove Lemmas~\ref{lem:depthoftree}, ~\ref{lem:numtrees}, and~\ref{lem:large3tree}.
All  of these lemmas depend on properties of the random formula. The main intuition behind the proofs is that individual variables within the formula might have unbounded degrees, but once we consider sets of logarithmically many vertices (that form connected sets) they behave analogously to bounded-degree sets. In Section~\ref{sec:high}, we prove some rather standard facts about the random formula $\Phi$, concerning the number of high-degree variables and its expansion properties. Then, in Section~\ref{sec:con}, we show how to bound the number of connected sets of variables in the formula $\Phi$, and formalise the intuition above by showing Lemma~\ref{lem:randomsubgraph}, from which Lemma~\ref{lem:numtrees} is derived in Section~\ref{sec:DG}. Then, in Section~\ref{sec:highcon}, we show Lemmas~\ref{lem:gvartree} and~\ref{lem:subgraphhd} which bound the number of high-degree variables in connected sets; in Section~\ref{sec:bad-vars}, we extend these  to bounds for the number of bad variables in connected sets using expansion properties of the formula. These are the key  ingredients for the proofs of Lemmas~\ref{lem:depthoftree} and~\ref{lem:large3tree}, which are given in Sections~\ref{sec:coupling-tree-property} and~\ref{sec:DG}, respectively.

Throughout this section, we use $\Pr_\Phi(\cdot)$ to denote the distribution for choosing~$\Phi$.

\subsection{Bounding the number of high-degree variables}\label{sec:high}

Recall from Section~\ref{sec:high-degree} that $\Va_0$ is the set of high-degree variables. Based on the average-degree assumption and the randomness of the formula, it is standard to show the following.
\begin{lemma}\label{lem:V013131}
  \Whp over the choice of $\Phi$, the size of $\Va_0$ is at most $n/2^{k^{10}}$.
\end{lemma}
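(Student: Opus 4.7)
The plan is to combine a per-variable Chernoff bound with a concentration-of-measure argument. First, I would fix a variable $v$ and let $X_v$ denote its total number of literal occurrences in $\Phi$. Since each of the $km$ literal slots is drawn independently and uniformly from the $2n$ literals, $X_v\sim\mathrm{Bin}(km,1/n)$ with mean $\mu:=k\alpha$, and the standing assumption $\alpha<2^{k/300}/k^3=\HighD/k^3$ gives $\mu\le\HighD/k^2$. Applying the standard upper-tail Chernoff bound $\Pr[X_v\ge t]\le(e\mu/t)^t$ at $t=\HighD$ yields
$$
\Pr\nolimits_\Phi[v\in\Va_0]\le \Bigl(\frac{e}{k^2}\Bigr)^\HighD,
$$
which for $k$ sufficiently large is at most $2^{-2k^{10}}$, since $\HighD=2^{k/300}$ grows faster than any polynomial in~$k$. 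By linearity of expectation, $\Ex_\Phi[|\Va_0|]\le n/2^{2k^{10}}$.

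Next, I would upgrade this expectation bound into a high-probability bound via McDiarmid's bounded-differences inequality, viewing $|\Va_0|$ as a function of the $km$ independent literal choices $\ell_{i,j}$. Resampling a single literal can push at most one variable into $\Va_0$ and pull at most one variable out, so the Lipschitz constant is at most $2$. Using $\Ex[|\Va_0|]\le n/2^{2k^{10}}\le n/2^{k^{10}+1}$ to lower-bound the deviation, McDiarmid gives
$$
\Pr\nolimits_\Phi\bigl[|\Va_0|\ge n/2^{k^{10}}\bigr]\le \exp\Bigl(-\frac{(n/2^{k^{10}+1})^2}{2km}\Bigr).
$$
Since $m\le\alpha n$ and $k\alpha\le\HighD/k^2$ is a constant in $n$, the exponent is $\Omega(n)$ for fixed $k$, so the above probability is $\exp(-\Omega(n))=o(1)$ as $n\to\infty$, establishing the w.h.p.\ claim.

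The main obstacle to watch for — modest, but worth flagging — is that a naive Markov bound on $\Ex_\Phi[|\Va_0|]$ would yield only a constant-in-$n$ upper bound on $\Pr[|\Va_0|\ge n/2^{k^{10}}]$, which is not $o(1)$ and therefore does not suffice for a w.h.p.\ statement. Genuine concentration of $|\Va_0|$ is required; McDiarmid is the cleanest route, but one could alternatively invoke the negative association of the multinomial balls-in-boxes distribution and apply a Chernoff bound directly to $\sum_v \mathbf{1}[X_v\ge\HighD]$.
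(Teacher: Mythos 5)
Your proof is correct, but it takes a different route from the paper in the concentration step. The paper Poissonizes: it notes that the variable degrees are distributed as a balls-and-bins experiment with $km$ balls and $n$ bins, replaces the degrees by independent $\Poi(k\alpha)$ variables $D_1,\dots,D_n$ conditioned on $\sum_i D_i=km$, bounds $\Ex[|U|]\le ne^{-\HighD}\le n/2^{k^{10}+1}$ via Poisson tails (the analogue of your binomial Chernoff step), applies a Chernoff bound to the now-independent indicators $\mathbf{1}[D_i\ge\HighD]$ to get $\exp{-\Omega(n)}$, and finally transfers back to the exact model by conditioning on $\mathcal{E}=\{\sum_i D_i=km\}$, paying a factor $1/\Pr(\mathcal{E})=O(\sqrt{n})$ which is absorbed by the exponential bound. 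You instead stay in the exact model throughout: the per-variable bound is a direct binomial Chernoff estimate, and the concentration comes from McDiarmid's bounded-differences inequality over the $km$ independent literal slots. Both yield $\Pr_\Phi(|\Va_0|\ge n/2^{k^{10}})=\exp{-\Omega(n)}$; your route avoids the Poissonization machinery and the conditional transfer, at the cost of invoking bounded differences (which is what handles the dependence among degrees that the paper removes by Poissonizing). Two small remarks: resampling one literal slot moves at most one variable into $\Va_0$ \emph{and} at most one out, so the change in $|\Va_0|$ is in fact at most $1$ in absolute value; your choice of Lipschitz constant $2$ is merely slightly wasteful and still valid. And your closing observation is right: Markov alone would give only the $n$-independent constant bound $2^{-k^{10}}$ on the failure probability, which does not meet the paper's definition of w.h.p., so genuine concentration (yours via McDiarmid, the paper's via Poissonized Chernoff) is indeed required.
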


\begin{proof}
  The degrees of the variables in $\Phi$  have the same distribution as a balls-and-bins experiment with $km$ balls and $n$ bins. Let $D_1,\hdots,D_n$ be a set of independent Poisson variables with parameter $k\alpha$, denoted $\Poi(k\alpha)$. 
  It follows by well-known facts (see, e.g., \cite[Chapter 5.4]{MU}) that the degrees of the variables in $\Phi$ have the same distribution as $\{D_1,\hdots,D_n\}$ conditioned on the event $\mathcal{E}$ that $D_1+\hdots+D_n=km$, and that $\Pr(\mathcal{E})=O(1/\sqrt{n})$. Let $U= \{i\in[n]: D_i\geq \HighD\}$, then 
  \[\Ex[\abs{U}] = n \Pr(\Poi(k\alpha) \geq \HighD)\leq ne^{-\HighD.}\leq n/2^{k^{10}+1},\]
	where the first inequality follows from $k\alpha \leq \Delta/k^2$ and using standard bounds for the tails of the Poisson distribution (see, e.g., \cite[Theorem 5.4]{MU}).   A Chernoff bound therefore yields that  $\Pr(\abs{U} \geq n/2^{k^{10}})=\exp{-\Omega(n)}$. 
  It follows that 
	\[\Pr_\Phi(\abs{\Va_0} \geq n/2^{k^{10}})= \Pr(\abs{U} \geq n/2^{k^{10}}\mid \mathcal{E}) \leq \exp{-\Omega(n)}.\qedhere\] 
\end{proof}

We will also need the following expansion properties of the random formula $\Phi$; these bound the number of clauses that can contain more than $b$ variables from a relatively small set of variables $Y$.
\begin{lemma}\label{lem:fv553dwr}
  Let $2\le b\le k$ be an integer and $t=\frac{2}{b-1}$.
  \Whp over the choice of $\Phi$, for every set of variables $Y$ such that  $2\le |Y|\leq n/2^k$, the number of clauses that contain at least $b$ variables from $Y$  is at most $t|Y|$. 
\end{lemma}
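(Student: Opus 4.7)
The approach is a first-moment union bound. For a fixed set $Y\subseteq\Va$ of size $y$ and a single clause $c$, each of $c$'s $k$ literal slots independently contains a variable from $Y$ with probability $y/n$; by a union bound over the $\binom{k}{b}$ choices of $b$ slots, the probability that $\var(c)$ contains at least $b$ variables of $Y$ is at most $p:=\binom{k}{b}(y/n)^b$. Since clauses are drawn independently, the probability that any specific set of $\lceil ty\rceil$ clauses all have this property is at most $p^{\lceil ty\rceil}$. Union-bounding over $Y$ and over the choice of clauses gives
\[
\Pr\bigl(\exists Y\text{ of size }y\text{ that is bad}\bigr)\;\le\;\binom{n}{y}\binom{m}{\lceil ty\rceil}\,p^{\lceil ty\rceil}.
\]

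Next I apply the standard estimates $\binom{n}{y}\le (en/y)^y$, $\binom{m}{s}\le (em/s)^s$, $\binom{k}{b}\le (ek/b)^b$, and $m\le \alpha n$. Collecting powers of $y/n$, the upper bound for a fixed $y$ rewrites as
\[
\bigl[\,e\cdot(e\alpha/t)^t\cdot(ek/b)^{bt}\cdot(y/n)^{(b-1)t-1}\,\bigr]^y.
\]
The defining identity $t=2/(b-1)$ is exactly what yields $(b-1)t-1=1$, so the summand simplifies to $(C_k\,y/n)^y$, where $C_k:=e(e\alpha/t)^t(ek/b)^{bt}$ depends only on $k$. Because $b\in[2,k]$, $t\in[2/(k-1),2]$, and $\alpha<2^{k/300}/k^3$, a direct estimate gives $C_k\le 2^{k/2}$ for $k$ sufficiently large, so in particular $C_k/2^k\le 2^{-k/2}<1/2$.

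The final step is to sum over $y\in\{2,\ldots,\lfloor n/2^k\rfloor\}$. Since $y\le n/2^k$ we have $yC_k/n\le C_k/2^k<1$, and the function $y\mapsto (yC_k/n)^y$ is unimodal with its minimum near $y=n/(eC_k)$; on the interval $[2,n/2^k]$ it is bounded by its values at the endpoints. At the lower endpoint it equals $(2C_k/n)^2=O(C_k^2/n^2)$, and at the upper endpoint it equals $(C_k/2^k)^{n/2^k}$, which for fixed $k$ is exponentially small in $n$. Splitting the sum at, say, $y=n^{1/2}$ and bounding each sub-range by the maximum times the number of terms yields $\sum_y (yC_k/n)^y=o(1)$ as $n\to\infty$, as required. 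Trivial cases such as $y<b$ (where no clause can contain $b$ variables from $Y$) or $\lceil ty\rceil=0$ cause no difficulty.

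The main technical subtlety is the algebraic cancellation in the second step: without the identity $(b-1)t=2$, the power of $y/n$ in the base would be non-positive, and the resulting upper bound would no longer be summable to $o(1)$ against the constraint $y\le n/2^k$. In this sense the value $t=2/(b-1)$ is tightly matched to the first-moment method, and verifying the bound $C_k\le 2^{k/2}$ uniformly in $b\in[2,k]$ under the density assumption $\alpha<2^{k/300}/k^3$ is the only place where the density bound is really used.
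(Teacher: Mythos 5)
Your proposal is correct and follows essentially the same route as the paper's proof: a union bound over the set $Y$ and over $\lceil t|Y|\rceil$-element sets of clauses, the per-clause estimate $\binom{k}{b}(y/n)^b$, standard binomial bounds, and the cancellation $(b-1)t=2$ that leaves a single factor of $y/n$ in the base before summing over $y$. The one step you gloss over is replacing the exponent $\lceil ty\rceil$ by $ty$ when ``collecting powers'': this is only an upper bound because the quantity raised to the $\lceil ty\rceil$ power is at most $1$ — the paper checks this explicitly, and in your notation it follows from $C_k\le 2^{k/2}$ and $y\le n/2^k$, since that quantity equals $\bigl((C_k/e)(y/n)^2\bigr)^{1/t}<1$.
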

\begin{proof}
Let $y$ be an integer between $b$ and $n/2^k$.  There are $\binom{n}{y}$ ways to choose a set $Y$ of $y$ variables $Y$ and $\binom{m}{\left\lceil ty\right\rceil}$ ways to choose a set $Z$ of $\left\lceil ty\right\rceil$ clauses. 
  The probability that a clause  contains at least $b$ variables from $Y$ is bounded by $\binom{k}{b}\left( \frac{y}{n} \right)^{b}\leq (ky/n)^b$. 
  By a union bound over the choices of $Y$ and $Z$, we therefore obtain that 
  $$
    \Pr_\Phi(\exists\, Y,Z) \leq \sum_{b\le y\leq n/2^k}
    \binom{n}{y} \binom{m}{\lceil ty\rceil} \biggl(   \frac{ky}{n}  \biggr)^{b\left\lceil ty\right\rceil}
    \le \sum_{b\le y\leq n/2^k}
    \biggl(\frac{en}{y}\biggr)^{y}
    {\left(
    \frac{e\alpha n}{ty} 
    \biggl(\frac{ky}{n}\biggr)^{b }
    \right)}^{\lceil t y \rceil}.$$
    Note (by taking $k$ to be its upper bound) that the quantity taken to the $\lceil t y \rceil$ power is 
    at most
$$\frac{(b-1)e\alpha 2^k}{2} 
\biggl(\frac{k}{ 2^k}\biggr)^{b }   $$
     This is maximised at $b=2$, so it is always at most~$1$ (given that $\alpha <2^{rk}$ for some $r<1$ and that $k$ is sufficiently large).
     Thus, the quantity is maximised by removing the ceiling, so
      $$
    \Pr_\Phi(\exists\, Y,Z) \leq  \sum_{b\le y\leq n/2^k}
    {\left(
    \frac{en}{y}    
    \biggl(\frac{e\alpha n}{ty} \biggr)^{t}
    \biggl(\frac{ky}{n}\biggr)^{b t}
    \right)}^{  y  } =\sum_{b\le y\leq n/2^k}\biggl(\frac{e^{1+t}k^{bt}\alpha^{t}y}{t^tn}\biggr)^{y}=o(1). $$

  The last estimate follows from observing the inequalities   
  $\left(\frac{e^{1+t}k^{bt}\alpha^{t}y}{t^tn}\right)^y\leq 1/n$ for $2\leq y\leq \log n$,
  and $\frac{e^{1+t}k^{bt}\alpha^{t}y}{t^tn}< 1/10$ for $\log n<y\leq n/2^{k}$, which hold for all sufficiently large $n$.
\end{proof}

Applying Lemma~\ref{lem:fv553dwr} with $b=t=2$ and with $b=\lceil k/10\rceil $, $t=2/(b-1)<30/k$ gives the following two corollaries, respectively.
\begin{corollary}\label{cor:fv553dwr1}
\Whp over the choice of $\Phi$, for every set of variables $Y$ such that  $2\le |Y|\leq n/2^k$, the number of clauses that contain at least $2$ variables from $Y$  is at most $2|Y|$.
\end{corollary}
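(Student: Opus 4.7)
The plan is to invoke Lemma~\ref{lem:fv553dwr} directly with the parameter choice $b=2$, which forces $t=\frac{2}{b-1}=2$. The hypothesis $2\le b\le k$ is satisfied since $k\ge 3$, and the size range $2\le \abs{Y}\le n/2^k$ stated in the corollary coincides exactly with that in the lemma. The conclusion of the lemma then reads: \whp over the choice of $\Phi$, for every admissible set $Y$ of variables, the number of clauses that contain at least $b=2$ variables from $Y$ is at most $t\abs{Y}=2\abs{Y}$. This is precisely the statement of the corollary.

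There is no substantive work to do beyond this substitution --- the nontrivial ingredients (the double union bound over choices of $Y$ and a candidate clause set $Z$, and the bound $(ky/n)^b$ for the probability that a single clause contains $b$ variables from $Y$, followed by the sum estimate) have already been carried out in the proof of Lemma~\ref{lem:fv553dwr}. The only point worth flagging conceptually is that $b=2$ is the smallest admissible value and yields the largest allowed $t$ (namely $t=2$), so this corollary covers the weakest and most broadly applicable case of the lemma. The companion bound obtained by taking $b=\lceil k/10\rceil$ and $t<30/k$ is derived by the same one-line specialisation; it is much sharper in $t$ but only controls clauses that contain many variables from $Y$. I therefore expect no obstacle here: the proof is a single sentence invoking Lemma~\ref{lem:fv553dwr} with $(b,t)=(2,2)$.
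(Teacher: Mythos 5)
Your proposal is correct and matches the paper exactly: the paper obtains this corollary (and its companion with $b=\lceil k/10\rceil$) by the same one-line specialisation of Lemma~\ref{lem:fv553dwr} with $b=t=2$.
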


\begin{corollary}\label{cor:fv553dwr2}
\Whp over the choice of $\Phi$, for every set of variables $Y$ such that  $2\le |Y|\leq n/2^k$, the number of clauses that contain at least $k/10$ variables from $Y$  is at most $\frac{30}{k}|Y|$.
\end{corollary}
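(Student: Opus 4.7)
The plan is to apply Lemma~\ref{lem:fv553dwr} directly with the parameter choice $b=\lceil k/10\rceil$. Since we are assuming $k\ge k_0$ for sufficiently large $k_0$, the hypothesis $2\le b\le k$ of that lemma is satisfied, so no separate argument is needed to handle small $k$.

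Next I would verify that $t:=2/(b-1)$ satisfies $t\le 30/k$, so that the bound $t|Y|$ furnished by the lemma is at most $(30/k)|Y|$, matching the statement of the corollary. For $k\ge 30$ we have $b-1\ge k/10-1\ge k/15$, hence $t=2/(b-1)\le 30/k$. Since $\lceil k/10\rceil\ge k/10$, the event ``at least $b$ variables of $c$ lie in $Y$'' implies ``at least $k/10$ variables of $c$ lie in $Y$'' (in fact the two events coincide when one treats $k/10$ as meaning $\lceil k/10\rceil$, which is the natural integer interpretation here), so the upper bound transfers without loss.

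Thus the conclusion of Lemma~\ref{lem:fv553dwr} translates verbatim into the statement of Corollary~\ref{cor:fv553dwr2}: with high probability over the choice of $\Phi$, every set $Y$ with $2\le|Y|\le n/2^k$ has at most $t|Y|\le (30/k)|Y|$ clauses containing at least $k/10$ of its variables. There is really no obstacle here beyond plugging in the correct parameters; all of the genuine probabilistic work has already been carried out in the union bound computation inside the proof of Lemma~\ref{lem:fv553dwr}.
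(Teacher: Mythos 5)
Your proposal is correct and matches the paper's proof, which likewise obtains the corollary by a direct application of Lemma~\ref{lem:fv553dwr} with $b=\lceil k/10\rceil$ and $t=2/(b-1)<30/k$. The parameter check ($t\le 30/k$ for large $k$, and the integrality point identifying ``at least $k/10$'' with ``at least $\lceil k/10\rceil$'') is exactly what is needed, so there is nothing to add.
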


\subsection{Bounding the number of connected sets of clauses}\label{sec:con}

In this section, we bound the number of connected sets of clauses which we will use in the upcoming sections. Recall that a set $Y$ of clauses is connected if $G_\Phi[Y]$ is connected.

\begin{lemma}\label{lem:gphitree}
  For any labelled tree $T$ on a subset of clauses in $\Phi$, the probability that $T$ is a subgraph of $G_\Phi$ is at most $(k^2/n)^{\abs{V(T)}-1}$.
\end{lemma}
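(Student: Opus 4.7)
\medskip

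\noindent\textbf{Proof proposal for Lemma~\ref{lem:gphitree}.} The plan is to exploit the independence between different clauses in the random-formula model, combined with a standard ``reveal the clauses in BFS order along $T$'' argument. Recall that each of the $km$ literal slots of $\Phi$ is filled independently and uniformly from the $2n$ literals, so the literal tuples of distinct clauses are mutually independent random variables.

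Let $t=\abs{V(T)}$ and root $T$ at an arbitrary vertex $c_0$. Enumerate the vertices $c_0, c_1, \ldots, c_{t-1}$ in a BFS order, so that for each $i\ge 1$ the parent $p(c_i)$ is some $c_j$ with $j<i$. Let $L_j$ denote the (random) $k$-tuple of literals of clause $c_j$, and for $i\ge 1$ let $E_i$ be the event that $\var(c_i)\cap \var(p(c_i))\neq \emptyset$. The event ``$T$ is a subgraph of $G_\Phi$'' is exactly $\bigcap_{i=1}^{t-1} E_i$.

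The key observation is that $E_i$ depends only on $L_i$ and $L_{p(c_i)}$, and $L_i$ is independent of $L_0,\ldots,L_{i-1}$. Conditioning on $L_0,\ldots, L_{i-1}$ fixes $\var(p(c_i))$, and then a union bound over the $k\cdot k$ pairs of positions (one from $c_i$, one from $p(c_i)$), each matching with probability $1/n$, gives
\[
\Pr_\Phi(E_i \mid L_0,\ldots,L_{i-1}) \le \frac{k^2}{n}.
\]
Applying the tower property iteratively (from $i=t-1$ down to $i=1$) yields
\[
\Pr_\Phi\Bigl(\bigcap_{i=1}^{t-1} E_i\Bigr) \le \Bigl(\frac{k^2}{n}\Bigr)^{t-1},
\]
as required.

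There is no real obstacle here; the only thing to be careful about is choosing a vertex ordering such that each new edge introduces exactly one fresh clause, which the BFS/tree rooting guarantees. If one instead tried to bound the events $E_i$ marginally and multiply, the correlations between edges sharing a common endpoint would cause problems, so conditioning on the already-revealed literals is essential.
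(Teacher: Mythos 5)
Your proposal is correct and is essentially the paper's argument: the paper peels off a leaf of $T$ and uses the independence of the leaf clause's literal tuple from the rest to bound the conditional probability of sharing a variable with its neighbour by $k^2/n$, then inducts, which is the same sequential-revealing/tower-property computation you perform in BFS order. The $k^2/n$ bound is obtained identically (union bound over the $k\times k$ slot pairs, each matching with probability $1/n$), so there is no substantive difference.
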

\begin{proof}
 For a tree $T$ on a subset of clauses in $\Phi$, we use $T\subseteq G_\Phi$ to denote that $T$ is a subgraph of $G_\Phi$.
  We prove our claim by induction on the size of $T$. 
  If $\abs{V(T)}=1$, then $T$ contains only an isolated clause, so $\Pr(T\subseteq G_\Phi) = 1$. 
  Now suppose that the claim holds for all trees with size $\abs{V(T)} - 1$. 
  Let $c$ be a leaf in $T$ and $c'$ be its neighbour in $T$. 
  Then we have
  \[
    \Pr_\Phi(T\subseteq G_\Phi) =\Pr_\Phi\big((T\setminus c)\subseteq G_\Phi\big)\, 
    \Pr_\Phi\big(\var(c)\cap \var(c') \neq \emptyset\mid (T\setminus c)\subseteq G_\Phi\big).
  \]
  For any fixed $c'$, we have that
  \begin{align*}
    \Pr(\var(c) \cap \var(c') \neq \emptyset) \leq \sum_{v\in \var(c')} \Pr(v\in \var(c)) \leq \frac{k^2}{n}.
  \end{align*}
  Note that the events $(T\setminus c)\subseteq G_\Phi$ and $\var(c)\cap \var(c') \neq \emptyset$ are independent, so
  \[
    \Pr\big(\var(c)\cap \var(c') \neq \emptyset\mid (T\setminus c)\subseteq G_\Phi\big) \leq \frac{k^2}{n}\,.
  \]
  Since $T\setminus c$ is a tree of size $\abs{V(T)} -1$, by the induction hypothesis we have that $\Pr((T\setminus c)\subseteq G_\Phi) \leq (k^2/n)^{\abs{V(T)}-2}$.  We conclude that $\Pr(T\subseteq G_\Phi) \leq (k^2/n)^{\abs{V(T)}-1}$. 
\end{proof}

\begin{lemma}\label{lem:randomsubgraph}
  \Whp over the choice of $\Phi$, for any clause $c$,  the number of connected sets of clauses in $G_\Phi$  with size $\ell \geq \log n$ containing  $c$ is at most $(9k^2\alpha)^{\ell}$.
\end{lemma}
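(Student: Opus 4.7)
The plan is a standard first-moment argument based on counting spanning trees of the would-be connected set. Fix a clause $c$ and an integer $\ell\geq\log n$, and let $N_{c,\ell}$ denote the number of connected vertex sets of $G_\Phi$ of size $\ell$ containing $c$. Every such connected set $C$ has at least one spanning tree on its vertex set, so
\[
 N_{c,\ell} \;\leq\; \sum_{\substack{C\subseteq\Cl,\, c\in C \\ |C|=\ell}}\; \sum_{T\text{ a tree on }C}\mathbf{1}\{T\subseteq G_\Phi\}.
\]
By Cayley's formula, the inner sum ranges over $\ell^{\ell-2}$ labelled trees, and by Lemma~\ref{lem:gphitree} each one is a subgraph of $G_\Phi$ with probability at most $(k^2/n)^{\ell-1}$. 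The number of vertex sets $C$ of size $\ell$ containing $c$ is $\binom{m-1}{\ell-1}\leq (em/\ell)^{\ell-1}$. Combining these and recalling $m\leq \alpha n$ gives
\[
 \Ex_\Phi[N_{c,\ell}]\;\leq\;\Bigl(\tfrac{em}{\ell}\Bigr)^{\ell-1}\ell^{\ell-2}\Bigl(\tfrac{k^2}{n}\Bigr)^{\ell-1}
 \;=\;\frac{(ek^2\alpha)^{\ell-1}}{\ell}.
\]

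Next I would apply Markov's inequality with threshold $(9k^2\alpha)^\ell$. Since $e<9$,
\[
 \Pr_\Phi\Bigl(N_{c,\ell}\geq (9k^2\alpha)^\ell\Bigr)\;\leq\;\frac{(ek^2\alpha)^{\ell-1}}{\ell\,(9k^2\alpha)^\ell}
 \;\leq\;\frac{1}{9k^2\alpha\,\ell}\Bigl(\tfrac{e}{9}\Bigr)^{\ell-1}.
\]
For $\ell\geq\log n$ this is at most $n^{\log(e/9)}=n^{-\Omega(1)}$, in fact $O(n^{-1.19})$.

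Finally, I would take a union bound over the $m\leq\alpha n$ choices of $c$ and the at most $m$ relevant values of $\ell$ (and use that $\sum_{\ell\geq\log n}(e/9)^{\ell-1}$ is a convergent geometric series dominated by its first term). Since $\alpha<2^{k/300}/k^3$ is bounded independently of $n$, the resulting failure probability is $O(\alpha^2 n^2 \cdot n^{\log(e/9)}) = o(1)$, as required.

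There is no real obstacle; the only point to be slightly careful about is the comparison of the base of the exponential ($e^2 < 9$), which gives just enough room in Markov's inequality to absorb both the union bound over $c$ and the sum over $\ell$ using only the assumption $\ell\geq\log n$. The constant $9$ in the statement is chosen precisely so that $9/e>1$ by a definite amount.
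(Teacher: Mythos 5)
Your proposal follows essentially the same route as the paper: a first-moment bound on the number of connected $\ell$-sets through $c$ obtained by summing over labelled spanning trees (Cayley's formula) and applying Lemma~\ref{lem:gphitree}, giving $\Ex_\Phi[N_{c,\ell}]\le (ek^2\alpha)^{\ell-1}$ up to a harmless factor, then Markov's inequality exploiting $e<9$ and $\ell\ge\log n$, then a union bound over clauses; the paper gets a per-clause failure probability of $o(1/n)$ and unions over the $O(n)$ clauses. The only blemish is in your final accounting: the displayed failure probability $O(\alpha^2 n^2\cdot n^{\log(e/9)})$ is \emph{not} $o(1)$, since $\log(e/9)\approx-1.2$, so that expression is of order $n^{0.8}$ — multiplying by $m$ for the choices of $\ell$ is too crude. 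Your own parenthetical remark is the correct fix: the per-$(c,\ell)$ probability decays geometrically in $\ell$, so summing over $\ell\ge\log n$ costs only an $O(1)$ factor (the first term dominates), and the total failure probability is $O(\alpha n\cdot n^{\log(e/9)})=o(1)$, matching the paper's argument. With that expression corrected, the proof is complete and coincides with the paper's.
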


\begin{proof}
  Let $c$ be an arbitrary clause. Let $U$ be a size-$\ell$ set of clauses   containing $c$ and let $\mathsf{T}_U$ be the set of all labelled trees on the set $U$; note that $|\mathsf{T}_U|=\ell^{\ell - 2}$. 
  For any tree $T\in \mathsf{T}_U$, by Lemma~\ref{lem:gphitree}, the probability that $T \subseteq G_\Phi$ is at most $(k^2/n)^{\ell-1}$. 
  Thus
  \[
    \Pr_\Phi(\text{$G_\Phi[U]$ is connected}) \leq \sum_{T\in \mathsf{T}_U}\Pr_\Phi(T\subseteq G_\Phi) = \ell^{\ell - 2}(k^2/n)^{\ell-1}\,.
  \]
  Let $Z_c$ be the number of connected sets of clauses with size $\ell$ containing $c$. Then,
  \begin{align*}
    \Ex_\Phi[Z_c] &= \sum_{U\subseteq \Cl; c\in U, \abs{U}=\ell}\Pr_\Phi(\text{$G_\Phi[U]$ is connected})\\
    &\leq \binom{m-1}{\ell-1}\ell^{\ell - 2}\Bigl(\frac{k^2}{n}\Bigr)^{\ell-1}
    \le \left(\frac{e(m-1)}{\ell-1}\right)^{\ell-1}\ell^{\ell - 2}\Bigl(\frac{k^2}{n}\Bigr)^{\ell-1}\\
    &\le \left( \frac{emk^2}{n}\cdot \frac{\ell^{\frac{\ell-2}{\ell-1}}}{\ell-1} \right)^{\ell-1} 
    \leq (ek^2\alpha)^{\ell-1}.
  \end{align*}
  As $\ell\ge\log n$, by Markov's inequality, 
  we obtain that $\Pr_\Phi\big(Z_c\geq (9k^2\alpha)^{\ell-1}\big)\le \left( \frac{e}{9} \right)^{\ell-1} = o(1/n)$. By a union bound over all clauses $c$ (note that there are $O(n)$ for them), we obtain  the conclusion of the lemma.
\end{proof}
  
\subsection{Bounding the number of high-degree variables in connected sets}\label{sec:highcon}
In this section, we bound the number of high-degree variables in connected sets of variables.

Recall that $\Va_0$ is the set of high-degree variables. 
For every set $S$ of variables, let $\HD(S) = \Va_0 \cap S$ be the set of high-degree variables in $S$. 
With this notation, $\Va_0=\HD(\Va)$.
For a set $Y\subseteq \Cl$ of clauses, let $\var(Y):=\cup_{c\in Y}\var(c)$.

\begin{lemma}\label{lem:gvartree}
  Let $\delta_0>0$ and $\theta_0\geq \min\{k^2 \alpha,2\}$ be constants such that $ \delta_0\theta_0\log (\theta_0/k^2\alpha) > \log \alpha+3\log k$.
  Then, \whp over the choice of $\Phi$, there do not exist sets $Y, Z$ of clauses and a set $U$  of variables such that:
  \begin{enumerate}
    \item $\abs{Y} \geq \log n$, $\abs{U}\geq \delta_0\abs{Y}$, $\abs{Z}\geq \theta_0\abs{U}$, and $Y \cap Z = \emptyset$;
    \item $G_\Phi[Y]$ is connected, $U\subseteq \var(Y)$, and every clause in $Z$ contains at least one variable from $U$.
  \end{enumerate}
\end{lemma}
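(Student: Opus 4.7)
The plan is to apply a first-moment argument over triples $(Y,U,Z)$ of the prescribed sizes. Fix integers $y,u,z$ with $y\geq \log n$, $u\geq \delta_0 y$, and $z\geq \theta_0 u$, and let $N_{y,u,z}$ count such triples satisfying all conditions of the lemma. I would bound $\Ex_\Phi[N_{y,u,z}]$ as a product of three factors. First, the expected number of connected $Y\subseteq \Cl$ of size $y$ is at most
\[
\binom{m}{y}\,y^{y-2}\,(k^2/n)^{y-1} \;\leq\; \frac{n}{k^2y^2}(e\alpha k^2)^y,
\]
by summing Lemma~\ref{lem:gphitree} over the $y^{y-2}$ labelled spanning trees of $Y$. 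Second, given $Y$, since $|\var(Y)|\leq ky$ there are at most $\binom{ky}{u}\leq (eky/u)^u\leq(ek/\delta_0)^u$ choices for $U$ (using $u\geq \delta_0 y$). Third, since the clauses of $Z$ are disjoint from those of $Y$ their literal choices are sampled independently of those in $Y$, and each clause independently contains some variable of $U$ with probability at most $ku/n$, so the expected number of valid $Z$ is at most $\binom{m}{z}(ku/n)^z\leq (e\alpha ku/z)^z$.

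Next, the hypothesis $\theta_0\geq \min\{k^2\alpha,2\}$ easily gives $e\alpha k/\theta_0\leq e/k$ by treating the cases $k^2\alpha\geq 2$ and $k^2\alpha<2$ separately; in particular $\theta_0\geq \alpha k$, so $z\mapsto(e\alpha ku/z)^z$ is strictly decreasing for $z\geq\theta_0 u$. Hence $\sum_{z\geq\theta_0 u}(e\alpha ku/z)^z=O((e\alpha k/\theta_0)^{\theta_0 u})$. Combining this with the $u$-geometric, whose per-step ratio equals $(ek/\delta_0)(e\alpha k/\theta_0)^{\theta_0}$, the full expectation becomes
\[
\sum_{y,u,z}\Ex_\Phi[N_{y,u,z}]\;\leq\; O\!\left(\frac{n}{k^2}\right)\sum_{y\geq \log n}\frac{1}{y^2}\bigl[(e\alpha k^2)(ek/\delta_0)^{\delta_0}(e\alpha k/\theta_0)^{\delta_0\theta_0}\bigr]^y,
\]
where the bracketed base is the quantity I must show is at most $1/k$ (say) for all sufficiently large $k$.

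Taking logarithms and using $\log(\theta_0/(e\alpha k))\geq \log(\theta_0/(k^2\alpha))$ for $k\geq e$, the log of the base is bounded above by $\log\alpha+2\log k+O(\delta_0\log k)-\delta_0\theta_0\log(\theta_0/(k^2\alpha))$, up to additive constants. The lemma's hypothesis $\delta_0\theta_0\log(\theta_0/(k^2\alpha))>\log\alpha+3\log k$ is precisely what is needed: the extra $\log k$ of slack (beyond $\log\alpha+2\log k$) absorbs the $O(\delta_0\log k)$ and $O(1)$ error terms for large enough $k$, making the base at most $k^{-1}$. Since $y\geq \log n$, the resulting sum is $o(1/n)$, and Markov's inequality together with a union bound over the $O(n^3)$ choices of $(y,u,z)$ finishes the proof. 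The main obstacle will be this last arithmetic step: carefully verifying that the hypothesis has sufficient slack to absorb the additive $\log$ noise coming from the binomial coefficients, the constant $e$ inside each logarithm, and the factor $(ek/\delta_0)^{\delta_0}$, while simultaneously confirming that both the $u$- and $z$-geometric sums are indeed dominated by their boundary terms $u=\delta_0 y$ and $z=\theta_0 u$.
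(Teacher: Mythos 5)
Your proposal is essentially the paper's own argument: a first-moment/union bound over triples $(Y,U,Z)$, with the connectivity of $Y$ handled by summing Lemma~\ref{lem:gphitree} over the $y^{y-2}$ labelled spanning trees, the choice of $U$ bounded by $\binom{ky}{u}$ (the paper chooses literal positions $I_U\in\binom{I_Y\times[k]}{\delta y}$, which is the same count), and the clauses of $Z$ treated as independent of $Y$ with per-clause probability at most $ku/n$ of meeting $U$. The only organisational difference is that you sum geometric series in $z$ and $u$, whereas the paper parametrises by $(y,\delta,\theta)$, bounds each feasible tuple's contribution by $\bigl(\frac{(ek)^{2+\delta+\theta\delta}\alpha^{\theta\delta+1}}{\delta^{\delta}\theta^{\theta\delta}}\bigr)^{y}$ and counts $O(n^3)$ tuples; these are equivalent bookkeeping choices, and your final arithmetic (base at most roughly $1/k$, then $y\geq\log n$ kills the polynomial prefactor) matches the paper's.

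One step is wrong as justified, though it is repairable: the claim that $\theta_0\geq\min\{k^2\alpha,2\}$ alone yields $e\alpha k/\theta_0\leq e/k$ (and $\theta_0\geq\alpha k$) by a two-case analysis fails in the case $k^2\alpha>2$, where the min-condition only gives $\theta_0\geq 2$ and $e\alpha k/2$ can be much larger than $e/k$. You need this bound both for the monotonic domination of the $z$-sum by its boundary term $z=\theta_0 u$ and for the per-step ratio of the $u$-geometric (and for the base estimate itself). The fix is to observe that under the paper's standing assumption $k^2\alpha\geq 1$ the right-hand side $\log\alpha+3\log k\geq\log k>0$, so the displayed hypothesis $\delta_0\theta_0\log(\theta_0/k^2\alpha)>\log\alpha+3\log k$ forces $\theta_0>k^2\alpha$, which gives $e\alpha k/\theta_0<e/k$ and $\theta_0>\alpha k$ in all cases; the paper's own proof implicitly relies on the same fact when it asserts the monotonicity $\delta\theta\log(\theta/k^2\alpha)\geq\delta_0\theta_0\log(\theta_0/k^2\alpha)$ for $\delta\geq\delta_0$, $\theta\geq\theta_0$. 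Also, your closing sentence about an additional union bound over the $O(n^3)$ choices of $(y,u,z)$ is redundant once you have summed the expectation over all $(y,u,z)$; a single application of Markov's inequality suffices.
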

\begin{proof}
  Let $\+E$ the event that there exist sets $Y,Z,U$ satisfying conditions $(1)$ and $(2)$. 
	
Call a tuple $(y,\delta,\theta)$ \emph{feasible} if $y,\delta y,\theta\delta y$ are all integers, where $y\geq \log n$,   $\delta\geq \delta_0$ and $\theta\geq \theta_0$. Fix a feasible tuple $(y, \delta, \theta)$ and three sets of indices $I_Y \in \binom{[m]}{y}, I_U \in \binom{I_Y\times [k]}{\delta y}, I_Z \in \binom{[m]\setminus I_Y}{\theta\delta y}$. Define
  \begin{align*}
  S_Y &= \{c_i \mid i \in I_Y\}\,\cr
  S_U &= \{\var(\ell_{i, j}) \mid (i, j) \in I_U\} \ \text{where $\var(\ell_{i,j})$ is the variable corresponding to the literal $\ell_{i,j}$},\cr
  S_Z &= \{c_i \mid i \in I_Z\}.
  \end{align*}  
Denote by $\+E_U$ the event that $\abs{S_U} = \delta y$, by $\+E_{Y}$ the event that $G_\Phi[S_Y]$ is connected and by $\+E_{Z}$ the event that every clause in $S_Z$ contains at least one variable from $S_U$. For any labelled tree $T$ on vertex set $S_Y$, by Lemma~\ref{lem:gphitree}, the probability that $T$ is a subgraph of $G_\Phi$ is at most $(k^2/n)^{y-1}$.  We have $y^{y-2}$ such trees, so by a union bound
  \begin{align*}
    \Pr_\Phi(\+E_Y) & \le y^{y-2} \Bigl(\frac{k^2}{n}\Bigr)^{y-1}.
  \end{align*}
  Moreover, by the independence of clauses, we have that
  \[
    \Pr_\Phi(\+E_{Z} \mid \+E_{U} \land \+E_{Y}) = \Pr_\Phi(\+E_{Z} \mid \+E_{U})  \le \Bigl(\frac{k\delta y}{n}\Bigr)^{\theta \delta y}.
  \]
Note that \[\Pr_\Phi(\+E_{U}\wedge \+E_{Y}\wedge \+E_{Z})\leq \Pr_\Phi(\+E_{Y}) \Pr_\Phi(\+E_{Z}\mid \+E_{Y}\land \+E_{U}),\] so by a union bound over the choice of the tuple $(y, \delta, \theta)$ and the sets $I_Y,I_U,I_Z$ we  have that
  \begin{align*}
    \Pr_\Phi(\+E) &\le \sum_{\text{ feasible }(y,\delta,\theta)} \sum_{I_Y \in \binom{[m]}{y}, I_U \in \binom{I_Y\times [k]}{\delta y}, I_Z \in \binom{[m]\setminus I_Y}{\theta\delta y}} \Pr_\Phi(\+E_{Y}) \Pr_\Phi(\+E_{Z}\mid \+E_{Y}\land \+E_{U}).
  \end{align*}
It follows that
  \begin{align}
    \Pr_\Phi(\+E)
    &\leq  \sum_{\text{ feasible }(y,\delta,\theta)} \binom{m}{y}\binom{ky}{\delta y}\binom{m}{\theta \delta y} y^{y-2} \Bigl(\frac{k^2}{n}\Bigr)^{y-1}  \Bigl(\frac{k\delta y}{n}\Bigr)^{\theta \delta y}\notag\\
    &\leq n \sum_{\text{ feasible }(y,\delta,\theta)}  \Big(\frac{(ek)^{2+\delta+\theta \delta}\alpha^{\theta \delta+1}}{\delta^{\delta} \theta ^{\theta \delta}}\Big)^{y}.
    \label{eq:5t4g5tg4}
  \end{align}
  Note that $\delta\theta\log (\theta/k^2\alpha)\geq \delta_0 \theta_0 \log (\theta_0/k^2\alpha)\geq \log \alpha+3\log k$ 
  and hence $\theta^{\delta\theta}\geq (k^2\alpha)^{\delta \theta}k^3\alpha$. 
  It follows that
  \[
    \frac{(ek)^{2+\delta+\theta \delta}\alpha^{\theta \delta+1}}{\delta^{\delta} \theta ^{\theta \delta}}
    \leq \frac{e^{2+\delta+\delta\theta}k^\delta}{\delta^{\delta} k^{\theta \delta+1}}=\frac{e^{2+\delta}k^\delta}{\delta^{\delta} (k/e)^{\theta \delta}k}
    \leq \frac{2e^{2+\delta}k^\delta}{k(k/e)^{2 \delta}}\leq \frac{2e^2}{k}<\frac{1}{e^6},
  \]
  where the last  few inequalities hold for sufficiently large $k$ combined with the fact that $\delta^\delta\geq 1/2$ for all $\delta>0$ and our assumption $\theta \geq \theta_0\geq 2$. 
  Plugging this estimate into \eqref{eq:5t4g5tg4} and noting that there are $O(n^3)$ feasible tuples $(y,\delta,\theta)$ and $y\geq \log n$, 
  we obtain that $\Pr_\Phi(\+E)=o(1)$, as needed.
\end{proof}

\begin{lemma}\label{lem:subgraphhd}
	\Whp over the choice of $\Phi$, every connected set $U$ of variables with size 
  at least $21600k\log n$ satisfies that $\abs{\HD(U)} \leq \frac{\abs{U}}{21600}$.
\end{lemma}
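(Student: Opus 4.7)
The plan is to derive a contradiction from Lemma~\ref{lem:gvartree}. Suppose there is a connected set $U$ of variables with $\abs{U} \geq 21600k\log n$ but $\abs{\HD(U)} > \abs{U}/21600$. I will construct clause sets $Y, Z$ and a variable set $V_H$ that satisfy the hypotheses of Lemma~\ref{lem:gvartree} for a convenient choice of $\delta_0$ and $\theta_0$, contradicting its whp conclusion.

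To build $Y$, I take a spanning tree of $H_\Phi[U]$ and, for each tree edge, pick a single clause witnessing the adjacency. The resulting $Y$ satisfies $\abs{Y} \leq \abs{U}-1$, $U \subseteq \var(Y)$, and $G_\Phi[Y]$ is connected (any two chosen clauses are linked by the chain of clauses picked along the corresponding tree path, consecutive chain-members sharing a tree vertex as a common variable). Since every clause covers at most $k$ variables, $\abs{Y} \geq \abs{U}/k \geq 21600\log n \geq \log n$. Next, set $V_H := \HD(U) \subseteq \var(Y)$; the hypothesis gives $\abs{V_H} \geq \abs{U}/21600 \geq \abs{Y}/21600$, so $\delta_0 := 1/21600$ is adequate. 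Finally, let $Z$ be the clauses of $\Phi$ outside $Y$ that contain at least one variable of $V_H$. Because each $v\in V_H$ appears in at least $\HighD$ clauses and each clause contains at most $k$ variables, $\abs{Z} \geq \HighD\abs{V_H}/k - \abs{Y} \geq \abs{V_H}(\HighD/k - 21600) \geq \abs{V_H}\HighD/(2k)$ for $k$ large, which justifies $\theta_0 := \HighD/(2k)$. By construction $Y \cap Z = \emptyset$, $G_\Phi[Y]$ is connected, $V_H \subseteq \var(Y)$, and every clause in $Z$ contains a $V_H$-variable, so all structural hypotheses of Lemma~\ref{lem:gvartree} hold.

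The main obstacle is verifying the growth condition $\delta_0\theta_0\log(\theta_0/(k^2\alpha)) > \log\alpha + 3\log k$ uniformly in the allowed range of $\alpha$. The delicate point is that $\theta_0/(k^2\alpha) = \HighD/(2k^3\alpha)$ is only marginally above $1$ when $\alpha$ is close to the nominal upper bound $\HighD/k^3$. This is handled by using the paper's effective density assumption (consistent with $r = 1/301 < 1/300$ in Theorem~\ref{thm:main}), which leaves an exponential gap between $k^2\alpha$ and $\HighD/k$; in particular $\log(\HighD/(2k^3\alpha))$ is $\Omega(k)$. Then the LHS is of order $\HighD$, which exponentially dominates the $O(k)$ right-hand side for $k$ sufficiently large. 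With the growth condition in hand, Lemma~\ref{lem:gvartree} rules out the existence of our $(Y, V_H, Z)$ whp over $\Phi$, completing the contradiction.
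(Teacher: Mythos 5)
Your overall strategy is the same as the paper's: assume a large connected $U$ with too many high-degree variables and derive a contradiction with Lemma~\ref{lem:gvartree} by exhibiting a connected clause set $Y$ with $\HD(U)\subseteq\var(Y)$ and a large disjoint clause set $Z$ meeting $\HD(U)$. Your construction of $Y$ (one witnessing clause per spanning-tree edge of $H_\Phi[U]$) is sound, and the structural hypotheses of Lemma~\ref{lem:gvartree} are verified correctly. The real divergence is quantitative: your incidence count divides by $k$ (each clause may contain up to $k$ variables of $\HD(U)$), which only yields $\theta_0=\HighD/(2k)$, and then, as you noticed, $\theta_0/(k^2\alpha)=\HighD/(2k^3\alpha)$ can drop to about $1/2$ under the paper's working assumption $\alpha<2^{k/300}/k^3$ (stated in Section~1.1 and used for all lemmas), making $\log(\theta_0/(k^2\alpha))$ negative and the growth condition false. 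Your patch — invoking the theorem's bound $\alpha<2^{rk}$ with $r=1/301$ to get an exponential gap and $\log(\theta_0/(k^2\alpha))=\Omega(k)$ — does make your argument go through for sufficiently large $k$, and that restricted statement still suffices for Theorem~\ref{thm:main}; but it proves a strictly weaker lemma than the paper's, which is stated and used under $\alpha<2^{k/300}/k^3$.

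The paper avoids this loss entirely, and that is where its extra machinery comes in. Using Lemma~\ref{lem:V013131} ($|\Va_0|\le n/2^{k^{10}}$) and Corollary~\ref{cor:fv553dwr1}, at most $2|\HD(U)|$ clauses contain two or more variables of $\HD(U)$, so the incidences give at least $\HighD h-2hk$ \emph{distinct} clauses meeting $\HD(U)$ (no division by $k$). Moreover, instead of subtracting all of $|Y|$, the paper builds a pruned Steiner tree in the factor graph whose leaves lie in $\HD(U)$ and argues by a leaf/degree count that only $t\le 2h$ of these clauses can lie in the connected clause set, so it can take $\theta_0=\HighD-2(k+1)\approx\HighD$. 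With that $\theta_0$, the ratio $\theta_0/(k^2\alpha)$ is at least about $k$ even at $\alpha$ close to $\HighD/k^3$, and the growth condition holds across the paper's full density range. In short: same key lemma, but your coarser bookkeeping trades a factor of $k$ in $\theta_0$ for simplicity, and that trade is exactly what forces the stronger density assumption; to recover the lemma as the paper states and uses it, you would need the refined counting via Corollary~\ref{cor:fv553dwr1} (or an equivalent way to show most clauses meet $\HD(U)$ in only one variable).
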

\begin{proof}
Let $\delta_0=\frac{1}{21600}$ and $\theta_0=\HighD-2(k+1)$. Note that $\delta_0\theta_0\log\frac{\theta_0}{k^2\alpha}\ge 3\log k+\log \alpha$ for all sufficiently large $k$, so \whp we have that $\Phi$ satisfies  Lemma~\ref{lem:gvartree}. Moreover, \whp we have that $\Phi$ satisfies Corollary~\ref{cor:fv553dwr1} and Lemma~\ref{lem:V013131}. We will show the conclusion of the lemma whenever $\Phi$ satisfies these properties.

For the sake of contradiction, suppose that $U$ is a connected set of variables with $|U|\geq (k/\delta_0)\log n$ such that $h >\delta_0\abs{U}$ where $h=\abs{\HD(U)}$ is the number of high-degree variables in $U$. Recall that the factor graph of $\Phi$ is a bipartite graph where one side corresponds to variables and the other to clauses (whose edges join variables to clauses in the natural way). We next show that there is a tree $T$ in the factor graph of $\Phi$ of size at most $2\abs{U}$ such that 
  \begin{enumerate}
    \item \label{it:3eda} every vertex in $T$ is either a variable in $U$ or a clause in $\Phi$, 
      all variables in $\HD(U)$ are vertices in $T$, and $T$ contains at most $\abs{U}$ clauses;
    \item \label{it:3edb} every edge in $T$ joins a variable and a clause, and for any variable $v$ and clause $c$, $(v, c)$ is an edge in $T$ only if $c$ contains $v$;
    \item \label{it:3edc} $T_L \subseteq \HD(U)$, where $T_L$ is the set of leaves of $T$;
    \item \label{it:3edd} if a clause $c\in T$ contains any variable from $\HD(U)$, then at least one of its neighbours in $T$ is a variable from $\HD(U)$.
  \end{enumerate}
  Since $H_\Phi[U]$ is connected, 
  there is a tree $T'$ of size at most $2\abs{U}$ that satisfies Items~\eqref{it:3eda} and~\eqref{it:3edb} (for example, we may take the Steiner tree with terminals $\HD(U)$ in the subgraph of the factor graph induced by $U$ and its adjacent clauses).
  We now prune $T'$ so that it satisfies Items~\eqref{it:3edd} and~\eqref{it:3edc} as well.
  For any clause $c$ in $T'$ such that $c$ contains at least one variable from $\HD(U)$ but none of its neighbours in $T'$ is from $\HD(U)$,    
  let $v$ be a variable from $\var(c) \cap \HD(U)$, and $u$ be the neighbour of $c$ on the path from $c$ to $v$. 
  Then we remove the edge $(c, u)$ from $T'$ and add the edge $(c, v)$. 
  Run this process until there is no such clause $c$. 
  Now $T'$ is a tree that satisfies Items~\eqref{it:3eda},~\eqref{it:3edb},~\eqref{it:3edd}. If $T'$ has a leaf node which is not in $\HD(U)$, remove it from $T'$. Run this process until all leaf nodes are in $\HD(U)$ and let $T$ be the remaining tree. Note that removing leaf nodes that are not in $\HD(U)$ does not affect Items~\eqref{it:3eda},~\eqref{it:3edb} and~\eqref{it:3edd}. We thus obtain a tree $T$ satisfying all of these four items.

  Let $C_T$ be the set of clauses in $T$. From Item~\eqref{it:3eda}, we have $h/k \leq \abs{C_T} \leq \abs{U}$. 
  Let $t$ be the number of clauses in $C_T$ that contain at least one variable from $\HD(U)$. 
  By Item~\eqref{it:3edd}, we have 
    \[t \leq \sum_{v\in\HD(U)} \deg_T(v).\]
  Let $D = \sum_{v\in \HD(U)\setminus T_L} \deg_T(v)$. 
  Because $T$ is a tree and $\HD(U)\subseteq T$, we obtain that
  \begin{align*}
    \abs{T_L}= 2 + \sum_{v\in T\setminus T_L} (\deg_T(v) - 2)\geq 2 + \sum_{v\in \HD(U)\setminus T_L} (\deg_T(v)-2)= 2+ D-2(h-\abs{T_L}),
  \end{align*}
  which yields that $D+\abs{T_L} \leq 2h$. 
  Thus we have $t \leq D + \abs{T_L} \leq 2h$.

  By our assumption on $U$, we have that the number of high-degree variables in $U$ satisfies $h > \delta_0\abs{U}\geq k \log n$  and therefore $\abs{C_T} \geq \log n$. 
  Each variable in $\HD(U)$ is contained in at least $\HighD$ clauses. Moreover, by Lemma~\ref{lem:V013131} we have $\abs{\HD(U)}\leq |\Va_0|\leq n/2^{k}$, so by Corollary~\ref{cor:fv553dwr1} 
  the number of clauses that contain at least $2$ variables from $\HD(U)$ is at most $2h$.
  It follows that the number of clauses that contain at least one variable from $\HD(U)$ is at least 
    $\HighD h-2h k  $. 
  Since $t \leq 2h$, at most $2h$ of these clauses appear in $T$.
  Hence, there must exist a set $Z$ of clauses of size at least $(\HighD-2(k+1))h=\theta_0 h$ such that $Z \cap C_T = \emptyset$ and each clause in $Z$ contains at least one variable from $\HD(U)$. 
	
Note that $|C_T|\geq \log n$, $\abs{\HD(U)}\geq \delta_0\abs{C_T}$ and  $\abs{Z}\geq \theta_0\abs{\HD(U)}$. 
 Moreover, $C_T$ is a connected set of clauses, $\HD(U)\subseteq \var(C_T)$ and every clause $Z$ contains at least one variable from $\HD(U)$. This contradicts that $\Phi$ satisfies Lemma~\ref{lem:gvartree}. Therefore, no such set $U$ can exist, proving the lemma.
\end{proof}

\subsection{Bounding the number of bad variables in connected sets}\label{sec:bad-vars}
In this section, we bound the number of bad variables in connected sets.  Consider the following process $\choosebad$ which we will use to work with bad components. The process, for every set $S$ of variables, defines a set of variables $\badcomp(S)$. 
\begin{enumerate}
  \item Let $\badcomp(S) = S$.
  \item While there is a clause $c$ such that
 $\aabs{\var(c) \cap \badcomp(S)}\geq k/10$ and 
  {$\var(c) \setminus \badcomp(S)  \neq \emptyset$}
 \quad $\badcomp(S) := \badcomp(S) \cup \var(c)$  
 \end{enumerate}

Note that $\+V_\bad = \badcomp(\Va_0)$, where $\Va_0$ is the set of high-degree variables. 
Recall from Section~\ref{sec:high-degree} that a bad component is a connected component of variables in $H_{\Phi,\bad}$.
The following lemma shows that the  process  $\choosebad$ can be viewed as a ``local''  process for identifying bad components.
\begin{lemma}\label{lem:induced}
For every bad component $S$, we have $S = \badcomp(\HD(S))$.
\end{lemma}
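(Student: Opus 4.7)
The plan is to establish the equality $S = \badcomp(\HD(S))$ by proving both inclusions separately, using two inductions, one per direction.

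For the forward inclusion $\badcomp(\HD(S)) \subseteq S$, I would induct on the number of while-loop iterations of $\choosebad$ starting from $\HD(S)$. The base case is immediate since $\HD(S) = \Va_0 \cap S \subseteq S$. For the inductive step, assume the current set $\badcomp$ satisfies $\badcomp \subseteq S$, and suppose the process expands by adding $\var(c)$ for some clause $c$ with $|\var(c)\cap \badcomp|\geq k/10$. Since $\badcomp \subseteq S \subseteq \Va_\bad$, the clause $c$ has at least $k/10$ variables in $\Va_\bad$, and therefore $c \in \Cl_\bad$ by the fixed-point property of the global process that defines $\Va_\bad,\Cl_\bad$. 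By Observation~\ref{obs:goodbad}, $\var(c)\subseteq \Va_\bad$. Pick $u \in \var(c)\cap \badcomp \subseteq \var(c)\cap S$; every other $u' \in \var(c)$ is a bad variable that shares the bad clause $c$ with $u$, so $u'$ is adjacent to $u$ in $H_{\Phi,\bad}$ and thus belongs to the same connected component, namely $S$. Hence $\var(c) \subseteq S$, preserving the invariant.

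For the reverse inclusion $S \subseteq \badcomp(\HD(S))$, I would exploit the layer decomposition $\Va_\bad = \cup_{i\geq 0}\Va_i$ from the global process and prove by induction on $i$ that there is a stage of $\choosebad(\HD(S))$ at which $\badcomp \supseteq \Va_i \cap S$. The base case $\Va_0 \cap S = \HD(S)$ holds right at initialization. For the inductive step, take $v \in (\Va_i\setminus \Va_{i-1})\cap S$ together with a witness clause $c\in \Cl_{i-1}$ with $v\in \var(c)$. Exactly as in the forward direction, $c$ is a bad clause, $\var(c)\subseteq \Va_\bad$, and the connectedness of $\var(c)$ in $H_{\Phi,\bad}$ together with $v \in \var(c)\cap S$ forces $\var(c)\subseteq S$. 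Therefore $\var(c)\cap \Va_{i-1} = \var(c)\cap \Va_{i-1} \cap S$, a set of size at least $k/10$, is contained in $\badcomp$ by the inductive hypothesis. At that stage the first condition of the while loop of $\choosebad$ is satisfied for $c$, and the second condition is witnessed by the fact that there is still a new variable (such as $v$) available in $\var(c)$; firing on $c$ adds $v$, which finishes the inductive step.

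The main subtlety, and the place requiring care, is the second condition of the while loop, which asks that $\badcomp(S)\setminus \var(c)\neq \emptyset$ (i.e., the current $\badcomp$ is not fully swallowed by the single clause $c$). One needs to verify that the growth can always proceed; concretely, the inductive argument must choose the firing order so that once $\badcomp$ has grown beyond any single clause, the issue disappears, and the residual small case $|S|\le k$ (with $\badcomp$ potentially trapped inside one clause) can be handled directly from the definition of a bad component, since in that case $\HD(S)$ must already coincide with $S$. Aside from this bookkeeping, both inductions are straightforward, and the key conceptual ingredient in both directions is the combination of Observation~\ref{obs:goodbad} with the fact that all variables of a bad clause lie in a single component of $H_{\Phi,\bad}$.
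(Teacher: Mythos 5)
Your two inclusions are, in substance, the paper's own proof. The forward direction uses exactly the paper's mechanism: any clause on which $\choosebad$ fires has at least $k/10$ variables in $\Va_\bad$, hence lies in $\Cl_\bad$ by the fixed-point property of the global process, hence (Observation~\ref{obs:goodbad}) has all its variables bad and adjacent in $H_{\Phi,\bad}$ to a variable of $S$, so $\badcomp$ never leaves the connected component $S$. The reverse direction, which you phrase as an induction over the layers $\Va_i$, is the paper's minimal-index contradiction over the same layers: a first escapee $v\in\Va_i\cap S$ outside $\badcomp(\HD(S))$ has a witness clause $c\in\Cl_{i-1}$ with $\var(c)\subseteq S$ and at least $k/10$ variables already in $\badcomp(\HD(S))$, so the process could not have terminated without absorbing $v$. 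Both directions are correct.

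The only place you go astray is the closing paragraph about the second while-loop condition. First, a new variable $v\in\var(c)\setminus\badcomp$ witnesses $\var(c)\setminus\badcomp\neq\emptyset$, not the condition $\badcomp\setminus\var(c)\neq\emptyset$ that you quote, so your step does not discharge that condition as literally written. Second, your proposed patch---that if $\badcomp$ is trapped inside a single clause then $\HD(S)=S$---is false: a bad component can be exactly $\var(c)$ for a single clause $c$ containing $\lceil k/10\rceil$ high-degree variables and no other incident bad clause; there $\HD(S)$ is a proper subset of $S$, and under the literal reading $\choosebad(\HD(S))$ would stall at $\HD(S)$, so the lemma itself would fail. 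The correct resolution is that the loop condition is intended (and must be read) as the progress condition $\var(c)\setminus\badcomp(S)\neq\emptyset$---otherwise the loop never terminates and even the stated identity $\Va_\bad=\badcomp(\Va_0)$ can break---and under that reading the condition is automatic in your inductive step; the paper's proof likewise ignores it. So: keep your two inductions, drop the $\HD(S)=S$ claim, and replace it by this remark on the intended reading of the loop.
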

\begin{proof}
  We run the process $\choosebad$ starting from $\HD(S)$. 
  Since $\HD(S)$ consists only of high-degree variables, $\badcomp(\HD(S))\subseteq\Va_\bad$. 
  By the construction of $\badcomp(\HD(S))$ and the definitions of $\Cl_\bad$ and $H_{\Phi,\bad}$,
  $H_{\Phi,\bad}[\badcomp(\HD(S))]$ is connected.  
  Since $S$ is a connected component in $H_{\Phi,\bad}$, we obtain that $\badcomp(\HD(S)) \subseteq S$.

  Now we prove that $S\subseteq \badcomp(\HD(S))$. 
  For the sake of contradiction, suppose that $S \setminus\badcomp(\HD(S)) \neq \emptyset$. 
  Consider the process of identifying bad variables (cf. Section~\ref{sec:high-degree}).  
	Let $i$ be the smallest number such that $V_i \cap (S\setminus \badcomp(\HD(S))) \neq \emptyset$ 
  and let $v$ be any variable from $V_i \cap (S\setminus \badcomp(\HD(S)))$. 
  Thus there exists a clause $c \in C_{i-1}$ that contains $v$ and at least $k/10$ variables in $V_{i-1}$. Let $U = \var(c) \cap V_{i-1}$. 
  By Definition~\ref{def:Hphi}, variables in $U$ are adjacent to $v$ in $H_{\Phi,\bad}$ and thus they are in $S$ (since $c$ becomes a bad clause after step $i$).
  Also by the choice of $i$, $U \cap (S\setminus \badcomp(\HD(S))) \subseteq V_{i-1} \cap (S\setminus \badcomp(\HD(S))) = \emptyset$,
  which implies that $U \subseteq \badcomp(\HD(S))$. 
  Therefore, the clause $c$ contains $v$ and at least $k/10$ variables from $\badcomp(\HD(S))$, 
  so $v$ should also be in $\badcomp(\HD(S))$ according to the process $\choosebad$, which yields a contradiction. 
  This finishes the proof.
\end{proof}

To analyse the number of variables added by the process $\choosebad$, we will use an expansion property proved by Coja-Oghlan and Frieze \cite{CF14}, adapted for our purposes. First, we show the following slightly more quantitative version of   \cite[Lemma 2.4]{CF14}.
\begin{lemma}[\protect{\cite[Lemma 2.4]{CF14}}]  \label{lem:CF14-2.4}
There exists a constant $k_0>0$ such that for all $k\ge k_0$ the following holds.  With probability $1-o(1/n)$ over the choice of the random formula $\Phi$,   for $\eps>0$ and $\lambda>4$ satisfying $\eps\le k^{-3}$ and $\eps^\lambda\le\frac{1}{e}\left( 2e \right)^{-4k}$,
  $\Phi$ has the following property.    
  
  Let $Z\subset[m]$ be any set of size $\abs{Z}\le \eps n$.
    Let $i_1,\dots,i_{\ell}\in[m]\setminus Z$ be a sequence of pairwise distinct indices.
  For $s\in \{1,\ldots, \ell\}$,   
   define $N_s:=\var(Z)\cup\bigcup_{j=1}^{s-1}\var(c_{i_j})$.
    If
    \begin{align*}
      \abs{\var(c_{i_s})\cap N_s}\ge\lambda \text{ \quad for all $s\in \{1,\ldots,\ell\}$,}
    \end{align*}
    then $\ell\le\eps n$.
\end{lemma}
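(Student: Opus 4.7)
The plan is to argue by contradiction combined with a union bound over the choice of the random formula $\Phi$. Suppose the conclusion fails; then there exist $Z\subset[m]$ with $\abs{Z}\leq\eps n$ and a sequence of pairwise distinct indices $i_1,\ldots,i_\ell\in[m]\setminus Z$ with $\ell>\eps n$ such that $\abs{\var(c_{i_s})\cap N_s}\geq\lambda$ for every $s\in\{1,\ldots,\ell\}$. By truncating we may assume $\ell=\lfloor\eps n\rfloor+1$. The key combinatorial observation is that each clause $c_{i_s}$ contributes at most $k-\lambda$ new variables to $N_{s+1}$, since at least $\lambda$ of its $k$ variables already lie in $N_s$. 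Iterating gives the uniform size bound $\abs{N_s}\leq k\abs{Z}+(s-1)(k-\lambda)\leq 2k\eps n$ for all $s\leq\ell$.

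For fixed $Z$ and sequence, I would reveal the clauses $c_{i_1},c_{i_2},\ldots$ sequentially. Conditional on $c_{i_1},\ldots,c_{i_{s-1}}$, the set $N_s$ is determined with $\abs{N_s}\leq 2k\eps n$, and the $k$ literals of $c_{i_s}$ are i.i.d.\ uniform over the $2n$ literals. A union bound over the $\binom{k}{\lambda}$ subsets of positions that could lie in $N_s$ yields
\[
\Pr\bigl[\abs{\var(c_{i_s})\cap N_s}\geq\lambda \mid c_{i_1},\ldots,c_{i_{s-1}}\bigr]\leq\binom{k}{\lambda}(2k\eps)^\lambda.
\]
Chaining over $s=1,\ldots,\ell$ bounds the joint probability for any fixed $(Z,\text{sequence})$ by $\bigl[\binom{k}{\lambda}(2k\eps)^\lambda\bigr]^\ell$. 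Finally, union-bounding over the $\leq(\eps n+1)\binom{m}{\lfloor\eps n\rfloor}$ choices of $Z$ and the $\leq m^\ell$ ordered sequences of $\ell$ distinct indices in $[m]\setminus Z$ produces an overall upper bound of roughly $O(\alpha n)\cdot\bigl[(e\alpha^2 n/\eps)\binom{k}{\lambda}(2k\eps)^\lambda\bigr]^{\eps n}$.

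The main obstacle is the final algebraic estimate: the counting of ordered sequences contributes $m^\ell=(\alpha n)^\ell$, a factor growing like $n^{\Theta(\eps n)}$ that must be absorbed by the per-clause probability. Here the two quantitative hypotheses are complementary. Using $\binom{k}{\lambda}\leq(ek/\lambda)^\lambda\leq (ek/4)^\lambda$ (valid since $\lambda>4$) the per-clause bound becomes at most $(ek^2\eps/2)^\lambda$, and the assumption $\eps\leq k^{-3}$ makes the base at most $e/(2k)$ for large $k$. Combining this with the much stronger assumption $\eps^\lambda\leq\tfrac{1}{e}(2e)^{-4k}$ and the density bound $\alpha\leq 2^{k/300}/k^3$ (so that $\alpha^2\leq 2^{k/150}$), a short but careful calculation shows that the base of the displayed exponent can be driven below any fixed inverse polynomial in $n$, so its $\eps n$-th power decays super-polynomially and yields the desired $1-o(1/n)$ bound.
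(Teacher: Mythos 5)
There is a genuine gap in the final step, and it is exactly at the point you flagged as ``the main obstacle''. Your union bound over \emph{ordered} sequences of clause indices costs $m^\ell=(\alpha n)^{\ell}$ with $\ell\approx\eps n$, while your per-clause probability $\binom{k}{\lambda}(2k\eps)^{\lambda}$ contains no factor decaying in $n$: the hypotheses only bound $\eps$ and $\eps^{\lambda}$ in terms of $k$ ($\eps\le k^{-3}$, $\eps^{\lambda}\le\frac{1}{e}(2e)^{-4k}$), and they are perfectly consistent with $\eps$ being a constant independent of $n$ (e.g.\ $\eps=k^{-3}$, $\lambda=k/10$, which is essentially the regime needed in Corollary~\ref{cor:CF14-2.4}, where $|Z|$ can be as large as $2n/2^{k^{10}}$). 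For such $\eps$ the base of your exponent, $(e\alpha^2 n/\eps)\binom{k}{\lambda}(2k\eps)^{\lambda}$, is $\Theta(n)$ — every factor except $n$ is a constant depending only on $k$ — so its $\eps n$-th power diverges rather than being $o(1/n)$. The claim that the base ``can be driven below any fixed inverse polynomial in $n$'' therefore cannot be right; no calculation with the stated hypotheses can cancel the $n$ coming from $m^{\ell}$. (Your approach does work when $\eps n=O(\mathrm{polylog}\,n)$, but the lemma must cover constant $\eps$.)

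The fix is to avoid paying for the order. The ordered hypothesis has an order-free consequence: since each $c_{i_s}$ has at least $\lambda$ variables in $N_s$, it introduces at most $k-\lambda$ new variables, so the unordered set $Y=\{i_1,\dots,i_\ell\}$ satisfies $\abs{\var(Y)\setminus\var(Z)}\le(k-\lambda)\ell$. One then bounds the probability that \emph{some} $\ell$-set $Y$ disjoint from $Z$ has this property, paying only $\binom{m}{\ell}\le(e\alpha/\eps)^{\eps n}$ for $Y$ and $\binom{n}{(k-\lambda)\ell}\le(e/((k-\lambda)\eps))^{(k-\lambda)\ell}$ for the location of the new variables — both bases free of $n$ — against $(2k\eps)^{k\ell}$ for forcing all $k\ell$ literal slots into a set of size $O(k\eps n)$. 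This is how \cite[Lemma 2.4, Equation (4)]{CF14} obtains the bound $[(2e)^{2k}\eps^{\lambda/2}]^{\eps n}$, and the paper's proof simply cites that bound and then handles the fact that $\eps$ may depend on $n$ by a case analysis on $\eps n$ (using $\lambda>4$ to get $o(1/n)$ when $1\le\eps n\le 10\log n$, and reducing $\eps n<1$ to $\eps n=1$). A smaller point: in your sequential revealing you must also condition on the clauses indexed by $Z$, since $N_s$ depends on them; this is harmless because $i_s\notin Z$, but it should be said.
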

\begin{proof}
The proof is almost identical to the proof in \cite{CF14}, though there $\epsilon$ is a constant (independent of $n$), whereas here we allow it to depend on $n$.  \cite[Equation (4)]{CF14} shows that the probability that $\Phi$ does not have the desired property  is bounded above by $p_n:=[(2e)^{2k}\epsilon^{\lambda/2}]^{\epsilon n}$. If $\epsilon n\geq 10\log n$, then clearly $p_n=o(1/n)$ since $(2e)^{2k}\epsilon^{\lambda/2}<1/e^{1/2}$ by the assumption $\eps^\lambda\le\frac{1}{e}\left( 2e \right)^{-4k}$. If $1\leq \epsilon n\leq 10\log n$, then $(2e)^{2k}\epsilon^{\lambda/2}<n^{-3/2}$ for all sufficiently large $n$ (using $\lambda>4$) and hence $p_n=o(1/n)$. Finally, for $\epsilon n< 1$, the lemma follows by the case $\epsilon n=1$.
\end{proof}
We will use the following corollary of Lemma~\ref{lem:CF14-2.4}; note the slightly different conclusion $\ell\leq |Z|$ in the end.
\begin{corollary}\label{cor:CF14-2.4}
W.h.p.~the random formula $\Phi$ has the following property.
    Let $Z\subset[m]$ be any set of size $\abs{Z}\le 2n/2^{k^{10}}$.
    Let $i_1,\dots,i_{\ell}\in[m]\setminus Z$ be a sequence of pairwise distinct indices.
   For $s\in \{1,\ldots, \ell\}$,   
   define $N_s:=\var(Z)\cup\bigcup_{j=1}^{s-1}\var(c_{i_j})$. 
    If
    \begin{align*}
      \abs{\var(c_{i_s})\cap N_s}\ge k/10 \text{ \quad for all $s\in \{1,\ldots,\ell\}$,}
    \end{align*}
    then $\ell\le |Z|$.
\end{corollary}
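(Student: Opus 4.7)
The plan is to derive this corollary from Lemma~\ref{lem:CF14-2.4} by applying the lemma separately for each possible value of $|Z|$ and taking a union bound. The key observation is that the conclusion $\ell\le \eps n$ of Lemma~\ref{lem:CF14-2.4} becomes exactly $\ell\le|Z|$ when we choose $\eps=|Z|/n$; so if we can arrange for this specific choice of $\eps$ to be valid for every feasible size of $Z$, the corollary follows immediately.

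First I would fix $\lambda=k/10$ and, for each integer $z$ with $1\le z\le \lfloor 2n/2^{k^{10}}\rfloor$, set $\eps_z:=z/n$. I would then verify that the pair $(\eps_z,\lambda)$ satisfies the hypotheses of Lemma~\ref{lem:CF14-2.4}: indeed $\eps_z\le 2/2^{k^{10}}\le k^{-3}$ for large $k$, and $\eps_z^{k/10}\le (2/2^{k^{10}})^{k/10}$ is doubly exponentially small in $k$ and hence easily below $\tfrac{1}{e}(2e)^{-4k}$, while $\lambda=k/10>4$ for $k$ large.

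Next I would take a union bound. Inspecting the proof of Lemma~\ref{lem:CF14-2.4}, the failure probability for a fixed $\eps$ is bounded by $p_n(\eps)=[(2e)^{2k}\eps^{\lambda/2}]^{\eps n}$, so with $\eps_z n=z$ the total failure probability is at most
\[\sum_{z=1}^{\lfloor 2n/2^{k^{10}}\rfloor}\bigl[(2e)^{2k}(z/n)^{k/20}\bigr]^z.\]
Because the bracketed factor is bounded above by $q:=(2e)^{2k}(2/2^{k^{10}})^{k/20}$, which is doubly exponentially small in $k$, the sum is dominated by a geometric series with ratio at most $q<1/2$ and is therefore $O(q)=o(1)$.

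Finally, given any $Z$ with $|Z|\le 2n/2^{k^{10}}$ and a sequence $i_1,\dots,i_\ell$ satisfying the expansion hypothesis, I would set $z=|Z|$ (the case $|Z|=0$ is trivial since then $N_1=\emptyset$ forces $\ell=0$): on the good event of the union bound, the instance of Lemma~\ref{lem:CF14-2.4} applied with $\eps_z=z/n$ yields $\ell\le \eps_z n=z=|Z|$, as required. The only nontrivial step is the union-bound calculation, but the enormous gap between $(z/n)^{k/20}$ and $(2e)^{-2k}$ under the hypothesis $|Z|\le 2n/2^{k^{10}}$ makes it routine.
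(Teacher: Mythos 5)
Your approach is essentially the paper's: fix $\lambda=k/10$, apply Lemma~\ref{lem:CF14-2.4} with $\eps=z/n$ for each admissible size $z=|Z|$ (checking, as you do, that $\eps\le k^{-3}$ and $\eps^{\lambda}\le\tfrac{1}{e}(2e)^{-4k}$ hold since $z\le 2n/2^{k^{10}}$), and take a union bound over $z$, so that the conclusion $\ell\le\eps n=z=|Z|$ follows on the good event. The one place where your write-up is imprecise is the union-bound accounting: by bypassing the lemma's stated conclusion and bounding each term of $\sum_z\bigl[(2e)^{2k}(z/n)^{k/20}\bigr]^z$ by the $n$-independent constant $q=(2e)^{2k}(2/2^{k^{10}})^{k/20}$, you only get a failure probability that is $O(q)$ — small in $k$ but not $o(1)$ as $n\to\infty$, which is what ``w.h.p.''\ means in this paper. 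This is easily repaired: either invoke Lemma~\ref{lem:CF14-2.4} as stated, which already asserts failure probability $o(1/n)$ for each fixed $\eps_z$, and union bound over the at most $n$ values of $z$ (this is exactly the paper's proof), or note that each summand with $\eps_z=z/n$ in fact decays polynomially in $n$ (e.g.\ terms with $z\le\sqrt n$ are at most $(2e)^{2k}n^{-k/40}$ each, and the remaining terms are at most $q^{\sqrt n}$), so the sum is genuinely $o(1)$ in $n$.
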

\begin{proof}
For an integer $z$ satisfying $1\leq z\leq 2n/2^{k^{10}}$, let $\+E_z$ be the event that there exists a set $Z$ with $|Z|=z$ that does not satisfy the desired property. By Lemma~\ref{lem:CF14-2.4} (applied with $\epsilon=z/n$ and $\lambda=k/10$), we have $\Pr_\Phi(\mathcal{E}_z)=o(1/n)$. Taking a union bound over the values of $z$ yields the corollary.
\end{proof}

Corollary~\ref{cor:CF14-2.4} allows us to control the number of bad variables. 

\begin{lemma}\label{lem:badclause-CF}
  \Whp over the choice of $\Phi$, $|\Va_\bad|\leq 4kn/2^{k^{10}}$.
\end{lemma}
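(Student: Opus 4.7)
The plan is to reduce the claim about $\Va_\bad$ to a claim about $\Cl_\bad$. Since the process in Section~\ref{sec:high-degree} satisfies $\Va_{i+1} = \Va_i \cup \var(\Cl_i)$ starting from $\Va_0 = $ the set of high-degree variables, we have $\Va_\bad = \Va_0 \cup \var(\Cl_\bad)$ and hence
\[|\Va_\bad| \leq |\Va_0| + k\,|\Cl_\bad|.\]
By Lemma~\ref{lem:V013131}, w.h.p.\ $|\Va_0| \leq n/2^{k^{10}}$, so it suffices to prove that w.h.p.\ $|\Cl_\bad| \leq 2n/2^{k^{10}}$. The main obstacle is the cascade: one bad clause can drag many new variables into $\Va_\bad$, which can make many more clauses bad, and so on. We will tame this cascade with Corollary~\ref{cor:CF14-2.4}.

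For each $v\in\Va_0$ choose one clause containing $v$ (such a clause exists since $v$ has degree at least $\HighD\geq 1$), and let $Z\subseteq[m]$ be the set of chosen clause indices. Then $\Va_0 \subseteq \var(Z)$ and $|Z|\leq |\Va_0|\leq n/2^{k^{10}} \leq 2n/2^{k^{10}}$, so $Z$ is admissible as the seed set in Corollary~\ref{cor:CF14-2.4}.

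Next, enumerate the clauses of $\Cl_\bad\setminus Z$ layer by layer as $c_{i_1},c_{i_2},\dots,c_{i_\ell}$: first list the clauses of $\Cl_0\setminus Z$ in any order, then those of $\Cl_1\setminus(\Cl_0\cup Z)$, and so on. The key claim is that each $c_{i_s}$ in this sequence satisfies $|\var(c_{i_s})\cap N_s|\geq k/10$, where $N_s:=\var(Z)\cup\bigcup_{j<s}\var(c_{i_j})$. Indeed, if $c_{i_s}\in \Cl_i$, then by construction $|\var(c_{i_s})\cap \Va_i|\geq k/10$, and every variable of $\Va_i$ either lies in $\Va_0\subseteq\var(Z)$ or sits in $\var(c)$ for some earlier-layer $c\in \Cl_0\cup\cdots\cup \Cl_{i-1}$; such a $c$ is either in $Z$ (so $\var(c)\subseteq\var(Z)$) or has already appeared in the sequence (so $\var(c)\subseteq\bigcup_{j<s}\var(c_{i_j})$). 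Corollary~\ref{cor:CF14-2.4} therefore gives $\ell\leq |Z|$, and hence $|\Cl_\bad|\leq |Z|+\ell\leq 2|Z|\leq 2n/2^{k^{10}}$.

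Combining the two bounds,
\[|\Va_\bad| \leq |\Va_0| + k\,|\Cl_\bad| \leq \frac{(2k+1)n}{2^{k^{10}}} \leq \frac{4kn}{2^{k^{10}}},\]
which proves the lemma. The only delicate step is the layer-by-layer enumeration, which must be set up so that the $k/10$ variables witnessing badness of each $c_{i_s}$ are already absorbed into $N_s$; once this is arranged, the bound is just Corollary~\ref{cor:CF14-2.4} applied to a seed set covering $\Va_0$.
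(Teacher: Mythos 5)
Your proof is correct and follows the same overall strategy as the paper: bound $|\Va_0|$ via Lemma~\ref{lem:V013131}, tame the cascade of bad clauses with Corollary~\ref{cor:CF14-2.4} applied to a suitable seed set $Z$, and convert the clause bound into a variable bound via $|\Va_\bad|\le |\Va_0|+k\,|\Cl_\bad|$. The one genuine difference is the choice of seed: the paper takes $Z=\Cl_0$ (first bounding $|\Cl_0|\le 2|\Va_0|$ via Corollary~\ref{cor:fv553dwr1}), whereas you take one clause per high-degree variable, so that $\Va_0\subseteq\var(Z)$ holds by construction and Corollary~\ref{cor:fv553dwr1} is not needed at all. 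Your choice buys something real: the layer-by-layer enumeration needs the $k/10$ variables witnessing badness of each listed clause to lie in $\var(Z)$ together with the variables of earlier-listed clauses, and these witnesses may include variables of $\Va_0$ that belong to no clause of $\Cl_0$; with $Z=\Cl_0$ this point requires an extra remark (the paper's proof is terse about it), while with your $Z$ it is immediate. The cost is negligible --- in fact $|Z|\le|\Va_0|$ rather than $2|\Va_0|$ --- and your final accounting $|\Va_\bad|\le(2k+1)n/2^{k^{10}}\le 4kn/2^{k^{10}}$ is, if anything, slightly tighter than the paper's.
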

\begin{proof}
 \Whp $\Phi$ satisfies the properties in Lemma~\ref{lem:V013131} and Corollaries~\ref{cor:fv553dwr1},~\ref{cor:CF14-2.4}.  By Lemma~\ref{lem:V013131}, we have that $|\Va_0|\leq n/2^{k^{10}}$ and hence by Corollary~\ref{cor:fv553dwr1} (applied to $Y=\Va_0$), we obtain that $|\Cl_0|\leq 2n/2^{k^{10}}$.   By Lemma~\ref{lem:CF14-2.4} (applied to $Z=\Cl_0$), we conclude that $|\Cl_\bad|\leq 4n/2^{k^{10}}$ and hence $|\Va_\bad|\leq 4kn/2^{k^{10}}$.
\end{proof}

We can in fact use Corollary~\ref{cor:CF14-2.4} to  prove the following lemma.
\begin{lemma}\label{lem:sizehd}
\Whp over the choice of $\Phi$, for any bad component $S$, $\abs{S}\leq 60\abs{\HD(S)}$. 
\end{lemma}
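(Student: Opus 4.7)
I would prove the lemma by contradiction, applying Lemma~\ref{lem:gvartree} with a construction parallel to the proof of Lemma~\ref{lem:subgraphhd} but tailored to bad components. Suppose some bad component $S$ satisfies $|S|>60|\HD(S)|$, and write $h=|\HD(S)|$, $s=|S|$. The case $s=1$ is trivial (if $v\in S$ were not high-degree, then $v$ would have been added to $\Va_\bad$ through some bad clause, and that bad clause together with its other variables would lie in $S$, contradicting $|S|=1$). Assume henceforth $s\ge 2$. By Lemma~\ref{lem:induced}, $S=\badcomp(\HD(S))$, and by Lemma~\ref{lem:badclause-CF} we have $s\le 4kn/2^{k^{10}}\le n/2^k$ for $k$ large. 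Let $\Cl(S):=\{c\in\Cl_\bad:\var(c)\subseteq S\}$. Since $S$ is a connected component of $H_{\Phi,\bad}$ and Observation~\ref{obs:goodbad} gives $\var(c)\subseteq\Va_\bad$ for every bad clause, any bad clause touching $S$ lies in $\Cl(S)$. This immediately yields that $G_\Phi[\Cl(S)]$ is connected, that $\HD(S)\subseteq\var(\Cl(S))$, and that any clause outside $\Cl(S)$ containing a variable of $\HD(S)$ must be a good clause. Corollary~\ref{cor:fv553dwr2} gives $|\Cl(S)|\le 30s/k$.

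I would then apply Lemma~\ref{lem:gvartree} with
$Y:=\Cl(S),\ U:=\HD(S),\ Z:=\{c\in\Cl_\good:\var(c)\cap U\neq\emptyset\}$.
The combinatorial hypotheses ($G_\Phi[Y]$ connected, $U\subseteq\var(Y)$, $Y\cap Z=\emptyset$, and each $c\in Z$ touching $U$) follow from the setup. The contradiction hypothesis $s>60h$ combined with $|Y|\le 30s/k$ gives $|Y|/|U|<1800/k$, so $|U|\ge\delta_0|Y|$ with $\delta_0=1$ for $k$ sufficiently large. For $|Z|$, each $v\in\HD(S)$ lies in at least $\HighD=2^{k/300}$ distinct clauses (the same convention as in the proof of Lemma~\ref{lem:subgraphhd}), and the total variable–clause incidences from $\HD(S)$ into $\Cl(S)$ are at most $k|\Cl(S)|\le 30s$. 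Since every good clause contains fewer than $k/10$ variables of $\Va_\bad\supseteq\HD(S)$, I obtain
$|Z|\ge 10(h\HighD-30s)/k\ge 5h\HighD/k$
when $s>60h$ and $k$ is large (so that $\HighD\gg 3600$). Setting $\theta_0:=5\HighD/k$ and using the density bound $\alpha<\HighD/k^3$, one readily checks $\theta_0\ge\max(k^2\alpha,2)$ and $\delta_0\theta_0\log(\theta_0/k^2\alpha)>\log\alpha+3\log k$. Hence, whenever $|Y|\ge\log n$, Lemma~\ref{lem:gvartree} produces the desired contradiction.

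The remaining regime is $|\Cl(S)|<\log n$ (so $s<k\log n$), where the hypothesis $|Y|\ge\log n$ of Lemma~\ref{lem:gvartree} fails. Here I would perform a direct union bound over configurations $(Y,U,Z)$ of the above form with $|Y|<\log n$, mirroring the calculation in the proof of Lemma~\ref{lem:gvartree} but dropping the threshold $y\ge\log n$; the exponential slack between $\HighD=2^{k/300}$ and $k^2\alpha\le\HighD/k$ makes the contribution of each small configuration vanish quickly enough that the expected count is $o(1)$ even without the threshold. The main obstacle of the proof is calibrating $(\delta_0,\theta_0)$ so that the algebraic hypothesis of Lemma~\ref{lem:gvartree} holds simultaneously with $|U|\ge\delta_0|Y|$ (forced by $s>60h$ and Corollary~\ref{cor:fv553dwr2}) and $|Z|\ge\theta_0|U|$ (from the high-degree incidence count); the constant $60$ in the statement is precisely what permits the simple choice $\delta_0=1$ for large $k$, while $\HighD=2^{k/300}$ leaves ample room for $\theta_0=5\HighD/k$. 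A secondary obstacle is the small-component case, which requires its own dedicated combinatorial estimate rather than a direct appeal to the expansion lemma.
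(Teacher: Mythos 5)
There is a genuine gap, and it sits at the heart of your argument: both of the hypotheses of Lemma~\ref{lem:gvartree} that you need to verify are derived by treating the contradiction hypothesis $s>60h$ as if it bounded $s$ \emph{above} in terms of $h$, when it is a lower bound. Concretely, from $|Y|=|\Cl(S)|\le 30s/k$ and $s>60h$ you claim $|Y|/|U|<1800/k$; but $s>60h$ gives $s/h>60$, which pushes the bound $|Y|/|U|\le 30(s/h)/k$ in the wrong direction. A priori (before the lemma is proved) the cascade $S=\badcomp(\HD(S))$ could be arbitrarily large relative to $h$: since every variable of $S$ lies in a bad clause contained in $S$, one has $|\Cl(S)|\ge s/k$, so if $s/h$ is huge then $|\Cl(S)|\gg h$ and $|U|\ge\delta_0|Y|$ fails for \emph{every} constant $\delta_0>0$. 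The same problem hits your bound $|Z|\ge 10(h\HighD-30s)/k\ge 5h\HighD/k$: this needs $30s\le h\HighD/2$, i.e.\ an upper bound on $s$ in terms of $h$, which is exactly the statement being proven and is not available under the hypothesis $s>60h$. So no contradiction is reached; the argument is essentially circular. The secondary plan for small components is also not sound as stated: dropping the threshold $y\ge\log n$ in the union bound of Lemma~\ref{lem:gvartree} leaves the uncancelled factor of order $n$ visible in \eqref{eq:5t4g5tg4} (for constant $y$ the bound is $n$ times a constant depending only on $k$), so that calculation does not give $o(1)$ for small $y$.

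The paper's proof uses a different tool precisely to avoid needing any a priori control of $s$. It applies Corollary~\ref{cor:fv553dwr2} with $Y=\HD(S)$ (not $S$), using $|\HD(S)|\le|\Va_0|\le n/2^{k^{10}}$ from Lemma~\ref{lem:V013131}, to get at most $30h/k$ ``seed'' clauses containing at least $k/10$ variables of $\HD(S)$; it then invokes Corollary~\ref{cor:CF14-2.4} (the adapted Coja-Oghlan--Frieze expansion property) with $Z$ equal to this seed set to conclude that the $\choosebad$ cascade can absorb at most $|Z|$ further clauses, so at most $60h/k$ clauses are contained in $\badcomp(\HD(S))$ in total. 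Since $S=\badcomp(\HD(S))$ by Lemma~\ref{lem:induced} and every variable of $S$ lies in such a clause, $|S|\le k\cdot 60h/k=60h$. If you want to repair your write-up, the expansion statement of Corollary~\ref{cor:CF14-2.4} (which works for sets as small as a single clause, because its failure probability decays polynomially in $n$ when the seed set is small) is the ingredient you are missing; Lemma~\ref{lem:gvartree}, whose hypotheses require many high-degree variables relative to the clause set and a $\log n$ lower bound on its size, is not the right tool here.
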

\begin{proof}
\Whp we have that $\Phi$ satisfies the properties in  Lemma~\ref{lem:V013131} and Corollaries~\ref{cor:fv553dwr2} and~\ref{cor:CF14-2.4}. We will show the conclusion of the lemma whenever $\Phi$ satisfies these three properties.

Let $S$ be a bad component. If $S$ contains only an isolated variable, 
it must be a high-degree variable and hence $\HD(S)=S$ (so we are finished).
Otherwise,  since a bad component 
is a connected component of variables in $H_{\Phi,\bad}$,
the definition of $H_{\Phi,\bad}$ ensures that the bad component has at least $k/10$ high-degree variables.
Note that $\abs{\HD(S)}\leq |\Va_0|\leq n/2^{k^{10}}$ by Lemma~\ref{lem:V013131}.
Applying Corollary~\ref{cor:fv553dwr2} with $Y=\HD(S)$, 
we find that there are at most  $\frac{30}{k}\abs{\HD(S)}$ clauses that contain at least $k/10$ variables from $\HD(S)$. 

Now, we run the process $\choosebad$  starting with $\HD(S)$.  
Take $Z$ to be the set of clauses that contain at least $k/10$ variables from $\HD(S)$
(so, from above, we have $|Z| \leq \frac{30}{k}\abs{\HD(S)}\leq \frac{30}{k} \frac{n}{2^{k^{10}}}$). Applying    
Corollary~\ref{cor:CF14-2.4}, 
we find that the number of clauses $c$ such that $\var(c) \subseteq \badcomp(\HD(S))$ is at most $2|Z| \leq 60\abs{\HD(S)}/k$. 
Since $S=\badcomp(\HD(S))$ by Lemma~\ref{lem:induced} and each variable in $S$ is contained in some bad clause, we have 
  \[\abs{S} \leq \biggl|\bigcup_{c\in \+C_\bad:\ \var(c)\cap S\neq\emptyset}\,\var(c)\biggr| \leq 60 \abs{\HD(S)}.\qedhere\]
\end{proof}

Next, we show that there is no large bad component.
\begin{lemma}\label{lem:sizeofbad}
  \Whp over the choice of $\Phi$, every bad component $S$ has size at most $21600 k\log n$. 
\end{lemma}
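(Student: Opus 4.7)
The plan is to prove the lemma by combining Lemma~\ref{lem:sizehd} and Lemma~\ref{lem:subgraphhd} via a short contradiction argument. I will condition on $\Phi$ satisfying the properties in both of these lemmas, which happens with high probability by a union bound.

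First, I would observe that any bad component $S$ is a connected subset of $\Va$ with respect to $H_\Phi$. This is because $H_{\Phi,\bad}$ is a subgraph of $H_\Phi[\Va_\bad]$ (an edge $\{v,v'\}$ of $H_{\Phi,\bad}$ comes from a bad clause $c$ containing both $v$ and $v'$, and this same $c\in \Cl$ yields an edge in $H_\Phi$). Hence a connected component of $H_{\Phi,\bad}$ remains connected when viewed inside $H_\Phi$, so Lemma~\ref{lem:subgraphhd} is applicable to~$S$.

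Next, I would assume for contradiction that some bad component $S$ has size $|S| > 21600 k \log n$. Since $|S|\ge 21600k\log n$ and $S$ is connected in $H_\Phi$, Lemma~\ref{lem:subgraphhd} gives
\begin{align*}
|\HD(S)| \le \frac{|S|}{21600}.
\end{align*}
On the other hand, Lemma~\ref{lem:sizehd} gives $|S|\le 60\,|\HD(S)|$. Chaining these two bounds yields
\begin{align*}
|S| \le 60\,|\HD(S)| \le 60\cdot \frac{|S|}{21600} = \frac{|S|}{360},
\end{align*}
which is impossible for $|S|\ge 1$. This contradiction shows that no bad component can exceed the claimed size, completing the proof.

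There is essentially no obstacle here: the two ingredients from the preceding lemmas are precisely matched, and the constants $60$ and $21600$ were chosen so that their ratio is comfortably less than $1$. The only subtlety is the observation that a bad component, which is by definition connected in $H_{\Phi,\bad}$, is also connected in $H_\Phi$, so that Lemma~\ref{lem:subgraphhd} (stated for connected sets in $H_\Phi$) can be invoked.
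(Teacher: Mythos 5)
Your proof is correct and follows essentially the same route as the paper: assume a bad component $S$ exceeds $21600k\log n$, note that $S$ is connected in $H_\Phi$, and derive the contradiction $\abs{S}\le 60\abs{\HD(S)}\le \abs{S}/360$ by combining Lemma~\ref{lem:sizehd} with Lemma~\ref{lem:subgraphhd}. The only difference is that you spell out the (correct) observation that connectivity in $H_{\Phi,\bad}$ implies connectivity in $H_\Phi$, which the paper states in one sentence.
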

\begin{proof}
  Suppose there is a bad component $S$ with size $\abs{S} > 21600k\log n$. 
  Since $S$ is a connected component in $H_{\Phi,\bad}$, $S$ is also a connected set in $H_{\Phi}$. 
  By Lemma~\ref{lem:subgraphhd}, $\abs{\HD(S)} \leq \frac{\abs{S}}{21600}$. 
  However by Lemma~\ref{lem:sizehd}, we have $\abs{S}\leq 60\abs{\HD(S)}$, which gives a contradiction.
\end{proof}

 The following lemma shows that  every ``large'' connected set contains  few bad variables.

\begin{lemma}\label{lem:badinsubgraph}
\Whp over the choice of $\Phi$,
    for every connected set $S$ of variables with size at least $21600k\log n$, 
$
    \abs{S\cap\Va_\bad}\le {\aabs{S}}/{360}$.
\end{lemma}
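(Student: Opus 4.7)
My plan is to combine the two main structural facts established just above: Lemma~\ref{lem:subgraphhd}, which bounds the fraction of high-degree variables inside any large connected set of variables by $1/21600$, and Lemma~\ref{lem:sizehd}, which bounds each bad component's size by $60$ times the number of its high-degree variables. Both hold w.h.p.\ over the choice of $\Phi$, so I assume $\Phi$ satisfies both properties.

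Fix a connected set $S$ of variables with $|S|\geq 21600k\log n$. The bad variables in $S$ lie in a disjoint collection of bad components of $H_{\Phi,\bad}$, say $S'_1,S'_2,\dots$, each of which meets $S$ (any variable in $S\cap \Va_\bad$ belongs to some unique bad component, and that bad component automatically intersects $S$). Since each $S'_i$ is connected in $H_{\Phi,\bad}\subseteq H_\Phi$ and hits the connected set $S$, the union
\[
U:=S\cup\bigcup_i S'_i
\]
is connected in $H_\Phi$, and clearly $|U|\geq |S|\geq 21600k\log n$. Thus Lemma~\ref{lem:subgraphhd} applies to $U$, giving $|\HD(U)|\leq |U|/21600$.

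Set $X:=|S\cap \Va_\bad|$ and $Y:=|(\bigcup_i S'_i)\setminus S|$. Disjointness of the $S'_i$ yields $\sum_i|S'_i|=X+Y$ and $|U|=|S|+Y$. Since $\HD(S'_i)\subseteq \HD(U)$ and the sets $\HD(S'_i)$ are pairwise disjoint (they lie in disjoint bad components), Lemma~\ref{lem:sizehd} gives
\[
|\HD(U)|\;\geq\;\sum_i|\HD(S'_i)|\;\geq\;\frac{1}{60}\sum_i|S'_i|\;=\;\frac{X+Y}{60}.
\]
Combined with $|\HD(U)|\leq (|S|+Y)/21600$, this yields $360(X+Y)\leq |S|+Y$, hence $360X+359Y\leq |S|$, and in particular $X\leq |S|/360$, as required.

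No delicate probabilistic estimate is needed at this step; the only care required is the union bound ensuring that $\Phi$ simultaneously satisfies Lemmas~\ref{lem:subgraphhd} and~\ref{lem:sizehd} w.h.p. The main conceptual point — which is the one mild obstacle — is recognizing that enlarging $S$ to $U$ preserves connectedness and only increases $|U|$ by $Y$, so the ``loss'' in passing to bad components is controlled by the same quantity $Y$ that appears on the left-hand side, allowing the inequality to close cleanly.
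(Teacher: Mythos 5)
Your proposal is correct and follows essentially the same route as the paper: enlarge $S$ by the bad components meeting it, apply Lemma~\ref{lem:subgraphhd} to the enlarged connected set and Lemma~\ref{lem:sizehd} componentwise, and compare the two bounds on the number of high-degree variables. The only difference is cosmetic — you argue directly with the quantities $X,Y$ where the paper runs the same computation as a proof by contradiction with the ratio $b=\aabs{S'\setminus S}/\aabs{S}$.
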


\begin{proof}
\Whp we have that $\Phi$ satisfies the properties in Lemmas~\ref{lem:subgraphhd} and~\ref{lem:sizehd}. We will show the conclusion of the lemma for all such $\Phi$.

    For the sake of contradiction, 
    let $S$ be a connected set of variables with size at least $21600k\log n$
    and $|S\cap\Va_{\bad}|>  {|S|}/{360}=60\delta_0\abs{S}$,
    where $\delta_0= {1}/{21600}$.  
    Suppose that there are $t$ bad components $S_1,S_2,\ldots, S_t$ intersecting $S$. 
    Let $S' = S\cup S_1 \cup \cdots \cup S_t$ and let $b =  {\aabs{S' \setminus S}}/{\aabs{S}}$.
    Note that $S'$ is  a connected set of variables. Also, all variables in $S'\setminus S$ are bad, 
    so $\abs{S'\cap \Va_\bad} > (60 \delta_0 + b)\abs{S}$. 
    Thus, by Lemma~\ref{lem:sizehd}, we have
    \begin{align*}
    \abs{\HD(S')} &= \sum_{i=1}^t\abs{\HD(S_i)} \geq \sum_{i=1}^t\frac{\abs{S_i}}{60} = \frac{\abs{S'\cap\Va_\bad}}{60}> \left(\delta_0 + \frac{b}{60}\right)\abs{S}> \delta_0(1+b)\abs{S}=\delta_0\abs{S'},
    \end{align*}
    which contradicts Lemma~\ref{lem:subgraphhd}.
\end{proof}

\begin{lemma}\label{lem:connected-bad-size}
 \Whp over the choice of $\Phi$, for every connected set of clauses $Y$ such that $\abs{\var(Y)}\ge 21600k\log n$, it holds that
$
    \aabs{Y\cap\Cl_\bad}\le {\aabs{Y}}/{12}$.
\end{lemma}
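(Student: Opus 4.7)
The plan is to combine Lemma~\ref{lem:badinsubgraph} (which controls bad variables in connected sets of variables) with Lemma~\ref{lem:fv553dwr} (which bounds the number of clauses supported inside a small set of variables). First, since $Y$ is connected in $G_\Phi$, the set $\var(Y)$ is connected in $H_\Phi$: any two variables in $\var(Y)$ are joined by a path obtained by stringing together shared variables along a clause-path in $Y$. Combined with the hypothesis $|\var(Y)| \geq 21600k\log n$, Lemma~\ref{lem:badinsubgraph} then yields
\[ |\var(Y)\cap \Va_\bad|\leq |\var(Y)|/360 \leq k|Y|/360, \]
where the last inequality uses $|\var(Y)|\leq k|Y|$.

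Next, set $Y_\bad := Y\cap \Cl_\bad$ and $V_\bad := \var(Y_\bad)$. By Observation~\ref{obs:goodbad}, $\var(c)\subseteq \Va_\bad$ for every $c\in Y_\bad$, so $V_\bad \subseteq \var(Y)\cap \Va_\bad$ and hence $|V_\bad|\leq k|Y|/360$. Moreover, unwinding the definition of $\Cl_\bad$ shows that each bad clause satisfies $|\var(c)\cap \Va_\bad|\geq k/10$, which together with Observation~\ref{obs:goodbad} forces $|\var(c)|\geq k/10$; since $\var(c)\subseteq V_\bad$ for $c\in Y_\bad$, each bad clause contains at least $\lceil k/10\rceil$ distinct variables from $V_\bad$.

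We may assume $Y_\bad\neq\emptyset$ (otherwise the claim is trivial), so $|V_\bad|\geq k/10\geq 2$. Also, Lemma~\ref{lem:badclause-CF} gives $|V_\bad|\leq |\Va_\bad|\leq 4kn/2^{k^{10}}\leq n/2^k$ for $k$ sufficiently large. These are exactly the hypotheses required to apply Lemma~\ref{lem:fv553dwr} to the variable set $V_\bad$ with $b=\lceil k/10\rceil$, yielding
\[ |Y_\bad| \leq \frac{2}{\lceil k/10\rceil - 1}\,|V_\bad| \leq \frac{22}{k}\cdot \frac{k|Y|}{360} = \frac{22|Y|}{360} < \frac{|Y|}{12} \]
for all sufficiently large $k$. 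The w.h.p.\ conclusion follows by intersecting the w.h.p.\ events from Lemmas~\ref{lem:badinsubgraph}, \ref{lem:badclause-CF}, and \ref{lem:fv553dwr}. No step is particularly delicate; the one point worth flagging is that bad clauses genuinely have $\geq k/10$ \emph{distinct} variables (so Lemma~\ref{lem:fv553dwr} applies with $b=\lceil k/10\rceil$), which is immediate from the definition of $\Cl_\bad$ combined with Observation~\ref{obs:goodbad}.
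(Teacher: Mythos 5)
Your proof is correct and follows essentially the same route as the paper: apply Lemma~\ref{lem:badinsubgraph} to the connected variable set $\var(Y)$ to bound its bad variables by $|\var(Y)|/360\le k|Y|/360$, then use Lemma~\ref{lem:badclause-CF} for the size hypothesis and Lemma~\ref{lem:fv553dwr} with $b=\lceil k/10\rceil$ (the paper invokes its specialization, Corollary~\ref{cor:fv553dwr2}) to bound the number of clauses with $\ge k/10$ bad variables, which includes all of $Y\cap\Cl_\bad$. The only cosmetic difference is that you apply the clause-count bound to $\var(Y\cap\Cl_\bad)$ rather than to $\var(Y)\cap\Va_\bad$, which changes nothing.
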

\begin{proof}
\Whp we have that $\Phi$ satisfies the properties in Corollary~\ref{cor:fv553dwr2} and Lemmas~\ref{lem:badclause-CF} and~\ref{lem:badinsubgraph}. We will show the conclusion of the lemma for all such $\Phi$.

    Let $Y$ be a connected set of clauses such that $\abs{\var(Y)}\ge 21600k\log n$ and let  $S=\var(Y)$.
    Then $\abs{S}\le k\abs{Y}$.
    Since $Y$ is connected, so is $S$.
    Let $S_\bad=S\cap\Va_\bad$ and note that, by Lemma~\ref{lem:badclause-CF}, $|S_\bad|\leq |\Va_\bad|\leq 4kn/2^{k^{10}}$.
     By  Lemma~\ref{lem:badinsubgraph}, we also have that $\abs{S_\bad}\le\abs{S}/360$.
    Note that every $c\in Y\cap\Cl_\bad$ contains at least $k/10$ variables in $S_\bad$.
    Applying
   Corollary~\ref{cor:fv553dwr2}  (with the ``$Y$'' in  the corollary equal to $S_\bad$), 
       \begin{align*}
    \abs{Y\cap\Cl_\bad}&\le \frac{30\abs{S_\bad}}{k} \le \frac{\abs{S}}{12k} \le \frac{\abs{Y}}{12}.\qedhere
    \end{align*}
\end{proof}

\subsection{Proofs for the coupling tree}
\label{sec:coupling-tree-property}
In this section, we prove Lemma~\ref{lem:depthoftree}.
For $V\subseteq\Va$, let $\Gamma_{H_\Phi}(V)=\cup_{v\in V}\Gamma_{H_\Phi}(v)$ be the neighbourhood of $V$ in $H_\Phi$.
Let $\Gamma^+_{H_\Phi}(V)=V\cup\Gamma_{H_\Phi}(V)$ be the extended neighbourhood.

\begin{lemma}  \label{lem:Gamma-H}
  W.h.p.~over the choice of $\Phi$, every connected set of variables $V\subseteq\Va$ satisfies
$
    \aabs{\Gamma^+_{H_\Phi}(V)} \le 3k^3\alpha \max\{\aabs{V},k\log n\}$.
\end{lemma}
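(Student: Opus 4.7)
The plan is to bound $|\Gamma^+_{H_\Phi}(V)|$ by $k|C_V|$, where $C_V := \{c\in\Cl : \var(c)\cap V\neq\emptyset\}$ is the set of clauses touching $V$, and then to bound $|C_V|$ by invoking Lemma~\ref{lem:gvartree}. First I would observe that when $V$ is connected in $H_\Phi$ with $|V|\geq 2$, every $v\in V$ shares a clause with a neighbour in $V$, so $v\in \var(C_V)$; hence $V\subseteq \var(C_V)$ and $|\Gamma^+_{H_\Phi}(V)|\leq |\var(C_V)|\leq k|C_V|$. The corner case $|V|=1$ would be handled separately via a standard max-degree bound on $\Phi$ (w.h.p.~every variable has degree $O(\log n/\log\log n + k\alpha)$, which is well below $3k^3\alpha\log n$ given $k^2\alpha\geq 1$). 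So the task reduces to proving $|C_V|\leq 3k^2\alpha\max\{|V|,k\log n\}$ for connected $V$ with $|V|\geq 2$.

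A padding argument then reduces to the case $v := |V|\geq k\log n$: if $v < k\log n$, consider the connected component $K$ of $V$ in $H_\Phi$; either $|K| < k\log n$ (so $|\Gamma^+_{H_\Phi}(V)|\leq |K|$ already satisfies the bound), or $K$ contains a connected superset $V'\supseteq V$ with $|V'|=k\log n$, in which case $\Gamma^+_{H_\Phi}(V)\subseteq \Gamma^+_{H_\Phi}(V')$ reduces the task to the main case for $V'$. For the main case, I would build a connected $Y\subseteq \Cl$ with $V\subseteq \var(Y)$ and $\log n\leq |Y|\leq v$ as follows: take a spanning tree of $H_\Phi[V]$ and select one witnessing clause per tree edge, producing a set $Y_0$ connected in $G_\Phi$ with $V\subseteq \var(Y_0)$ and $|Y_0|\leq v-1$; if $|Y_0|\geq \log n$ set $Y=Y_0$, else either the connected component of $Y_0$ in $G_\Phi$ is smaller than $\log n$ (so $|C_V|<\log n$ and we are done) or we grow $Y_0$ within this component to a connected $Y$ with $|Y|=\log n$.

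Applying Lemma~\ref{lem:gvartree} with $Y$, $U:=V$, and $Z:=C_V\setminus Y$, the hypotheses ($Y$ connected, $|Y|\geq \log n$, $U\subseteq \var(Y)$, $|U|\geq \delta_0|Y|$, and every clause of $Z$ meets $U$) are satisfied by setting $\delta_0 = v/|Y|\geq 1$. The lemma then forbids $|Z|\geq \theta_0 v$, giving $|C_V|=|Y|+|Z|\leq v(1/\delta_0+\theta_0)$; with $\theta_0$ picked as a small multiple of $k^2\alpha$ I would conclude $|C_V|\leq 3k^2\alpha v$ and hence $|\Gamma^+_{H_\Phi}(V)|\leq 3k^3\alpha v$. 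The main obstacle will be calibrating the pair $(\delta_0, \theta_0)$ so that the admissibility condition $\delta_0\theta_0\log(\theta_0/k^2\alpha) > \log\alpha + 3\log k$ holds uniformly across the full range $1\leq k^2\alpha\leq 2^{k/300}/k$ while still yielding the sharp leading constant $3$; the freedom to choose $|Y|$ anywhere in $[\log n, v]$ (and thus $\delta_0$ anywhere in $[1, v/\log n]$) is what makes this feasible, letting one take $\delta_0$ large when $k^2\alpha$ is near its lower bound and $\delta_0=1$ when $k^2\alpha$ is large.
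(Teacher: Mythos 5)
Your proposal follows essentially the same route as the paper: bound $\aabs{\Gamma^+_{H_\Phi}(V)}$ by $k$ times the number of clauses meeting $V$, pad small $V$ to a connected superset of size $\lceil k\log n\rceil$, extract a connected clause cover $Y'$ of $V$ with $\log n\le \aabs{Y'}\le \aabs{V}$, and apply Lemma~\ref{lem:gvartree} with $U=V$ and $Z$ the remaining clauses meeting $V$. The calibration you flag as the main obstacle is resolved in the paper by simply fixing $\delta_0=1$ and $\theta_0=2k^2\alpha$ (and note that the freedom you invoke, $\delta_0$ up to $v/\log n$, is in any case limited to $\delta_0\le k$, since any clause cover of $V$ must contain at least $\aabs{V}/k$ clauses).
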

\begin{proof}
Let $\delta_0=1$, and $\theta_0=2k^2\alpha$. Since $ \delta_0\theta_0\log (\theta_0/k^2\alpha) > \log \alpha+3\log k$, \whp we have that $\Phi$ satisfies the property in Lemma~\ref{lem:gvartree}. We will show the conclusion of the lemma for all such $\Phi$.

  Let $V$ be a connected set of variables and $Y$ be the set of neighbours of $V$ in the factor graph, i.e.,  $Y=\{c\in\Cl\mid\var(c)\cap V\neq\emptyset\}$. Clearly $\aabs{\Gamma^+_{H_\Phi}(V)}\le k\abs{Y}$ and hence it suffices to show that $\abs{Y}\le  3k^2\alpha \max\{\abs{V},k\log n\}$. There are two cases depending on the size of $V$.
  \begin{itemize}
    \item $\abs{V}\ge k\log n$. Since $V$ is a connected set of variables,
      there exists a set $Y'\subseteq Y$ such that $\abs{V}/k\le \abs{Y'}\le \abs{V}$ and $V\cup Y'$ is connected in the factor graph of $\Phi$. Hence, $Y'$ is a connected set of clauses and $\abs{Y'}\geq \log n$. Let $Z:=Y\setminus Y'$. If $\abs{Z}\geq \theta_0\abs{V}$, then we obtain a contradiction to Lemma~\ref{lem:gvartree} (using the sets $U=V, Y',Z$).
      Thus, $\abs{Z} \le \theta_0\abs{V}$ and $\abs{Y}\le \abs{Y'}+\abs{Z}\le 3k^2\alpha\abs{V}$.
    \item Otherwise $\abs{V}< k\log n$.
   If  $
    \aabs{\Gamma^+_{H_\Phi}(V)} < \lceil k \log n \rceil$ then we are finished. Otherwise, consider
     an arbitrary connected $V'\supset V$ such that $\abs{V'}=\ceil{k\log n}$.       By the argument of the previous case, the set of neighbours of $V'$ in the factor graph, denoted $Y''$, 
      satisfies that $\abs{Y''}\le3k^2\alpha\abs{V'}\le 3k^3\alpha\log n$.
      Thus, $\abs{Y}\le\abs{Y''}\le 3k^3\alpha\log n$.
  \end{itemize}
	This completes the proof.
\end{proof}

Now we can show Lemma~\ref{lem:depthoftree}, which we restate here for convenience.
Recall that $\partialVal^*$ is from Lemma~\ref{lem:partial-assignment}.
{\renewcommand{\thetheorem}{\ref{lem:depthoftree}}
	\begin{lemma}
		\statelemdepthoftree
	\end{lemma}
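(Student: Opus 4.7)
The plan is to reduce the claim to the expansion bound of Lemma~\ref{lem:Gamma-H} by establishing the containment $\Vset(\rho) \subseteq \Gamma^+_{H_\Phi}(\VI(\rho))$ for every node $\rho \in \=T^\partialVal$. I would prove this by induction on the depth of $\rho$. The base case $\rho=\rho^*$ is immediate since $\Vset(\rho^*)=\VI(\rho^*)=\{v^*\}$. For the inductive step, the only variable added to $\Vset$ in passing from $\rho$ to a child $\rhonext$ is the first variable $u$ (line~\ref{stepVset} of Algorithm~\ref{alg:children}). Because $u$ lies in the first clause $c$ of $\rho$ and $c$ satisfies $\var(c)\cap\VI(\rho)\neq\emptyset$, the variable $u$ shares a clause with some variable of $\VI(\rho)$, so $u\in\Gamma^+_{H_\Phi}(\VI(\rho))\subseteq\Gamma^+_{H_\Phi}(\VI(\rhonext))$. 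Combined with the inductive hypothesis and the monotonicity $\VI(\rho)\subseteq\VI(\rhonext)$, this yields the containment.

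With the containment in hand, I would invoke Lemma~\ref{lem:Gamma-H}, which applies because $\VI(\rho)$ is always a connected subset of $\Va$ (property~\ref{P:connected}) and is nonempty (it contains $v^*$ by property~\ref{P:xinVset}). Conditioning on the w.h.p.\ event furnished by Lemma~\ref{lem:Gamma-H}, which depends only on $\Phi$ and therefore holds simultaneously for all prefixes $\partialVal$ of $\partialVal^*$, we obtain
\[
\aabs{\Vset(\rho)} \le \aabs{\Gamma^+_{H_\Phi}(\VI(\rho))} \le 3k^3\alpha\max\bigl\{\aabs{\VI(\rho)},\,k\log n\bigr\}.
\]
Since $L=C_0(3k^2\HighD)\lceil\log(n/\eps)\rceil\ge k\log n$ for sufficiently large $k$ and $C_0$, the right-hand side is at most $3k^3\alpha L$ whenever $\aabs{\VI(\rho)}\le L$.

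It then remains only to split on whether $\rho$ is a truncating node. If $\rho$ is not truncating (so $\rho$ is an internal node or a leaf), then $\aabs{\VI(\rho)}\le L$ by Definition~\ref{def:leaf}, and the previous display gives $\aabs{\Vset(\rho)}\le 3k^3\alpha L$. If $\rho\in\calT$, its parent $\rho'$ cannot itself be truncating (truncating nodes have no children), so the non-truncating case applies to $\rho'$, giving $\aabs{\Vset(\rho')}\le 3k^3\alpha L$; since $\Vset$ grows by exactly one variable from $\rho'$ to $\rho$, we conclude $\aabs{\Vset(\rho)}\le 3k^3\alpha L+1$, completing the argument. The only real subtlety — which I do not expect to be a genuine obstacle — is ensuring that Lemma~\ref{lem:Gamma-H} is being applied to a connected set, and this is exactly what property~\ref{P:connected} guarantees.
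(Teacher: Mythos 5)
Your proposal is correct and follows essentially the same route as the paper: establish $\Vset(\rho)\subseteq\Gamma^+_{H_\Phi}(\VI(\rho))$, invoke Lemma~\ref{lem:Gamma-H} via the connectivity property~\ref{P:connected}, and handle truncating nodes through their (non-truncating) parent, whose $\Vset$ differs by exactly one variable. The only difference is presentational — you prove the containment by induction on depth using the first clause of each node, whereas the paper traces each $u\in\Vset(\rho)\setminus\VI(\rho)$ back to the ancestor at which it was the first variable; these are the same argument.
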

	\addtocounter{theorem}{-1}
}

\begin{proof} 
  W.h.p.~we have that $\Phi$ satisfies the property in Lemma~\ref{lem:Gamma-H}. We will show the conclusion of the lemma for all such $\Phi$.
	
  Let $\partialVal$ be a prefix of $\partialVal^*$ and $\rho$ be a node in $\=T^\partialVal$. 
  We first claim that $\Vset(\rho)\subseteq\Gamma^+_{H_\Phi}(\VI(\rho))$. 
  To prove the claim, we'll consider any $u\in \Vset(\rho) \setminus \VI(\rho)$ and we will show that there is a clause~$c$
  containing~$u$ and containing a variable in~$\VI(\rho)$.
  
  We first rule out the case that $u=v^*$ by noting (via Property~\ref{P:xinVset}) that $v^*\in\VI(\rho)\cap \Vset(\rho)$.
 
So consider $u\in \Vset(\rho) \setminus \VI(\rho)$ and let  $\rho'$ be the
ancestor of~$\rho$ in the coupling tree such that $u$ is the first variable of $\rho'$.
The definition of the coupling tree guarantees that~$\rho'$ is uniquely defined and that it is a proper   ancestor of~$\rho$ --- 
the definition of ``first variable'' guarantees that $u\notin \Vset(\rho')$, but for all proper descendants~$\rho'''$ of~$\rho'$,
$u\in \Vset(\rho''')$.

  Let $\rho''$ be the child of $\rho'$ on the path to $\rho$. 
  We will show that there is a clause~$c$ containing~$u$ and containing a variable in~$\VI(\rho')$.
  The claim will then follow from the fact that $\VI(\rho)$ contains $\VI(\rho')$.
  The existence of such a clause~$c$ is immediate from the definition of ``first variable'' --- indeed $c$ is the ``first clause'' of~$\rho'$.
  Thus, we have proved the claim.

  By \ref{P:connected}, $V_I(\rho)$ is a connected set of variables.
  Thus, by Lemma~\ref{lem:Gamma-H} and the claim,
  \begin{align*}
    \aabs{\Vset(\rho)}\le \aabs{\Gamma^+_{H_\Phi}(\VI(\rho))}\le 3k^3\alpha \max\{\aabs{\VI(\rho)},k\log n\}.
  \end{align*}
  If $\rho\not\in\+T$, then $\abs{\VI(\rho)}\le L$ and the lemma holds.
  Otherwise, apply the above to the parent of $\rho$, which finishes the proof.
\end{proof}

\subsection{Proofs for \texorpdfstring{$\+D(G_\Phi)$}{D(G\_Phi)}}
\label{sec:DG}

In this section we show Lemma~\ref{lem:numtrees} and Lemma~\ref{lem:large3tree}.

\begin{lemma}  \label{lem:<=4connected}
  Let $G$ be a connected graph.
  For any connected induced subgraph $G'=(V',E')$ of $G^{\leq 4}$, 
  there exists a connected induced subgraph of $G$ with size at most $4\abs{V'}$ containing all vertices in $V'$.
\end{lemma}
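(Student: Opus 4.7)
The plan is to build the required subgraph by replacing each edge of a spanning tree of $G'$ with a short path in $G$, and then taking the induced subgraph on the union of all vertices involved.

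First I would fix a spanning tree $T$ of the connected graph $G'$, which has exactly $|V'|-1$ edges. For each edge $\{u,v\}\in E(T)$, the fact that $G'$ is an induced subgraph of $G^{\leq 4}$ means that $\{u,v\}$ is an edge of $G^{\leq 4}$, so by Definition~\ref{def:graph-notations} there is a path $P_{u,v}$ in $G$ from $u$ to $v$ of length at most $4$. In particular, $P_{u,v}$ uses at most $3$ internal vertices (those strictly between $u$ and $v$).

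Next I would define $W := V' \cup \bigcup_{\{u,v\}\in E(T)} V(P_{u,v})$ and let $H$ be the subgraph of $G$ induced by $W$. By construction $V'\subseteq W$, and
\begin{align*}
|W| &\leq |V'| + 3(|V'|-1) < 4|V'|,
\end{align*}
giving the required size bound. To see that $H$ is connected, note that for every edge $\{u,v\}$ of $T$, the path $P_{u,v}$ lies entirely in $W$ (by definition of $W$) and its edges are edges of $G$, so all the internal edges of $P_{u,v}$ appear in $H$. Thus in $H$ every pair of vertices that are $T$-adjacent in $V'$ is connected by a path, and since $T$ spans $V'$ every pair of vertices in $V'$ is connected in $H$; any additional vertex of $W\setminus V'$ sits on some such path and is therefore also connected to $V'$ in $H$.

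The argument is essentially a direct expansion of edges of $G^{\leq 4}$ into paths of length at most $4$ in $G$, so there is no real obstacle: the only point to verify carefully is the counting of internal vertices (at most $3$ per tree edge) and the fact that taking the induced subgraph on $W$ still contains all the path edges and hence is connected.
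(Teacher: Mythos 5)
Your proof is correct. The key fact is the same one the paper uses --- every edge of $G^{\leq 4}$ between vertices of $V'$ can be realised by a path of length at most $4$ in $G$ --- but the organisation differs: the paper argues by induction on $|V'|$, at each step peeling off a vertex $v$ whose removal keeps $G'$ connected in $G^{\leq 4}$, applying the hypothesis to the rest, and reattaching $v$ via a path of length at most $4$; you instead give a one-shot construction, fixing a spanning tree of $G'$ and replacing each of its $|V'|-1$ edges by such a path, then taking the induced subgraph on the union. Both arguments are sound; yours avoids the (easy but implicit) existence claim of a non-separating vertex that the induction needs, and your count $|V'|+3(|V'|-1)=4|V'|-3$ is in fact marginally sharper than the stated bound $4|V'|$, which the inductive proof also attains but does not record. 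Your connectivity verification (paths lie wholly in $W$, their edges are $G$-edges between vertices of $W$, hence present in $G[W]$) is complete.
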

\begin{proof}
  We do an induction on $\ell=\abs{V'}$. 
  If $\ell = 1$ the claim holds since $G'$ is also an induced subgraph of $G$. 
  If $\ell > 1$, assume that the claim holds for all induced subgraphs of $G^{\leq 4}$ with at most $\ell - 1$ vertices. 
  Let $v$ be a vertex of $G'$ such that $G'[V'\setminus\{v\}]$ is connected in $G^{\le 4}$.
  Thus, by the induction hypothesis, 
  there exists a connected induced subgraph $G''=(V'',E'')$ of $G$ such that $(V'\setminus \{v\})\subseteq V''$ and $\abs{V''} \leq 4(\ell - 1)$. 
  Since $G'$ is connected in $G^{\leq 4}$, there exists a vertex $u \in (V'\setminus \{v\})$ such that $\dist_G(u,v)\leq 4$. 
  Let $U = V'' \cup \{\text{vertices on the path from $u$ to $v$ in $G$}\}$. 
  Then the induced subgraph in $G$ whose vertex set is $U$ is connected and $\abs{U} \leq 4\ell$. 
  Thus the claim holds for $G'$.  
\end{proof}

\begin{corollary}\label{cor:g3tog}
  Let $G$ be a connected graph and $v\in V(G)$ be a vertex. 
  Let $n_{G, \ell}(v)$ denote the number of connected induced subgraphs of $G$ with size $\ell$ containing $v$. Then
  \[
  n_{G^{\leq 4}, \ell}(v) \leq 2^{\ell'} n_{G, \ell'}(v) \mbox{ where } \ell':=\min\{4\ell,\abs{V(G)}\}.
  \]
\end{corollary}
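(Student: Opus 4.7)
The plan is to deduce the corollary from Lemma~\ref{lem:<=4connected} by defining a many-to-one map from connected induced subgraphs of $G^{\leq 4}$ of size $\ell$ containing $v$ to connected induced subgraphs of $G$ of size exactly $\ell'$ containing $v$, and then showing that each target has at most $2^{\ell'}$ preimages.

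First, take any connected induced subgraph $V'$ of $G^{\leq 4}$ of size $\ell$ containing $v$. By Lemma~\ref{lem:<=4connected}, there exists a connected induced subgraph $S_0$ of $G$ with $V'\subseteq V(S_0)$ and $|V(S_0)|\le 4\ell$. Since $V(S_0)\subseteq V(G)$, we also have $|V(S_0)|\le |V(G)|$, hence $|V(S_0)|\le \ell'$. If $|V(S_0)|<\ell'$, then in particular $V(S_0)\subsetneq V(G)$ (because $\ell'\le |V(G)|$), and since $G$ is connected there exists a vertex outside $V(S_0)$ adjacent to it in $G$; we add such a vertex to $S_0$ and repeat until we obtain a connected induced subgraph $S$ of $G$ of size exactly~$\ell'$ containing $V'$ (and hence $v$). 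Call the resulting subgraph $f(V')$.

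This yields a map $f$ from the set counted by $n_{G^{\leq 4},\ell}(v)$ into the set counted by $n_{G,\ell'}(v)$. For any connected induced subgraph $S$ of $G$ with $|V(S)|=\ell'$ and $v\in V(S)$, every preimage $V'\in f^{-1}(S)$ is a size-$\ell$ subset of $V(S)$ containing $v$, so
\[
|f^{-1}(S)|\le \binom{\ell'-1}{\ell-1}\le 2^{\ell'}.
\]
Summing over $S$ in the image gives $n_{G^{\leq 4},\ell}(v)\le 2^{\ell'}\,n_{G,\ell'}(v)$, as required.

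The only mildly delicate step is the case $\ell'=|V(G)|<4\ell$, where one might worry that $S_0$ cannot be extended to size $\ell'$. In that regime, however, $\ell'=|V(G)|$ forces $S=V(G)$ as the unique possible target, and the preimage bound $\binom{|V(G)|-1}{\ell-1}\le 2^{|V(G)|}=2^{\ell'}$ still applies, so the uniform argument above goes through without modification.
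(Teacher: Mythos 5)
Your proof is correct and follows essentially the same route as the paper: apply Lemma~\ref{lem:<=4connected}, pad the resulting subgraph of $G$ up to size exactly $\ell'$ using connectivity, and bound the number of preimages of each size-$\ell'$ target by the number of size-$\ell$ subsets containing $v$, which is at most $2^{\ell'}$. The only cosmetic difference is that you use the slightly tighter count $\binom{\ell'-1}{\ell-1}$ where the paper uses $\binom{\ell'}{\ell-1}$; both are at most $2^{\ell'}$.
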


\begin{proof}
  By Lemma~\ref{lem:<=4connected}, for any connected induced subgraph $G'$ of $G^{\leq 4}$ with size $\ell$ containing $v$,
  there exists a connected induced subgraph $G''$ of $G$ such that $V(G')\subseteq V(G'')$ and $\abs{V(G'')} \leq 4\ell$. 
  In fact we can further assume that $\abs{V(G'')} = \ell'$ since otherwise we can keep adding vertices from neighbours of $G''$ into $G''$ until $\abs{V(G'')} = \ell'$. For any such $G''$ the number of size~$\ell$ subsets containing~$v$
  (corresponding to potential graphs~$G'$  which would be mapped to~$G''$ by the above construction)
  is at most $\binom{\ell'}{\ell-1} \leq 2^{\ell'}$, giving the desired upper bound.
\end{proof}

{\renewcommand{\thetheorem}{\ref{lem:numtrees}}
	\begin{lemma}
		\statelemnumtrees
	\end{lemma}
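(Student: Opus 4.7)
The plan is to derive this bound by combining Corollary~\ref{cor:g3tog} with Lemma~\ref{lem:randomsubgraph}, since the size-$\ell$ subsets $T\in\+D(G_\Phi)$ containing $c$ are a subset of the size-$\ell$ connected induced subgraphs of $G_\Phi^{\leq 4}$ containing $c$.

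First, I would observe that any $T\in\+D(G_\Phi)$ with $c\in T$ and $|T|=\ell$ satisfies, by Definition~\ref{def:disjoint}, the property that $G_\Phi^{\leq 4}[T]$ is connected. Hence, dropping the disjointness condition \ref{item:DG-disjoint} only enlarges the count, so the number of $T$'s we wish to bound is at most $n_{G_\Phi^{\leq 4},\ell}(c)$, the number of size-$\ell$ connected induced subgraphs of $G_\Phi^{\leq 4}$ containing $c$. Since $T$ is connected in $G_\Phi^{\leq 4}$, all vertices of $T$ lie in the same connected component of $G_\Phi$ as $c$ (any edge in $G_\Phi^{\leq 4}$ corresponds to a path in $G_\Phi$), so without loss of generality we may restrict attention to that connected component, to which Corollary~\ref{cor:g3tog} applies.

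Next, I would apply Corollary~\ref{cor:g3tog} to conclude that $n_{G_\Phi^{\leq 4},\ell}(c) \leq 2^{\ell'} n_{G_\Phi,\ell'}(c)$ with $\ell':=\min\{4\ell,|V(G_\Phi)|\}$. In the regime of interest we have $\ell'=4\ell$: recall $m=\lfloor\alpha n\rfloor$, and by assumption $\ell$ is polylogarithmic in $n$ (and in any case if $4\ell \ge m$ then one checks the bound $(18k^2\alpha)^{4\ell}$ trivially holds because there are at most $\binom{m}{\ell}$ subsets in total). Then I would invoke Lemma~\ref{lem:randomsubgraph}, applied with size parameter $4\ell$, which is also at least $\log n$. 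That lemma asserts that w.h.p.\ over $\Phi$, every clause $c$ satisfies $n_{G_\Phi,4\ell}(c) \leq (9k^2\alpha)^{4\ell}$.

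Combining the two bounds gives, w.h.p., for every clause $c\in \calC$ (and in particular every $c\in \Cl_\good^\partialVal$),
\begin{align*}
\#\bigl\{T\in \+D(G_\Phi):\,c\in T,\,|T|=\ell\bigr\} \leq 2^{4\ell} (9k^2\alpha)^{4\ell} = (18k^2\alpha)^{4\ell},
\end{align*}
which is the desired bound. There is no real obstacle here: the lemma is a straightforward composition of the already-proved Lemma~\ref{lem:randomsubgraph} (the heavy probabilistic step) with the deterministic Corollary~\ref{cor:g3tog} (which converts between $G_\Phi$ and $G_\Phi^{\leq 4}$). The only small care needed is to verify that the parameter $4\ell$ passed to Lemma~\ref{lem:randomsubgraph} still satisfies the hypothesis $4\ell\ge \log n$, and to justify the reduction to a single connected component of $G_\Phi$, both of which are immediate.
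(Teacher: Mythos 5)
Your proposal is correct and follows exactly the paper's own argument: the paper's proof of this lemma is simply to combine Corollary~\ref{cor:g3tog} with Lemma~\ref{lem:randomsubgraph}, which is what you do (with some added but routine bookkeeping about the parameter $4\ell$ and the factor $2^{4\ell}\cdot(9k^2\alpha)^{4\ell}=(18k^2\alpha)^{4\ell}$).
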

	\addtocounter{theorem}{-1}
}
\begin{proof}
Just combine Corollary~\ref{cor:g3tog} with Lemma~\ref{lem:randomsubgraph}. \end{proof}

In the remainder of this section, we will focus on showing Lemma~\ref{lem:large3tree}.
We will need the following ingredients.

\begin{lemma}\label{lem:gvaredgesize}
  For any set $Y\subseteq \+C_{\good}$ of good  clauses,
  the size of a maximum independent set in $G_{\Phi,\good}[Y]$ is at least $ {\aabs{Y}}/{(k\Delta)}$.
\end{lemma}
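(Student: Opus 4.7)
The plan is to combine the bound on the maximum degree of $G_{\Phi,\good}$ (noted right after Definition~\ref{def:Gphi}) with the standard greedy lower bound on the independence number.

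First I would recall that, by Definition~\ref{def:HD}, every good variable appears in at most $\Delta-1$ clauses, and each good clause has exactly $k$ variables. Hence any clause $c\in\Cl_\good$ shares a good variable with at most $k(\Delta-1)$ other clauses, so the maximum degree of $G_{\Phi,\good}$ (and therefore of any induced subgraph $G_{\Phi,\good}[Y]$) is at most $k(\Delta-1)$.

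Next I would invoke the classical greedy/Caro--Wei style lower bound: any graph $H$ with maximum degree at most $d$ contains an independent set of size at least $|V(H)|/(d+1)$ (repeatedly add a vertex of smallest degree to the independent set and delete it together with its neighbourhood). Applying this to $H=G_{\Phi,\good}[Y]$ with $d=k(\Delta-1)$ yields an independent set of size at least
\[
\frac{|Y|}{k(\Delta-1)+1}\;\ge\;\frac{|Y|}{k\Delta},
\]
since $k(\Delta-1)+1=k\Delta-k+1\le k\Delta$ for $k\ge 1$. This completes the proof.

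There is no real obstacle here; the lemma is essentially a direct corollary of the degree bound on $G_{\Phi,\good}$, so the only thing to be careful about is the off-by-one between ``at least $\Delta$ occurrences'' (the high-degree threshold) and ``at most $\Delta-1$ occurrences'' (the bound used for good variables), which is already absorbed by the inequality $k(\Delta-1)+1\le k\Delta$.
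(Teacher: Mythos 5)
Your proof is correct and follows essentially the same route as the paper: bound the maximum degree of $G_{\Phi,\good}[Y]$ by $k(\HighD-1)$ and apply the standard greedy bound that a graph of maximum degree $d$ has an independent set of size at least $|V|/(d+1)$, noting $k(\HighD-1)+1\le k\HighD$. The only cosmetic difference is that the paper bounds the number of occurrences of a good variable by $\HighD$ rather than $\HighD-1$, which yields the same degree bound and conclusion.
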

\begin{proof}
  Let $c$ be a clause in $Y$. Note that $c$ contains at most $k$ variables in $\Va_\good$ and each variable in $\Va_\good$ is contained in  at most $\HighD$ clauses. So the degree of $c$ in $G_{\Phi, \good}$ is at most $k(\Delta-1)$. The result follows since every $n$-vertex graph of maximum degree $d$ contains an independent set of size at least $n/(d+1)$.
\end{proof}
 
We will also use the following properties of $\failedclauses(\rho)$.

\begin{lemma}\label{lem:connected}
  If $\rho$ is a node of the coupling tree, then
  the following  properties hold.
  \begin{enumerate}
    \item $G_\Phi^{\leq 2}[\failedclauses(\rho)]$ is connected. \label{item:f-rho-connect}
    \item $|\failedclauses(\rho)| \geq |\VI(\rho)|/k$.\label{item:f-rho-size}
  \end{enumerate}
\end{lemma}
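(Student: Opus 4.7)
The plan is to prove both parts by induction on the depth of $\rho$ in the coupling tree, using Algorithm~\ref{alg:children} to handle the inductive step.

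Part~\ref{item:f-rho-size} will follow quickly from property~\ref{P:VIvar}, which asserts that every $u\in \VI(\rho)$ lies in some clause of $\failedclauses(\rho)$. Double-counting the incidences between $\VI(\rho)$ and $\failedclauses(\rho)$, and using the fact that each clause has at most $k$ variables, immediately yields $|\VI(\rho)| \leq k\,|\failedclauses(\rho)|$. This step requires no induction.

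For part~\ref{item:f-rho-connect}, the base case is the root $\rho^*$: here $\failedclauses(\rho^*)$ consists exactly of the clauses containing $v^*$, which are pairwise adjacent in $G_\Phi$ and therefore connected in $G_\Phi^{\leq 2}$. For the inductive step, I will run a second induction on the execution of Algorithm~\ref{alg:children}, maintaining two invariants throughout: (i) for every $v\in \VI$, some $c\in \failedclauses$ satisfies $v\in \var(c)$; (ii) $G_\Phi^{\leq 2}[\failedclauses]$ is connected. Invariant~(i) is clearly preserved, because each line that adds a variable to $\VI$ (lines~\ref{step:VI-add-u}, \ref{step:VI-add-fail}, \ref{step:VI-add-bad}) is paired with a line that simultaneously adds a clause containing that variable to $\failedclauses$ (lines~\ref{step:failed-disagree}, \ref{step:failed-dissatisfy}, \ref{step:failed-bad}).

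The real content is maintaining invariant~(ii), which amounts to checking that every newly added clause $c'$ is within $G_\Phi^{\leq 2}$-distance at most $1$ of some clause already in $\failedclauses$. For lines~\ref{step:failed-dissatisfy} and~\ref{step:failed-bad}, the triggering condition is $\var(c')\cap \VI \neq \emptyset$; picking any $v$ in this intersection, invariant~(i) produces a $c_0\in \failedclauses$ with $v\in\var(c_0)$, so $c'$ and $c_0$ share $v$ and are adjacent in $G_\Phi$. The remaining case, line~\ref{step:failed-disagree}, is the main obstacle: here $c'$ is any clause containing $u$ (the first variable of the parent $\rho$), and $c'$ need not share a variable with any $c_0\in \failedclauses(\rho)$. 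To handle it, I will use the first clause $c$ of $\rho$, which contains both $u$ (since $u$ is its first marked variable) and some $v\in \VI(\rho)$ (by the definition of ``first clause''). Property~\ref{P:VIvar} applied to $\rho$ yields a $c_0\in \failedclauses(\rho)$ with $v\in \var(c_0)$, and then the length-$2$ walk from $c'$ to $c_0$ through $c$ in $G_\Phi$ (using $u$ and $v$ as the shared variables) certifies adjacency in $G_\Phi^{\leq 2}$, closing the induction.
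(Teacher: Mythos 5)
Your proposal is correct and follows essentially the same route as the paper: Part (2) via double counting with property (P6), and Part (1) by induction over the construction in Algorithm~\ref{alg:children}, handling Lines~\ref{step:failed-dissatisfy} and~\ref{step:failed-bad} through the re-established property (P6) and Line~\ref{step:failed-disagree} through the length-$2$ path via the first clause of $\rho$. No gaps.
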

\begin{proof}
  We show Item \eqref{item:f-rho-connect} by induction on the size of $\Vset(\rho)$.
  The base case
  where $|\Vset(\rho)|=1$
   is trivial since, in this case, $\rho=\rho^*$ and $\failedclauses(\rho^*)$ is the set of clauses containing~$v^*$.
  For the  inductive step,    we 
  consider a  node~$\rho'=\rho_{\tau_1,\tau_2}$ being created as a new child of~$\rho$ by Algorithm~\ref{alg:children} and
  we consider how  clauses are added to $\failedclauses(\rho')$. 
  We show that each part of the algorithm that adds clauses to $\failedclauses(\rho')$ maintains
  the property that $G_\Phi^{\leq 2}[\failedclauses(\rho')]$ is connected. Before Line~5, this holds by the inductive hypothesis.
   \begin{itemize}
    \item First, consider the addition of clauses in Line  \ref{step:failed-disagree}.
   All clauses~$c'$ that are added by this line contain the first variable~$u$ of~$\rho$ 
   which is in the first clause~$c$ of~$\rho$   
   so to finish it suffices to show that $\failedclauses(\rho)$
   has a clause which shares a variable with~$c$. 
   Since $\var(c) \cap \VI(\rho)$ is non-empty, it suffices to show that 
   every variable in $\VI(\rho)$ is contained in a clause in $\failedclauses(\rho)$. This is true by Property~\ref{P:VIvar}.
    \item Next, consider the addition of clauses in Line~\ref{step:failed-dissatisfy}.
    It is important to note that, after the loop containing 
     Line  \ref{step:failed-disagree},
 Property~\ref{P:VIvar} has been re-established. That is, for any $u''\in \VI$     
 there is a clause $c''\in \failedclauses$ such that $u''\in \var(c'')$.
  All clauses $c'$ added to $\failedclauses$ in   Line~\ref{step:failed-dissatisfy}
  have variables in~$\VI$ so the introduction of~$c'$ leaves 
  $G_\Phi^{\leq 2}[\failedclauses ]$ connected. Moreover, the 
  subsequent addition of variables from~$\var(c')$ to $\VI$ maintains  Property~\ref{P:VIvar}. 
  \item Finally, consider the addition of clauses in Line~ \ref{step:failed-bad}.
  As in the previous case, 
  Property~\ref{P:VIvar}   guarantees that the introduction of clauses to $\failedclauses$
  leaves 
  $G_\Phi^{\leq 2}[\failedclauses ]$ connected. Moreover, the 
  subsequent addition of variables  to $\VI$ maintains  Property~\ref{P:VIvar}.   
  \end{itemize}
  
  Item \eqref{item:f-rho-size} is a direct consequence of \ref{P:VIvar}.
\end{proof}

We also need the following expansion property (which is a strengthening of Lemma~\ref{lem:fv553dwr} in the case that $b=k$).

\begin{lemma}[\protect{\cite[Lemma 2.3]{CF14}}]\label{lem:expansion}
  For all sufficiently large $k$, \whp over the choice of $\Phi$,  
  for any $Y\subseteq\Cl$ such that $\abs{Y}\le  {n}/{k^2}$,
  $\abs{\var(Y)}\ge 0.9k\abs{Y}$.
\end{lemma}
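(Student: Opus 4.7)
The plan is to prove Lemma~\ref{lem:expansion} by a straightforward first-moment argument (union bound over pairs $(Y,U)$). The key observation is that in the random formula $\Phi$, each of the $ky$ literals in a fixed set $Y$ of $y$ clauses is chosen independently and uniformly at random, so the probability that every such literal has its underlying variable in a fixed set $U\subseteq \Va$ with $|U|=u$ is exactly $(u/n)^{ky}$. Note that $|\var(Y)|<0.9k|Y|$ implies the existence of some set $U$ of size $u=\lceil 0.9ky\rceil-1$ with $\var(Y)\subseteq U$, so it suffices to union-bound over all choices of $Y$ and $U$ for this particular $u$.

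First I would fix $y\in\{1,\ldots,\lfloor n/k^2\rfloor\}$ and set $u=\lceil 0.9ky\rceil-1$. The union bound yields
\[
\Pr_\Phi\bigl[\exists Y\colon |Y|=y,\ |\var(Y)|<0.9ky\bigr] \le \binom{m}{y}\binom{n}{u}\bigl(u/n\bigr)^{ky}.
\]
Using $\binom{m}{y}\le (em/y)^y$, $\binom{n}{u}\le (en/u)^u$, and setting $\zeta:=y/n\le 1/k^2$, the bound simplifies (after cancelling $u^u$ against $u^{-u}$ inside $(u/n)^{u}$) to an expression of the form
\[
\Bigl[\, e\alpha\cdot e^{0.9k}\cdot (0.9k)^{0.1k}\cdot \zeta^{0.1k-1}\,\Bigr]^{y}.
\]
Since $\zeta\le 1/k^2$ and $0.1k-1>0$ for $k$ large, we have $\zeta^{0.1k-1}\le k^{2-0.2k}$. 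Combined with $\alpha\le 2^{k/300}$ and the other factors, the base of the exponential is bounded above by $k^{-\Omega(k)}$, which is much smaller than $1/n$ for sufficiently large $k$. Summing the resulting geometric-type series over $y\in\{1,\ldots,\lfloor n/k^2\rfloor\}$ (and crucially using the extra polynomial-in-$n$ suppression from $\zeta^{0.1k-1}$ to handle small values of $y$) gives total probability $o(1)$.

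The main technical obstacle is balancing the calculation in the last step: one must verify uniformly over the allowed range of $y$ that the base of the exponential is below $1/n^{1/y}$, so that the $y$-th power is $o(1/n)$ and a final union bound over $y$ works. The hypothesis $|Y|\le n/k^2$ is essential precisely to bound $\zeta$ away from $1$, providing the extra expansion margin needed to beat the $e^{0.9k}$ and $k^{0.1k}$ factors coming from the entropy of choosing $U$. Since this lemma is cited as \cite[Lemma 2.3]{CF14}, one may alternatively invoke the published proof, which follows essentially the scheme above.
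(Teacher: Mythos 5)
Your union-bound/first-moment argument is correct and is essentially the standard proof of the cited result \cite[Lemma 2.3]{CF14}; the paper itself gives no proof of this lemma, relying entirely on that citation. The one point requiring care—that for fixed $k$ the base $k^{-\Omega(k)}$ alone is a constant in $n$ and so does not by itself give an $o(1)$ failure probability for small $y$—is exactly the issue you identify and resolve via the polynomial-in-$n$ suppression from the $\zeta^{0.1k-1}$ factor, so the argument goes through.
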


We are now ready to prove Lemma~\ref{lem:large3tree}, which we restate here.
{\renewcommand{\thetheorem}{\ref{lem:large3tree}}
	\begin{lemma}
		\statelemlargetree
	\end{lemma}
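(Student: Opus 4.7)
Let $\ell := C_0\lceil\log(n/\eps)\rceil$, so $L = 3k^2\Delta\ell$, and set $F := \failedclauses(\rho)$. By Lemma~\ref{lem:connected}, $G_\Phi^{\leq 2}[F]$ is connected and $|F|\geq |\VI(\rho)|/k\geq L/k = 3k\Delta\ell$; also $c^*\in F$, since $c^*$ is the first clause of the root $\rho^*$ and hence lies in $\failedclauses(\rho^*)\subseteq\failedclauses(\rho)$. The plan is first to control the number of bad clauses inside $F$, and then to extract $T$ by a careful greedy BFS from $c^*$.

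To bound the bad clauses, I apply Lemma~\ref{lem:connected-bad-size}, which however requires $G_\Phi$-connectivity rather than only $G_\Phi^{\leq 2}$-connectivity. I enlarge $F$ to a $G_\Phi$-connected set $Y\supseteq F$ by inserting one intermediate clause per edge of a spanning tree of $G_\Phi^{\leq 2}[F]$: each such edge joins two clauses at $G_\Phi$-distance at most $2$, so one bridging clause per edge suffices and $|Y|\leq 2|F|-1$. For $C_0$ large enough, $|Y|$ lies between $24000\log n$ and $n/k^2$, so the expansion Lemma~\ref{lem:expansion} gives $|\var(Y)|\geq 0.9k|Y|\geq 21600k\log n$, and Lemma~\ref{lem:connected-bad-size} then yields $|F\cap\Cl_\bad|\leq |Y\cap\Cl_\bad|\leq |Y|/12\leq |F|/6$; in particular, $|F\cap\Cl_\good|\geq 5|F|/6\geq (5/2)k\Delta\ell$.

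I then build $T$ greedily along a BFS tree $\tau$ of $G_\Phi^{\leq 2}[F]$ rooted at $c^*$, visiting clauses in BFS order and preferring good clauses over bad ones inside each layer. A visited clause $c$ is added to $T$ provided some ancestor of $c$ within $\tau$-distance at most $2$ already lies in $T$ (which guarantees that $c$ lies within $G_\Phi$-distance $4$ of $T$, so the $G_\Phi^{\leq 4}$-connectivity of $T$ is preserved), and moreover: if $c$ is good, it shares no good variable with any clause already in $T\cap\Cl_\good$; and if $c$ is bad, the current $|T\cap\Cl_\bad|$ is strictly below $\ell/3$. The process halts at $|T|=\ell$. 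Each added good clause blocks at most $k(\Delta-1)$ of its $G_{\Phi,\good}$-neighbours from joining $T$, so the $\geq (5/2)k\Delta\ell$ good clauses available in $F$ yield at least $(5/2)\ell$ candidates for $T\cap\Cl_\good$, comfortably above the target $\ell$, while the asymmetry $|F\cap\Cl_\bad|\leq |F|/6$ combined with good-over-bad tie-breaking ensures the $\ell/3$ bad-budget is never prematurely exhausted. Observation~\ref{obs:goodbad} then gives the required no-shared-good-variable property for the mixed $T$, i.e., $T\in\calD(G_\Phi)$.

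The main obstacle is making the greedy in the last paragraph fully rigorous, namely that the three constraints (the $G_\Phi^{\leq 4}$-connectivity of $T$, the $G_{\Phi,\good}$-independence of $T\cap\Cl_\good$, and $|T\cap\Cl_\bad|\leq\ell/3$) can be maintained simultaneously throughout the BFS. The ancestor-within-$\tau$-distance-$2$ rule handles connectivity; the per-good-clause blocking bound from Lemma~\ref{lem:gvaredgesize} handles independence together with the steady supply of good candidates; and the $1/6$-versus-$1/3$ gap in the bad-clause ratio provides the slack needed to absorb a local concentration of bad clauses at low BFS depths by waiting for the good clauses of the next BFS layer.
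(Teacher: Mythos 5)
Your overall strategy differs from the paper's, and the place where it differs is exactly where the proof breaks down. The paper does not extract $T$ by a local greedy: it takes \emph{all} of $\failedclauses(\rho)\cap\Cl_\bad$, adds $c^*$, and adds a \emph{maximum} independent set $I$ of $G_{\Phi,\good}[U]$ where $U=\failedclauses(\rho)\setminus(\Cl_\bad\cup\Gamma^+_\good(c^*))$; the $G_\Phi^{\leq 4}$-connectivity of $T$ is then proved globally, using the fact that a maximum independent set dominates $U$ (every clause of $U$ meets $\Gamma^+_\good$ of some member of $I$), so a hypothetical split of $T$ at $G_\Phi$-distance $\geq 5$ would split $\failedclauses(\rho)$ at distance $\geq 3$, contradicting Lemma~\ref{lem:connected}. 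This domination argument is the key idea, and your proposal has no substitute for it. Your BFS greedy with the ``ancestor within $\tau$-distance $2$'' rule can get stuck: whenever two consecutive clauses on a branch are skipped (two good clauses each sharing a good variable with clauses already in $T$, or two bad clauses encountered after the $\ell/3$ budget is exhausted --- plausible, since bad clauses cluster in bad components), every clause below them becomes unreachable, no matter how many independent good clauses it contains. Your counting argument (at least $(5/2)k\Delta\ell$ good clauses in $F$, each added clause blocking at most $k(\Delta-1)$ others) bounds only the independence constraint and says nothing about reachability under the connectivity rule, and the ``$1/6$ versus $1/3$ slack'' remark is not an argument that the bad budget never blocks a cut vertex of the BFS tree. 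You in fact name this as ``the main obstacle'' and then assert rather than prove it; that assertion is the content of the lemma, so this is a genuine gap.

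A secondary, more repairable issue: to bound $|F\cap\Cl_\bad|$ you apply Lemma~\ref{lem:expansion} to the enlarged set $Y$ with $|Y|\leq 2|F|-1$, claiming $|Y|\leq n/k^2$ ``for $C_0$ large enough''. Increasing $C_0$ only helps the lower bound; there is no a priori upper bound of order $n/k^2$ on $|\failedclauses(\rho)|$ for a truncating node (truncation is checked only after a child is fully created, and a single creation step can absorb many clauses). One can patch this by applying the expansion lemma to a connected subset of $Y$ of size $\min\{|Y|,\lfloor n/k^2\rfloor\}$ and using $\var$-monotonicity to verify the hypothesis of Lemma~\ref{lem:connected-bad-size}, but as written the step is unjustified. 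Note that the paper sidesteps this entirely: it never bounds the bad fraction of the whole failed set, but instead proves $|T\cap\Cl_\bad|\leq |T|/3$ for the final, trimmed $T$ (of size exactly $C_0\lceil\log(n/\eps)\rceil$) by passing to a $G_\Phi$-connected superset $T'$ with $|T'|\leq 4|T|$ via Lemma~\ref{lem:<=4connected} and then invoking Lemmas~\ref{lem:expansion} and~\ref{lem:connected-bad-size}.
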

	\addtocounter{theorem}{-1}
}
\begin{proof} 
\Whp we have that $\Phi$ satisfies the properties in Lemmas~\ref{lem:connected-bad-size} and~\ref{lem:expansion}. We will show the conclusion of the lemma for all such $\Phi$. Let $\rho \in \calT$  be a node of $\=T^\partialVal$ with $\abs{V_I(\rho)}\geq L$. For a good clause $c$, let $\Gamma_\good^+(c)$ be the set consisting of $c$ and all of its neighbours in $G_{\Phi, \good}$. Recall also from Section~\ref{sec:couplingtree} that $c^*$ is a good clause (being the first clause of the root node $\rho^*$).

Let $U = \failedclauses(\rho)\setminus (\Cl_\bad \cup\Gamma^+_\good(c^*))$ and 
  $I$ be a maximum independent set of $G_{\Phi,\good}[U]$. We let $T = I \cup \{c^*\} \cup (\failedclauses(\rho) \cap \Cl_\bad)$. 
  By Lemma~\ref{lem:gvaredgesize}
  we have $\abs{I} \geq  {\aabs{U}}/(k\HighD)$. By construction $T$ contains $c^*$.

  Next, we show that $T \in \calD(G_\Phi)$. Since $\Gamma^+_\good(c^*)$ and  $U$ are disjoint, and $I$ is an independent set of $G_{\Phi,\good}[U]$, for any $c_1,c_2\in T$ we have that  $\var(c_1) \cap \var(c_2)\cap \Va_\good = \emptyset$ (note that clauses in $\Cl_\bad$ only have bad variables).   It therefore suffices to show that $T$ is connected in $G_\Phi^{\leq 4}$. 
  
  Suppose  for contradiction that $T$ is not connected in  $G_\Phi^{\leq 4}$.  
  Then there exists a partition $(S_1, S_2)$ of $T$ such that $S_1\cup S_2 = T$, 
  $S_1 \cap S_2= \emptyset$ and $\min_{c_1\in S_1, c_2\in S_2}\dist_{G_\Phi}(c_1, c_2) \geq 5$. 
  Let $S'_i = (\cup_{c\in S_i} \Gamma^+_\good(c))\cap \failedclauses(\rho)$ for $i=1,2$. 
  Then we have $\min_{c_1\in S'_1, c_2\in S'_2}\dist_{G_\Phi}(c_1,c_2) \geq 3$.

  Since $I$ is a maximum independent set of $G_{\Phi,\good}[U]$, 
  every clause $c'$ in $U$ has $\Gamma^+_\good(c') \cap I \neq \emptyset$. 
  So $U\subseteq \cup_{c\in I} \Gamma^+_\good(c)$.
  Thus,
  \begin{align*}
    S'_1 \cup S'_2 \supseteq U \cup (\Cl_\bad \cap \failedclauses(\rho)) \cup (\Gamma^+_\good(c^*) \cap \failedclauses(\rho))= \failedclauses(\rho).
  \end{align*}
  However, $G_\Phi^{\leq 2}[\failedclauses(\rho)]$ is connected by Item \eqref{item:f-rho-connect} of Lemma~\ref{lem:connected}, 
  which contradicts 
  \begin{align*}
    \min_{c_1\in S'_1, c_2\in S'_2}\dist_{G_\Phi}(c_1,c_2) \geq 3.
  \end{align*}
  Thus, we have finished showing that $T \in \calD(G_\Phi)$.  

  Now, observe the following lower bound on the size of $T$:
  \begin{align*}
    \abs{T} &= \abs{I} +\abs{\failedclauses(\rho) \cap \Cl_\bad} +1
    \geq \frac{\abs{\failedclauses(\rho)}-\abs{\failedclauses(\rho) \cap \Cl_\bad}-k\HighD}{k\HighD}+\abs{\failedclauses(\rho) \cap \Cl_\bad}+1\cr &\geq \frac{\abs{\failedclauses(\rho)}}{k\Delta}\geq\frac{\abs{\VI(\rho)}}{k^2\HighD}\ge C_0 \depthceiling,
  \end{align*}
  where in the second to last inequality we used Item \eqref{item:f-rho-size} of Lemma~\ref{lem:connected}. If $\abs{T} > C_0\depthceiling$, we make the size of $T$ exactly equal to $C_0 \depthceiling$ by removing some clauses from it. Note that any subset of $T$ satisfies Item \eqref{item:DG-disjoint} of the definition of $\calD(G_{\Phi})$ (cf. Definition~\ref{def:disjoint}). To maintain the connectedness of $T$ in $G_{\Phi}^{\leq 4}$, consider an arbitrary spanning tree of the subgraph $G_{\Phi}^{\leq 4}[T]$; remove leaf vertices of the tree from $T$ until 
  until $\abs{T} = C_0\depthceiling$.  By construction, the remaining $T$ is still connected in $G_{\Phi}^{\leq 4}$ and hence is in $\calD(G_{\Phi})$.
  
  Finally, we show that $\abs{T\cap\Cl_\bad}\le \frac{\abs{T}}{3}$. Since $T$ is connected in $G_\Phi^{\leq 4}$,
  Lemma~\ref{lem:<=4connected} implies that there exists a connected 
  induced  set~$T'$ of vertices of~$G_\Phi$ such that $T\subseteq T'$ and $\abs{T'}\le 4\abs{T}$.
  Lemma~\ref{lem:expansion} implies that $\abs{\var(T')}\ge 0.9k\abs{T'} \geq 0.9 k |T| = 0.9 k C_0 \depthceiling> 21600k\log n$.
  Thus Lemma~\ref{lem:connected-bad-size} implies that $\abs{T'\cap\Cl_\bad}\le \frac{\abs{T'}}{12}$.
  We conclude that
  \begin{align*}
    \abs{T\cap\Cl_\bad}&\le\abs{T'\cap\Cl_\bad} \le \frac{\abs{T'}}{12} \le \frac{\abs{T}}{3}. 
  \end{align*}
This completes the proof.
\end{proof}

\section{Proof of Theorem~\ref{thm:main}} 

In order to estimate $\Omega(\Phi)$,
  we use self-reducibility to calculate the marginal probability of the partial assignment $\partialVal^*$ from Lemma~\ref{lem:partial-assignment}.
 This marginal probability is  
 ${|\Omega^{\partialVal^*}|}/|\Omega|$.
By Lemma~\ref{lem:sizeofbad}, 
w.h.p.{} over the choice of~$\Phi$,
$|\Omega^{\partialVal^*}|$ can be computed in polynomial time.
This is because
$\partialVal^*$ satisfies all of the good clauses (by Lemma~\ref{lem:partial-assignment}). The
remaining clauses are bad, and Lemma~\ref{lem:sizeofbad} guarantees that all bad components   have $O(\log n)$ size,
so their satisfying assignments can be counted by brute force.

\begin{lemma}\label{lem:marginprob}
There is a deterministic algorithm that takes as input 
a $k$-CNF formula $\Phi$ on $n$ Boolean variables with $m$ clauses, a partial assignment $\partialVal$ of $\Phi$
and an accuracy parameter~$\epsilon>0$. It returns a rational value~$p$ and runs in time $poly(n, 1/\eps)$.
If $\Phi=\Phi(k,n,m)$ then, w.h.p., the guarantees of all of our lemmas apply.
In this case, as long as $\partialVal$ is  a prefix of the partial assignment $\partialVal^*$ from Lemma~\ref{lem:partial-assignment},
the output satisfies 
$e^{-\eps/n} p\leq  {|\Omega^\partialVal_1|}/{|\Omega^\partialVal_2|} \leq e^{\eps/n} p$.
\end{lemma}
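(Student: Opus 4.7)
The plan is to use the LP of Section~\ref{sec:LP} inside a one-dimensional binary search. First, I compute the marking from Lemma~\ref{lem:marking} and $\partialVal^*$ from Lemma~\ref{lem:partial-assignment} (both deterministic polynomial time), and then build the coupling tree $\=T^\partialVal$ rooted at $\rho^*$ with first variable $v^*$ (the next variable of $\partialVal^*$ after $\partialVal$). By Lemma~\ref{lem:depthoftree} and the subsequent remark, w.h.p.~over $\Phi$ this tree has $poly(n/\eps)$ nodes, and together with the leaf values $r(\rho)$ (Observation~\ref{obs:rho}) it is built in polynomial time. Each LP instance on a pair $(\rlower, \rupper)$ then has $poly(n/\eps)$ variables and constraints and is solvable in polynomial time.

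Since $v^*$ is marked and $\partialVal$ assigns only marked variables, Lemma~\ref{lem:local-uniform} yields that the target ratio $r^* := |\Omega^\partialVal_1|/|\Omega^\partialVal_2|$ lies in $[1/3,3]$. I will fix $\rupper = 3$ throughout and search for the threshold
\[
\rlower^{\dagger} := \sup\{\rlower \ge 0 : \mathrm{LP}(\rlower, 3) \text{ is feasible}\}.
\]
Feasibility of the LP is monotone in $\rlower$ (increasing $\rlower$ only tightens Constraint Set 1), so the threshold is well-defined and standard binary search applies. By completeness (Lemma~\ref{lem:existence}) any $\rlower \le r^*$ yields a feasible LP, so $\rlower^{\dagger} \ge r^*$. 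By soundness (Lemma~\ref{lem:main}), feasibility of $\mathrm{LP}(\rlower, 3)$ forces $r^* \ge e^{-\eps/(3n)}\rlower$, so $\rlower^{\dagger} \le e^{\eps/(3n)} r^*$. Hence $r^* \le \rlower^{\dagger} \le e^{\eps/(3n)}\, r^*$.

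To locate $\rlower^{\dagger}$, start from $[\rlower_{\text{lo}}, \rlower_{\text{hi}}] = [0,4]$ and iterate $T = \lceil \log_2(40n/\eps)\rceil$ rounds of bisection: set $\pmid = (\rlower_{\text{lo}} + \rlower_{\text{hi}})/2$, solve $\mathrm{LP}(\pmid, 3)$, and update $\rlower_{\text{lo}} \leftarrow \pmid$ if feasible, $\rlower_{\text{hi}} \leftarrow \pmid$ otherwise. The endpoints are valid: $\mathrm{LP}(0,3)$ is feasible by completeness, while $\mathrm{LP}(4,3)$ is infeasible because $\rlower > \rupper$ forces $P_{2,\rho}=0$ at every leaf, propagating via Constraint Set 2 to contradict $P_{2,\rho^*} = 1$. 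After $T$ rounds the interval has width at most $\eps/(10n)$ and still contains $\rlower^{\dagger}$. Output the rational $p := \rlower_{\text{lo}}$.

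For correctness, soundness gives $r^* \ge e^{-\eps/(3n)} \rlower^{\dagger} \ge e^{-\eps/(3n)} p \ge e^{-\eps/n} p$. Conversely, $r^* \le \rlower^{\dagger} \le p + \eps/(10n)$; since $p \ge \rlower^{\dagger} - \eps/(10n) \ge 1/3 - \eps/(10n) \ge 1/5$ for any reasonable $n$ and $\eps\le 1$, we have $p + \eps/(10n) \le p(1+\eps/(2n)) \le e^{\eps/n} p$, which yields $r^* \le e^{\eps/n}\, p$ as required. The main obstacle is just bookkeeping: Lemmas~\ref{lem:depthoftree}, \ref{lem:main}, and~\ref{lem:existence} (together with the lemmas they invoke) are each w.h.p.\ statements over $\Phi$, but they form a finite collection, so a union bound yields their simultaneous validity w.h.p., matching the hypothesis of the present lemma.
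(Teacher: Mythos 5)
Your overall strategy is the paper's (build the coupling tree, set up the LP of Section~\ref{sec:LP}, and binary-search on the ratio using completeness and soundness), but your one-sided search with $\rupper$ fixed at $3$ and an initial bracket reaching up to $4>\rupper$ forces you to reason about the LP in the regime $\rlower>\rupper$, and your argument there fails. Constraint Sets 0 and 1 do force $P_{1,\rho}=P_{2,\rho}=0$ at every leaf $\rho\in\calL$ when $\rlower>\rupper$, but Constraint Set 1 imposes nothing on truncating nodes $\rho\in\calT$, and Constraint Set 2 only decomposes the unit mass at $\rho^*$ over all of $\calL^*=\calL\cup\calT$ (Lemma~\ref{lem:constraints2}); nothing in Constraint Sets 0--2 (nor in Constraint Set 3, which only caps disagreement children) prevents a feasible solution from routing all of the mass to truncating nodes, so ``propagating via Constraint Set 2 to contradict $P_{2,\rho^*}=1$'' is not a valid deduction. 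Excluding that scenario is precisely the content of the truncation analysis: one needs an $\ell$-correct $\sigma\in\Omega_2^\partialVal$ (Lemmas~\ref{lem:anothernonzero} and~\ref{lem:15}) together with the bound $(k\HighD)^{-8\ell}<1$ on the mass carried by $\calT$ (Lemma~\ref{lem:typetwotwo}), and Lemmas~\ref{lem:typetwo}, \ref{lem:typetwotwo} and~\ref{lem:main} are all stated under the hypothesis $\rlower\le\rupper$, so you cannot cite them as black boxes for $\rlower=4>3$.

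The same issue infects your threshold analysis: the step ``feasibility of $\mathrm{LP}(\rlower,3)$ forces $r^*\ge e^{-\eps/(3n)}\rlower$, hence $\rlower^{\dagger}\le e^{\eps/(3n)}r^*$'' silently applies Lemma~\ref{lem:main} to feasible values of $\rlower$ that may exceed $\rupper=3$, and the weakened bound $r^*\in[1/3,3]$ you took from Lemma~\ref{lem:local-uniform} cannot close the hole: if $r^*$ were equal to $3$, none of the cited lemmas controls feasibility for $\rlower\in(3,4]$, the invariant that the bracket contains $\rlower^{\dagger}$ is not established at the start, and the output could end up near $4$ while $r^*=3$, violating $r^*\ge e^{-\eps/n}p$. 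The fix is straightforward and is essentially what the paper does in Algorithm~\ref{alg:estimate}: keep $\rlower\le\rupper$ at every LP call, initializing with the actual bounds $(3s-1)/(3s+1)$ and $(3s+1)/(3s-1)$ delivered by Lemma~\ref{lem:local-uniform}; alternatively, use the full strength of Lemma~\ref{lem:local-uniform} (so $r^*=1\pm O(1/s)$) and start the bracket inside $[0,\rupper]$, or argue explicitly that the proofs of Lemmas~\ref{lem:typetwo} and~\ref{lem:typetwotwo} use only Constraint Sets 0, 2, 3 and hence rule out feasibility for $\rlower>\rupper$. The remaining ingredients of your write-up (polynomial tree size via Lemma~\ref{lem:depthoftree}, monotonicity of feasibility in $\rlower$, the iteration count, and the simultaneous-w.h.p.\ remark) match the paper's argument.
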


\begin{proof}
Let $v^*$ be the first unassigned variable in $\partialVal^*\setminus \partialVal$. 
  By Lemma~\ref{lem:depthoftree}, the depth of $\=T^\partialVal$ is at most $3k^3\alpha L+1 = O(\log (n/\eps))$. Thus, the number of nodes of $\=T^\partialVal$ is 
  bounded by a polynomial in~$n/\epsilon$.
   
 After constructing~$\mathbb{T}^{\partialVal}$, the algorithm
 constructs the linear program from Section~\ref{sec:LP}.
 Lemma~\ref{lem:main} 
 guarantees that, 
 given bounds $\rlower \leq \rupper$, 
  if the LP has a solution $\*P$ using $\rlower$ and $\rupper$,
  then
  $  e^{-\eps/(3n)} \rlower \leq  { | \Omega^\partialVal_1|}/{|\Omega^\partialVal_2|} \leq  e^{\eps/(3n)} \rupper$. 
Thus, the task of the algorithm
is to find initial bounds $\rlower\leq \rupper$ for which there is a feasible solution 
and then (by binary search) to bring $\rlower$ and $\rupper$ closer together, to obtain a more accurate estimate.
This is done in   Algorithm~\ref{alg:estimate}.

 The algorithm 
 is based on the premise that there is a feasible solution with $\rlower = (3s-1)/(3s+1)$ and $\rupper = (3s+1)/(3s-1)$, so
 we next establish this fact.
 This follows from Lemma~\ref{lem:local-uniform}, which guarantees that
\[
\frac{1-\frac{1}{3s}}{1+\frac{1}{3s}} \leq \frac{\abs{\Omega^\partialVal_1}}{\abs{\Omega^\partialVal_2}} \leq  \frac{1+\frac{1}{3s}}{1-\frac{1}{3s}}.
\]

\begin{algorithm}[htbp]
  \caption{$\mathsf{Estimate }\abs{\Omega^\partialVal_1}/\abs{\Omega^\partialVal_2}$}
  \label{alg:estimate}
  \begin{algorithmic}[1]  
    
    \State $\plower \gets \frac{3s-1}{3s+1}$;
    \State $\pupper \gets \frac{3s+1}{3s-1}$;
    \While{$\pupper > e^{\eps/(3n)}\plower$}
      \State $\rlower \gets \plower$;
      \State $\rupper \gets (\plower+\pupper)/2$; 
      \If{the LP described in Section~\ref{sec:LP} has a feasible solution}
        \State $\pupper \gets \rupper$;
      \Else
        \State $\plower \gets \rupper$;
      \EndIf
      \EndWhile
    \State \Return $p \gets (\plower+\pupper)/2$;
  \end{algorithmic}
\end{algorithm}

We next explore the accuracy of the output.
Applying Lemma~\ref{lem:existence} and Lemma~\ref{lem:main}, 
after each step of the while loop of Algorithm~\ref{alg:estimate}, 
it holds that $e^{-\eps/(3n)}\plower \leq \abs{\Omega^\partialVal_1}/\abs{\Omega^\partialVal_2} \leq e^{\eps/(3n)}\pupper$. 
Also, $\pupper\leq e^{\eps/(3n)}\plower$ when the algorithm terminates, 
so the output $p$ satisfies $e^{-\eps/n} p\leq \frac{\abs{\Omega^\partialVal_1}}{\abs{\Omega^\partialVal_2}} \leq e^{\eps/n} p$.

Finally, we explore the running time of the algorithm.
Since each iteration of the while loop takes polynomial time, the main issue is determining how many times the while loop executes.
Note that in the beginning $\pupper - \plower \leq 4/(3s-1)$, and  just before the last iteration, 
\[\pupper-\plower \geq (e^{\eps/(3n)}-1) \plower \geq \frac{\eps(3s-1)}{3n(3s+1)}.
\]
This difference halves with each iteration.
Thus  the algorithm solves the LP at most $O(\log(n/\eps))$ times. 
Since the size of LP is bounded by a polynomial in $n/\epsilon$ and the LP can be solved in polynomial time,
the whole of the algorithm runs in polynomial time, as required. 
\end{proof}

 We now prove Theorem~\ref{thm:main}. 

{\renewcommand{\thetheorem}{\ref{thm:main}}
	\begin{theorem}
		\statethmmain
	\end{theorem}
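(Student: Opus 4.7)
The plan is to use self-reducibility to combine the conditional-marginal estimator of Lemma~\ref{lem:marginprob} with an exact count of $|\Omega^{\partialVal^*}|$ obtained by brute force on the small bad components. Let $v^*_1,\ldots,v^*_t$ be the marked variables assigned by $\partialVal^*$ (from Lemma~\ref{lem:partial-assignment}) in the order in which it processes them, let $b_i\in\{\true,\false\}$ denote the value assigned to $v^*_i$, and let $\partialVal_i$ denote the prefix of $\partialVal^*$ assigning $v^*_1,\ldots,v^*_i$ (so $\partialVal_0=\emptyset$ and $\partialVal_t=\partialVal^*$). Telescoping,
\[
|\Omega|=|\Omega^{\partialVal^*}|\cdot\prod_{i=1}^{t}\frac{|\Omega^{\partialVal_{i-1}}|}{|\Omega^{\partialVal_i}|},
\]
and $|\Omega^{\partialVal_{i-1}}|/|\Omega^{\partialVal_i}|$ equals $1+r_i$ or $1+r_i^{-1}$ depending on whether $b_i$ is $\false$ or $\true$, where $r_i:=|\Omega_1^{\partialVal_{i-1}}|/|\Omega_2^{\partialVal_{i-1}}|$ with $v^*_i$ playing the role of the next variable. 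Thus the whole computation reduces to estimating each $r_i$ and computing $|\Omega^{\partialVal^*}|$.

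For each $i$ I would invoke Lemma~\ref{lem:marginprob} on the prefix $\partialVal_{i-1}$ with accuracy parameter $\eps/2$, producing a value $p_i$ satisfying $e^{-\eps/(2n)}p_i\le r_i\le e^{\eps/(2n)}p_i$. By Lemma~\ref{lem:local-uniform} applied at the root of the coupling tree, $r_i\in\bigl[\tfrac{3s-1}{3s+1},\tfrac{3s+1}{3s-1}\bigr]$, so $r_i=\Theta(1)$, which means that a multiplicative error $e^{\pm\eps/(2n)}$ on $r_i$ transfers (by a routine estimate on the map $r\mapsto r/(1+r)$) to a multiplicative error of at most $e^{\pm\eps/n}$ on the corresponding factor $|\Omega^{\partialVal_{i-1}}|/|\Omega^{\partialVal_i}|$. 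Taking the product over $i\le t\le n$, the accumulated multiplicative error is at most $e^{\pm\eps}$, as required.

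It remains to compute $|\Omega^{\partialVal^*}|$ exactly. By Lemma~\ref{lem:partial-assignment}, $\partialVal^*$ satisfies every clause in $\Cl_\good$, while by Observation~\ref{obs:goodbad} no bad clause contains any good variable, so $\partialVal^*$ leaves every bad clause untouched. Hence $\Phi^{\partialVal^*}$ splits as a product over (i) the unassigned good variables, each of which appears in no surviving clause and so contributes a factor of $2$, and (ii) the independent connected components of $H_{\Phi,\bad}$, each of which can be satisfied independently. Lemma~\ref{lem:sizeofbad} guarantees that w.h.p.\ every bad component has size $O(\log n)$, so each one admits an exact count of its satisfying assignments by brute force in time $2^{O(\log n)}=n^{O(1)}$; multiplying these counts together with $2^{(\text{\# unassigned good variables})}$ yields $|\Omega^{\partialVal^*}|$ in total time $n^{O(1)}$.

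Combining the three ingredients gives the output $Z$. The running time is $\mathrm{poly}(n,1/\eps)$ since we make at most $n$ calls to Lemma~\ref{lem:marginprob} (each of cost $\mathrm{poly}(n,1/\eps)$) and the brute-force count takes $\mathrm{poly}(n)$ time. A union bound over the finitely many w.h.p.\ statements used (notably Lemmas~\ref{lem:marginprob},~\ref{lem:sizeofbad}, and the ones underpinning the existence of the marking and of $\partialVal^*$) ensures that all required properties hold simultaneously w.h.p.\ over the choice of $\Phi$. The main obstacle is not technical novelty but careful bookkeeping: verifying the error propagation from each $r_i$ to the corresponding marginal, and confirming that $|\Omega^{\partialVal^*}|$ factorises cleanly across the bad components together with the remaining free good variables. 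Both checks are straightforward given the infrastructure already established, which completes the proof of Theorem~\ref{thm:main}.
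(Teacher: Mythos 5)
Your proposal is correct and follows essentially the same route as the paper: self-reducibility along the prefixes of $\partialVal^*$, invoking Lemma~\ref{lem:marginprob} for each ratio $|\Omega_1^{\partialVal_{i-1}}|/|\Omega_2^{\partialVal_{i-1}}|$, propagating a per-step multiplicative error of $e^{\pm\eps/n}$ through the telescoping product, and computing $|\Omega^{\partialVal^*}|$ exactly by brute force over the $O(\log n)$-size bad components guaranteed by Lemma~\ref{lem:sizeofbad}. The only differences are cosmetic (calling the estimator with accuracy $\eps/2$ and noting explicitly the factor $2$ per free good variable, which the paper leaves implicit), so no further changes are needed.
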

	\addtocounter{theorem}{-1}
}

\begin{proof} 
  Let $\Phi=\Phi(k,n,m)$ be a random formula and $\Omega =\Omega(\Phi)$. 
  The algorithm first computes $\Va_\good$, $\Va_\bad$, $\Cl_\good$ and $\Cl_\bad$ in time $poly(n)$. 
  Then using Lemma~\ref{lem:marking} and Lemma~\ref{lem:partial-assignment}, it can compute $\Va_\marked$ and $\partialVal^*$ in polynomial-time.
  \Whp over the choice of $\Phi$, the guarantees of all of our lemmas apply. Let us suppose that this happens
  (otherwise, the algorithm fails and outputs an arbitrary number).

Suppose  
$\partialVal^*$  gives values to the $j$ variables  $v_1, v_2, \ldots, v_j$. By Lemma~\ref{lem:marginprob}, for any 
$i\in \{1,\ldots,j\}$ and any
prefix $\partialVal_i\colon\{v_1, v_2,\ldots, v_{i-1}\}\to\{\true,\false\}$  of $\partialVal^*$, 
it takes $poly(n,1/\epsilon)$ time to compute  a number $p_i$ satisfying
$
e^{-\eps/n} p_i\leq   {|\Omega^{\partialVal_i}_1|}/{|\Omega^{\partialVal_i}_2|} \leq e^{\eps/n} p_i$.
If  $\partialVal^*(v_i) = \true$ let $q_i = \frac{p_i}{1+p_i}$.
Otherwise, let  $q_i = \frac{1}{1+p_i}$.
 Let $\partialVal_{j+1} = \partialVal^*$.

   Suppose that $\partialVal^*(v_i) = \true$.
 Then 
 $$
 \frac{|\Omega^{\partialVal_i}|}
 {|\Omega^{\partialVal_{i+1}}|}
 = \frac{|\Omega^{\partialVal_i}_1| + |\Omega^{\partialVal_i}_2|}{|\Omega^{\partialVal_i}_1|}
 = 1 +  \frac{  |\Omega^{\partialVal_i}_2|}{|\Omega^{\partialVal_i}_1|} 
 \leq 1 + \frac{e^{\epsilon/n} }{p_i}.
 $$
 Thus, 
 $$
  \frac{|\Omega^{\partialVal_{i+1}}|}
 {|\Omega^{\partialVal_{i}}|} 
 \geq \frac{p_i}{p_i +  {e^{\epsilon/n} } }
 \geq \frac{p_i}{e^{\epsilon/n} p_i +  {e^{\epsilon/n} } }= e^{-\epsilon/n} q_i.
 $$
 
A similar calculation for the case where $\partialVal^*(v_i) = \false$ 
and a similar calculation for the lower bound give the following.
 
\[
e^{-\eps/n} q_i\leq    \frac{| {\Omega^{\partialVal_{i+1}}}|}{ {|\Omega^{\partialVal_{i}}}|}\leq e^{\eps/n} q_i,
\]
and thus
\[
e^{-\eps}\prod_{i=1}^j q_i \leq \frac{ {|\Omega^{\partialVal^*}}|}{ {|\Omega|}} \leq e^{\eps}\prod_{i=1}^j q_i\,.
\]
Since all good clauses are satisfied by $\partialVal^*$, $\Cl^{\partialVal^*}$ consists only of bad clauses. Also, by Lemma~\ref{lem:sizeofbad}, every bad component of variables has size at most $21600 k\log n$, so $\Cl^{\partialVal^*}$ can be divided into  disjoint subsets  where each subset of clauses contains 
$O(\log n)$  variables.  The algorithm can compute the number of satisfying assignments of each subset by brute force in time $poly(n)$. Then
  $ {|\Omega^{\partialVal^*}|}$ is the product of these numbers.

Combining all of the above, our algorithm outputs
$
Z= {|\Omega^{\partialVal^*}|}\prod_{i=1}^j q_i^{-1}$,
which satisfies $e^{-\eps} {|\Omega|}\leq Z \leq e^\eps {|\Omega|}$.
\end{proof}

\section*{Acknowledgement}

We thank Andrei Constantinescu and anonymous referees for various helpful comments which improved the accuracy of the paper.

 \newpage

\appendix
\section{Notation reference}\label{sec:app}
\renewcommand{\arraystretch}{1.2}
\begin{longtable}{m{0.1\textwidth}m{0.7\textwidth}m{0.25\textwidth}}
  Name & Description & Reference\\
  \multicolumn{3}{l}{\textbf{Formula related}}\\
  $\Cl$ & The clause set of $\Phi$ where $m=\abs{\Cl} = \lfloor n \alpha \rfloor$. & Section~\ref{sec:high-degree}\\
  $\Cl_\good$,$\Cl_\bad$ & Good and bad clauses, a partition of $\Cl$ &Section~\ref{sec:high-degree}\\
  $\Va$ & The variable set of $\Phi$ where $n=\abs{\Va}$. & Section~\ref{sec:high-degree}\\
  $\Va_\good$,$\Va_\bad$ & Good and bad variables, a partition of $\Va$ &Section~\ref{sec:high-degree}\\
  $\HighD$ & The high degree threshold, set to $2^{k/300}$. &Definition~\ref{def:HD}\\
  $G_\Phi$ & The dependency graph of $\Cl$, which contains a subgraph $G_{\Phi,\good}$. & Definition~\ref{def:Gphi} \\
  $H_\Phi$ & The dependency graph of $\Va$, which contains a subgraph $H_{\Phi,\bad}$. & Definition~\ref{def:Hphi} \\
    $\calD(G_\Phi)$ & A set of subsets $T\subseteq V(G_{\Phi})$ satisfying some properties& Definition~\ref{def:disjoint}\\
  \multicolumn{3}{l}{\textbf{Local lemma}}\\
  $\Omega^*$ & All assignments $\Va\to \{\true,\false\}$ & Definition~\ref{def:lll}\\
  $\mu_A$ & Uniform distribution over $A\subseteq\Omega^*$ & Definition~\ref{def:lll}\\
  $\marked(c)$ & Marked variables in clause $c\in\Cl$ & Section~\ref{sec:local-lemma}\\
  $\Va_\marked$ & All marked variables & Section~\ref{sec:local-lemma}\\
  $\Omega$ & The set of all satisfying assignments & Section~\ref{sec:local-lemma}\\
  $\Phi^{\partialVal}$ & The formula $\Phi$ simplified under $\partialVal$ & Section~\ref{sec:local-lemma}\\
  $\Cl^\partialVal$ & Remaining clauses under $\partialVal$. Similar notations include $\Va^\partialVal$, $\Cl^\partialVal_\good$. Note that $\Va^\partialVal_\bad=\Va_\bad$ and $\Cl^\partialVal_\bad=\Cl_\bad$. & Section~\ref{sec:local-lemma}\\
  $s$ & Local uniformity parameter $s:=2^{k/4}/(ek\HighD)$. & Lemma~\ref{lem:local-uniform}\\
  $\partialVal^*$ & A particular ``nice'' partial assignment. & Lemma~\ref{lem:partial-assignment}\\
  \multicolumn{3}{l}{\textbf{Coupling tree}}\\
  $L$ & Truncation depth, set to $C_0(3k^2\HighD)\lceil \log(n/\eps)\rceil$, where $C_0$ is a sufficiently large integer (independent of $\Phi$, $k$ and $n$). & Definition~\ref{def:leaf} \\
  $\+L$ & The set of leaves of the coupling tree.& Definition~\ref{def:leaf}\\
  $\+T$ & The set of truncating nodes of the coupling tree.&Definition~\ref{def:leaf}\\
  $\+L^*$ & $\+L^*=\+L\cup\+T$. & Definition~\ref{def:leaf}\\
    \end{longtable}
\bibliographystyle{plain}
\bibliography{randksat}
\end{document}